\documentclass{article}

\usepackage{iclr2027_conference,times}
\usepackage{amsmath,amssymb,amsthm,mathtools}
\usepackage{booktabs}
\usepackage{xcolor}
\usepackage{colortbl}
\usepackage{hhline}
\usepackage{tabularx}
\usepackage{multirow}
\usepackage{float}
\usepackage{algorithm}
\usepackage[noend]{algpseudocode}
\usepackage{enumitem}
\usepackage{needspace}
\usepackage{microtype}
\usepackage{graphicx}
\usepackage{hyperref}
\usepackage{url}
\newcommand{\E}{\mathbb{E}}
\newcommand{\R}{\mathbb{R}}

\newcommand{\PaperAuthors}{%
  Yuxing Peng\textsuperscript{1}\quad
  Zhiqing Tang\textsuperscript{1,*}\quad
  Weijia Jia\textsuperscript{1,2}}
\newcommand{\PaperAffiliations}{%
  \textsuperscript{1}Institute of Artificial Intelligence and Future Networks,
  Faculty of Arts and Sciences,\\
  Beijing Normal University, Zhuhai 519087, China\\
  \textsuperscript{2}Guangdong Key Lab of AI and Multi-Modal Data Processing,\\
  BNU-HKBU United International College, Zhuhai 519087, China\\
  \textsuperscript{*}Corresponding author}
\newcommand{\PaperPdfAuthors}{Yuxing Peng; Zhiqing Tang; Weijia Jia}

\newcommand{\PdfAuthorMetadata}{\PaperPdfAuthors}

\definecolor{DeepLinkBlue}{RGB}{0,72,144}
\hypersetup{
  colorlinks=true,
  linkcolor=DeepLinkBlue,
  citecolor=DeepLinkBlue,
  urlcolor=DeepLinkBlue,
  pdftitle={Dense Weak Hiding: Closing Complexity Gaps in Nonconvex and PL Finite-Sum Optimization under Individual Smoothness},
  pdfauthor={\PdfAuthorMetadata},
  pdfsubject={Tight IFO complexity for nonconvex stationarity and PL finite-sum optimization},
  pdfkeywords={finite-sum optimization, oracle complexity, lower bounds, individual smoothness, Polyak-Lojasiewicz, restarted PAGE}
}

\newtheorem{theorem}{Theorem}[section]
\newtheorem{lemma}[theorem]{Lemma}
\newtheorem{proposition}[theorem]{Proposition}
\newtheorem{corollary}[theorem]{Corollary}
\theoremstyle{definition}

\newcommand{\Prob}{\mathbb{P}}

\newcommand{\cZ}{\mathcal{Z}}
\newcommand{\cEfuture}{\mathcal{E}_{\mathrm{future}}}
\newcommand{\ind}{\mathbf{1}}
\newcommand{\op}{\mathrm{op}}

\newcommand{\proofguidespace}{\Needspace{10\baselineskip}}

\title{Dense Weak Hiding: Closing Complexity Gaps\\
in Nonconvex and PL Finite-Sum Optimization\\
under Individual Smoothness}

\author{%
  \PaperAuthors\\[5pt]
  \normalfont\small \PaperAffiliations
}
\iclrfinalcopy

\begin{document}
\raggedbottom
\maketitle
\fancyhead{}
\renewcommand{\headrulewidth}{0pt}

\begin{abstract}
Under individual smoothness, the optimal incremental first-order oracle (IFO)
complexity of nonconvex finite-sum optimization is still open.  Known
algorithms use $O(n+\sqrt n\,\Delta L_{\max}/\varepsilon^2)$ calls, but the
best lower bounds miss a factor of $\sqrt n$ on the
$\Delta L_{\max}/\varepsilon^2$ term.  We prove the matching lower bound
$\Omega(n+\sqrt n\,\Delta L_{\max}/\varepsilon^2)$.
Thus PAGE and SPIDER are minimax optimal up to universal constants, under both
individual and mean-squared smoothness.
Unlike earlier bounds for linear-span algorithms, our result applies to the
broader class of randomized IFO algorithms, allowing arbitrary choices of
component indices and query points based on the complete preceding history.

Under the global Polyak--\L{}ojasiewicz (PL) condition, we use a similar idea
to obtain an $\Omega(n+\kappa_{\max}\sqrt n\log(\Delta/\varepsilon))$ lower bound for
large $\kappa_{\max}$.  We also
note that existing PL lower bounds have explored only relatively large values
of $\kappa$.  To fill this gap, we study the regime
$\kappa_{\max}<\sqrt n$ and obtain a new lower-bound rate,
$\Omega\!\left(n+n\log(\Delta/\varepsilon)/(1+\log(\sqrt n/\kappa_{\max}))\right)$.
The new lower bound motivates us to propose Restarted PAGE.  Its upper bound
matches the new rate for small $\kappa_{\max}$ and recovers the standard PAGE
rate for large $\kappa_{\max}$, so both lower bounds are nearly tight.

Our new lower bounds are based on our proposed \emph{dense weak hiding}
construction.  By spreading each hidden direction across all components, the
construction makes every single IFO query weakly informative while preserving
the direction in the full row average.  Because unrevealed stages remain
inactive even for arbitrary query points, an algorithm must spend many calls
to expose a stage before it can make substantial progress.  We choose the bias
to balance this revelation cost against the number of stages permitted by
individual smoothness, and this balance produces the missing $\sqrt n$ factor.
\end{abstract}

\section{Introduction}

Finite-sum methods optimize
$F(x)=n^{-1}\sum_{i=1}^n f_i(x)$ by querying a fixed set of component
functions.  Unlike an online stochastic oracle, the same components remain
available for repeated queries.  Unlike full-gradient optimization, one call
accesses only one component.  The algorithm may choose its next component and
query point from all preceding observations.  Modern
variance-reduced methods exploit this access pattern
\citep{defazio2014saga,fang2018spider,li2021page}.  To determine whether these
methods are optimal, a lower bound must use the same finite-sum oracle model
and smoothness assumption.

For nonconvex stationarity, PAGE and SPIDER use
$O(n+\sqrt n\,\Delta L_{\rm ms}/\varepsilon^2)$ IFO calls under mean-squared
smoothness \citep{fang2018spider,li2021page}.  Individual
$L_{\max}$-smoothness implies the same condition with
$L_{\rm ms}\le L_{\max}$, so this upper bound is
$O(n+\sqrt n\,\Delta L_{\max}/\varepsilon^2)$ under individual smoothness.
The known lower bounds do not match it.  \citet{zhou2019lower} and
\citet{han2024lower} cover linear-span algorithm classes and give only
$\Omega(n+\Delta L_{\max}/\varepsilon^2)$
under individual smoothness.  \citet{emmenegger2022oracle} allow unrestricted
queries but give only
$\Omega(\Delta L_{\max}/\varepsilon^2)$.  This leaves the first question.
\emph{Is the factor $\sqrt n$ unavoidable for randomized IFO algorithms that
may choose component indices and query points using their full preceding
history?}

Closing this gap requires more than a rescaled version of an existing hard
instance.  The sparse-link construction of \citet{zhou2019lower} preserves the
sequential structure but raises individual smoothness by a factor of $\sqrt n$.
Like that of \citet{han2024lower}, its linear-span argument does not cover queries
outside the generated span.  \citet{emmenegger2022oracle} assign a separate
block to each component.  This allows arbitrary query points but weakens the
average gradient by $n^{-1/2}$.  A construction that avoids all three problems
must distribute hidden information across components, preserve the signal in
the finite-sum average without this $n^{-1/2}$ loss, and prevent unrestricted
queries from activating later stages.

We introduce \emph{dense weak hiding} to meet these requirements.  At each
stage, a hidden direction is encoded in a fixed $n\times D$ sign table.  Each
row is only weakly biased toward the direction, while the exact row average is
$\rho\Theta^{(j)}$.  A bounded radial map turns any query point into an
alignment vector in the unit ball.  A smooth gate satisfies $G=G'=0$ below
its threshold, so unopened links drop out of both the function value and the
gradient.  Revealing one stage requires
$\Omega(\min\{n,\rho^{-2}\})$ distinct rows.  The construction satisfies the
gap and individual smoothness constraints with
$\Theta(\Delta L_{\max}\rho/\varepsilon^2)$ stages.  Their product is
\[
 \Omega\!\left(
 \frac{\Delta L_{\max}}{\varepsilon^2}
 \min\{n\rho,\rho^{-1}\}
 \right),
\]
which is maximized at $\rho\asymp n^{-1/2}$.  This gives the missing
$\sqrt n$ factor and shows that PAGE and SPIDER are minimax optimal for the
randomized IFO class considered here.

The global Polyak--\L{}ojasiewicz (PL) condition poses a second minimax
question.  Standard PAGE gives
$O((n+\kappa_{\rm ms}\sqrt n)\log(\Delta/\varepsilon))$
\citep[Corollary~6]{li2021page}, but two gaps remained.  Under individual
smoothness with unrestricted queries, no earlier result showed how the
complexity depends on the condition number, and no matching rate was known for
small $\kappa$.  \citet{yue2023lower} gave the first PL lower
bound.  It treats deterministic full-gradient methods and assumes a large
objective condition number.  \citet{bai2024complexity} obtain the
$\kappa_{\rm ms}\sqrt n$ term for linear-span IFO algorithms, but only at large
condition numbers.  The version of their result that holds for all condition
numbers gives only $\Omega(n)$.  This raises the second question.  \emph{What
is the minimax complexity under individual smoothness as $\kappa_{\max}$ moves
across $\sqrt n$, and is the standard PAGE guarantee under mean-squared
smoothness tight below this transition?}

We determine the complexity in both $\kappa_{\max}$ regimes.  For
$3\le\kappa_{\max}\le\sqrt n$, a two-scale construction gives
$\Omega(n+n\log(\Delta/\varepsilon)/
[1+\log(\sqrt n/\kappa_{\max})])$.  Its average objective is globally strongly
convex.  For $\kappa_{\max}\ge\sqrt n$, a geometrically weighted construction
gives
$\Omega(n+\kappa_{\max}\sqrt n\log(\Delta/\varepsilon))$.
These lower bounds hold under individual smoothness for randomized IFO
algorithms and hence also give the mean-squared lower bounds.  A similar scale
appears in PAGE.  One epoch costs $O(n)$ calls in expectation and shrinks the
function value gap by $\Theta(\kappa_{\rm ms}/\sqrt n)$.
Restarted PAGE uses these epochs as restart periods.  It gives the same
logarithmic dependence and matches the lower bounds in both $\kappa_{\rm ms}$
regimes under mean-squared smoothness.

\paragraph{Contributions.}
Our main contributions are summarized as follows.
\begin{enumerate}[leftmargin=*,label=\textbf{\arabic*.},itemsep=2pt,topsep=3pt]
\item \textbf{Dense weak hiding.}
We give a construction for fixed finite sums.  Its information analysis covers
randomized IFO algorithms with unrestricted query selection.

\item \textbf{Tight nonconvex complexity under individual smoothness.}
We prove the matching
$\Omega(n+\sqrt n\,\Delta L_{\max}/\varepsilon^2)$ lower bound.

\item \textbf{PL lower bounds in both $\kappa_{\max}$ regimes.}
We prove the lower bounds under individual smoothness for every
$\kappa_{\max}\ge3$.  The average objective of the small-$\kappa_{\max}$ hard
instance is globally strongly convex.

\item \textbf{Restarted PAGE matching the lower bound.}
Under mean-squared smoothness, Restarted PAGE improves the standard guarantee
for $\kappa_{\rm ms}<\sqrt n$ and matches the lower bound in both
$\kappa_{\rm ms}$ regimes.
\end{enumerate}

\begingroup
\hfuzz=2pt
\begin{table}[t]
\caption{Finite-sum bounds; lower bounds hold with constant success
probability.}
\label{tab:complexity}
\centering
\footnotesize
\setlength{\tabcolsep}{2.5pt}
\renewcommand{\arraystretch}{1.10}

\begin{tabular}{|>{\centering\arraybackslash}m{0.105\linewidth}|>{\centering\arraybackslash}m{0.385\linewidth}|>{\centering\arraybackslash}m{0.205\linewidth}|>{\centering\arraybackslash}m{0.19\linewidth}|}
\hline
\multicolumn{4}{|l|}{\cellcolor{black!5}\textbf{Nonconvex stationarity}} \\
\hline
\rowcolor{black!2}
\textbf{Result} & \textbf{Oracle complexity}
& \textbf{Algorithm class} & \textbf{Reference} \\
\hline
Upper
& $\displaystyle O\!\left(n+\sqrt n\,\Delta L_{\rm ms}/\varepsilon^2\right)$
& PAGE / SPIDER
& \citealp{fang2018spider,li2021page} \\
\hline
Lower
& $\displaystyle \Omega\!\left(\Delta L_{\max}/\varepsilon^2\right)$
& \shortstack{randomized methods;\\arbitrary queries}
& \citealp{emmenegger2022oracle} \\
\hline
Lower
& $\displaystyle \Omega\!\left(n+\Delta L_{\max}/\varepsilon^2\right)$
& \shortstack{linear-span\\IFO / PIFO}
& \citealp{zhou2019lower,han2024lower} \\
\hline
\rowcolor{black!5}
\shortstack{\textbf{New}\\[-2pt]{\scriptsize\bfseries lower}}
& $\displaystyle \boldsymbol{\Omega\!\left(n+\sqrt n\,\Delta L_{\max}/\varepsilon^2\right)}$
& \shortstack{\textbf{randomized}\\\textbf{IFO algorithms}}
& \textbf{Theorem~\ref{thm:nonconvex-main}} \\
\hline
\end{tabular}

\vspace{3pt}

\begin{tabular}{|>{\centering\arraybackslash}m{0.075\linewidth}|>{\centering\arraybackslash}m{0.518\linewidth}|>{\centering\arraybackslash}m{0.16\linewidth}|>{\centering\arraybackslash}m{0.117\linewidth}|}
\hline
\multicolumn{4}{|l|}{\cellcolor{black!5}\textbf{Global PL condition}} \\
\hline
\rowcolor{black!2}
\textbf{Result} & \textbf{Oracle complexity and range}
& \textbf{Algorithm class} & \textbf{Reference} \\
\hline
Upper
& $\displaystyle O\!\left((n+\kappa_{\rm ms}\sqrt n)
  \log\frac{\Delta}{\varepsilon}\right),\quad \kappa_{\rm ms}\ge1$
& standard PAGE
& \citealp[Cor.~6]{li2021page} \\
\hline
Lower
& $\displaystyle \Omega\!\left(\kappa_F\log\frac{\Delta}{\varepsilon}\right),
  \quad \kappa_F\gtrsim10^6$
& \shortstack{deterministic\\full-gradient}
& \citealp{yue2023lower} \\
\hline
Lower
& $\displaystyle \Omega\!\left(n+\kappa_{\rm ms}\sqrt n
  \log\frac{\Delta}{\varepsilon}\right),\quad
  \kappa_{\rm ms}\gtrsim10^5\sqrt n$
& \shortstack{linear-span\\IFO algorithms}
& \citealp{bai2024complexity} \\
\hline
\shortstack{\textbf{New}\\[-2pt]{\scriptsize\bfseries upper}}
& $\displaystyle
  \mkern5mu\left\{\begin{array}{@{}l@{\;}l@{}}
  \boldsymbol{O\!\left(
    n+\dfrac{n\log(\Delta/\varepsilon)}{1+\log(\sqrt n/\kappa_{\rm ms})}
  \right)}, & 1\le\kappa_{\rm ms}\le\sqrt n,\\[5pt]
  \boldsymbol{O\!\left(
    n+\kappa_{\rm ms}\sqrt n\log\dfrac{\Delta}{\varepsilon}
  \right)}, & \kappa_{\rm ms}\ge\sqrt n.
  \end{array}\right.\mkern-5mu$
& \textbf{Restarted PAGE}
& \shortstack{\textbf{Alg.~\ref{alg:restarted-page}};\\
  \textbf{Prop.~\ref{prop:restarted-page}}} \\
\hline
\shortstack{\textbf{New}\\[-2pt]{\scriptsize\bfseries lower}}
& $\displaystyle
  \mkern5mu\left\{\begin{array}{@{}l@{\;}l@{}}
  \boldsymbol{\Omega\!\left(
    n+\dfrac{n\log(\Delta/\varepsilon)}{1+\log(\sqrt n/\kappa_{\max})}
  \right)}, & 3\le\kappa_{\max}\le\sqrt n,\\[5pt]
  \boldsymbol{\Omega\!\left(
    n+\kappa_{\max}\sqrt n\log\dfrac{\Delta}{\varepsilon}
  \right)}, & \kappa_{\max}\ge\sqrt n.
  \end{array}\right.\mkern-5mu$
& \shortstack{\textbf{randomized}\\\textbf{IFO algorithms}}
& \textbf{Theorem~\ref{thm:pl-main}} \\
\hline
\end{tabular}
\end{table}
\endgroup

\section{Related Work}
\label{sec:related}

\paragraph{Nonconvex stationarity.}
Early work established oracle lower bounds for finding stationary points
of smooth nonconvex full objectives
\citep{carmon2020lower1,carmon2021lower2}.
\citet{arjevani2023stochastic} extend zero-chain techniques to arbitrary
randomized stochastic first-order methods.  They prove sharp lower bounds under
bounded variance and mean-squared smoothness, with fresh oracle randomness at
each call.  Concurrent work by \citet{jin2026boundednoise} also uses weakly
biased fresh signs under bounded noise.  For finite-sum upper bounds, SPIDER
and PAGE achieve
$O(n+\sqrt n\,\Delta L_{\rm ms}/\varepsilon^2)$ under mean-squared smoothness
\citep{fang2018spider,li2021page}.  Under individual smoothness, prior
finite-sum lower bounds do not match this rate.  \citet{zhou2019lower},
\citet{xie2019framework}, and \citet{han2024lower} track linear-span progress.
Of these, \citet{zhou2019lower} and \citet{han2024lower}
obtain only $\Omega(n+\Delta L_{\max}/\varepsilon^2)$.  The construction of
\citet{emmenegger2022oracle} allows arbitrary queries but gives only
$\Omega(\Delta L_{\max}/\varepsilon^2)$.  All three bounds therefore miss the
factor $\sqrt n$ multiplying $\Delta L_{\max}/\varepsilon^2$.
Table~\ref{tab:complexity}
also records which algorithms each bound covers.  Our lower bound recovers this
factor without a linear-span restriction.

\paragraph{PL lower bounds.}
\citet{yue2023lower} prove a lower bound for deterministic full-gradient
methods that depends explicitly on the condition number, while
\citet{bai2024complexity} obtain the
$\kappa_{\rm ms}\sqrt n$ term for linear-span IFO algorithms.  These
results require the large condition numbers recorded in
Table~\ref{tab:complexity}.  The theorem of \citet{bai2024complexity} that
covers all condition numbers gives only $\Omega(n)$.  Two gaps therefore
remained.  Under individual smoothness, no lower bound showed any dependence
on $\kappa$, and the small-$\kappa$ regime was open for randomized IFO
algorithms.  Our lower bound covers every
$\kappa_{\max}\ge3$ for randomized IFO algorithms, and Restarted PAGE gives
matching upper bounds under mean-squared smoothness.  The exact
parameter ranges in prior work are recorded in
Appendix~\ref{app:smoothness-calibration}.

\paragraph{Relation to strongly convex finite-sum lower bounds.}
The average objective in our small-$\kappa$ construction is
globally strongly convex, although its components need not be convex.
\citet[Theorem~66]{han2024lower} obtain the same small-$\kappa$ denominator
$1+(\log(\sqrt n/\kappa))_+$ for a strongly convex objective under average
smoothness in their linear-span PIFO class.  Our result establishes this
dependence under individual smoothness for randomized IFO algorithms with
unrestricted query selection.  Classical lower bounds for strongly convex
finite sums with convex components give the rate
$n+\sqrt{n\kappa_{\max}}\log(1/\varepsilon)$
\citep{agarwal2015lower,woodworth2016tight,lan2018optimal,allenzhu2017katyusha}.
Those bounds assume convex components, which our hard instances need not have.

\paragraph{Proof techniques.}
Standard tools for lower bounds include zero-chains, resisting oracles, random
rotations, and arguments that track how component queries reveal new coordinates
\citep{nemirovski1983problem,nesterov2018lectures,agarwal2015lower,woodworth2016tight,carmon2020lower1,carmon2021lower2}.
The closest related construction is that of \citet{zhou2026sharp}, which studies
deterministic first-order algorithms for full objectives with higher-order
smoothness.  That work lifts a scalar hard function from
\citet{carmon2020lower1,carmon2021lower2} to a block chain and transfers
zero-respecting hardness by an orthogonal rotation.  Here the oracle accesses a
fixed finite sum, and the algorithm may randomize both its component choices and
query points.  An information argument therefore replaces the span condition.
Fixed encoding tables hide the directions, and a mutual information bound for
rows drawn without replacement controls what the replies reveal.

\section{Setting and Main Results}
\label{sec:setting}

For $d\ge1$, write $F\in\mathcal F_{\rm ind}^{n,d}(\Delta,L_{\max})$ when a
specified finite-sum decomposition satisfies
\begin{equation}
 F:\R^d\to\R,
 \qquad F=\frac1n\sum_{i=1}^nf_i,
 \qquad F(0)-\inf_xF(x)\le\Delta,
 \label{eq:function-class-gap}
\end{equation}
and every component satisfies
\begin{equation}
 \|\nabla f_i(x)-\nabla f_i(y)\|
 \le L_{\max}\|x-y\|
 \qquad\forall x,y.
 \label{eq:individual-smoothness}
\end{equation}
Every hard component constructed in this paper is $C^\infty$.
The ambient dimension is arbitrary and
finite, and the dimension of the hard instance may depend on the problem parameters
and oracle budget.  When $d$ is omitted from a class notation, all finite
dimensions are allowed.  Appendix~\ref{app:formal-ifo-model} gives the formal
definition.

We also write $F\in\mathcal F_{\rm ms}^{n,d}(\Delta,L_{\rm ms})$ when
the same finite-sum and gap conditions hold and
\begin{equation}
 \left(\frac1n\sum_{i=1}^n
 \|\nabla f_i(x)-\nabla f_i(y)\|^2\right)^{1/2}
 \le L_{\rm ms}\|x-y\|
 \qquad\forall x,y.
 \label{eq:mean-squared-smoothness}
\end{equation}

At the same numerical bound $L$,
$\mathcal F_{\rm ind}^n(\Delta,L)\subseteq
\mathcal F_{\rm ms}^n(\Delta,L)$.
Following the literature on finite-sum lower bounds
\citep{zhou2019lower,emmenegger2022oracle,han2024lower}, we call
\eqref{eq:individual-smoothness} \emph{individual smoothness} (also called componentwise smoothness).
Condition~\eqref{eq:mean-squared-smoothness} is
\emph{mean-squared smoothness} (often called average smoothness), as used in
the SPIDER/PAGE analyses \citep{fang2018spider,li2021page}.

\paragraph{Incremental first-order oracle.}
We use the standard IFO model introduced by \citet{agarwal2015lower} and later
studied for randomized incremental gradient methods by
\citet{lan2018optimal}.  Fix a dimension $d$.  An exact IFO call selects
$(i,x)\in[n]\times\R^d$ and returns
\begin{equation}
    \mathcal O_i(x)=\bigl(f_i(x),\nabla f_i(x)\bigr).
    \label{eq:ifo-response}
\end{equation}
Throughout the paper, \emph{IFO} refers to this exact oracle returning both
value and gradient.

\paragraph{Algorithmic scope.}
We consider randomized IFO algorithms with unrestricted query selection.  At
every round the algorithm may choose both the component index and the query
point as arbitrary functions of its private randomness and the complete
history of queries and responses.  It may repeat indices, stop at a random
time, and return a point that it never queried.  We impose no linear-span or
zero-respecting restriction, and the rule for choosing component indices may
depend on the full history.
This formulation follows standard conventions in oracle complexity
\citep{carmon2020lower1,arjevani2023stochastic}.
Appendix~\ref{app:formal-ifo-model} gives the measurable model and the
minimax definition over all dimensions used below.

For a class $\mathcal F$ and error criterion $\mathcal E$, write
$\operatorname{Comp}_{\varepsilon,\delta}^{\rm IFO}
(\mathcal F;\mathcal E)$ for the smallest fixed budget that achieves
error at most $\varepsilon$ with probability at least $1-\delta$, uniformly
over dimensions and instances in the worst case.  We use
$\mathcal E_F^{\rm nc}(x)=\|\nabla F(x)\|$ and
$\mathcal E_F^{\rm PL}(x)=F(x)-F^*$.

\begin{theorem}[Nonconvex lower bound for randomized IFO algorithms]
\label{thm:nonconvex-main}
There are universal constants $n_0,c_0,p_0>0$ such that, for every
$n\ge n_0$, all $L_{\max},\Delta>0$, every $\varepsilon$ satisfying
$0<\varepsilon^2\le c_0\Delta L_{\max}$, and every fixed
$\delta\in(0,p_0)$,
\begin{equation}
 \operatorname{Comp}_{\varepsilon,\delta}^{\rm IFO}
 \bigl(\mathcal F_{\rm ind}^n(\Delta,L_{\max});\mathcal E^{\rm nc}\bigr)
 =\Omega\!\left(n+\sqrt n\,
 \frac{\Delta L_{\max}}{\varepsilon^2}\right).
 \label{eq:nonconvex-main-rate}
\end{equation}
The implicit constant is universal.
\end{theorem}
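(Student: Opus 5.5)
The bound has two terms, $\Omega(n)$ and $\Omega(\sqrt n\,\Delta L_{\max}/\varepsilon^2)$, and I will prove them separately. Since $\max\{a,b\}\ge(a+b)/2$, this suffices.

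\textbf{The $\Omega(n)$ term.} I use a standard needle instance. Take $f_i\equiv0$ for all $i\ne i^\star$ and $f_{i^\star}(x)=n\,g(x)$. Here $g$ is a one-dimensional smooth bump with $g(0)=0$, $\inf g\ge -\Delta$, curvature at most $L_{\max}/n$, and gradient of size $\gtrsim\varepsilon$ on a region that contains every point the algorithm could otherwise output. The index $i^\star$ is drawn uniformly at random. Before it has queried $i^\star$, the algorithm sees only zero responses, so its output does not depend on $i^\star$. With fewer than $n/2$ calls, $i^\star$ is never queried with probability at least $1/2$. The condition $\varepsilon^2\le c_0\Delta L_{\max}$ leaves enough room to fit such a bump. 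A cleaner alternative is to hide the gradient direction as well: place the bump along a random unit vector $u$ in high dimension, so any output that ignores $u$ has $\|\nabla F\|\gtrsim\varepsilon$ with high probability.

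\textbf{The main term: dense weak hiding.} The plan follows the construction sketched in the introduction. I build a chain of $T\asymp\Delta L_{\max}\rho/\varepsilon^2$ stages, where stage $j$ acts on a block $\R^D$ with a hidden sign vector $\Theta^{(j)}\in\{\pm1\}^D$. Component $i$ uses row $S^{(j)}_{i}\in\{\pm1\}^D$ of an $n\times D$ table. Each entry agrees with $\Theta^{(j)}$ with probability $(1+\rho)/2$, and the table is conditioned, or built deterministically, so that the row average equals $\rho\Theta^{(j)}$ exactly.

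Component $f_i$ is a Carmon-style zero-chain, $\sum_j[\Psi(-\text{prev})\Phi(a^{(j)}_i(x))+\dots]$, composed with two maps. The first is a bounded radial map sending the block $x^{(j)}$ to a vector in the unit ball. The second is the scalar alignment $a_i^{(j)}(x)=\langle S^{(j)}_i,\text{map}(x^{(j)})\rangle/\sqrt D$. A gate $G$ multiplies link $j+1$ and vanishes together with $G'$ unless stage $j$ has alignment above a threshold. In the average $F$ the rows combine into $\rho\langle\Theta,\cdot\rangle$, so $F$ is a scaled chain with gap $\Delta$ and gradient at least $\varepsilon$ until the final stage is aligned. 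The scalings are chosen as follows: the individual Lipschitz constant of $f_i$ is $L_{\max}$ (a row has norm $\sqrt D$, which costs a $1/\rho$ factor relative to the average signal), and the total gap is $\lesssim T\varepsilon^2/(\rho L_{\max})\le\Delta$. This fixes $T$. Checking $C^\infty$ smoothness and the gap is routine once the radial map and gate are fixed.

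\textbf{Information argument (main obstacle).} I must show that aligning stage $j$, that is, getting $\langle\Theta^{(j)},\text{map}(x)\rangle/\sqrt D$ above the threshold, needs $\Omega(\min\{n,\rho^{-2}\})$ fresh IFO calls after stage $j-1$ opens. Because the gate is zero, replies carry no information about $\Theta^{(j)}$ before then, even for arbitrary query points. After opening, each call to row $i$ reveals at most the row $S_i^{(j)}$ (the response is a function of it). I bound $I(\Theta^{(j)};S_{i_1},\dots,S_{i_m})$ for $m$ distinct rows, sampled without replacement from the balanced table, by $O(m\rho^2 D)$ when $m\le n/2$; repeated rows give nothing new. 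I would prove this by comparison with i.i.d. $\rho$-biased signs, where each coordinate contributes $O(\rho^2)$ per row, plus a hypergeometric correction that is harmless while $m\le n/2$. Recovering $\Theta^{(j)}$ to constant correlation requires $\Omega(D)$ bits by Fano/rate-distortion over the hypercube. Hence $m=\Omega(\min\{n,\rho^{-2}\})$.

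Summing over stages through a chain-rule and stopping-time argument, with $D$ large enough that guessing without information fails with probability $e^{-\Omega(D)}$ and a union bound over $T$ stages, any algorithm with budget $o(T\min\{n,\rho^{-2}\})$ fails with constant probability. Setting $\rho=n^{-1/2}$ gives $\Omega(\Delta L_{\max}\sqrt n/\varepsilon^2)$. The delicate points are making the exact row-average constraint compatible with the per-row mutual-information bound, and making sure adaptive index choice cannot concentrate information. I would handle the latter by a symmetrization: randomly permute the rows, so any adaptive index sequence behaves like sampling without replacement.
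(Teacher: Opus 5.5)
There is a genuine gap in your $\Omega(n)$ term, and it lies exactly in the regime where that term is needed. In your needle instance the average is $F=g$. It is $(L_{\max}/n)$-smooth with $F(0)-\inf F\le\Delta$, so for every realization of $i^\star$ and $u$,
\[
 \|\nabla F(0)\|^2\le\frac{2L_{\max}}{n}\bigl(F(0)-\inf F\bigr)\le\frac{2\Delta L_{\max}}{n}.
\]
The algorithm that makes no calls and returns $0$ therefore succeeds whenever $\varepsilon^2>2\Delta L_{\max}/n$. Without a random direction it could even return a stationary point of the known $g$, since $F=g$ does not depend on $i^\star$.

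The additive $n$ matters only when $n\gtrsim\sqrt n\,\Delta L_{\max}/\varepsilon^2$, i.e.\ $\varepsilon^2\gtrsim\Delta L_{\max}/\sqrt n$. For large $n$ this range is disjoint from $\varepsilon^2\le2\Delta L_{\max}/n$. Your claim that $\varepsilon^2\le c_0\Delta L_{\max}$ ``leaves enough room'' for the bump is therefore false.

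Your chain cannot fill this range either. $T_\rho\ge2$ forces $\rho\gtrsim\varepsilon^2/(\Delta L_{\max})$. Hence $r_\rho T_\rho\lesssim(\Delta L_{\max}/\varepsilon^2)\rho^{-1}\lesssim(\Delta L_{\max}/\varepsilon^2)^2$, which is $o(n)$ once $\varepsilon^2\gg\Delta L_{\max}/\sqrt n$. So for $\Delta L_{\max}/\sqrt n\lesssim\varepsilon^2\le c_0\Delta L_{\max}$, where the theorem asserts $\Omega(n)$, your proposal proves nothing.

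The missing idea is to put the hidden signal into a zero-curvature linear term. Such a term can be amplified by $n$ at no cost in individual smoothness. The paper uses the dense quadratic $f_i(x)=\frac\gamma2\|x\|^2-\frac{\tau}{\rho_0}\langle s_i,x\rangle$ with:
\begin{itemize}
\item the smallest admissible bias $\rho_0=m_0/n$, $m_0\in\{1,2\}$, and $D_0=4$;
\item $\tau=16\varepsilon$ and $\gamma=\tau^2/(2\Delta)=128\varepsilon^2/\Delta\le L_{\max}$, taking $c_0\le1/128$.
\end{itemize}
Every component Hessian is $\gamma I$, and the average $\frac\gamma2\|x\|^2-\tau\langle u,x\rangle$ has gap exactly $\Delta$. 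Each new row carries $O(1/n^2)$ nats, so $\lfloor n/4\rfloor$ calls yield less than $1/200$ nats. Fano over the $16$ codewords then caps the success probability at $5/16$.

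Your needle can be repaired the same way. Take $f_i=\frac\gamma2\|x\|^2$ for $i\ne i^\star$ and $f_{i^\star}=\frac\gamma2\|x\|^2-n\tau\langle u,x\rangle$. All components are $\gamma$-smooth, and replies before $i^\star$ is hit are known functions of the query. An output independent of $u$ then has $\|\gamma\widehat X-\tau u\|\ge\tau/2$ with high probability in high dimension.

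Separately, tighten the chain. As written, $\Phi(a_i^{(j)}(x))$ is nonlinear in the row alignment, so $n^{-1}\sum_if_i$ is not a chain in $\langle\Theta^{(j)},\cdot\rangle$. Also, a gate reading the row alignment could be opened after a single row by aligning with that row itself, which voids the per-stage cost. The row must enter linearly, as $\widetilde q_{ij}=\rho^{-1}\langle S^{(j)}_{i:}/\sqrt D,\Psi(x_j)\rangle$ with $\Psi$ the radial map. The gate must read the common alignment $q_{j-1}=\langle\Theta^{(j-1)},\Psi(x_{j-1})\rangle/\sqrt D$ and be flat below $a$. With these fixes, your main-term argument is in line with the paper's: the without-replacement information bound, per-stage row cost, additivity over stages, and control of outputs on unrevealed stages.
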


Appendix~\ref{app:stopping} extends the same rate to expected call budgets in
the worst case and to expected gradient norm.

To state both smoothness models compactly, let
$\mathsf s\in\{\mathrm{ind},\mathrm{ms}\}$ indicate the model and let
$L_{\mathsf s}$ denote the corresponding smoothness bound.

\begin{corollary}[Tight IFO complexity for nonconvex stationarity]
\label{cor:matching-complexity}
For every $n\ge n_0$, $L_{\mathsf s},\Delta>0$, any fixed
$\delta\in(0,p_0)$, and
$\mathsf s\in\{\mathrm{ind},\mathrm{ms}\}$, the IFO complexity is
\begin{equation}
 \operatorname{Comp}_{\varepsilon,\delta}^{\rm IFO}
 \bigl(\mathcal F_{\mathsf s}^n(\Delta,L_{\mathsf s});\mathcal E^{\rm nc}\bigr)
 =\Theta\!\left(
 n+\sqrt n\,\frac{\Delta L_{\mathsf s}}{\varepsilon^2}
 \right)
 \label{eq:matching-complexity}
\end{equation}
whenever $0<\varepsilon^2\le c_0\Delta L_{\mathsf s}$.  The upper bound follows
from PAGE and SPIDER \citep{fang2018spider,li2021page}, since individual
$L_{\max}$-smoothness implies mean-squared smoothness with
$L_{\rm ms}\le L_{\max}$.  Theorem~\ref{thm:nonconvex-main} gives the lower
bound for $\mathsf s=\mathrm{ind}$, and the same hard instances satisfy the
mean-squared condition at the same numerical bound.
\end{corollary}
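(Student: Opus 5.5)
The corollary combines an upper bound and a lower bound, and I will treat each model $\mathsf s\in\{\mathrm{ind},\mathrm{ms}\}$ separately. The argument uses the class inclusions stated in the setting and the fact that Theorem~\ref{thm:nonconvex-main} is proved through explicit hard instances. I take $\varepsilon^2\le c_0\Delta L_{\mathsf s}$ and a fixed $\delta\in(0,p_0)$ throughout.

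\emph{Upper bound.} For $\mathsf s=\mathrm{ms}$ I invoke the PAGE/SPIDER guarantee \citep{fang2018spider,li2021page}. It states that $O(n+\sqrt n\,\Delta L_{\rm ms}/\varepsilon^2)$ IFO calls give an output with $\E\|\nabla F(\hat x)\|^2\le\varepsilon^2$. That guarantee is in expectation, so I convert it to one that holds with probability $1-\delta$. Run the method with target accuracy $\varepsilon\sqrt{\delta}$ and apply Markov's inequality to $\|\nabla F(\hat x)\|^2$. Since $\delta$ is a fixed constant, this costs only a constant factor. Each call of the method reads one component, so the budget is deterministic once it is truncated at the prescribed number of iterations; a constant-factor truncation plus Markov handles random loop lengths. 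For $\mathsf s=\mathrm{ind}$, individual $L_{\max}$-smoothness implies \eqref{eq:mean-squared-smoothness} with $L_{\rm ms}\le L_{\max}$, by averaging the squared inequality \eqref{eq:individual-smoothness}. Hence $\mathcal F_{\rm ind}^n(\Delta,L)\subseteq\mathcal F_{\rm ms}^n(\Delta,L)$, and the same bound applies.

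\emph{Lower bound.} For $\mathsf s=\mathrm{ind}$ this is exactly Theorem~\ref{thm:nonconvex-main}. For $\mathsf s=\mathrm{ms}$, complexity is monotone under class inclusion: a budget that succeeds on the larger class $\mathcal F_{\rm ms}^n(\Delta,L)$ also succeeds on the smaller class $\mathcal F_{\rm ind}^n(\Delta,L)$. Therefore
\[
\operatorname{Comp}_{\varepsilon,\delta}^{\rm IFO}(\mathcal F_{\rm ms}^n(\Delta,L);\mathcal E^{\rm nc})\ge\operatorname{Comp}_{\varepsilon,\delta}^{\rm IFO}(\mathcal F_{\rm ind}^n(\Delta,L);\mathcal E^{\rm nc}),
\]
and the bound transfers with the same constants $n_0,c_0,p_0$. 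It is also enough to observe that the hard instances satisfy the mean-squared condition at the same numerical constant.

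The only step needing care is the upper bound's conversion from an in-expectation, possibly random-length guarantee to the fixed-budget, probability-$(1-\delta)$ criterion in the definition of $\operatorname{Comp}$. I expect this to be routine via Markov's inequality, because $\delta$ is bounded below by a constant in the regime the corollary considers. The additive $n$ term also matches, since PAGE's initial full gradient costs $n$ calls. The restriction $\varepsilon^2\le c_0\Delta L_{\mathsf s}$ is needed only for the lower bound.
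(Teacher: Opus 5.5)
Your proposal is correct and matches the paper's argument. The upper bound comes from PAGE/SPIDER through the inclusion $\mathcal F_{\rm ind}^n(\Delta,L)\subseteq\mathcal F_{\rm ms}^n(\Delta,L)$. The mean-squared lower bound follows from Theorem~\ref{thm:nonconvex-main}, because its individually smooth hard instances lie in the mean-squared class at the same numerical constant; this is equivalent to your monotonicity of $\operatorname{Comp}$ under class inclusion.

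Your Markov-based conversion to a fixed budget is the same device as the paper's capping lemma (Lemma~\ref{lem:page-fixed-budget}). To keep the total failure probability at most $\delta$ rather than $2\delta$, split $\delta$ between the cost cap and the accuracy event: use target accuracy $\varepsilon\sqrt{\delta/2}$ and cap at $\lceil 2Q/\delta\rceil$ calls, where $Q$ is the expected call count.
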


For the PL result, use the corresponding smoothness constant and condition
number:
\begin{equation}
 (L_{\mathsf s},\kappa_{\mathsf s})=
 \begin{cases}
  (L_{\max},\kappa_{\max}),&\mathsf s=\mathrm{ind},\\
  (L_{\rm ms},\kappa_{\rm ms}),&\mathsf s=\mathrm{ms},
 \end{cases}
 \qquad
 \kappa_{\max}:=\frac{L_{\max}}\mu,
 \quad
 \kappa_{\rm ms}:=\frac{L_{\rm ms}}\mu.
 \label{eq:class-condition-numbers}
\end{equation}

Following the standard PL convention
\citep{polyak1963gradient,karimi2016linear}, define
$\mathcal F_{\mathsf s,\rm PL}^{n,d}(\Delta,L_{\mathsf s},\mu)$ as the subclass
of $\mathcal F_{\mathsf s}^{n,d}(\Delta,L_{\mathsf s})$ for which $F^*>-\infty$
and
\begin{equation}
 \|\nabla F(x)\|^2\ge2\mu(F(x)-F^*)
 \qquad\forall x.
 \label{eq:pl-definition}
\end{equation}

Here $F^*=\inf_xF(x)$.  When $d$ is omitted, the same convention over finite
dimensions applies.

\begin{theorem}[Tight PL complexity under both smoothness assumptions]
\label{thm:pl-main}
There is a universal constant $c_\varepsilon>0$ such that,
for every integer $n\ge n_0$, every $\mathsf s\in\{\mathrm{ind},\mathrm{ms}\}$,
all $L_{\mathsf s},\mu,\Delta>0$ with $\kappa_{\mathsf s}\ge3$, every
$\varepsilon$ satisfying $0<\varepsilon\le c_\varepsilon\Delta$, and every
fixed $\delta\in(0,p_0)$,
\begin{equation}
 \operatorname{Comp}_{\varepsilon,\delta}^{\rm IFO}
 \bigl(\mathcal F_{\mathsf s,\rm PL}^n(\Delta,L_{\mathsf s},\mu);
 \mathcal E^{\rm PL}\bigr)
 =
 \begin{cases}
 \displaystyle
 \Theta\!\left(n+
 \frac{n\log(\Delta/\varepsilon)}
 {1+\log(\sqrt n/\kappa_{\mathsf s})}\right),
 &3\le\kappa_{\mathsf s}\le\sqrt n,\\[2mm]
 \displaystyle
 \Theta\!\left(n+
 \kappa_{\mathsf s}\sqrt n\log\frac{\Delta}{\varepsilon}\right),
 &\kappa_{\mathsf s}\ge\sqrt n
 \end{cases}
 .
 \label{eq:tight-pl-complexity}
\end{equation}
\end{theorem}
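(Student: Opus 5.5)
The proof splits into upper and lower bounds, and each splits into the two $\kappa$ regimes.

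\emph{Upper bounds.} We reduce everything to the mean-squared model. Since $\mathcal F_{\rm ind,PL}^n(\Delta,L,\mu)\subseteq\mathcal F_{\rm ms,PL}^n(\Delta,L,\mu)$, an upper bound for $\mathsf s=\mathrm{ms}$ covers both models. We use Restarted PAGE (Algorithm~\ref{alg:restarted-page}, Proposition~\ref{prop:restarted-page}). The algorithm runs PAGE epochs with minibatch $b\asymp\sqrt n$, switching probability $p\asymp b/n$, and step size $\eta\asymp1/(L_{\rm ms}(1+\sqrt{(1-p)/(pb)}))$, and restarts from a full gradient after each epoch.

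One epoch costs $O(n)$ calls in expectation. The PL Lyapunov analysis of PAGE gives a contraction of $F-F^*$ per epoch by a factor $\max\{e^{-c}, O(\kappa_{\rm ms}/\sqrt n)\}$. When $\kappa_{\rm ms}\le\sqrt n$, each epoch gains $\Omega(1+\log(\sqrt n/\kappa_{\rm ms}))$ in $\log(\Delta/\varepsilon)$. This yields
\[
O\!\left(n+\frac{n\log(\Delta/\varepsilon)}{1+\log(\sqrt n/\kappa_{\rm ms})}\right)
\]
calls. For $\kappa_{\rm ms}\ge\sqrt n$ the standard rate $O(n+\kappa_{\rm ms}\sqrt n\log(\Delta/\varepsilon))$ is recovered. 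Markov's inequality converts the expected cost into a fixed budget with failure probability $\delta<p_0$.

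\emph{Lower bounds.} By inclusion, it suffices to prove them for $\mathsf s=\mathrm{ind}$ with $L_{\rm ms}\le L_{\max}$ on the same instance. The $\Omega(n)$ term follows from a standard argument: hide a single quadratic bump in one uniformly random component that the algorithm must touch. The logarithmic terms come from the dense weak hiding construction used for Theorem~\ref{thm:nonconvex-main}, now with a chain of $K$ stages.

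Stage $j$ is encoded by a sign table whose row average is $\rho\Theta^{(j)}$, behind a gate that vanishes with its derivative below threshold. The key is the information lemma established there. With bias $\rho$, revealing a new stage requires $\Omega(\min\{n,\rho^{-2}\})$ distinct rows, with high probability, uniformly over arbitrary adaptive queries.

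The stages are then weighted geometrically so the objective is PL. Stage $j$ carries scale $q^{j}$, so that $F-F^*\ge c\,q^{j}\Delta$ until stage $j$ is exposed. This means $K\asymp\log(\Delta/\varepsilon)/\log(1/q)$ stages must be revealed.

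\textbf{Large $\kappa$.} Take $\rho\asymp n^{-1/2}$ and $1-q\asymp\sqrt n/\kappa_{\max}$. Each stage costs $\Omega(n)$ calls, which gives
\[
\Omega\!\left(n\cdot\frac{\kappa_{\max}}{\sqrt n}\log\frac{\Delta}{\varepsilon}\right).
\]

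\textbf{Small $\kappa$.} Use a two-scale construction. A globally strongly convex quadratic is added to the average so that PL holds with $\kappa_{\max}$. The hidden parts have bias $\rho\asymp n^{-1/2}$, but individual smoothness forces component scale $\rho^{-1}$ relative to the average signal. This permits a contraction $q\asymp\kappa_{\max}/\sqrt n$ per stage, so $\log(1/q)\asymp1+\log(\sqrt n/\kappa_{\max})$. Again each stage costs $\Omega(n)$.

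\emph{Main obstacle.} The hardest step is verifying the global PL inequality for the gated construction at arbitrary points, including points where gates are partially open. We must show $\|\nabla F\|^2\ge2\mu(F-F^*)$ with $\mu=L_{\max}/\kappa_{\max}$, while keeping individual smoothness at $L_{\max}$ and the gap at $\Delta$.

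I would first prove a stagewise inequality: on the region where stage $j$ is the frontier, the unrevealed tail is inactive, so $F$ reduces to a finite strongly convex-plus-gate function. Then I would bound $\|\nabla F\|$ from below by the frontier stage's contribution. The geometric weights make the remaining stages a convergent series dominated by that frontier term.

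Finally, the $\Theta$ statement combines both directions. The regimes agree at $\kappa=\sqrt n$ up to constants.
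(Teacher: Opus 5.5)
The genuine gap is in the upper bound, at exactly its one new ingredient. The PL Lyapunov analysis of PAGE that you invoke tracks $\Phi_t=F(x_t)-F^*+\frac{\eta}{2p}\|g_t-\nabla F(x_t)\|^2$, and the variance part of $\Phi_t$ decays only like $(1-p)^t$. Since $p\asymp b/n$ and one iteration costs $\Theta(b)$ calls in expectation, an epoch of $O(n)$ calls can only be certified to contract by a constant factor. That is the standard schedule of \citet[Corollary~6]{li2021page}, which gives $O(n\log(\Delta/\varepsilon))$ for $\kappa_{\rm ms}\le\sqrt n$. Your display $\max\{e^{-c},O(\kappa_{\rm ms}/\sqrt n)\}$ indeed yields only $O(1)$ progress in $\log(\Delta/\varepsilon)$ per epoch, which contradicts your next sentence.

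The missing idea (Lemma~\ref{lem:page-epoch-contraction}) is to run the \emph{nonconvex} PAGE argument inside each epoch and apply PL only at the output. Because the epoch starts from an exact full gradient, the Lyapunov function equals $F(x)-F^*$ at the start. Summing the descent inequality then gives $\frac\eta2\sum_{t<T}\E\|\nabla F(x_t)\|^2\le F(x)-F^*$. Returning a uniformly random iterate $\widehat x$ and applying PL pointwise gives $\E[F(\widehat x)-F^*\mid x]\le(F(x)-F^*)/(\mu\eta T)$. With the paper's choices $b=1$, $p=1/(n+1)$, $\eta=1/(L_{\rm ms}(1+\sqrt n))$ and $T=\lceil4(n+\kappa_{\rm ms}(1+\sqrt n))\rceil$, this is $q_{\rm ep}=\Theta(\kappa_{\rm ms}/\sqrt n)$ at $O(n)$ expected cost when $\kappa_{\rm ms}\le\sqrt n$. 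Otherwise it is a constant factor at cost $O(\kappa_{\rm ms}\sqrt n)$. Your minibatch variant ($\eta\asymp1/L_{\rm ms}$, $T\asymp\sqrt n$ iterations) gives the same $1/(\mu\eta T)\asymp\kappa_{\rm ms}/\sqrt n$ once this argument replaces the PL one, with $\min$ in place of $\max$.

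Your lower-bound plan follows the paper's architecture: an auxiliary instance for the $n$ term, a weighted chain with $1-r_{\rm w}\asymp\sqrt n/\kappa_{\max}$ for $\kappa_{\max}\ge2\Gamma\sqrt n$, and a two-scale chain below $\sqrt n$ (and, by monotonicity of the class in $\mu$, on the transition interval). However, several steps would not go through as stated.
\begin{enumerate}
\item The one-stage lemma bounds early completion only by probability $1/8$, not with high probability, so stage costs cannot be added by a union bound. The paper instead uses the Markov count of Theorem~\ref{thm:sequential}, control of premature alignment with later stages, and a reserved terminal stage or hidden suffix so that an unqueried output is covered.
\item In the PL verification, the tail beyond the frontier is \emph{not} inactive at arbitrary $x$: later gates may be open and $\|x_k\|$ is unbounded, so a convergent weight series alone does not suffice. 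The paper uses the exact decomposition $R-R^*=\sum_jw_j^2e_{j,c_j}(x_j/w_j)$ and a per-block bound $w_j^2e_{j,c_j}\le\frac1{2\sigma}\|\nabla_{x_j}R\|^2+\frac52w_j^2\mathbf 1\{c_j<1\}$ valid for every block. It combines these with the frontier bound $\|\nabla R\|\ge\chi w_{j_\star}$, which must tolerate the successor's $G'$ term (coefficient $\gamma_{j_\star}\ge1$).
\item For small $\kappa_{\max}$, your sketch omits what makes the ratio $\vartheta\asymp\kappa_{\max}\rho$ admissible. A short gate scale $R_j=\zeta_j/\bar\mu$ must be separated from a long contribution scale $W_j\asymp\zeta_j/(L_{\max}\rho)$. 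Then each component's nonlinear Hessian is at most $L_{\max}/28$, while averaging shrinks it by a factor $\rho$, below the quadratic $\bar\mu\|x\|^2$ placed in every component.
\item Your single-spike $\Omega(n)$ instance works only if the hidden component differs from a common $\mu$-quadratic by a \emph{random linear tilt}. Putting the average's curvature into one component needs curvature $n\mu>L_{\max}$ once $n>\kappa_{\max}$, and a fixed tilt direction leaves the average known to the algorithm.
\end{enumerate}
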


The hard instances are individually $L_{\max}$-smooth and also satisfy
mean-squared smoothness with bound $L_{\max}$.  In the
small-$\kappa_{\max}$ branch, their average
objective is globally strongly convex.  By contrast, the average-smooth hard
instances of \citet{zhou2019lower} and \citet{han2024lower} have component
smoothness of order $\sqrt n\,L_{\rm ms}$.  Rewriting their bounds in terms of
$L_{\max}$ therefore loses the factor $\sqrt n$.  Section~\ref{sec:restarted-page}
gives the matching Restarted PAGE upper bound under mean-squared smoothness in
terms of $\kappa_{\rm ms}$.

\section{Dense Weak Hiding}
\label{sec:construction}

Dense weak hiding combines a fixed encoding table with a bounded radial map
and a smooth gate.  The table preserves the hidden signal in the average, and
the radial map turns any query point into an alignment vector in the unit ball.
Because the gate satisfies $G=G'=0$ below its threshold, unopened links
contribute neither value nor gradient.  Figure~\ref{fig:pipeline} shows these
ingredients and the resulting bias tradeoff.

\paragraph{Fixed encoding table.}
Fix a bias $\rho\in(0,1/8]$ on the parity-compatible grid specified in
Appendix~\ref{app:gate-code}.  At each stage $j$, independently draw
$\Theta^{(j)}\sim\operatorname{Unif}\{\pm1\}^D$ and sample one $n\times D$
sign table $S^{(j)}=(S^{(j)}_{i\ell})$ whose columns satisfy
\begin{equation}
 \sum_{i=1}^n S^{(j)}_{i\ell}=n\rho\Theta^{(j)}_\ell.
 \label{eq:near-balanced}
\end{equation}
Component $i$ stores row $S^{(j)}_{i:}\in\{\pm1\}^D$.  The table is fixed before
the interaction, so querying the same component returns the same row.  Any
one row is only weakly biased toward $\Theta^{(j)}$, but the column constraint
gives the exact identity
$n^{-1}\sum_iS^{(j)}_{i:}=\rho\Theta^{(j)}$.  The average therefore preserves
the full direction after rescaling by $1/\rho$.  Appendix~\ref{app:information} proves
that choosing indices from previous replies still requires many distinct rows.

\paragraph{Bounded radial map.}
Following the bounded radial map of \citet{carmon2020lower1} and its adaptation
to finite sums with unrestricted queries by \citet{emmenegger2022oracle}, define
\begin{equation}
 \Psi(x)=\frac{x}{\sqrt{1+\|x\|^2}},
 \qquad u_\theta=\frac\theta{\sqrt D}.
\end{equation}
For this local calculation, suppress the stage index.  Given a direction
$\theta$ and a row $S_i$, define
\begin{equation}
 q_\theta(x)=\langle u_\theta,\Psi(x)\rangle,
 \qquad
 \widetilde q_i(x)=\frac1\rho
 \left\langle\frac{S_i}{\sqrt D},\Psi(x)\right\rangle.
 \label{eq:radial-hidden-coordinate}
\end{equation}
Then $n^{-1}\sum_i\widetilde q_i=q_\theta$ exactly.  Since $\|\Psi(x)\|<1$,
even an arbitrarily large query point produces a bounded alignment vector.
Appendix~\ref{app:gate-code} shows that the derivative bounds for $\Psi$ do not
depend on $D$, while those for $\widetilde q_i$ scale as $O(1/\rho)$.

\paragraph{Smooth threshold gate.}
Fix $0<a<b<1$ and a nondecreasing $G\in C^\infty(\R)$ satisfying
\begin{equation}
 G(s)=G'(s)=0 \quad(s\le a),
 \qquad
 G(s)=1,\quad G'(s)=0 \quad(s\ge b).
 \label{eq:flat-gate-summary}
\end{equation}
Below $a$, a link and all of its value and gradient contributions vanish
exactly.  Thus unopened links are invisible in an IFO reply.  Bounded
derivatives of $G$ and the bounds for $\Psi$ control each component Hessian.
Appendix~\ref{app:gate-code} gives the explicit choices.
Appendix~\ref{app:information} makes the coupling across stages precise.

\begin{figure}[H]
\centering
\includegraphics[width=\linewidth]{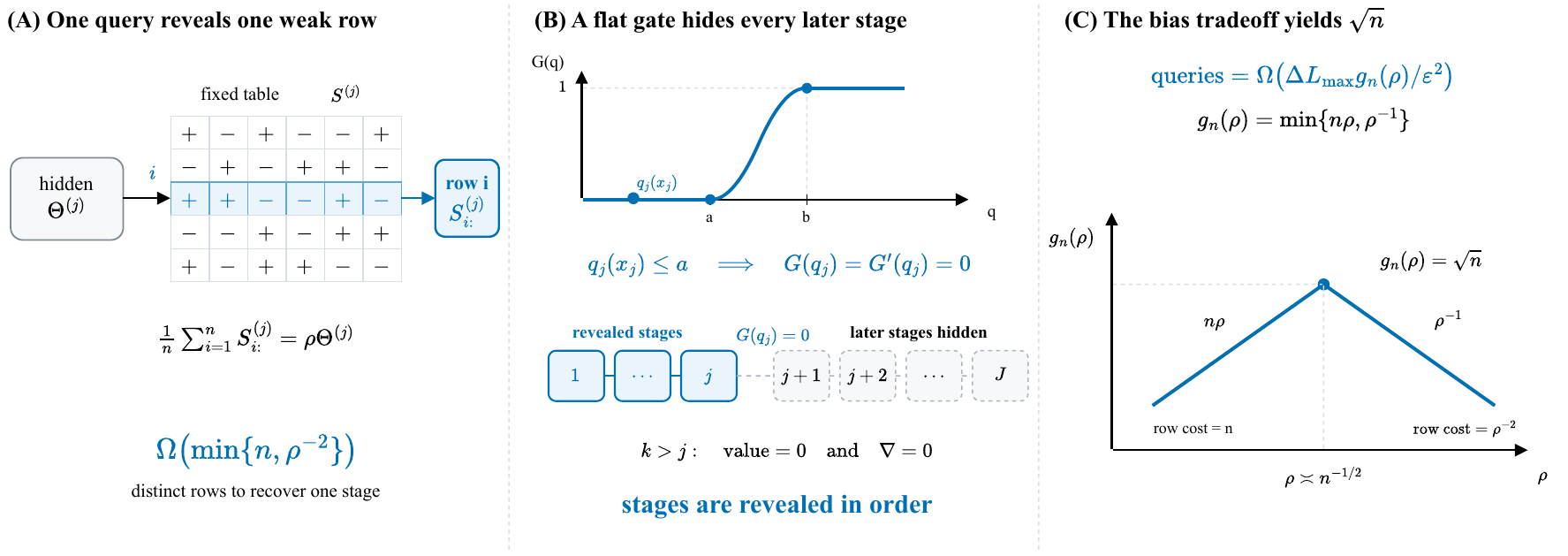}
\caption{Dense weak hiding.  (A) A fixed sign table encodes each stage
direction; the blue outline marks the row accessed when component $i$ is
queried, and
averaging all rows gives $\rho\Theta^{(j)}$.  (B) Below $a$, the link following
stage $j$ is zero in value and gradient, so every later stage $k>j$ remains
hidden.  (C) The product of the row cost per stage and the
number of stages is proportional to $g_n(\rho)$, which is maximized at
$\rho\asymp n^{-1/2}$ and yields $\sqrt n$.}
\label{fig:pipeline}
\end{figure}

\paragraph{Origin of the $\sqrt n$ factor.}
The one-stage information bound in Section~\ref{sec:information} shows that
completing one stage requires
\begin{equation}
 r_\rho=\Omega\!\left(\min\{n,\rho^{-2}\}\right)
\end{equation}
distinct rows, even when their indices are chosen from all previous responses.
The nonconvex scaling in Appendix~\ref{app:nonconvex} can take
$T_\rho=\Theta(\Delta L_{\max}\rho/\varepsilon^2)$ stages and still preserve
the gap and individual smoothness bounds.  Therefore
\begin{equation}
 r_\rho T_\rho
 =\Omega\!\left(
 \frac{\Delta L_{\max}}{\varepsilon^2}
 \min\{n\rho,\rho^{-1}\}
 \right).
 \label{eq:bias-smoothness-tradeoff}
\end{equation}
The two branches balance at $n\rho=\rho^{-1}$, hence
$\rho\asymp n^{-1/2}$ and the product is
$\Omega(\sqrt n\,\Delta L_{\max}/\varepsilon^2)$.
Proposition~\ref{prop:bias-smoothness-tradeoff} in
Appendix~\ref{app:nonconvex} formalizes this tradeoff for every admissible bias,
using the one-stage row cost, sequential revelation, and the gradient bound for
the final stage.  Appendix~\ref{app:gate-code} handles the discrete choice of
$\rho$.

\section{Information Cost of Revealing Hidden Stages}
\label{sec:information}

To align above the gate threshold with a direction that has not been revealed,
an algorithm must see many distinct rows.  This remains true even when row indices
depend on the preceding history.  The row costs then add across successive stages.  Until stage $j$
becomes current, its hidden pair $(\Theta^{(j)},S^{(j)})$ is independent of the
preceding history.

\subsection{One-stage information bound}

\paragraph{Cost of one stage.}
Fix an admissible bias $\rho$.  In a sufficiently large block dimension,
Lemma~\ref{lem:tunable-bias-kl} gives the row threshold
\begin{equation}
 r_\rho=\Omega\!\left(\min\{n,\rho^{-2}\}\right).
\end{equation}
Before this many distinct rows have been revealed, the probability that any
query chosen from the preceding history crosses the gate threshold is at most a
fixed constant below $1/2$.  Repeated indices and side information independent
of the hidden stage do not improve this bound.

Order the distinct indices by their first occurrence.  Conditional on the
past, exchangeability makes each newly observed row a uniformly chosen
remaining row of the fixed table.  The mutual information between $\Theta$ and
the first $r$ such rows is only $O(Dr\rho^2)$, and repeats add nothing.  The
data processing inequality then bounds the probability of alignment.  At
$\rho\asymp n^{-1/2}$, this requires $\Omega(n)$ distinct rows.

\subsection{Sequential composition across stages}

Stage $j$ becomes current after stages $1,\ldots,j-1$ have been completed, and
it is completed when its alignment first exceeds the gate threshold $a$.
Progress can be faster than this sequential process in two ways: the current
stage may finish in unusually few calls, or a query may align with a stage that
is not yet current.  The one-stage bound controls the first event.  The event
$\cEfuture$ covers the second.  On $\cEfuture^c$, unopened
future links satisfy $G=G'=0$, so stages can only be completed in order.

Let $\nu_N$ be the number of stages completed within $N$ calls, and let
$\cEfuture$ be the event that a query or the final output crosses the alignment
threshold for a stage beyond the current one.

\paragraph{Stage costs add.}
Assign each call to the stage that is current when the query is made.  Suppose
every reached stage has conditional probability at most $p$ of finishing in
fewer than $r$ assigned calls.  If $J$ stages finish within $N\le rJ/2$ calls,
at least $J/2$ of them must finish in fewer than $r$ calls.  Their expected
number is at most $pJ$, so Markov's inequality gives
$\Prob(\nu_N\ge J)\le2p$.  Together with
$\Prob(\cEfuture)\le\delta$, Theorem~\ref{thm:sequential} gives
\begin{equation}
 \Prob(\nu_N<J,\ \cEfuture^c)\ge1-2p-\delta.
\end{equation}

On $\cEfuture^c$, every omitted value and gradient term is zero because its gate
satisfies $G=G'=0$.  Appendix~\ref{app:information} gives the filtration and coupling
details.

\section{Derivation of the Main Rates}
\label{sec:hard-chains}

At the optimized bias, one hidden stage costs $\Omega(n)$ distinct rows, and
these costs add across the stages, which are revealed in order.  The three constructions
below specify the number of stages and the error that remains when one of them
is unfinished.

\subsection{Nonconvex stationarity}

Write the scaled objective as $F(y)=\alpha_\rho R(\beta_\rho y)$.  Up to
universal constants, individual smoothness, the initial gap, and the gradient
left by an unfinished final stage are calibrated by
\begin{equation}
 \frac{\alpha_\rho\beta_\rho^2}{\rho}\asymp L_{\max},
 \qquad
 \alpha_\rho T_\rho\asymp\Delta,
 \qquad
 \alpha_\rho\beta_\rho\asymp\varepsilon.
 \label{eq:main-nonconvex-calibration}
\end{equation}
Solving these relations gives
$\alpha_\rho\asymp\varepsilon^2/(L_{\max}\rho)$,
$\beta_\rho\asymp L_{\max}\rho/\varepsilon$, and
$T_\rho\asymp\Delta L_{\max}\rho/\varepsilon^2$.
Appendix~\ref{app:nonconvex} fixes the universal constants so that individual
smoothness is at most $L_{\max}$, the initial gap is at most $\Delta$, and the
gradient barrier is at least $8\varepsilon$.  At the optimized bias
$\rho\asymp n^{-1/2}$, the one-stage
row threshold is $r_\rho=\Theta(n)$, and hence
$r_\rho T_\rho=\Omega(\sqrt n\,\Delta L_{\max}/\varepsilon^2)$.
When $T_\rho<2$, the auxiliary quadratic in Appendix~\ref{app:nonconvex} gives
the complementary $\Omega(n)$ term.  This proves
Theorem~\ref{thm:nonconvex-main}.

\subsection{Global PL: two geometrically scaled constructions}

Both PL constructions use $\rho\asymp n^{-1/2}$ and hence
$r_\rho=\Theta(n)$ rows per stage.  Up to universal constants, their scales set
$\Delta$, $L_{\max}$, $\mu$, and the remaining objective gap $\varepsilon$ as
follows:
\begin{equation}
\begin{array}{c|cccc}
 & \Delta & L_{\max} & \mu & \varepsilon\\ \hline
 \text{small }\kappa_{\max}
 & \zeta_1^2/\mu
 & \zeta_j/(\rho W_j)
 & \zeta_j/R_j
 & \zeta_{J+1}^2/\mu\\[2pt]
 \text{large }\kappa_{\max}
 & \alpha/(1-r_{\rm w})
 & \alpha\beta^2/\rho
 & \alpha\beta^2(1-r_{\rm w})
 & \Delta r_{\rm w}^J
\end{array}
\label{eq:main-pl-calibration}
\end{equation}
For small $\kappa_{\max}$, the $\mu$ entry is a strong convexity constant.  For
large $\kappa_{\max}$, it is the PL constant.  The last column is the objective
gap left by the unrevealed terminal stage or suffix.

\paragraph{Geometric stage scales.}
Under PL, the gradient at the first unfinished stage must control the objective
contribution of every later stage, so the stage scales decrease geometrically.
The large $\kappa_{\max}$ construction uses one scale $w_j$ for both the gate
and the stage contribution.  The first two entries of its row in
\eqref{eq:main-pl-calibration} give
$\alpha\asymp\Delta(1-r_{\rm w})$ and
$\beta^2\asymp L_{\max}\rho/[\Delta(1-r_{\rm w})]$.
Dividing the $\mu$ and $L_{\max}$ entries then gives
$1-r_{\rm w}\asymp1/(\kappa_{\max}\rho)$.
A geometric chain requires $0<r_{\rm w}<1$, so this scaling works only
when $\kappa_{\max}\gtrsim\sqrt n$.  Appendix~\ref{app:pl} makes this concrete by
taking $r_{\rm w}\in[1/2,1)$.  For
$3\le\kappa_{\max}<\sqrt n$, the required geometric ratio is no longer
admissible.  In the small $\kappa_{\max}$ row, the smoothness and curvature
entries instead give $R_j/W_j\asymp\kappa_{\max}\rho$.  The construction also
takes $\vartheta\asymp R_j/W_j$ to control the remaining link curvature.  Thus
$R_j\ll W_j$ when $\kappa_{\max}\ll\sqrt n$.  Completion occurs above alignment
$a$.  A gate is \emph{fully open} only above $b$, where $G=1$ and $G'=0$.

\paragraph{Small $\kappa_{\max}$: separate gate and contribution scales.}
The gate scale $R_j$ sets the displacement needed to open the next stage, while
the larger contribution scale $W_j$ keeps each component Hessian within the
$L_{\max}$ budget.
Since $\zeta_{j+1}=\vartheta\zeta_j$, the $\Delta$ and $\varepsilon$ entries in
\eqref{eq:main-pl-calibration} require
$\vartheta^{2J}\asymp\varepsilon/\Delta$.  For
$3\le\kappa_{\max}\le\sqrt n$, one has
$\vartheta\asymp\kappa_{\max}/\sqrt n$, and therefore
\begin{equation}
 J=\Theta\!\left(
 \frac{\log(\Delta/\varepsilon)}
 {1+\log(\sqrt n/\kappa_{\max})}\right),
 \qquad
 r_\rho J=\Theta\!\left(
 \frac{n\log(\Delta/\varepsilon)}
 {1+\log(\sqrt n/\kappa_{\max})}\right).
 \label{eq:main-small-kappa-lower}
\end{equation}
Appendix~\ref{app:pl} verifies that every component is $L_{\max}$-smooth, the
average is $\mu$-strongly convex, and an unfinished last stage leaves objective
gap at least $8\varepsilon$.  The auxiliary quadratic covers the remaining
case, where the target expression is $O(n)$.

\paragraph{Large $\kappa_{\max}$: geometric weights.}
For a universal constant $c_{\rm w}>1$ and
$\kappa_{\max}\ge c_{\rm w}\sqrt n$, the weighted construction uses the same
scale $w_j$ for the gate and stage contribution.  The $\Delta$ and
$\varepsilon$ entries in~\eqref{eq:main-pl-calibration} give
$r_{\rm w}^J\asymp\varepsilon/\Delta$.  Together with
$1-r_{\rm w}\asymp\sqrt n/\kappa_{\max}$, this gives
\begin{equation}
 J=\Theta\!\left(\frac{\kappa_{\max}}{\sqrt n}
 \log\frac{\Delta}{\varepsilon}\right),
 \qquad
 r_\rho J=\Theta\!\left(\kappa_{\max}\sqrt n
 \log\frac{\Delta}{\varepsilon}\right).
 \label{eq:main-large-kappa-lower}
\end{equation}
If no stage $k>J$ has been completed, the hidden suffix leaves objective gap at
least $8\varepsilon$.  To verify the PL property, let $j_\star$ be the first
gate that is not fully open.  Appendix~\ref{app:pl} shows that the gradient at
this stage controls its scale, while the geometric weights control all
remaining stages:
\begin{equation}
 \|\nabla R(x)\|^2\gtrsim w_{j_\star}^2,
 \qquad
 \sum_{k\ge j_\star}w_k^2
 \lesssim\frac{w_{j_\star}^2}{1-r_{\rm w}}.
 \label{eq:main-weighted-barrier-tail}
\end{equation}
Together with the strong convexity of the preceding fully open blocks, these
bounds give the global PL inequality.  Global rescaling leaves the condition
number unchanged.
The two-scale construction covers
$\sqrt n\le\kappa_{\max}<c_{\rm w}\sqrt n$.

\section{Restarted PAGE}
\label{sec:restarted-page}

\paragraph{What the lower bound suggests.}
On the small-$\kappa$ hard instance, revealing a stage costs $\Omega(n)$ calls
while the gap shrinks by a factor of $\Theta((\kappa_{\max}/\sqrt n)^2)$.  This leads to
PAGE epochs of $\Theta(n)$ calls, each shrinking the gap by
$q_{\rm ep}=\Theta(\kappa_{\rm ms}/\sqrt n)$.  When
$\kappa_{\rm ms}\ge\sqrt n$, longer epochs are needed to obtain a constant
contraction factor.

Algorithm~\ref{alg:restarted-page} implements this schedule.  Its expected IFO
complexity is
\begin{equation}
 \begin{cases}
 \displaystyle O\!\left(n+\frac{n\log(\Delta/\varepsilon)}
 {1+\log(\sqrt n/\kappa_{\rm ms})}\right),
 &1\le\kappa_{\rm ms}\le\sqrt n,\\[3mm]
 \displaystyle O\!\left(n+\kappa_{\rm ms}\sqrt n
 \log\frac{\Delta}{\varepsilon}\right),
 &\kappa_{\rm ms}\ge\sqrt n.
 \end{cases}
 \label{eq:restarted-page-main-rate}
\end{equation}
\begingroup
\setlength{\intextsep}{7pt plus 1pt minus 1pt}
\begin{algorithm}[!ht]
\caption{Restarted PAGE}
\label{alg:restarted-page}
\footnotesize
\begin{algorithmic}[1]
\Require $x^{(0)}$, $n$, a mean-squared smoothness bound $L_{\rm ms}$, a PL
  constant $\mu$, a gap bound $F(x^{(0)})-F^*\le\Delta$, and a target accuracy
  $\varepsilon\in(0,\Delta)$;
  $\kappa_{\rm ms}:=L_{\rm ms}/\mu\ge1$
\State $p\gets1/(n+1)$; $\eta\gets1/[L_{\rm ms}(1+\sqrt n)]$
\State $A_{\rm ep}\gets\kappa_{\rm ms}(1+\sqrt n)$;
  $T\gets\lceil4(n+A_{\rm ep})\rceil$; $q_{\rm ep}\gets A_{\rm ep}/T$
\State $S_{\rm ep}\gets\left\lceil
  \log(\Delta/\varepsilon)/\log(1/q_{\rm ep})\right\rceil$
\For{$s=0,\ldots,S_{\rm ep}-1$}
  \State $x_0\gets x^{(s)}$;
    $g_0\gets n^{-1}\sum_{i=1}^n\nabla f_i(x_0)$
  \For{$t=0,\ldots,T-1$}
    \State $x_{t+1}\gets x_t-\eta g_t$
    \If{$t<T-1$}
    \State With probability $p$, set
      $g_{t+1}\gets n^{-1}\sum_{i=1}^n\nabla f_i(x_{t+1})$
    \State Otherwise draw $i_t\sim\operatorname{Unif}[n]$ and set
      $g_{t+1}\gets g_t+\nabla f_{i_t}(x_{t+1})-\nabla f_{i_t}(x_t)$
    \EndIf
  \EndFor
  \State Draw $U_s\sim\operatorname{Unif}\{0,\ldots,T-1\}$ and set
    $x^{(s+1)}\gets x_{U_s}$
\EndFor
\State \Return $x^{(S_{\rm ep})}$
\end{algorithmic}
\end{algorithm}
\endgroup

Lemma~\ref{lem:page-epoch-contraction} shows that one epoch costs at most
$n+3T$ IFO calls in expectation and that
\begin{equation}
 \E\!\left[F(x^{(s+1)})-F^*\mid x^{(s)}\right]
 \le q_{\rm ep}\bigl(F(x^{(s)})-F^*\bigr).
 \label{eq:page-main-contraction}
\end{equation}
When $\kappa_{\rm ms}\le\sqrt n$, each epoch costs $O(n)$, and
\begin{equation*}
 q_{\rm ep}=\Theta\!\left(\frac{\kappa_{\rm ms}}{\sqrt n}\right),
 \qquad
 S_{\rm ep}=O\!\left(
 1+\frac{\log(\Delta/\varepsilon)}
 {1+\log(\sqrt n/\kappa_{\rm ms})}
 \right).
\end{equation*}
When $\kappa_{\rm ms}\ge\sqrt n$, each epoch costs
$O(\kappa_{\rm ms}\sqrt n)$ and shrinks the gap by a constant factor.
These two cases give~\eqref{eq:restarted-page-main-rate}, matching the upper
bounds in Theorem~\ref{thm:pl-main}.  Appendix~\ref{app:page-upper} proves the
contraction and IFO cost; Appendix~\ref{app:stopping} converts this expected
guarantee to a fixed call budget with any fixed constant success probability.

\section{Conclusion and Limitations}
\label{sec:limitations}

We establish tight lower bounds under individual smoothness for nonconvex
optimization and for both $\kappa_{\max}$ regimes under PL.  Restarted PAGE
matches the PL rates, and the small-$\kappa_{\max}$ construction has a globally
strongly convex average objective.  Open directions include lower bounds in a
fixed ambient dimension, noisy finite-sum oracle models, and the range
$\kappa<3$, where the condition number is bounded by a constant.  Our lower
bounds also assume that $n$ is above a universal threshold.

\clearpage
\bibliography{references}
\bibliographystyle{iclr2027_conference}

\section*{AI Use Statement}
During the preparation of this manuscript, the authors used large language
model assistants to support the development and writing of mathematical proofs
and to improve the clarity and readability of the manuscript.

\clearpage
\appendix
\paragraph{Appendix dependencies.}
The diagram separates the shared argument for the lower bounds from the nonconvex and PL
branches and from the Restarted PAGE analysis.  Solid arrows indicate proof
dependencies.

\begin{figure}[H]
\centering
\includegraphics[width=\linewidth]{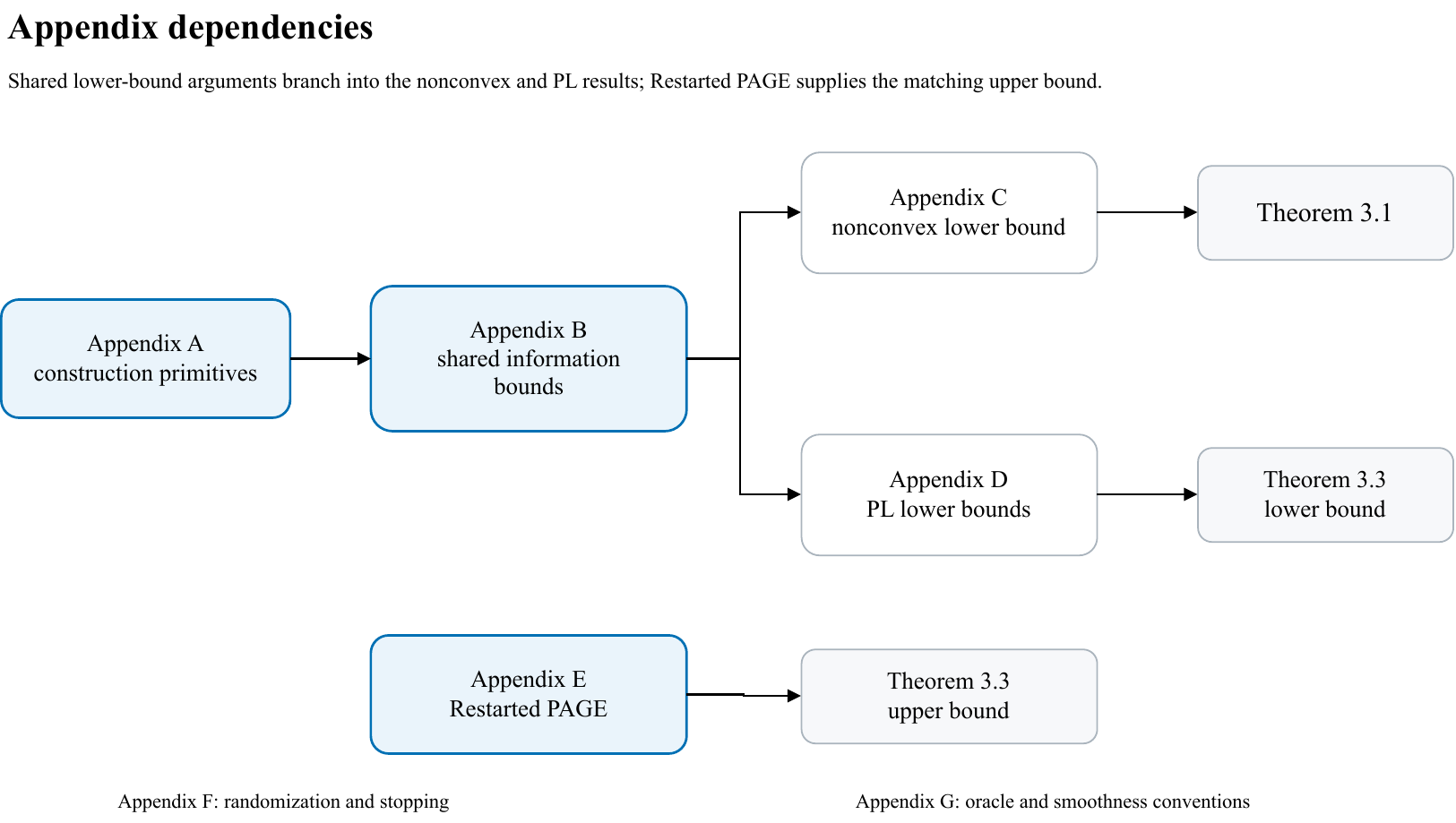}
\caption{Proof dependencies for the main results.  Appendices~\ref{app:stopping}
and~\ref{app:smoothness-calibration} supply the transfers and conventions used
by the branches proving the lower and upper bounds.}
\label{fig:appendix-proof-map}
\end{figure}

\section{Common Definitions and Calculus}
\label{app:gate-code}

This appendix defines the three objects shared by the lower bounds: a smooth
threshold gate, a fixed encoding table, and a bounded radial map.
Random objects for stage $j$ carry the superscript $(j)$.  The indices
$i\in[n]$ and $\ell\in[D]$ refer to a component row and a coordinate.

\subsection{Constructions and basic properties}

\begin{lemma}[Explicit smooth threshold gate]
\label{lem:explicit-flat-gate}
Set
\begin{equation}
 a=\frac5{32},
 \qquad
 b=\frac9{32}.
 \label{eq:gate-thresholds}
\end{equation}
There is an explicit nondecreasing $G\in C^\infty(\R)$ with $0\le G\le1$
such that
\begin{align}
 G^{(r)}(s)&=0
 \quad\text{for }s\le a,\ r\ge0,
 \label{eq:lower-flat}\\
 G(s)&=1,\qquad G^{(r)}(s)=0
 \quad\text{for }s\ge b,\ r\ge1.
 \label{eq:upper-flat-positive-orders}
\end{align}
Its first two derivatives satisfy the dimension-free bounds
\begin{equation}
 \|G'\|_\infty\le17,
 \qquad
 \|G''\|_\infty\le259.
 \label{eq:scaled-gate-derivatives}
\end{equation}
\end{lemma}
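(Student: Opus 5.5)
I would take the standard $C^\infty$ smooth step and rescale it to the interval $[a,b]=[5/32,9/32]$, whose length is $b-a=1/8$. Let $h(t)=e^{-1/t}$ for $t>0$ and $h(t)=0$ for $t\le0$. Recall that $h\in C^\infty(\R)$ and that every derivative of $h$ vanishes on $(-\infty,0]$. Define
\[
 \phi(t)=\frac{h(t)}{h(t)+h(1-t)},
 \qquad
 G(s)=\phi\!\left(\frac{s-a}{b-a}\right)=\phi\bigl(8(s-a)\bigr).
\]
The denominator is strictly positive for every $t$, because $h(t)>0$ for $t>0$ and $h(1-t)>0$ for $t<1$. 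Hence $\phi\in C^\infty$ and $0\le\phi\le1$.

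\textbf{Flatness.} For $t\le0$ we have $\phi\equiv0$ on an open half-line. This gives \eqref{eq:lower-flat} for all orders $r\ge0$, including at the endpoint $s=a$, since all derivatives are continuous. Similarly, $\phi\equiv1$ for $t\ge1$, which gives \eqref{eq:upper-flat-positive-orders}.

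\textbf{Monotonicity.} On $(0,1)$ I would write $\phi=1/(1+e^{g(t)})$ with $g(t)=1/t-1/(1-t)$. Since $g$ is strictly decreasing, $\phi$ is nondecreasing.

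\textbf{Derivative bounds.} By the chain rule, $\|G'\|_\infty=8\|\phi'\|_\infty$ and $\|G''\|_\infty=64\|\phi''\|_\infty$. It therefore suffices to show $\|\phi'\|_\infty\le17/8$ and $\|\phi''\|_\infty\le259/64\approx4.05$. The symmetry $\phi(1-t)=1-\phi(t)$ makes $\phi'$ symmetric about $1/2$ and $\phi''$ antisymmetric about $1/2$, so only $t\in(0,1/2]$ needs to be checked.

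Using the logistic form, with $\sigma=\phi\in(0,1)$, I get
\[
 \phi'=-\sigma(1-\sigma)g',
 \qquad
 \phi''=\sigma(1-\sigma)\bigl[(1-2\sigma)(g')^2-g''\bigr],
\]
where $g'=-1/t^2-1/(1-t)^2$ and $g''=2/t^3-2/(1-t)^3$.

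At $t=1/2$ we have $\sigma=1/2$ and $g'=-8$, so $\phi'(1/2)=2\le17/8$. To show this is the global maximum of $\phi'$, I would check that $\phi''\ge0$ on $(0,1/2]$, i.e.\ that $(1-2\sigma)(g')^2\ge g''$ there. Near $t=0$ this holds because $(g')^2\sim t^{-4}$ dominates $g''\sim t^{-3}$ while $1-2\sigma\to1$. Away from $0$, I would verify it by a direct estimate.

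For $\phi''$ I would bound the product explicitly on $(0,1/2]$. Near $0$, the factor $\sigma\le e^{-1/t+1/(1-t)}$ kills every polynomial blowup of $g'$ and $g''$. On the middle range, I would use elementary bounds on a few subintervals to get $\sup|\phi''|\le4.04$.

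\textbf{Main obstacle.} The second-derivative constant $259/64$ is tight, since the true supremum is close to $4$. Certifying it rigorously rather than numerically is the delicate step. My first attempt would be to split $(0,1/2]$ into a handful of subintervals and bound $\sigma$, $1-2\sigma$, $g'$ and $g''$ monotonically on each piece. This is interval arithmetic by hand, and it yields a rigorous upper bound, which is all the lemma needs.
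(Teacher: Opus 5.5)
\textbf{The second-derivative bound fails for the gate you chose.} Your flatness and monotonicity arguments are fine, and $\|G'\|_\infty=8\phi'(1/2)=16\le17$ also holds. The problem is $\|G''\|_\infty$, so the step you call delicate cannot be completed. Your claim that $\sup|\phi''|$ is close to $4$ is false. Use your own formula at $t=1/5$: there $g=15/4$, $\sigma=(1+e^{15/4})^{-1}\approx0.0230$, $(g')^2=(425/16)^2\approx705.6$ and $g''\approx246.1$. This gives $\phi''(1/5)\approx9.59$, and a second difference of $\phi$ with step $0.01$ confirms it. The supremum is about $9.84$, attained near $t\approx0.22$. Hence $\|G''\|_\infty=64\|\phi''\|_\infty\approx630>259$, and any interval-arithmetic certificate of $\sup|\phi''|\le4.04$ must fail.

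\textbf{Why this happens.} The constant $259$ is nearly extremal. Take any $C^2$ function $\phi$ with $\phi(0)=\phi'(0)=\phi'(1)=0$, $\phi(1)=1$ and $|\phi''|\le M$. Then $\phi'(t)\le M\min\{t,1-t\}$, so $1=\int_0^1\phi'\le M/4$. Consequently every admissible gate on an interval of length $1/8$ has $\|G''\|_\infty\ge256$. Only a near-optimal, bang--bang-like acceleration profile fits within $259$. The step $e^{-1/t}/(e^{-1/t}+e^{-1/(1-t)})$ is far from that profile: it stays flat too long and then accelerates sharply.

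\textbf{The paper's construction.} It builds $\phi'$ directly as the extremal profile. Take the triangular density on $[\delta_{\rm g},1-\delta_{\rm g}]$, so that $\phi''=\pm4/w_{\rm g}^2$, and mollify it by a $C_c^\infty$ bump of width $\delta_{\rm g}=1/512$. Convolution with a nonnegative unit-mass kernel preserves both $\|p_{\rm g}\|_\infty=2/w_{\rm g}<17/8$ and $\operatorname{Lip}(p_{\rm g})=4/w_{\rm g}^2<259/64$. This gives the stated bounds with no numerical verification at all.
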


\paragraph{Encoding table and admissible bias grid.}

Define the admissible bias grid
\begin{equation}
 \mathcal R_n:=\left\{\frac mn:
 m\in\mathbb Z_{>0},\ m\equiv n\pmod 2,\ m\le\frac n8\right\}.
 \label{eq:admissible-bias-grid}
\end{equation}
Fix any $\rho\in\mathcal R_n$; equivalently, choose a positive integer $m$ with
the same parity as $n$ and $1\le m\le n/8$, and set
\begin{equation}
 \rho:=\frac mn.
 \label{eq:general-m-rho}
\end{equation}
The construction and the information bound are uniform over every such
choice.  Consecutive values of $\rho$ in $\mathcal R_n$ are $2/n$ apart, and no
relation between $m$ and $\sqrt n$ is assumed here.

Before oracle interaction, sample the complete stage pairs independently.  For
each $j$, draw $\Theta^{(j)}\sim\operatorname{Unif}\{\pm1\}^D$ and, conditional on
$\Theta^{(j)}$, form the matrix
$S^{(j)}=(S^{(j)}_{i\ell})_{i\in[n],\ell\in[D]}\in\{\pm1\}^{n\times D}$.
Its $i$th row $S^{(j)}_{i:}\in\{\pm1\}^D$ is stored by component $i$.
For each coordinate $\ell$, write its column as
$S^{(j)}_\ell=(S^{(j)}_{1\ell},\ldots,S^{(j)}_{n\ell})\in\{\pm1\}^n$.
Conditional on $\Theta^{(j)}$, draw the columns
$S^{(j)}_1,\ldots,S^{(j)}_D$ independently and uniformly subject to
\begin{equation}
 \mathbf 1^\top S^{(j)}_\ell
 =\sum_{i=1}^nS^{(j)}_{i\ell}
 =m\Theta^{(j)}_\ell.
 \label{eq:column-constraint}
\end{equation}

Write the complete stage variable as
\begin{equation}
    \cZ^{(j)}=(\Theta^{(j)},S^{(j)}).
\end{equation}
\begin{lemma}[Properties of the encoding table]
\label{lem:encoding-table-properties}
For every stage $j$:
\begin{enumerate}[label=\textup{(\roman*)},leftmargin=*,itemsep=2pt]
 \item $S^{(j)}$ is sampled once before oracle interaction.  Conditional on
 $\Theta^{(j)}$, its columns are independent, and the pairs
 $\cZ^{(j)}=(\Theta^{(j)},S^{(j)})$ are independent across stages.
 \item The component rows have the exact average
 \begin{equation}
  \frac1n\sum_{i=1}^nS^{(j)}_{i:}=\rho\Theta^{(j)}.
  \label{eq:exact-code-average}
 \end{equation}
 \item After any $d$ distinct rows have been chosen causally, with each new
 index fixed before its row is observed, the remaining rows are uniformly
 assigned to the unseen labels subject to the residual column sums.
\end{enumerate}
\end{lemma}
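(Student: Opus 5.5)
The three parts follow almost directly from how the table is sampled, so the plan is to read each one off the sampling procedure.

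For (i), everything is immediate. Every $\Theta^{(j)}$ and $S^{(j)}$ is drawn before any oracle call, so the table is fixed during the interaction. Independence of the columns given $\Theta^{(j)}$ holds because we draw them that way. Independence of the pairs $\cZ^{(j)}$ across stages holds because each pair uses fresh randomness. The only thing to check is that the constrained set is nonempty, so that the uniform draw is well defined. A column in $\{\pm1\}^n$ with sum $m\Theta_\ell$ needs $(n+m\Theta_\ell)/2$ entries equal to $+1$. This is an integer because $m\equiv n\pmod 2$, and it lies in $[0,n]$ because $m\le n/8$. So the set is nonempty.

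For (ii), sum \eqref{eq:column-constraint} coordinatewise to get $\sum_i S^{(j)}_{i\ell}=m\Theta^{(j)}_\ell$. Divide by $n$ and use $\rho=m/n$ from \eqref{eq:general-m-rho}.

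For (iii), I would use exchangeability and induct on the number $d$ of revealed rows. Conditional on $\Theta$, each column is uniform over the vectors in $\{\pm1\}^n$ with the prescribed sum. This law is invariant under any permutation of $[n]$ applied to all columns at once. Hence the joint law of $S$ given $\Theta$ is invariant under row permutations.

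Now suppose the indices $I_1,\dots,I_d$ are chosen causally. Each $I_k$ is a measurable function of the algorithm's private randomness (independent of $\cZ$) and of the previously revealed rows $S_{I_1:},\dots,S_{I_{k-1}:}$.

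Fix a realization of the revealed rows and of the indices. The conditional law of the unrevealed submatrix is proportional to the prior mass of the full matrices that agree with the observation. Because the index-selection probabilities depend only on what has already been observed, they factor out of the likelihood as constants. So the posterior is the prior restricted to matrices consistent with the revealed rows.

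Under the prior, this restriction is uniform over completions, column by column, where column $\ell$ of the unseen rows must have sum $m\Theta_\ell-\sum_{k\le d}S_{I_k\ell}$. The prior is uniform on a product of constrained sets, and conditioning a uniform law on fixing some coordinates leaves it uniform on the remaining feasible set. Since the feasible set is invariant under permutations of the unseen labels, this is exactly the claimed statement: the remaining rows are uniformly assigned to the unseen labels subject to the residual column sums.

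The same argument also covers conditioning on $\Theta$ being unknown. If $\Theta$ is marginalized out, the residual constraint is a mixture over $\Theta$, and the exchangeability among unseen labels still holds.

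The main obstacle is stating the causal-selection step rigorously, namely that adaptive index choice does not bias the posterior. I would handle it with the factorization argument: write the joint density of (randomness, indices, revealed rows, $S$). Then verify that the selection kernel does not depend on the unrevealed rows, which is exactly the requirement that ``each new index is fixed before its row is observed.''
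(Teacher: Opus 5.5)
Your proposal is correct and follows the paper's proof: (i) and (ii) are read off the sampling rule and the column constraint, and (iii) rests on the exchangeability of each column together with the fact that every new index is fixed before its row is revealed. Your likelihood-factorization argument spells out the paper's one-line remark that adaptive index selection leaves the posterior equal to the prior restricted to consistent tables. The nonemptiness check via $m\equiv n\pmod 2$ and $m\le n/8$ is a small, harmless addition.
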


\begin{corollary}[Optimized admissible bias]
\label{cor:optimized-near-balanced-bias}
Let $n\ge1024$, let $m$ be the least integer not smaller than $\sqrt n$ that
has the parity of $n$, and set $\rho=m/n$.  Then
\begin{equation}
 \sqrt n\le m<\sqrt n+2,
 \qquad \rho=\frac mn,
 \qquad
 0\le\rho-\frac1{\sqrt n}<\frac2n,
 \qquad
 \rho<\frac1{\sqrt n}+\frac2n
 \le\frac{17}{16\sqrt n}
 \le\frac{17}{512}
 =:\overline\rho.
 \label{eq:m-rho}
\end{equation}
For $g_n(\rho):=\min\{n\rho,\rho^{-1}\}$, this admissible grid point satisfies
\begin{equation}
 g_n(\rho)=\rho^{-1}
 >\frac{\sqrt n}{1+2/\sqrt n}
 \ge\frac{16}{17}\sqrt n.
 \label{eq:appendix-discrete-envelope-value}
\end{equation}
\end{corollary}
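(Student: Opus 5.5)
The corollary is elementary arithmetic about the chosen integer $m$, so I will verify each inequality in the chain \eqref{eq:m-rho} in order and then evaluate $g_n$. First I check admissibility, that is, $\rho\in\mathcal R_n$. By definition $m$ is the least integer $\ge\sqrt n$ with $m\equiv n\pmod 2$. Among the two consecutive integers $\lceil\sqrt n\rceil$ and $\lceil\sqrt n\rceil+1$, exactly one has the parity of $n$. Hence $m\le\lceil\sqrt n\rceil+1<\sqrt n+2$, and trivially $m\ge\sqrt n$. Positivity is clear. The grid constraint $m\le n/8$ follows from $\sqrt n+2\le n/8$, which holds for $n\ge1024$: there $\sqrt n\ge32$, so $n/8\ge4\sqrt n\ge\sqrt n+2$. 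Therefore $\rho=m/n\in\mathcal R_n$.

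Next I divide $\sqrt n\le m<\sqrt n+2$ by $n$. This gives $1/\sqrt n\le\rho<1/\sqrt n+2/n$, which is the stated bound $0\le\rho-1/\sqrt n<2/n$. For the remaining bounds I write $1/\sqrt n+2/n=(1+2/\sqrt n)/\sqrt n$. Since $2/\sqrt n\le2/32=1/16$, this is at most $\frac{17}{16\sqrt n}$. Using $\sqrt n\ge32$ again gives $\frac{17}{16\cdot32}=\frac{17}{512}$.

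For \eqref{eq:appendix-discrete-envelope-value}, I compare the two branches of $g_n$. From $\rho\ge1/\sqrt n$ we get $n\rho\ge\sqrt n\ge\rho^{-1}$, so the minimum equals $\rho^{-1}$. Then $\rho^{-1}>\bigl(1/\sqrt n+2/n\bigr)^{-1}=\sqrt n/(1+2/\sqrt n)$. The last step is $1+2/\sqrt n\le17/16$, which gives the lower bound $\frac{16}{17}\sqrt n$.

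None of these steps is a real obstacle. The only points needing care are the parity argument giving the strict bound $m<\sqrt n+2$, and keeping the strict inequality when passing to $\rho^{-1}$, since we invert a strict upper bound on $\rho$.
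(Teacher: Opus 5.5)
Your proposal is correct and follows the paper's proof. The paper notes that consecutive parity-compatible integers differ by two, so $\sqrt n\le m<\sqrt n+2$, that $m<n/8$ for $n\ge1024$, and that everything else follows from these bounds and $\rho\ge n^{-1/2}$; you spell out the arithmetic it leaves implicit, including admissibility via $n/8\ge4\sqrt n\ge\sqrt n+2$ and the comparison $n\rho=m\ge\sqrt n\ge\rho^{-1}$.
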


\paragraph{Bounds for the radial map.}

For $x\in\R^D$, define
\begin{equation}
 \Psi(x)=\frac{x}{\sqrt{1+\|x\|^2}}.
 \label{eq:radial-saturation}
\end{equation}
For any unit vector $u$, let
\begin{equation}
    q_u(x)=\langle u,\Psi(x)\rangle.
\end{equation}
Suppressing the stage index, for a hidden direction $\theta$ and the rows
$S_1,\ldots,S_n$ of the preceding encoding table, set
\begin{equation}
 u_\theta=\frac\theta{\sqrt D},
 \qquad s_i=\frac{S_i}{\sqrt D},
 \qquad q_\theta=q_{u_\theta},
 \qquad \widetilde q_i(x)=\rho^{-1}\langle s_i,\Psi(x)\rangle.
 \label{eq:radial-global-component}
\end{equation}
\begin{lemma}[Properties of the radial coordinate]
\label{lem:radial-first-second}
For every unit $u$, one has $\|\Psi(x)\|<1$ and, with
$r_x=(1+\|x\|^2)^{1/2}$,
\begin{equation}
 \nabla q_u(x)=\frac{u}{r_x}
 -\frac{\langle u,x\rangle x}{r_x^3},
 \qquad
 \|\nabla q_u(x)\|\le1,
 \label{eq:radial-gradient}
\end{equation}
and
\begin{equation}
 \nabla^2q_u(x)
 =-\frac{ux^\top+xu^\top+\langle u,x\rangle I}{r_x^3}
 +\frac{3\langle u,x\rangle xx^\top}{r_x^5},
 \qquad
 \|\nabla^2q_u(x)\|_{\op}\le2.
 \label{eq:radial-hessian}
\end{equation}
Consequently,
\begin{equation}
 |\widetilde q_i(x)|\le\frac1\rho,
 \qquad
 \|\nabla\widetilde q_i(x)\|\le\frac1\rho,
 \qquad
 \|\nabla^2\widetilde q_i(x)\|_{\op}\le\frac2\rho.
 \label{eq:component-radial-bounds}
\end{equation}
All constants are independent of $D$.
Moreover, the identity for the fixed table gives
\begin{equation}
 \frac1n\sum_{i=1}^n\widetilde q_i(x)=q_\theta(x)
 \qquad\forall x.
 \label{eq:radial-exact-average}
\end{equation}
\end{lemma}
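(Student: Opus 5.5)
The plan is to compute directly with $r_x=(1+\|x\|^2)^{1/2}$, using $\nabla r_x=x/r_x$. Since $\Psi(x)=x/r_x$, we have $\|\Psi(x)\|=\|x\|/r_x<1$.

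\textbf{Gradient.} Differentiating $q_u(x)=\langle u,x\rangle r_x^{-1}$ by the product rule gives
\[
\nabla q_u(x)=\frac{u}{r_x}-\frac{\langle u,x\rangle x}{r_x^3},
\]
which is \eqref{eq:radial-gradient}. For the norm bound, decompose $u=u_\parallel+u_\perp$ along $\hat x=x/\|x\|$, writing $u_\parallel=c\,\hat x$ with $c=\langle u,\hat x\rangle$. The gradient becomes
\[
\nabla q_u(x)=\frac{u_\perp}{r_x}+c\,\hat x\Bigl(\frac1{r_x}-\frac{\|x\|^2}{r_x^3}\Bigr)
=\frac{u_\perp}{r_x}+\frac{c\,\hat x}{r_x^3}.
\]
Both coefficients are at most $1$ in absolute value and the two parts are orthogonal, so $\|\nabla q_u(x)\|\le\|u\|=1$.

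\textbf{Hessian.} Differentiating once more, using $\nabla(r_x^{-1})=-x r_x^{-3}$ and $\nabla(r_x^{-3})=-3x r_x^{-5}$, yields the displayed formula \eqref{eq:radial-hessian}. The operator-norm bound is the step needing some care. I would bound each term crudely using $\|x\|\le r_x$ and $|\langle u,x\rangle|\le\|x\|$:
\begin{itemize}
\item the rank-two symmetric term satisfies $\|ux^\top+xu^\top\|_{\op}\le2\|x\|$;
\item the scalar term satisfies $|\langle u,x\rangle|\le\|x\|$;
\item the last term has norm $3|\langle u,x\rangle|\|x\|^2/r_x^5$.
\end{itemize}
Summed naively this gives $6\|x\|/r_x^3$, which is not yet $\le2$. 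So I would instead work in the two-dimensional span of $u$ and $x$ together with its orthogonal complement.

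On the orthogonal complement, only $-\langle u,x\rangle I/r_x^3$ acts, with norm at most $\|x\|/r_x^3\le 1$. On the plane, set $t=\|x\|$ and $u=(\cos\phi,\sin\phi)$ with $x=(t,0)$. The Hessian restricted there is an explicit $2\times2$ matrix with entries rational in $t$ and trigonometric in $\phi$. I would bound its eigenvalues via the Frobenius norm, or via trace and determinant, and maximize over $t\ge0$ and $\phi$; the maximum should sit below $2$. I expect this maximization to be the only mildly tedious step. For instance, the $(1,1)$ entry is $\cos\phi\,(-3t/r^3+3t^3/r^5)=-3t\cos\phi/r^5$, bounded by $3t/r^5\le$ a small constant. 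This suggests the true maximum is around $1$ and that $2$ is loose enough for crude entrywise bounds: $\|\cdot\|_{\op}\le\|\cdot\|_F$ with each of the three independent entries at most $\tfrac{3\sqrt3}{16}$ or $t/r^3\le\tfrac{2}{3\sqrt3}$.

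\textbf{Consequences.} The component bounds \eqref{eq:component-radial-bounds} follow by writing $\widetilde q_i=\rho^{-1}q_{s_i}$, since $s_i=S_i/\sqrt D$ is a unit vector (its entries are $\pm1$). The value bound uses $|\langle s_i,\Psi(x)\rangle|\le\|\Psi(x)\|<1$. None of the constants involves $D$. Finally, the exact average \eqref{eq:radial-exact-average} follows from linearity in $s_i$ together with \eqref{eq:exact-code-average} from Lemma~\ref{lem:encoding-table-properties}:
\[
\frac1n\sum_i\widetilde q_i(x)=\rho^{-1}\Bigl\langle \tfrac{1}{n\sqrt D}\textstyle\sum_i S_i,\Psi(x)\Bigr\rangle=\rho^{-1}\langle\rho\theta/\sqrt D,\Psi(x)\rangle=q_\theta(x).
\]
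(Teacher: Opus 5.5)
Your proposal is correct. For the gradient formula, the bound $\|\nabla q_u\|\le1$, the component bounds, and the exact average, it matches the paper; the paper's phrasing ``the Jacobian of $\Psi$ has operator norm at most one'' is your orthogonal decomposition in another form.

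You depart from the paper on the Hessian operator-norm bound. The paper also uses the triangle inequality, but it keeps the factor $\|x\|^2/r_x^2$ in the last term. It bounds the norm by $3s/r_x^3+3s^3/r_x^5=3s(1+2s^2)/(1+s^2)^{5/2}$ with $s=\|x\|$, and checks that this is below $2$ via $4(1+t)^5-9t(1+2t)^2=4+11t+4t^2+4t^3+20t^4+4t^5>0$ with $t=s^2$. Your naive bound $6\|x\|/r_x^3$ failed only because you replaced $3s^3/r_x^5$ by $3s/r_x^3$. Without that loss, a one-variable check finishes the proof.

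Your alternative is also valid. Both $\mathrm{span}\{u,x\}$ and its orthogonal complement are invariant, so the Hessian is block diagonal. You then bound the explicit block $\begin{pmatrix}-3t\cos\phi/r_x^5 & -t\sin\phi/r_x^3\\ -t\sin\phi/r_x^3 & -t\cos\phi/r_x^3\end{pmatrix}$ and the complement block $-\langle u,x\rangle I/r_x^3$ separately. This costs more computation than the paper's one-line check, but it is much sharper: it shows the true norm is about $1.1$ or less, which explains how loose the constant $2$ is.

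There is one numerical slip. $3\sqrt3/16$ is the maximum of $t/r_x^4$, not of the $(1,1)$ entry. The quantity $3t/r_x^5$ peaks at $t=1/2$ with value $48\sqrt5/125\approx0.86$. This does not break your argument. The other entries are at most $2/(3\sqrt3)\approx0.385$, so the Frobenius norm of the planar block is at most about $1.09<2$. Indeed, every entry is below $1$, so $\|\cdot\|_F<2$ holds trivially.
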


For any normalized stage coordinate $z_j$ used below,
$V_j=\Psi(z_j)$ satisfies the unit ball condition in
Lemma~\ref{lem:tunable-bias-kl}.

\subsection{Proofs of the shared primitives}

\begin{proof}[Proof of Lemma~\ref{lem:explicit-flat-gate}]
We use the following explicit construction.  Let
\begin{equation}
 \delta_{\rm g}:=\frac1{512},
 \qquad
 w_{\rm g}:=1-2\delta_{\rm g}=\frac{255}{256},
\end{equation}
and define the triangular density of unit mass
\begin{equation}
 p_{\rm g}(t):=
 \begin{cases}
 \dfrac{4(t-\delta_{\rm g})}{w_{\rm g}^2},
    &\delta_{\rm g}\le t\le\dfrac12,\\[2mm]
 \dfrac{4(1-\delta_{\rm g}-t)}{w_{\rm g}^2},
    &\dfrac12\le t\le1-\delta_{\rm g},\\[2mm]
 0,&\text{otherwise}.
 \end{cases}
 \label{eq:triangular-density}
\end{equation}
Define the normalized $C^\infty$ bump
\begin{equation}
 Z_{\rm g}:=\int_{-\delta_{\rm g}}^{\delta_{\rm g}}
 \exp\!\left(-\frac1{\delta_{\rm g}^2-u^2}\right)du,
 \qquad
 \phi_{\rm g}(s):=
 \begin{cases}
 Z_{\rm g}^{-1}\exp\!\left(-\dfrac1{\delta_{\rm g}^2-s^2}\right),
    &|s|<\delta_{\rm g},\\[2mm]
 0,&|s|\ge\delta_{\rm g}.
 \end{cases}
 \label{eq:gate-smoothing-bump}
\end{equation}
With convolution denoted by $*$, set
\begin{equation}
 \begin{aligned}
 g_{\rm g}(t)&:=(p_{\rm g}*\phi_{\rm g})(t),\\
 H_0(t)&:=\int_{-\infty}^t g_{\rm g}(u)\,du,
 \qquad
 G(s):=H_0\!\left(8s-\frac54\right).
 \end{aligned}
 \label{eq:explicit-infinite-gate}
\end{equation}

\emph{Plateaus.}
The density $p_{\rm g}$ is nonnegative, integrates to one, and satisfies
\begin{equation}
 \|p_{\rm g}\|_\infty=\frac2{w_{\rm g}},
 \qquad
 \operatorname{Lip}(p_{\rm g})=\frac4{w_{\rm g}^2}.
 \label{eq:triangular-bounds}
\end{equation}
The function $\phi_{\rm g}$ is nonnegative, has unit mass, belongs to
$C_c^\infty(\R)$, and is supported on
$[-\delta_{\rm g},\delta_{\rm g}]$.  Hence $g_{\rm g}$ is
nonnegative, has unit mass, and is supported on $[0,1]$.  Thus $H_0$ is
nondecreasing, equals zero on $(-\infty,0]$, and equals one on $[1,\infty)$.

\emph{Derivative bounds.}
Convolution with a nonnegative kernel of unit mass and
$g_{\rm g}'=p_{\rm g}'*\phi_{\rm g}$ give
\begin{equation}
 \|H_0'\|_\infty\le\frac2{w_{\rm g}}<\frac{17}{8},
 \qquad
 \|H_0''\|_\infty\le\frac4{w_{\rm g}^2}<\frac{259}{64}.
 \label{eq:base-gate-derivatives}
\end{equation}
The affine rescaling in~\eqref{eq:explicit-infinite-gate} gives the two
plateaus.  It also gives $G'(s)=8H_0'(8s-5/4)$ and
$G''(s)=64H_0''(8s-5/4)$, which yield the stated derivative bounds.  On the
fully open region, $G=1$ and only the derivatives of positive order vanish.
\end{proof}

\begin{proof}[Proof of Lemma~\ref{lem:encoding-table-properties}]
The first claim is part of the sampling rule.  Summing each column and using
\eqref{eq:column-constraint} proves the second.  For the third, conditioning on
revealed positions in a uniformly ordered column with fixed sign counts leaves a uniform
ordering of its residual signs; induction applies because each new label is
chosen before its row is revealed.
\end{proof}

\begin{proof}[Proof of Corollary~\ref{cor:optimized-near-balanced-bias}]
Consecutive parity-compatible integers differ by two.  For $n\ge1024$, the
chosen $m$ also satisfies $m<n/8$.  The displayed bounds follow directly from
$\sqrt n\le m<\sqrt n+2$ and $\rho\ge n^{-1/2}$.
\end{proof}

\begin{proof}[Proof of Lemma~\ref{lem:radial-first-second}]
\emph{Derivative formulas.}
The formulas follow by differentiation.  The Jacobian of $\Psi$ has operator
norm at most one.  With $s=\|x\|$ and $|\langle u,x\rangle|\le s$, the Hessian
norm is at most
\begin{equation}
 3\frac{s}{(1+s^2)^{3/2}}
 +3\frac{s^3}{(1+s^2)^{5/2}}<2.
\end{equation}
After setting $t=s^2$, the squared inequality is equivalent to
\begin{equation}
\begin{aligned}
 4(1+t)^5-9t(1+2t)^2
 &=4+11t+4t^2+4t^3+20t^4+4t^5>0.
\end{aligned}
\end{equation}

Since $\|s_i\|=1$, multiplying the same bounds by $1/\rho$ proves
\eqref{eq:component-radial-bounds}; the value bound follows from
$\|\Psi(x)\|<1$.  Finally, \eqref{eq:exact-code-average} gives
\[
 \frac1n\sum_{i=1}^n\widetilde q_i(x)
 =\frac1\rho\left\langle
   \frac1n\sum_{i=1}^n\frac{S_i}{\sqrt D},\Psi(x)
  \right\rangle
 =\left\langle\frac\theta{\sqrt D},\Psi(x)\right\rangle
 =q_\theta(x),
\]
which proves~\eqref{eq:radial-exact-average}.
\end{proof}

\subsection{Derivative bounds}

The following two lemmas give the local Hessian and radial gradient bounds used
in both hard constructions.

Set
\begin{equation}
 M_1:=\|G'\|_\infty,
 \qquad
 M_2:=\|G''\|_\infty.
 \label{eq:shared-gate-derivative-norms}
\end{equation}
Let $h:\R^{d_x}\to\R$ and $P:\R^{d_y}\to\R$ be twice continuously
differentiable, let $c\in\R$, and suppose
\begin{equation}
 |P|\le B_0,\quad
 \|\nabla P\|\le B_1,\quad
 \|\nabla^2P\|_{\op}\le B_2,\qquad
 \|\nabla h\|\le g_1,\quad
 \|\nabla^2h\|_{\op}\le g_2.
 \label{eq:gated-link-premises}
\end{equation}
For
\begin{equation}
 \mathcal L(x,y)=-G(h(x))(P(y)+c),
\end{equation}

\begin{lemma}[Hessian bounds for a gated link and its block rows]
\label{lem:gated-link-hessian}
Under these assumptions, the Hessian blocks satisfy
\begin{align}
 \|\nabla^2_{yy}\mathcal L\|_{\op}
 &\le B_2,
 \label{eq:gated-link-current}\\
 \|\nabla^2_{xy}\mathcal L\|_{\op}
 &\le M_1g_1B_1,
 \label{eq:gated-link-cross}\\
 \|\nabla^2_{xx}\mathcal L\|_{\op}
 &\le(B_0+|c|)(M_2g_1^2+M_1g_2).
 \label{eq:gated-link-previous}
\end{align}
Moreover, for any symmetric block matrix $H=(H_{jk})$,
\begin{equation}
 \|H\|_{\op}
 \le\max_j\sum_k\|H_{jk}\|_{\op}.
 \label{eq:block-row-operator-bound}
\end{equation}
\end{lemma}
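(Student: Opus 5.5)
The argument is a direct chain-rule computation followed by a standard block-matrix norm inequality. Write $g:=G(h(x))$, $g':=G'(h(x))$, $g'':=G''(h(x))$ and $\bar P:=P(y)+c$. Differentiating $\mathcal L(x,y)=-g\,\bar P$ twice gives three Hessian blocks:
\[
 \nabla^2_{yy}\mathcal L=-g\,\nabla^2P(y),
 \qquad
 \nabla^2_{xy}\mathcal L=-g'\,\nabla h(x)\,\nabla P(y)^\top,
\]
\[
 \nabla^2_{xx}\mathcal L=-\bar P\bigl(g''\,\nabla h\,\nabla h^\top+g'\,\nabla^2h\bigr).
\]
I would bound each block separately using the premises \eqref{eq:gated-link-premises} and the gate bounds $0\le G\le1$, $|G'|\le M_1$, $|G''|\le M_2$ from Lemma~\ref{lem:explicit-flat-gate}.

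For the first block, $0\le g\le1$ gives $\|\nabla^2_{yy}\mathcal L\|_{\op}\le B_2$, which is \eqref{eq:gated-link-current}. The cross block is rank one, so its operator norm equals $|g'|\,\|\nabla h\|\,\|\nabla P\|\le M_1g_1B_1$, which is \eqref{eq:gated-link-cross}. For the $xx$ block, use $|\bar P|\le B_0+|c|$, $\|\nabla h\nabla h^\top\|_{\op}=\|\nabla h\|^2\le g_1^2$ and $\|\nabla^2h\|_{\op}\le g_2$. The triangle inequality then yields \eqref{eq:gated-link-previous}.

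For the block-row inequality \eqref{eq:block-row-operator-bound}, I would avoid the crude Schur-type estimate $\sqrt{\|H\|_1\|H\|_\infty}$ taken entrywise and instead pass to the matrix of block norms. Let $A_{jk}:=\|H_{jk}\|_{\op}$. Then $A$ is a symmetric nonnegative matrix, because $\|H_{kj}\|_{\op}=\|H_{jk}^\top\|_{\op}$. For any $v=(v_j)$,
\[
 \|Hv\|\le\Bigl\|\bigl(\textstyle\sum_k A_{jk}\|v_k\|\bigr)_j\Bigr\|\le\|A\|_{\op}\|v\|.
\]
So $\|H\|_{\op}\le\|A\|_{\op}$. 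Since $A$ is symmetric, $\|A\|_{\op}$ is its spectral radius, and the spectral radius is at most the maximal row sum $\max_j\sum_kA_{jk}$ (Gershgorin, or the bound $\|A\|_{\op}\le\sqrt{\|A\|_1\|A\|_\infty}$ with equal column and row sums).

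I expect no real obstacle here. The only points needing care are checking that $G\in[0,1]$ is what makes the $yy$ bound free of gate constants, and using symmetry of the block-norm matrix in the last step.
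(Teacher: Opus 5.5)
Your proposal is correct and follows the paper's proof: you use the same three explicit Hessian blocks, bounded through $0\le G\le1$ and the sup-norms $M_1,M_2$. You also make the same reduction to the symmetric nonnegative matrix of block norms $A_{jk}=\|H_{jk}\|_{\op}$, whose spectral norm is at most its maximal row sum by $\|A\|_2\le\sqrt{\|A\|_1\|A\|_\infty}=\|A\|_\infty$.
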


\begin{lemma}[Radial gradient lower bound]
\label{lem:radial-alignment-barrier}
Set $\lambda=33/16$ and
\begin{equation}
 \chi:=
 \frac{(1-b^2)^2-\lambda b}{\sqrt{1-b^2}}
 >\frac5{18}.
 \label{eq:explicit-radial-chi}
\end{equation}
For every unit vector $u$, every $x\in\R^D$, and every $c\ge1$ such that
$q_u(x)\le b$,
\begin{equation}
 \|\lambda x-c\nabla q_u(x)\|
 \ge\|\lambda x-\nabla q_u(x)\|
 \ge\chi.
 \label{eq:radial-norm-comparison}
\end{equation}
The unique stationary alignment for $c=1$ is a scalar $q_*>b$ satisfying
$(1-q_*^2)^2-\lambda q_*=0$.
\end{lemma}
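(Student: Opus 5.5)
The plan is to observe that $\lambda x-\nabla q_u(x)=\nabla\phi(x)$ with $\phi(x)=\tfrac\lambda2\|x\|^2-q_u(x)$, and to reduce everything to two scalar coordinates. Write $x=tu+y$ with $y\perp u$, $s=\|x\|$, $r=r_x=(1+s^2)^{1/2}$, and $q=q_u(x)=t/r$. Let $g=\nabla q_u(x)=u/r-t x/r^3$ from \eqref{eq:radial-gradient}. Two elementary identities drive the argument:
\[
\langle x,g\rangle=\frac{t}{r^3}=\frac{q}{r^2},
\qquad
\|g\|^2=\frac1{r^2}\Bigl(1-2q^2+q^2\frac{s^2}{r^2}\Bigr).
\]
Since $s\ge|t|$ gives $s^2/r^2\ge q^2$, and $s^2/r^2\le1$, they yield the sandwich
\[
\frac{(1-q^2)^2}{r^2}\le\|g\|^2\le\frac{1-q^2}{r^2}.
\]
Throughout I use that $\chi>0$, that is, $(1-b^2)^2>\lambda b$; this is a numerical check at $b=9/32$, $\lambda=33/16$.

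\emph{First inequality.} The map $c\mapsto\|\lambda x-cg\|^2$ is a convex quadratic minimized at $c^*=\lambda\langle x,g\rangle/\|g\|^2$, so it suffices to show $c^*\le1$. If $q\le0$ this is immediate. If $0<q\le b$, the lower bound on $\|g\|^2$ gives
\[
\|g\|^2-\lambda\langle x,g\rangle\ge r^{-2}\bigl[(1-q^2)^2-\lambda q\bigr]\ge r^{-2}\bigl[(1-b^2)^2-\lambda b\bigr]>0,
\]
using that $(1-q^2)^2-\lambda q$ is decreasing on $[0,1]$. Hence the quadratic is nondecreasing on $[1,\infty)$, which gives the first inequality.

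\emph{Second inequality.} Here I would split the vector $\lambda x-g$ into its $u$-component, $\lambda t-1/r+t^2/r^3$, and its $y$-component, $(\lambda+t/r^3)y$. This reduces $\|\lambda x-g\|^2$ to an explicit function of $(t,\|y\|)$, equivalently of $(q,r)$, which must be minimized over the region $q\le b$.

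\emph{Stationary alignment.} The stationary-point claim comes first. $\nabla\phi=0$ forces the $y$-component to vanish, so $y=0$ (since $\lambda+t/r^3>0$ there), and then $\lambda t=1/r-t^2/r^3=1/r^3$. Using $1/r^2=1-q^2$ along the ray $x=tu$, this becomes $\lambda q=(1-q^2)^2$. That equation has a unique root $q_*\in(0,1)$, and $q_*>b$ because $\chi>0$.

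\emph{Quantitative bound.} Testing against the direction $g/\|g\|$ and using the upper bound on $\|g\|$ gives
\[
\|\lambda x-g\|\ge\frac{\|g\|^2-\lambda q/r^2}{\|g\|}\ge\frac{(1-q^2)^2-\lambda q}{r\sqrt{1-q^2}}.
\]
At $r=1/\sqrt{1-q^2}$, the radius attained on the ray $y=0$, this is exactly the expression defining $\chi$ once $q=b$ is substituted, by monotonicity in $q$. It remains to handle other radii. For $r$ larger than this value I would use the cruder bound $\|\lambda x-g\|\ge\lambda s-\|g\|\ge\lambda s-1/r$. For smaller $r$ (and $y\ne0$) I would exploit the orthogonal component $(\lambda+t/r^3)\|y\|$, which is large once $s^2/r^2$ exceeds $q^2$ noticeably.

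\emph{Main obstacle.} The hardest step is the two-variable minimization in the second inequality: showing the minimum over $\{q\le b\}$ is attained on the ray $y=0$ at $q=b$, so that the two case bounds glue together without losing below $\chi$. I would first try to prove that, for fixed $q$, $\|\lambda x-g\|^2$ is minimized at $y=0$ by direct differentiation in $\|y\|$. That would reduce the problem to one variable along $x=tu$, where the norm is $\bigl|\lambda t-(1-q^2)/r\bigr|$ and $r$ is determined by $q$. The bound $\chi>5/18$ is then a final numerical evaluation.
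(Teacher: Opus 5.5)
Your first inequality (showing $c^*\le1$ from the lower sandwich bound on $\|g\|^2$) and your stationary-alignment argument are correct and match the paper. The gap is in the second inequality, which is the real content of the lemma. There are two concrete problems.

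\emph{The projection bound does not give $\chi$.} At $r=(1-q^2)^{-1/2}$ one has $r\sqrt{1-q^2}=1$, so your bound equals $(1-q^2)^2-\lambda q$. At $q=b$ this is $\sqrt{1-b^2}\,\chi\approx0.268$, below both $\chi$ and $5/18$. The loss comes from lower-bounding $1/\|g\|$ through $\|g\|^2\le(1-q^2)/r^2$. On the ray $y=0$ one has $s^2/r^2=q^2$, hence $\|g\|^2=(1-q^2)^2/r^2$, the lower end of your sandwich.

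\emph{The case split does not close.} For fixed $q$ one always has $r\ge(1-q^2)^{-1/2}$, with equality iff $y=0$. So ``smaller $r$ with $y\ne0$'' is empty, and every off-ray point lands in your ``larger $r$'' case. There the crude bound $\lambda s-1/r$ exceeds $\chi$ only for $s\gtrsim0.56$. At $q=b$, however, the ray sits at $s=b/\sqrt{1-b^2}\approx0.29$. For $q$ near $b$ and $s$ in between, neither estimate reaches $\chi$, so patching the two cannot prove the claim.

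The missing step is the one you flag as the main obstacle, and it is short after a change of variables. Set $\zeta:=r^{-2}$. For fixed $q$ every $\zeta\in(0,1-q^2]$ is realized. Your two identities then give
\[
\|\lambda x-g\|^2=\lambda^2(\zeta^{-1}-1)-2\lambda q\zeta+\zeta(1-q^2-q^2\zeta),
\]
which depends on $(q,\zeta)$ alone.

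For $0\le q\le b$, the $\zeta$-derivative is $-\lambda^2\zeta^{-2}-2\lambda q+(1-q^2)-2q^2\zeta\le1-\lambda^2<0$. Hence at fixed $q$ the minimum is at the largest admissible value $\zeta=1-q^2$, which is the ray. There the norm equals $((1-q^2)^2-\lambda q)/\sqrt{1-q^2}$, which decreases on $[0,b]$ and so is at least $\chi$.

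For $q\le0$ the paper does not use this monotonicity. Since $\langle x,g\rangle\le0$, you may drop the cross term. Your identity together with $q^2\le1-\zeta$ gives $\|g\|^2\ge\zeta^3=(1+s^2)^{-3}$. Therefore $\|\lambda x-g\|^2\ge\lambda^2s^2+(1+s^2)^{-3}\ge1>\chi^2$, using $1-(1+\sigma)^{-3}\le3\sigma$ and $\lambda^2>3$.
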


\subsection{Proofs of the derivative bounds}

\begin{proof}[Proof of Lemma~\ref{lem:gated-link-hessian}]
Direct differentiation gives
\begin{align*}
 \nabla^2_{yy}\mathcal L
 &=-G(h)\nabla^2P,\qquad
 \nabla^2_{xy}\mathcal L
 =-G'(h)\nabla h(\nabla P)^\top,\\
 \nabla^2_{xx}\mathcal L
 &=-(P+c)\bigl[G''(h)\nabla h\nabla h^\top
                  +G'(h)\nabla^2h\bigr].
\end{align*}
The three displayed bounds follow from $0\le G\le1$ and the derivative
bounds on $G$.

For the block statement, set $A_{jk}=\|H_{jk}\|_{\op}$.  If
$z_k=\|v_k\|$, then the vector of block norms of $Hv$ is coordinatewise at
most $Az$, so $\|Hv\|\le\|A\|_2\|v\|$.  Symmetry gives
\begin{equation}
 \|A\|_2\le\sqrt{\|A\|_1\|A\|_\infty}=\|A\|_\infty,
\end{equation}
which is~\eqref{eq:block-row-operator-bound}.
\end{proof}

\begin{proof}[Proof of Lemma~\ref{lem:radial-alignment-barrier}]
Write $x=tu+y$, where $y\perp u$, and set
\begin{equation}
 q=q_u(x),\qquad
 \zeta=(1+\|x\|^2)^{-1}.
\end{equation}
Lemma~\ref{lem:radial-first-second} gives
\begin{equation}
 \langle x,\nabla q_u(x)\rangle=q\zeta,
 \qquad
 \|\nabla q_u(x)\|^2
 =\zeta(1-q^2-q^2\zeta),
 \qquad
 0<\zeta\le1-q^2.
 \label{eq:radial-scalar-identities}
\end{equation}

\emph{Case 1: $0\le q\le b$.}
First expand the dependence on $c$:
\begin{equation}
 \|\lambda x-c\nabla q_u(x)\|^2
 =\lambda^2(\zeta^{-1}-1)-2\lambda cq\zeta
  +c^2\zeta(1-q^2-q^2\zeta).
\label{eq:radial-c-quadratic}
\end{equation}
Its derivative with respect to $c$ is
\begin{equation}
 2\zeta\bigl[-\lambda q+c(1-q^2-q^2\zeta)\bigr].
 \label{eq:radial-c-derivative}
\end{equation}
Setting this derivative to zero gives the unconstrained minimizer
\begin{equation}
 c_0=\frac{\lambda q}{1-q^2-q^2\zeta}
 \le\frac{\lambda q}{(1-q^2)^2}<1.
\end{equation}
The strict inequality follows because
$h(q):=(1-q^2)^2-\lambda q$ is decreasing on $[0,1]$ and
\begin{equation}
 h(b)>\frac4{15}>0.
 \label{eq:stationary-margin}
\end{equation}
Moreover, $\sqrt{1-b^2}<24/25$, so the definition of $\chi$ gives
$\chi>5/18$.

Thus the minimum over $c\ge1$ occurs at $c=1$.  For this value,
\begin{equation}
 \|\lambda x-\nabla q_u(x)\|^2
 =\lambda^2(\zeta^{-1}-1)-2\lambda q\zeta
 +\zeta(1-q^2-q^2\zeta).
\end{equation}
The derivative of the right-hand side is
\begin{equation}
 -\frac{\lambda^2}{\zeta^2}-2\lambda q+(1-q^2)-2q^2\zeta
 \le-\lambda^2+1<0,
\end{equation}
where $q\ge0$ and $\zeta\le1$.  Hence the minimum
occurs at $\zeta=1-q^2$, equivalently $y=0$.  There,
\begin{equation}
 \|\lambda x-\nabla q_u(x)\|
 =\frac{(1-q^2)^2-\lambda q}{\sqrt{1-q^2}},
\end{equation}
which decreases on $[0,b]$ and is therefore at least $\chi$.

\emph{Case 2: $q\le0$.}
Equation~\eqref{eq:radial-c-derivative} is nonnegative for $c\ge1$,
so the squared norm is minimized at $c=1$.  At that value,
\begin{equation}
\begin{aligned}
 \|\lambda x-\nabla q_u(x)\|^2
 ={}&\lambda^2\|x\|^2+\|\nabla q_u(x)\|^2
 -2\lambda\langle x,\nabla q_u(x)\rangle\\
 \ge{}&\lambda^2\|x\|^2+\|\nabla q_u(x)\|^2,
\end{aligned}
\end{equation}
because $\langle x,\nabla q_u(x)\rangle=q\zeta\le0$.

Moreover, $\nabla q_u(x)=J_\Psi(x)^\top u$, and the smallest singular value
of $J_\Psi(x)$ is $(1+\|x\|^2)^{-3/2}$.  Since $\|u\|=1$,
\begin{equation}
 \|\nabla q_u(x)\|^2\ge(1+\|x\|^2)^{-3}.
\end{equation}
Consequently,
\begin{equation}
 \|\lambda x-\nabla q_u(x)\|^2
 \ge\lambda^2\|x\|^2+(1+\|x\|^2)^{-3}
 \ge1>\chi^2.
\end{equation}

The second inequality uses $1-(1+s)^{-3}\le3s$ for $s\ge0$ and
$\lambda^2>3$.  This proves~\eqref{eq:radial-norm-comparison}.  Finally,
the stationary equation for $c=1$ is $h(q_*)=0$; monotonicity and
\eqref{eq:stationary-margin} give the stated uniqueness and $q_*>b$.
\end{proof}

\clearpage
\section{Dense Weak Hiding: One-Stage Information and Sequential Revelation}
\label{app:information}

The one-stage argument proves that alignment above the gate threshold requires
$\Omega(\min\{n,\rho^{-2}\})$ rows, which is $\Omega(n)$ at the optimized
bias.  The sequential argument then gives an $\Omega(rJ)$ cost for
$J$ stages.  Lemmas~\ref{lem:tunable-bias-kl}--%
\ref{lem:precompletion-extension} prove the row cost and cover the completion
query.  Lemmas~\ref{lem:future-alignment}--\ref{lem:pathwise-coupling} and
Theorem~\ref{thm:sequential} give stage independence, coupling, and additivity.

\subsection{One-stage information bound}

\paragraph{One-stage setting.}
For this one-stage argument, suppress the stage superscript and write
$\Theta=\Theta^{(j)}$ and $S=S^{(j)}$.  Write the fixed table as
$S=(S_1,\ldots,S_D)\in\{\pm1\}^{n\times D}$, with entries $S_{i\ell}$,
column $S_\ell=(S_{1\ell},\ldots,S_{n\ell})\in\{\pm1\}^n$, and row
$S_{i:}\in\{\pm1\}^D$.  Column $S_\ell$ encodes the hidden coordinate
$\Theta_\ell$.  Throughout the general result, let $m$ be a positive integer
with the parity of $n$ and $1\le m\le n/8$, and set
$\rho=m/n$, so $\rho\in\mathcal R_n$.  The relation $m\asymp\sqrt n$ is used
only in the optimized result below.

\paragraph{Column sampling under $P$.}
The hard distribution is
\begin{equation}
 \begin{aligned}
 P:\qquad
 &\Theta\sim\operatorname{Unif}(\{\pm1\}^D),\\
 &S_\ell\mid\Theta
 \sim\operatorname{Unif}\!\left\{s\in\{\pm1\}^n:
             \mathbf 1^\top s=m\Theta_\ell\right\},
 \qquad \ell\in[D],
\end{aligned}
\label{eq:hard-distribution-P}
\end{equation}
Equivalently, column $\ell$ contains $(n+m\Theta_\ell)/2$ plus signs and
$(n-m\Theta_\ell)/2$ minus signs, uniformly permuted across its rows.  The
columns are conditionally independent, but entries within a column are not.
The table is sampled once before interaction.  Querying a row does not
resample it.

\paragraph{One-stage protocol and alignment event.}
Let $W\perp(\Theta,S)$ collect the random tape and all data independent of
$(\Theta,S)$ that are used in replies before stage completion.

For the information bound, consider the stronger protocol in which every query
also reveals its complete encoding row.  Let $\mathcal H_t$ be the history after
response $t$ in this stronger protocol.  $\mathcal H_0$ includes $W$ and one
optional initialization row whose index is fixed from $W$ in advance.  At
query $t\le B$, $\mathcal H_{t-1}$ determines an index $I_t\in[n]$ and an
alignment vector $V_t$ with $\|V_t\|\le1$; repeats reveal no new row.  This
strengthening is valid because every reply before stage completion is
measurable in $W$, the queries, and the selected rows.  The alignment condition
is evaluated before the response and is not revealed to the algorithm.

Let $K_t$ count the distinct rows returned before query $t$, including the
initialization row when present.  With $a=5/32$, define
\begin{equation}
 E_r=\left\{\exists t\le B:K_t<r,\quad
 \frac1{\sqrt D}\langle\Theta,V_t\rangle>a\right\}.
 \label{eq:stopped-discovery-event}
\end{equation}

Both one-stage results use the dimension condition
\begin{equation}
 D\ge512\log(B+2)+2048.
 \label{eq:explicit-row-dimension}
\end{equation}

\begin{lemma}[One-stage row lower bound for a tunable bias]
\label{lem:tunable-bias-kl}
Let $m$ have the parity of $n$, let $1\le m\le n/8$, set $\rho=m/n$, and
define
\begin{equation}
 r_\rho:=\left\lfloor
 \frac1{4096}\min\{n,\rho^{-2}\}
 \right\rfloor.
 \label{eq:tunable-row-threshold}
\end{equation}
If $r_\rho\ge2$, $B\ge1$, and $D$ satisfies
\eqref{eq:explicit-row-dimension}, then
\begin{equation}
 P(E_{r_\rho}\mid W)\le\frac18
 \qquad\text{almost surely}.
 \label{eq:restated-tunable-one-stage-conclusion}
\end{equation}
\end{lemma}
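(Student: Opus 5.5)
Fix $W$ and work conditionally throughout, so that $P(\cdot\mid W)$ is the law of $(\Theta,S)$ driven by the protocol. The plan has three steps: (i) reduce to an experiment in which the algorithm sees rows in order of first occurrence; (ii) bound the mutual information between $\Theta$ and the first $r-1$ revealed rows by $O(Dr\rho^2)$; (iii) convert this via data processing into a bound on the probability that any alignment vector chosen from such a history has $\langle\Theta,V_t\rangle/\sqrt D>a$.

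For (i), repeated indices reveal nothing new, and the choice of each new index is $\mathcal H$-measurable before its row is seen. By Lemma~\ref{lem:encoding-table-properties}(iii) and exchangeability of rows within each column, the sequence of distinct revealed rows $Y_1,Y_2,\ldots$ has the same joint law with $\Theta$ as the first rows $S_{1:},S_{2:},\ldots$ of the table, whatever the adaptive index rule. Hence the history at any time with $K_t<r$ is a (randomized, $W$-dependent) function of $(W,Y_1,\ldots,Y_{r-1})$, and $I(\Theta;\mathcal H_{t}\mid W)\le I(\Theta;Y_{1:r-1})$.

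For (ii), use conditional independence of columns given $\Theta$ and independence of the $\Theta_\ell$ to get
\[
I(\Theta;Y_{1:r-1})=D\,I(\Theta_1;S_{1:r-1,1}),
\]
a single-coordinate quantity. Bound it by the chi-square (or KL) divergence between the two hypergeometric-type laws of the first $k=r-1$ entries of a column with $(n\pm m)/2$ plus signs. The number of plus signs among $k$ draws without replacement has likelihood ratio with log-derivative of order $m/n=\rho$ per draw, and a variance reduced by the finite-population factor. I expect $\chi^2\lesssim k\rho^2$ as long as $k\le n/2$ and $k\rho^2\lesssim1$. The branch $r\lesssim n$ guarantees the without-replacement correction stays bounded. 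This gives $I\le C Dr\rho^2\le CD/4096$ for $r\le r_\rho$.

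For (iii), this is the main obstacle: a plain Fano bound is too weak. I would first try a union over queries combined with a change of measure. For any fixed unit-ball $V$, under the prior $\Theta$ uniform, Hoeffding gives $P(\langle\Theta,V\rangle/\sqrt D>a)\le e^{-Da^2/2}$. The Donsker--Varadhan / data-processing inequality for the indicator of the event $\bigcup_{t\le B}\{\ldots\}$ then gives
\[
P(E_r\mid W)\le\frac{I+\log 2}{\log\bigl(1/(B\,e^{-Da^2/2})\bigr)}.
\]
Since $a^2/2=25/2048$, the denominator is about $Da^2/2-\log B$, so the condition \eqref{eq:explicit-row-dimension} makes it at least a constant times $D$, while the numerator is at most $CD/4096+\log2$. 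With the constants $4096$ and $512$, the ratio is at most $1/8$.

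The delicate point is handling the union over the $B$ adaptive times. Here I would stop at the first crossing time and apply the bound to the pair $(\Theta,\text{stopped history})$, whose information is still at most that of $Y_{1:r-1}$. This works because the stopping decision is a function of the history plus the unseen $\Theta$, and the $\log B$ term comes from the index of the crossing query. The remaining work is checking the constants.
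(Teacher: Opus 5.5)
Your architecture is sound, and the conversion step is the paper's own. The paper also compares the law of $(\Theta,\mathcal T_r)$ with $P_\Theta\otimes P_{\mathcal T_r}$, uses Hoeffding and a union bound over the at most $B$ retained queries to get $q_r\le Be^{-a^2D/2}$, and applies data processing to $\mathbf 1_{E_r}$. Your closed form $P(E_r)\le(\mathrm I+\log2)/\log(1/q_r)$ is equivalent to its contradiction argument via $d_{\rm bin}(1/8\|q_r)$.

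You differ in how you bound the information. The paper never removes adaptivity. It runs the chain rule along the actual history and, for each new row and coordinate, uses the conditional Markov chain $\Theta\to\Theta_\ell\to S_{I_t\ell}$, given the history and that row's earlier coordinates. A Jensen-gap bound for $h_{\rm b}$ then yields $\frac{32}{63}(m/(n-d))^2$ per coordinate. You instead use exchangeability to show that the distinct rows, taken in order of first occurrence, have the same joint law with $\Theta$ as the first rows of the table. The retained history is therefore a function of $(w,Y_{1:r-1})$, and you tensorize exactly over the i.i.d.\ pairs $(\Theta_\ell,S_{\cdot\ell})$. This is valid and cleaner: it shows directly why adaptivity cannot help, it sidesteps any posterior dependence among coordinates, and it needs only $r-1$ rows. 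The paper's route gives an explicit mutual-information constant directly and provides a template it reuses for the auxiliary quadratic.

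Two points need repair.

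First, ``I expect $\chi^2\lesssim k\rho^2$'' must become an explicit bound. The lemma's constants close only if $\mathrm I(\Theta;Y_{1:r-1})\le C\,Dr\rho^2$ with $C$ below about $3.8$. Under $D\ge512\log(B+2)+2048$, your ratio is at most $(C+2\log2)/42$. The easy route is the chain rule inside a single column. The $(d+1)$st sign is Bernoulli with parameter $p^{(d)}_\theta=\frac{n+m\theta-d-\Sigma_d}{2(n-d)}$, and the two hypotheses differ by $m/(n-d)$. Since $r_\rho\ge2$ forces $n\ge8192$, $\rho<1/32$, and $r_\rho\le n/4096$, both parameters stay within about $0.02$ of $1/2$. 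The Jensen gap then gives $C\approx1/2$, which is the paper's per-coordinate estimate.

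Second, drop the stopping-time paragraph. Your display already handles the union over adaptive times, because under $P_\Theta\otimes P_H$ every retained $V_t$ is independent of $\Theta$. Stopping at the first crossing makes the stopped history depend on $\Theta$, so its information is \emph{not} bounded by $\mathrm I(\Theta;Y_{1:r-1})$. The honest accounting adds up to $\log(B+1)$ for the crossing index. Then $\log B$ enters at full weight, rather than with the effective factor $1/8$ it carries in $d_{\rm bin}(1/8\|q)\ge\frac18\log\frac1q-h_{\rm b}(\frac18)$. With that accounting, $D\ge512\log(B+2)+2048$ no longer suffices; you would need a coefficient above roughly $650$.
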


\begin{lemma}[Optimized one-stage row lower bound]
\label{lem:optimized-one-stage}
Choose the least parity-compatible $m\ge\sqrt n$.  Then
$\sqrt n\le m<\sqrt n+2$ and $\rho=m/n$.  For every $n\ge1024$, $B\ge1$,
and $D$ satisfying~\eqref{eq:explicit-row-dimension}, the sharper threshold
\begin{equation}
 r=\left\lfloor\frac n{512}\right\rfloor
 \label{eq:optimized-row-threshold}
\end{equation}
satisfies
\begin{equation}
 P(E_r\mid W)\le\frac18
 \qquad\text{almost surely}.
 \label{eq:restated-one-stage-conclusion}
\end{equation}
At the optimized bias, this sharpens the tunable threshold to
$r=\lfloor n/512\rfloor$ while preserving
$P(E_r\mid W)\le1/8$.
\end{lemma}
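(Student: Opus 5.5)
The plan is to rerun the mutual-information argument behind Lemma~\ref{lem:tunable-bias-kl} at the specific bias $\rho=m/n$ with $\sqrt n\le m<\sqrt n+2$ from Corollary~\ref{cor:optimized-near-balanced-bias}. The only change is to keep track of constants more carefully, so that the threshold is $\lfloor n/512\rfloor$ instead of $\lfloor\min\{n,\rho^{-2}\}/4096\rfloor$. At this bias, $\rho^{-2}\ge n(16/17)^2$. So $n\rho^2$ is a universal constant in $[1,(17/16)^2]$, and the per-row information is $O(D\rho^2)=O(D/n)$ per coordinate block.

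Here are the steps, in order. First, following the stronger protocol, order the distinct rows by first occurrence. By Lemma~\ref{lem:encoding-table-properties}(iii), conditional on $W$ and the past, the $k$th new row is a uniformly chosen remaining row of each column, given the residual sign counts. Repeats add no information, so it suffices to bound $I(\Theta;R_1,\dots,R_{r}\mid W)$ for the first $r$ distinct rows. Since the columns are conditionally independent given $\Theta$, this decomposes into a sum over $\ell$ of $I(\Theta_\ell;\text{first }r\text{ entries of column }\ell)$.

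Second, bound the per-column term. The first $r$ entries of a column, drawn without replacement from a hypergeometric urn, have a sufficient statistic: their count of plus signs. This count is hypergeometric with mean shift $\pm rm/(2n)$, which is $r\rho/2$ on each side. Its mutual information with the sign $\Theta_\ell$ is at most the chi-square distance between the two hypergeometric laws, which is $O(r\rho^2\cdot n/(n-r))$. For $r\le n/512$ the finite-population factor is at most $1+1/511$, so the per-column bound is $C r\rho^2\le C' r/n$. Summing over columns gives $I\le C'Dr/n\le C'D/512$, with $C'$ explicit.

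Third, convert the information bound into an alignment probability. Alignment $\langle\Theta,V_t\rangle/\sqrt D>a$ with $\|V_t\|\le1$ and $V_t$ measurable in $\mathcal H_{t-1}$ means guessing a direction with correlation $a$. Under any fixed law, independent of $\Theta$, for a unit vector $V$, Hoeffding gives $\Pr(\langle\Theta,V\rangle>a\sqrt D)\le e^{-a^2D/2}$. Apply Fano-type data processing, via the binary KL between the event probabilities under the joint law and under the product law. Combined with a union bound over the at most $B$ query times, this bounds the probability of the event at time $t$ by roughly $(I+\log 2)/(a^2D/2)$ plus the $B e^{-a^2D/2}$ term. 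The dimension condition~\eqref{eq:explicit-row-dimension} is what absorbs the $\log(B+2)$ term.

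The main obstacle is the constant bookkeeping in this last conversion. The information $C'D/512$ must be strictly smaller than $a^2D/2$ times $1/8$, with $a=5/32$, so that the ratio comes out at most $1/8$. I would try a direct change of measure $P(E)\le (I+\log2)/\log(1/P_0(E))$ with $P_0(E)\le Be^{-a^2D/2}$. I would also use the sharper chi-square bound $\operatorname{Var}(\text{count})/(\text{sd})^2$-type constants for the hypergeometric. If the constants fail, the fallback is to redo the Hoeffding step with the exact Gaussian-type tail $e^{-a^2D/2}$ and use that $512$ versus $4096$ gives room of a factor $8$. Finally, handle the stopping at $K_t<r$ by applying the bound to the stopped history, which is where conditioning on $W$ almost surely enters.
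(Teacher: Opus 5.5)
Your structural reductions are sound, and they are a cleaner route than the paper's chain rule along the actual history. By Lemma~\ref{lem:encoding-table-properties}(iii) and induction on the number of revealed rows, the rows in order of first revelation have, given $\Theta$, the same law as non-adaptive draws without replacement, independently across columns. Hence $\mathrm I(\Theta;\mathcal T_r)\le\sum_\ell\mathrm I(\Theta_\ell;N_\ell)$, where $N_\ell$ is the hypergeometric plus-count. You should write that induction out, because the indices depend on earlier rows.

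The gap is quantitative, and for this lemma the quantitative step is the entire content. Compared with Lemma~\ref{lem:tunable-bias-kl}, raising the threshold to $n/512$ at $\rho^2\le(17/16)^2/n$ multiplies the information bound by up to about $8(17/16)^2\approx9$. So the ``room of a factor $8$'' you invoke is consumed, not gained. Also, $e^{-a^2D/2}$ is already the Hoeffding tail in use, so that fallback changes nothing. The contradiction needs roughly
$\mathrm I(\Theta;\mathcal T_r)\le\frac{25}{16384}D-\frac18\log B-h_{\rm b}(\tfrac18)$.
In your notation $\mathrm I\le C'Dr/n$, this means $C'$ must be below about $0.65$, and below $0.577$ with the paper's allocation of slack. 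The true per-column information is about $\tfrac12r\rho^2$, i.e.\ $C'$ as large as about $0.57$. Your chi-square step cannot reach this. In this small-signal regime $\mathrm I(\Theta_\ell;N_\ell)\approx\chi^2(P_+\|P_-)/8$, so bounding it by $\chi^2(P_+\|P_-)$ loses a factor $8$. Even the sharper standard forms lose a factor $2$: bounding KL to the mixture by $\chi^2$, or using the averaged symmetric KL, gives $C'\approx1.13>25/32$, and the argument does not close.

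The missing ingredient is a sharp second-order bound per revealed entry, and it fits inside your reduced model. Apply the chain rule to the $r$ sequential draws of each column. Given the earlier draws, the next entry is Bernoulli with parameter $p^{(d)}_\theta=(n+m\theta-d-\Sigma_{d,\ell})/(2(n-d))$. This lies in $[7/16,9/16]$, because $m\le17n/512$ and $r\le n/512$ give $8m+9r<145n/512<n$. On that interval $-h_{\rm b}''\le256/63$, and with $\pi(1-\pi)\le1/4$ the Jensen gap of $h_{\rm b}$ is at most $\frac{32}{63}\bigl(\frac m{n-d}\bigr)^2$ per coordinate. Summing gives
$\mathrm I(\Theta;\mathcal T_r)\le\frac{32D}{63}r\bigl(\frac m{n-r}\bigr)^2\le\frac{18496}{16450623}D<\bigl(\frac{25}{16384}-\frac1{2500}\bigr)D$.
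Then $D/2500>\frac18\log B+h_{\rm b}(\frac18)$ follows from~\eqref{eq:explicit-row-dimension}. This is exactly the paper's computation.

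Also fix the wording of your Step~3. Take the union bound over the at most $B$ retained queries under the product law $P_\Theta\otimes P_{\mathcal T_r}$, and apply data processing once to $\mathbf 1_{E_r}$. Bounding each query time separately under the true law and then union-bounding would cost a factor $B$. Your later form $P(E_r)\le(\mathrm I+\log2)/\log(1/q_r)$ with $q_r\le Be^{-a^2D/2}$ is correct, and it suffices once the sharp information bound is in place.
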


\begin{figure}[H]
\centering
\includegraphics[width=\linewidth]{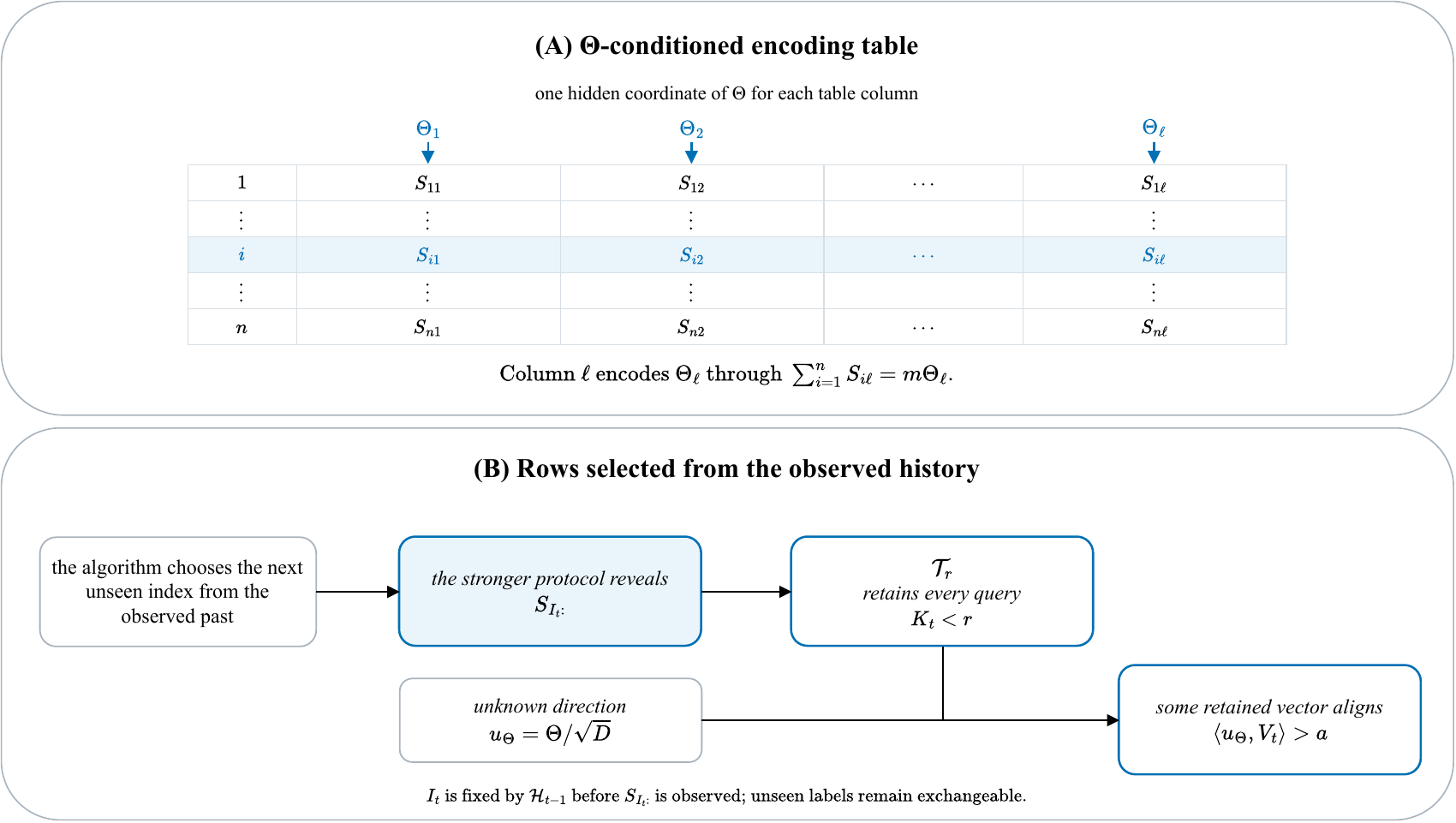}
\caption{One-stage encoding when row indices are selected from the observed
history.  Column $\ell$ of the fixed table encodes $\Theta_\ell$.  In the
stronger protocol used for the information bound, a new component index is
chosen before its row is observed, so exchangeability determines the
conditional law of the next unseen row.}
\label{fig:sign-table-direction}
\end{figure}

\paragraph{Information under sequential row selection.}
The one-stage argument has three steps: bound the information in one unseen
row, sum over the first $r$ distinct rows, and convert low mutual information
into a low probability of early alignment.
Throughout the proof of Lemma~\ref{lem:optimized-one-stage}, fix an admissible
value $W=w$.  All probabilities and mutual informations in this part are
conditional on $W=w$, and we suppress this conditioning.  The bounds are
uniform in $w$.

At every query, $I_t$ is fixed by $\mathcal H_{t-1}$ before its row is
observed.  If $I_t$ is new and $K_t=d$, Lemma~\ref{lem:encoding-table-properties}
therefore makes $S_{I_t:}$ a uniformly chosen remaining row conditional on
$(\Theta,\mathcal H_{t-1})$.  If $I_t$ is repeated, its row is already in the
history and contributes no new information.

Let $\mathcal T_r$ be this history, including the optional initialization row,
stopped immediately after the response that returns the
$r$th distinct row, or after query $B$ if fewer than $r$ rows are returned.
Suppose the $r$th row is returned.  The query selecting it is retained and has
$K_t=r-1$, and every later query has $K_t\ge r$.  If the $r$th row is never
returned, $\mathcal T_r$ contains all $B$ queries.  In both cases,
$\mathcal T_r$ retains exactly every query that can witness $E_r$, so $E_r$ is
measurable with respect to $(\Theta,\mathcal T_r)$.

\paragraph{Information in newly revealed rows.}
\begin{lemma}[Information in the first $r$ distinct rows]
\label{lem:first-use-row-information}
Let $1\le r<n$ and assume
\begin{equation}
 8m+9r\le n.
 \label{eq:row-information-interior}
\end{equation}
For the stopped history $\mathcal T_r$,
\begin{equation}
 \mathrm I(\Theta;\mathcal T_r)
 \le
 \frac{32D}{63}
 \sum_{d=0}^{r-1}
 \left(\frac{m}{n-d}\right)^2
 \le
 \frac{32D}{63}\,
 r\left(\frac{m}{n-r}\right)^2.
 \label{eq:first-use-row-information}
\end{equation}
\end{lemma}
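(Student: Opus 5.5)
We prove the bound with the chain rule for mutual information along the order in which distinct rows are first revealed, followed by a per-coordinate KL computation for one hypergeometric draw.

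\emph{Chain rule.} Write the stopped history $\mathcal T_r$ as a sequence of query/response blocks. A query index and alignment vector are measurable functions of the preceding history, so they contribute no information about $\Theta$. A repeated index returns a row already in the history, so it also contributes nothing. Only the at most $r$ newly revealed rows (including the optional initialization row) contribute. By the chain rule,
\[
 \mathrm I(\Theta;\mathcal T_r)=\sum_{d=0}^{r-1}\E\,\mathrm I\bigl(\Theta;R_{d+1}\mid \mathcal H^{(d)}\bigr),
\]
where $\mathcal H^{(d)}$ is the history just before the $(d+1)$st distinct row $R_{d+1}$ is returned. The stopping rule is adapted to this filtration, so the terms for rows that are never revealed vanish. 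This is standard for adaptive stopped histories.

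\emph{One new row.} Condition on $\mathcal H^{(d)}$. By Lemma~\ref{lem:encoding-table-properties}(iii), $R_{d+1}$ is a uniformly random remaining row. Given $\Theta$ and the revealed rows, its coordinates are independent across $\ell$, since the columns are independent. Coordinate $\ell$ is a $\pm1$ sign with mean
\[
 \frac{m\Theta_\ell-\sigma_\ell}{n-d},
\]
where $\sigma_\ell$ is the sum of the $d$ revealed entries of column $\ell$. We bound the conditional information by the conditional KL divergence to a reference law $Q$ that does not depend on $\Theta$. We take $Q$ to be the product of signs with means $-\sigma_\ell/(n-d)$, so that
\[
 \mathrm I(\Theta;R_{d+1}\mid\mathcal H^{(d)})\le \E\,\mathrm{KL}\bigl(P_{R\mid\Theta,\mathcal H}\,\|\,Q\bigr).
\]
This KL is a sum over $\ell$ of Bernoulli divergences. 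For Bernoulli laws with means $\mu$ and $\nu$,
\[
 \mathrm{KL}\le\frac{(\mu-\nu)^2}{1-\max(\mu^2,\nu^2)},
\]
and here $\mu-\nu=m\Theta_\ell/(n-d)$, so $(\mu-\nu)^2=(m/(n-d))^2$.

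\emph{Main obstacle.} The denominator $1-\max(\mu^2,\nu^2)$ must stay bounded below. This is where \eqref{eq:row-information-interior} enters. Since $|\sigma_\ell|\le d$, we get
\[
 \max(|\mu|,|\nu|)\le\frac{m+d}{n-d}\le\frac{m+r}{n-r}.
\]
Condition $8m+9r\le n$ gives $8(m+r)\le n-r$, so this ratio is at most $1/8$. Hence $1-\max(\mu^2,\nu^2)\ge 63/64$, and each coordinate contributes at most $\tfrac{64}{63}(m/(n-d))^2$. This calculation yields the constant $64/63$. The stated constant is $32/63$, which is half of it. We would recover the factor $1/2$ by using the sharper $\chi^2$-type estimate
\[
 \mathrm{KL}\le\frac{(\mu-\nu)^2}{2(1-\nu^2)}\cdot(1+o(1)),
\]
or by centring $Q$ at the mixture mean; we leave this constant as the one detail to verify.

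\emph{Summation.} Summing over the $D$ coordinates and over $d=0,\dots,r-1$ gives the first inequality in \eqref{eq:first-use-row-information}. Each term is increasing in $d$, so bounding it by its value at $d=r$ gives the second inequality.
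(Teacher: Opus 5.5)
Your route works and differs from the paper's, but as written it proves the lemma only with $\tfrac{64}{63}$ in place of $\tfrac{32}{63}$. The repair you suggest is also not an inequality. The factor of two is lost in $\mathrm{KL}\le\chi^2$. The bound $\mathrm{KL}\le\frac{(\mu-\nu)^2}{2(1-\nu^2)}(1+o(1))$ is only asymptotic and cannot certify an explicit constant.

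The missing step is a second-order Taylor bound in the first argument of the divergence. For fixed $q$, the map $p\mapsto\mathrm{KL}(\mathrm{Ber}(p)\,\|\,\mathrm{Ber}(q))$ vanishes together with its first derivative at $p=q$, and its second derivative is $1/(p(1-p))$. Your interiority estimate places $p^{(d)}_{+,\ell}$, $p^{(d)}_{-,\ell}$ and your midpoint reference $q_\ell$ in $[7/16,9/16]$, where $1/(p(1-p))\le 256/63$. In probability units, $|p^{(d)}_{\pm,\ell}-q_\ell|=m/(2(n-d))$. Hence each coordinate contributes at most $\frac12\cdot\frac{256}{63}\cdot\frac{m^2}{4(n-d)^2}=\frac{32}{63}\bigl(\frac{m}{n-d}\bigr)^2$, which is exactly the stated constant.

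The factor is not cosmetic. In the proof of Lemma~\ref{lem:optimized-one-stage}, the information bound $\frac{18496}{16450623}D\approx0.00112D$ would double to about $0.00225D$. That exceeds $\frac{25}{16384}D\approx0.00153D$, so the contradiction there would no longer go through.

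With that repair, your argument is a valid alternative to the paper's. The paper computes each coordinate increment exactly. It conditions on the earlier coordinates $S_{I_t,<\ell}$ of the new row and uses the Markov chain $\Theta\to\Theta_\ell\to S_{I_t\ell}$. It then bounds the Jensen gap $h_{\rm b}(\bar p)-\pi h_{\rm b}(p^{(d)}_{+,\ell})-(1-\pi)h_{\rm b}(p^{(d)}_{-,\ell})$ by the curvature of $h_{\rm b}$ times $\pi(1-\pi)\bigl(p^{(d)}_{+,\ell}-p^{(d)}_{-,\ell}\bigr)^2$.

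You instead use the variational bound $\mathrm I(\Theta;R\mid h)\le\E\bigl[\mathrm{KL}(P_{R\mid\Theta,h}\,\|\,Q_h)\mid h\bigr]$ with a product reference law $Q_h$ that depends on the history. This avoids the chain rule within the row and any bookkeeping about posterior dependence among the $\Theta_\ell$. What it needs instead is that the new row's coordinates are independent given $(\Theta,h)$. That holds because the likelihood of an adaptively generated history given $(\Theta,S)$ factorizes over columns, and you should say so rather than cite only column independence under the prior. Your midpoint choice reproduces the paper's worst case $\pi=1/2$, so the two routes give the same constant.
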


\begin{proof}[Proof of Lemma~\ref{lem:first-use-row-information}]
\emph{Information in a newly queried row.}
Consider a query $t$ retained in $\mathcal T_r$ and condition on a realization
$h$ of $\mathcal H_{t-1}$.  Both $I_t$ and $K_t$ are then fixed.  If $I_t$ is
repeated, its row and the corresponding reply are already determined by $h$,
so the conditional information increment is zero.  Suppose instead that
$I_t$ is unseen and $K_t=d<r$.  For column $\ell$, let $\Sigma_{d,\ell}$ be the
sum of that coordinate over the $d$ distinct rows already revealed in $h$.
Conditional exchangeability leaves $n-d$ uniformly assigned signs with
residual sum $m\Theta_\ell-\Sigma_{d,\ell}$.  Hence
\begin{equation}
 p_{\theta,\ell}^{(d)}
 :=
 \Prob(S_{I_t\ell}=1
 \mid \Theta_\ell=\theta,\mathcal H_{t-1}=h)
 =
 \frac{n+m\theta-d-\Sigma_{d,\ell}}{2(n-d)}.
 \label{eq:conditional-unseen-sign}
\end{equation}
The optional initialization row obeys the same formula with $d=0$, because its
index is fixed by $W$ before the row is observed.

Expose the coordinates of the new row in the order $\ell=1,\ldots,D$.
Conditional independence of the column sequences gives
\[
 \Prob(S_{I_t\ell}=1
 \mid\Theta,\mathcal H_{t-1}=h,S_{I_t,<\ell})
 =p_{\Theta_\ell,\ell}^{(d)}.
\]
Thus, under this conditioning, the current sign depends on the hidden
direction only through $\Theta_\ell$.  The argument does not require
independence among the posterior coordinates of $\Theta$.
More explicitly, conditional on $\mathcal H_{t-1}=h$ and
$S_{I_t,<\ell}$,
\[
 \Theta\longrightarrow\Theta_\ell\longrightarrow S_{I_t\ell}
\]
is a Markov chain.  Therefore,
\[
 \mathrm I(\Theta;S_{I_t\ell}
 \mid\mathcal H_{t-1}=h,S_{I_t,<\ell})
 =
 \mathrm I(\Theta_\ell;S_{I_t\ell}
 \mid\mathcal H_{t-1}=h,S_{I_t,<\ell}).
\]

Condition~\eqref{eq:row-information-interior} implies
\[
 \left|p_{\theta,\ell}^{(d)}-\frac12\right|
 \le
 \frac{m+d}{2(n-d)}
 \le\frac1{16}.
\]
Hence both $p_{+,\ell}^{(d)}$ and $p_{-,\ell}^{(d)}$ lie in
$[7/16,9/16]$.  For $Y\sim\operatorname{Bernoulli}(p)$, its Shannon entropy is
$H(Y)=h_{\rm b}(p):=-p\log p-(1-p)\log(1-p)$.  Here $\log$ is the natural
logarithm, so entropy and mutual information are measured in nats.  On the
preceding interval,
\[
 -h_{\rm b}''(p)=\frac1{p(1-p)}
 \le\frac{256}{63}
 \qquad (p\in[7/16,9/16]).
\]

Fix also a realization of $S_{I_t,<\ell}$ and write
\[
 \pi:=\Prob(\Theta_\ell=1\mid
   \mathcal H_{t-1}=h,S_{I_t,<\ell}),
 \qquad
 \bar p:=\pi p_{+,\ell}^{(d)}+(1-\pi)p_{-,\ell}^{(d)}.
\]
Under this conditioning, $S_{I_t\ell}$ is Bernoulli with parameter $\bar p$;
after additionally conditioning on $\Theta_\ell=\theta$, its parameter is
$p_{\theta,\ell}^{(d)}$.  Thus the first conditional entropy below is
$h_{\rm b}(\bar p)$, while the entropy after conditioning on $\Theta_\ell$ is
the posterior average of $h_{\rm b}(p_{+,\ell}^{(d)})$ and
$h_{\rm b}(p_{-,\ell}^{(d)})$.  The identity
$\mathrm I(X;Y\mid Z)=H(Y\mid Z)-H(Y\mid X,Z)$ gives
\begin{align}
 \mathrm I(
 \Theta;S_{I_t\ell}
 \mid \mathcal H_{t-1}=h,S_{I_t,<\ell})
 &=h_{\rm b}(\bar p)
   -\pi h_{\rm b}\!\left(p_{+,\ell}^{(d)}\right)
   -(1-\pi)h_{\rm b}\!\left(p_{-,\ell}^{(d)}\right)
   \notag\\
 &\le \frac12\frac{256}{63}\,
   \pi(1-\pi)
   \left(p_{+,\ell}^{(d)}-p_{-,\ell}^{(d)}\right)^2
   \notag\\
 &\le\frac{32}{63}
   \left(\frac{m}{n-d}\right)^2.
 \label{eq:one-coordinate-mutual-information}
\end{align}
The first inequality is Taylor's theorem for the concave function $h_{\rm b}$:
its Jensen gap is at most one half of the curvature bound times the variance of
the two-point parameter.  The last inequality uses
\[
 \pi(1-\pi)\le\frac14,
 \qquad
 p_{+,\ell}^{(d)}-p_{-,\ell}^{(d)}=\frac{m}{n-d}.
\]

\emph{Chain rule along the actual history.}
Because the coordinate bound is uniform in $S_{I_t,<\ell}$, the chain rule
within the new row gives
\begin{align}
 \mathrm I(\Theta;S_{I_t:}\mid\mathcal H_{t-1}=h)
 &=\sum_{\ell=1}^D
   \mathrm I(\Theta;S_{I_t\ell}
   \mid\mathcal H_{t-1}=h,S_{I_t,<\ell})
   \notag\\
 &\le \frac{32D}{63}\left(\frac{m}{n-d}\right)^2.
 \label{eq:one-new-row-information}
\end{align}
Pad the stopped history after its stopping time with deterministic null
query--response symbols to a total of $B$ rounds.  The chain rule then has one
increment in conditional mutual information per round.  The chosen query is
$\mathcal H_{t-1}$-measurable and therefore adds no information by itself; a
response from a repeated row, or a null response, is already determined by the
preceding history and also adds zero.
The optional initialization row, when present, is the $d=0$ increment before
the first query.  Each response from a new row increases the number of distinct rows, so
every $d=0,\ldots,r-1$ contributes at most once.  (Without an initialization
row, the first response from a new row corresponds to the $d=0$ increment.)  Applying
\eqref{eq:one-new-row-information} at these increments gives
\begin{align*}
 \mathrm I(\Theta;\mathcal T_r)
 &\le \frac{32D}{63}
 \sum_{d=0}^{r-1}\left(\frac{m}{n-d}\right)^2
 \le \frac{32D}{63}\,
 r\left(\frac{m}{n-r}\right)^2.
\end{align*}
Conditioning on $S_{I_t,<\ell}$ in
\eqref{eq:one-coordinate-mutual-information} makes posterior independence
unnecessary.  Finally, $n-d\ge n-r$ for $d\le r-1$.  This
proves~\eqref{eq:first-use-row-information}.
\end{proof}

\paragraph{From row information to early alignment.}
For any $r$ above, define the product distribution
\[
 \overline P_r:=P_\Theta\otimes P_{\mathcal T_r}.
\]
Under $\overline P_r$, every alignment vector belonging to a query retained in
$\mathcal T_r$ is independent of $\Theta$.  Conditional on any retained vector
$v$ with $\|v\|\le1$, Hoeffding's inequality gives
\[
 \overline P_r\!\left(
  \frac1{\sqrt D}\langle\Theta,v\rangle>a\ \middle|\ v
 \right)
 \le \exp\!\left(-\frac{a^2D}{2}\right).
\]
There are at most $B$ retained queries, so a union bound gives
\begin{equation}
q_r:=\overline P_r(E_r)
 \le B\exp\!\left(-\frac{a^2D}{2}\right)
 =B\exp\!\left(-\frac{25D}{2048}\right)
 <\frac18,
 \label{eq:product-alignment-tail}
\end{equation}
where the final inequality follows because
$D\ge512\log(B+2)+2048$ as required in
\eqref{eq:explicit-row-dimension}.

We next transfer this bound to the hard distribution.  If $q_r=0$, absolute continuity of
$P_{\Theta,\mathcal T_r}$ with respect to
$P_\Theta\otimes P_{\mathcal T_r}$ gives
$P(E_r)=0$.  Hence assume below that $q_r>0$.  Since $E_r$ is measurable from
$(\Theta,\mathcal T_r)$, apply the map
$(\Theta,\mathcal T_r)\mapsto\mathbf 1_{E_r}$.  The KL divergence between the
joint law and the product law equals mutual information, so data processing
gives the binary divergence
\[
 \mathrm I(\Theta;\mathcal T_r)
 \ge d_{\rm bin}\!\left(P(E_r)\middle\|q_r\right).
\]
For fixed $q_r<1/8$, the derivative
\[
 \frac{\partial}{\partial p}d_{\rm bin}(p\|q_r)
 =
 \log\frac{p(1-q_r)}{q_r(1-p)}
\]
is positive for every $p\ge1/8$.  Therefore, if $P(E_r)>1/8$,
\begin{align}
 \mathrm I(\Theta;\mathcal T_r)
 &\ge d_{\rm bin}\!\left(\frac18\middle\|q_r\right)\nonumber\\
 &\ge
 \frac{25}{16384}D
 -\frac18\log B
 -h_{\rm b}\!\left(\frac18\right).
 \label{eq:mutual-information-event-lower}
\end{align}
The last step follows by expanding the binary divergence:
\[
 d_{\rm bin}\!\left(\frac18\middle\|q\right)
 =\frac18\log\frac1q-h_{\rm b}\!\left(\frac18\right)
  -\frac78\log(1-q)
 \ge \frac18\log\frac1q-h_{\rm b}\!\left(\frac18\right),
\]
and then substituting~\eqref{eq:product-alignment-tail}.

\begin{proof}[Proof of Lemma~\ref{lem:tunable-bias-kl}]
Because $r_\rho\ge2$ and the coefficient in
\eqref{eq:tunable-row-threshold} is $1/4096$,
\[
 n\ge8192,\qquad
 \rho\le\frac1{\sqrt{8192}}<\frac1{32},
 \qquad
 r_\rho\le\frac n{4096}.
\]
Thus $8m+9r_\rho<n/4+9n/4096<n$, so
Lemma~\ref{lem:first-use-row-information} applies.  Moreover,
$r_\rho\rho^2\le1/4096$, and hence
\begin{align}
 \mathrm I(\Theta;\mathcal T_{r_\rho})
 &\le
 \frac{32D}{63\cdot4096}
 \left(\frac{4096}{4095}\right)^2
 =
 \frac{131072}{1056448575}D
 <\frac D{8000}.
 \label{eq:tunable-mutual-information-upper}
\end{align}

Suppose, for contradiction, that $P(E_{r_\rho})>1/8$.  Then
\eqref{eq:mutual-information-event-lower} gives
\begin{equation*}
 \mathrm I(\Theta;\mathcal T_{r_\rho})
 \ge \frac{25D}{16384}-\frac18\log B
      -h_{\rm b}\!\left(\frac18\right).
\end{equation*}

The numerical margins satisfy
\[
 \frac{25}{16384}-\frac1{8000}>\frac1{1024}
\]
and
\[
 \frac D{1024}
 \ge\frac12\log(B+2)+2
 >
 \frac18\log B+h_{\rm b}\!\left(\frac18\right),
\]
Together these inequalities make the lower bound contradict
\eqref{eq:tunable-mutual-information-upper}.  Therefore
$P(E_{r_\rho})\le1/8$.
\end{proof}

\begin{proof}[Proof of Lemma~\ref{lem:optimized-one-stage}]
Assume $n\ge1024$ and use the optimized choice
$m<\sqrt n+2\le(17/16)\sqrt n$.  With
$r=\lfloor n/512\rfloor$,
\[
 m\le\frac{17n}{512},
 \qquad
 8m+9r<\frac{145n}{512}<n.
\]
Lemma~\ref{lem:first-use-row-information} gives
\begin{align}
 \mathrm I(\Theta;\mathcal T_r)
 &\le
 \frac{32D}{63}\,
 \frac n{512}
 \left[
 \frac{(17/16)\sqrt n}{n(511/512)}
 \right]^2
 =
 \frac{18496}{16450623}D.
 \label{eq:optimized-mutual-information-upper}
\end{align}

Suppose, for contradiction, that $P(E_r)>1/8$.  Then
\begin{equation*}
 \mathrm I(\Theta;\mathcal T_r)
 \ge \frac{25D}{16384}-\frac18\log B
      -h_{\rm b}\!\left(\frac18\right).
\end{equation*}

The coefficient margin satisfies
\[
 \frac{25}{16384}
 -\frac{18496}{16450623}
 >\frac1{2500},
\]
whereas
\[
 \frac D{2500}
 \ge\frac{512}{2500}\log(B+2)+\frac{2048}{2500}
 >
 \frac18\log B+h_{\rm b}\!\left(\frac18\right).
\]
This is a contradiction.  Thus $P(E_r)\le1/8$, completing the proof of
Lemma~\ref{lem:optimized-one-stage}.
\end{proof}

\begin{lemma}[Coupling up to the first stage completion]
\label{lem:precompletion-extension}
For an independent $(\Theta,S)$, let $W\perp(\Theta,S)$ contain the history before
initialization, an initialization index fixed in advance, the random tape, and
all other data needed for replies.  Suppose each IFO reply before completion,
including its value and all gradient blocks, has the form
\begin{equation}
 Y_t^{\rm pre}=\Phi_t\!\left(
 W,(I_s,X_s)_{s\le t},(S_{I_s:})_{s\le t}
 \right),
 \label{eq:precompletion-reply-form}
\end{equation}
where $\Phi_t$ does not depend on the alignment condition and reveals neither
$\Theta$ nor data from the next stage.  At query $t$, suppose completion requires
$D^{-1/2}\langle\Theta,V_t\rangle>a$ before the response, with $V_t$ fixed by
the preceding history and query.  A length-$B$ continuation can then be coupled so
that
\begin{equation}
 \left\{
 \exists t\le B:\ \text{completion at $t$},\ K_t<r
 \right\}
 \subseteq E_r
 \label{eq:first-alignment-containment}
\end{equation}
and the histories and queries agree through the first completion query.
Consequently, whenever $P(E_r\mid W)\le p$ almost surely,
\begin{equation}
 P\!\left(\exists t\le B:\ \text{completion at $t$},\ K_t<r\mid W\right)
 \le p
 \qquad\text{almost surely}.
 \label{eq:precompletion-probability-bound}
\end{equation}
\end{lemma}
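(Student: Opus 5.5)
The plan is to build the coupled continuation explicitly as a process in the stronger protocol of Lemma~\ref{lem:tunable-bias-kl}, and then read off the containment and the probability bound.

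\textbf{Construction of the continuation.} Given the real interaction, define a modified algorithm that runs for exactly $B$ queries. At every query $t$ it uses the same index $I_t$ and query point $X_t$ as the real algorithm and records the same alignment vector $V_t$. It answers with the pre-completion reply map $Y_t^{\rm pre}=\Phi_t(W,(I_s,X_s)_{s\le t},(S_{I_s:})_{s\le t})$, and it also reveals the row $S_{I_t:}$. The modified algorithm never checks the alignment condition, so completion is never triggered. After the real first completion time $\tau$, or after the real algorithm stops, the continuation keeps applying the real algorithm's decision rule to the pre-completion replies; when no such rule is defined, it pads with a fixed index and $V_t=0$. Every quantity in this process is measurable in $W$, the queries, and the revealed rows. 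Each new index is fixed before its row is seen. Hence the continuation is an admissible protocol of the kind covered by Lemma~\ref{lem:tunable-bias-kl}, with the same $W\perp(\Theta,S)$.

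\textbf{Agreement through the first completion.} By hypothesis, every real reply at times $s<\tau$ equals $Y_s^{\rm pre}$. This holds because before completion $\Phi_s$ reveals neither $\Theta$ nor next-stage data, and the real reply has exactly this form. Induction on $t\le\tau$ then shows that the real history $\mathcal H_{t-1}$ is a function of the continuation's history: the revealed rows only add information. So $I_t$, $X_t$, and $V_t$ coincide, and the distinct-row counts $K_t$ agree. At $t=\tau$ the alignment test is evaluated before the response, so $V_\tau$ is still common to both processes. The test $D^{-1/2}\langle\Theta,V_\tau\rangle>a$ therefore holds for the continuation's vector too. If in addition $K_\tau<r$, the continuation realizes the event in~\eqref{eq:stopped-discovery-event} at time $\tau\le B$, which proves~\eqref{eq:first-alignment-containment}.

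\textbf{Probability bound.} Monotonicity gives $P(\text{completion at some }t\le B,\ K_t<r\mid W)\le P(E_r\mid W)\le p$ almost surely, where $E_r$ is computed for the continuation. This is~\eqref{eq:precompletion-probability-bound}.

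\textbf{Main obstacle.} The delicate point is measurability and causality. I must check that extending past $\tau$, where the real algorithm may receive next-stage data, does not leak anything into the continuation. It does not, because the continuation never uses real post-$\tau$ replies. I also need the alignment test to be unobserved, so that its outcome does not enter the histories before $\tau$. I would first make this precise by writing both processes as functions of $(W,\Theta,S)$ and the random tape on a common probability space, and then checking equality by induction.
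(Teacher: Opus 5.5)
Your proposal is correct and matches the paper's proof. Both arguments build a $B$-query continuation that stays at the current stage, using the same maps $\Phi_t$ and the same fixed table. Both couple it with the real interaction through the same $W$, and both note that histories and queries agree up to the first completion query. Because that query's alignment test is evaluated before the response, it witnesses $E_r$ when $K_t<r$. The only difference is cosmetic: the paper also sets $V_t=0$ after the $r$th distinct row is returned, which your argument rightly does not need, since $E_r$ only counts queries with $K_t<r$.
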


\begin{proof}
Construct a $B$-query continuation that remains at the same stage.  After
alignment, keep using the same maps $\Phi_t$ and the same fixed table, and set
$V_t=0$ after the $r$th distinct row is returned.  Couple the two
continuations using the same $W$, stage data, and optional initialization row.

Until stage completion or the return of the $r$th distinct row, the coupled
interactions have the same history before each query and hence choose the same
query; every earlier reply that does not complete the stage agrees by
\eqref{eq:precompletion-reply-form}.  If completion occurs with $K_t<r$, that
query is made before the $r$th distinct row is returned and therefore belongs
to $E_r$.  The completion event is determined before that response, so the
response is not needed for the inclusion.
\end{proof}

\subsection{Sequential revelation}

\paragraph{Process and notation.}
At stage $j$, write
\begin{equation}
 \operatorname{align}_j(x_j):=\frac1{\sqrt D}
 \langle\Theta^{(j)},V_j(x_j)\rangle,
 \qquad \|V_j(x_j)\|\le1.
 \label{eq:generic-radial-alignment}
\end{equation}
This is $q_j$ for the constant-scale chain, $u_j$ for the weighted chain, and
$h_j$ for the two-scale chain.  Stage $j$ completes when $\operatorname{align}_j>a$ first
holds, checked before the response.
Assign each IFO call to the unique stage that is current when its query is
made.  The response completing stage $j-1$ may also include one stage-$j$
initialization row, but this costs no additional call and is not assigned again.
Let $\nu_t$ be the number of stages completed after response $t$, with
$\nu_0=0$.  The revelation process advances by at most one stage per response;
premature alignment at a later stage is included in the event below.
For an $N$-query interaction, define
\begin{equation}
\begin{aligned}
 \cEfuture
 :={}&\{\exists t\le N,\ k>\nu_{t-1}+1:
       \operatorname{align}_k(X_{t,k})>a\}\\
 &{}\cup\{\exists k>\nu_N+1:
       \operatorname{align}_k(\widehat X_k)>a\}.
\end{aligned}
\label{eq:future-alignment-event}
\end{equation}

After stage $j$ is completed, we reveal $\Theta^{(j)}$ and one stage-$(j+1)$
initialization row as part of the \emph{augmented interaction}, without charging
another IFO call.  The \emph{truncated interaction} retains this revealed
information but omits links beyond the completed stages and the current stage.
Crossing $a$ leaves the region where
$G=G'=0$; full opening occurs only at $b$.

\proofguidespace
\begin{theorem}[Sequential revelation: row costs add]
\label{thm:sequential}
Fix the first $J$ stages, a budget of $N$ IFO calls, and constants
$r\ge2$, $p\in[0,1/2)$, and $\delta\in[0,1-2p)$.  Assume:
\begin{enumerate}[label=(S\arabic*),leftmargin=*,itemsep=1pt,topsep=2pt]
 \item conditional on the history immediately before its initialization row is
 returned, or on the initial history for stage $1$, every stage once it becomes current
 completes after fewer than $r$ calls assigned to it with probability at most $p$;
 \item the event $\cEfuture$ satisfies $\Prob(\cEfuture)\le\delta$; and
 \item on $\cEfuture^c$, the augmented and truncated interactions have identical
 histories and final outputs.
\end{enumerate}
If $N\le rJ/2$, then
\begin{equation}
 \Prob(\nu_N\ge J)\le2p,
 \qquad
 \Prob(\nu_N<J,\ \cEfuture^c)\ge1-2p-\delta.
 \label{eq:restated-sequential-conclusion}
\end{equation}
On the second event, the output of the original IFO interaction satisfies
$\operatorname{align}_k(\widehat X_k)\le a$ for every $k>J$.
\end{theorem}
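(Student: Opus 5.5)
The argument is a counting argument over the stages completed within the budget; the probabilistic input is (S1), used through a Markov bound. Assign each of the $N$ calls to the stage current when it is made, as in the process description. For $j\le J$, let $C_j$ be the number of calls assigned to stage $j$, and let $A_j$ be the event that stage $j$ becomes current and completes after fewer than $r$ assigned calls. On $\{\nu_N\ge J\}$, stages $1,\ldots,J$ all complete within the $N$ calls. The initialization rows are free and every call is assigned to exactly one stage, so $\sum_{j\le J}C_j\le N\le rJ/2$. If at most $J/2$ of these stages were ``fast'' (fewer than $r$ calls), at least $J/2$ would need $\ge r$ calls each, forcing $\sum_j C_j\ge rJ/2$. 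Equality here requires every remaining stage to use exactly $r$ calls and the fast ones zero calls, which is impossible since completion needs at least one assigned query. I will handle this boundary by writing the count with a strict inequality, or by noting that fast stages use at least one call. The conclusion is that $Z:=\sum_{j\le J}\ind_{A_j}\ge J/2$ on $\{\nu_N\ge J\}$.

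Next I bound $\E Z$. By (S1), conditional on the history immediately before stage $j$'s initialization row (a stopping-time $\sigma$-field), $\Prob(A_j\mid\cdot)\le p$ on the event that stage $j$ becomes current. Off that event $A_j$ is empty. Taking expectations gives $\Prob(A_j)\le p$, so $\E Z\le pJ$. Markov's inequality then yields $\Prob(\nu_N\ge J)\le\Prob(Z\ge J/2)\le 2p$. A union bound with (S2) gives $\Prob(\nu_N<J,\cEfuture^c)\ge 1-2p-\delta$.

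For the final claim, work on $\{\nu_N<J\}\cap\cEfuture^c$. For any $k>J\ge\nu_N+1$, the second clause of \eqref{eq:future-alignment-event} fails, so $\operatorname{align}_k(\widehat X_k)\le a$ for the output of the interaction in which $\nu$ is defined. By (S3), this interaction's history and final output coincide with the truncated one. The truncated interaction omits only links whose gates satisfy $G=G'=0$ below $a$, so its replies agree with the original IFO replies. Hence the original output satisfies the same bound. This requires checking that the original interaction is pathwise equal to the augmented one on $\cEfuture^c$ apart from the freely revealed $\Theta^{(j)}$ and initialization rows. Those extra data can be simulated or ignored by the original algorithm, so the original algorithm's queries form a special case of the augmented protocol.

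The main obstacle is the conditioning step that turns (S1) into $\Prob(A_j)\le p$. Stage $j$ becomes current at a random time, so I must check that the conditioning $\sigma$-fields are genuine stopping-time $\sigma$-fields. The event $A_j$ must also be measurable after that time, with its call count restricted to calls assigned to stage $j$ and truncated at the budget $N$. A stage that is current but unfinished at time $N$ does not lie in $A_j$, which only helps. I would formalize this by padding the interaction to $N$ rounds and applying (S1) at each stopping time $\tau_j$ via the tower property.
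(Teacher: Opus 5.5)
Your proposal is correct and follows the paper's proof essentially step for step: you count the fast stages with $Z$ and get $\E Z\le pJ$ from (S1) and the tower property, show $Z\ge J/2$ on $\{\nu_N\ge J\}$ by counting calls, apply Markov's inequality and a union bound with (S2), and obtain the final claim from the output clause of $\cEfuture$ together with (S3). Your boundary discussion is unnecessary: negating $Z\ge J/2$ directly gives more than $J/2$ stages with at least $r$ calls each, hence strictly more than $rJ/2\ge N$ calls.
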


\paragraph{Conditions for applications to radial chains.}
The optimized corollary below requires:
\begin{enumerate}[label=(R\arabic*),leftmargin=*,itemsep=2pt,topsep=2pt]
 \item The pairs $\cZ^{(j)}=(\Theta^{(j)},S^{(j)})$ are independent samples from $P$
 with the least parity-compatible $m\ge\sqrt n$ and $a=5/32$.
 \item Immediately before an optional initialization row is returned, let $W$
 contain its index (fixed in advance), the history, seed, and data from
 completed stages used in replies.  Then $W\perp\cZ^{(j)}$, and every full IFO
 reply that does not complete the stage has the form
 \[
  \Phi_t\!\left(W,(I_s,X_s)_{s\le t},
                    (S^{(j)}_{I_s:})_{s\le t}\right),
 \]
 including its value and all gradient blocks.  The map does not depend on the
 alignment condition and reveals neither $\Theta^{(j)}$ nor data from later stages.
 \item Completion uses~\eqref{eq:generic-radial-alignment}.  If the current stage and
 all stages not yet current have alignment at most $a$, every omitted link has
 $G=G'=0$.  The final output is included without an oracle call.
 \item Assign each call to the stage current when its query is made.  A
 completion response may include the next initialization row without an
 additional IFO call and is assigned only to the stage it completes.  Before
 the $q$th call assigned to
 stage $j$, at most its initialization row and $q-1$ response rows have been
 seen.
\end{enumerate}

\begin{corollary}[Sequential revelation for the optimized radial chains]
\label{cor:optimized-radial-chain}
Consider a radial chain satisfying \emph{(R1)--(R4)} above.  Let $n\ge1024$,
$1\le J<M_{\rm tot}$, and set
\begin{equation}
 r=\left\lfloor\frac n{512}\right\rfloor.
\end{equation}
If $N\ge1$, $N\le rJ/2$, and
\begin{equation}
 D\ge\max\left\{
 2048[1+\log(N+3)],\quad
 \frac{2048}{25}\log\bigl(128(N+1)M_{\rm tot}\bigr)
 \right\},
 \label{eq:shared-radial-dimension}
\end{equation}
then
\begin{equation}
 \Prob(\nu_N\ge J)\le\frac14,
 \qquad
 \Prob(\nu_N<J,\ \cEfuture^c)\ge\frac{95}{128}.
 \label{eq:shared-radial-revelation}
\end{equation}
On the second event, the original IFO output satisfies
$\operatorname{align}_k(\widehat X_k)\le a$ for every $k>J$.
\end{corollary}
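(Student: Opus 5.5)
The plan is to apply Theorem~\ref{thm:sequential} with $r=\lfloor n/512\rfloor$, $p=1/8$ and $\delta=1/128$. This gives $\Prob(\nu_N\ge J)\le 2p=1/4$ and $\Prob(\nu_N<J,\cEfuture^c)\ge 1-1/4-1/128=95/128$, which is exactly \eqref{eq:shared-radial-revelation}. The final assertion about $\operatorname{align}_k(\widehat X_k)\le a$ for $k>J$ is then inherited directly from the theorem. So the work reduces to checking (S1)--(S3) from (R1)--(R4) and the dimension condition \eqref{eq:shared-radial-dimension}.

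\emph{Step 1: (S1).} Fix a stage $j$ and condition on the history immediately before its initialization row. Let $W$ be as in (R2), so $W\perp\cZ^{(j)}$, and the pre-completion replies have the form \eqref{eq:precompletion-reply-form}. Lemma~\ref{lem:precompletion-extension} therefore applies with $B=N$. By (R4), before the $q$th assigned call at most the initialization row and $q-1$ response rows have been seen. Hence completing in fewer than $r$ assigned calls forces completion at some query with $K_t<r$, after adjusting the count by one for the initialization row. I would handle this off-by-one by using the containment with threshold $r$ and noting that $K_t\le q-1+1$. If needed, I would shift to $r-1$, which Lemma~\ref{lem:optimized-one-stage} still covers since it bounds $E_{r'}$ for all $r'\le r$ monotonically.

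Lemma~\ref{lem:optimized-one-stage} requires $D\ge 512\log(N+2)+2048$. This follows from $D\ge 2048[1+\log(N+3)]$. It then gives $P(E_r\mid W)\le 1/8$, and hence $p=1/8$.

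\emph{Step 2: (S2).} This is the main obstacle. For each call $t\le N$ and each stage $k>\nu_{t-1}+1$, the stage pair $\cZ^{(k)}$ is not yet current. By (R2) and independence across stages (Lemma~\ref{lem:encoding-table-properties}(i)), it is independent of everything the algorithm has seen, provided no earlier future alignment occurred. I would argue on the truncated interaction, where later links are omitted, so the query $X_{t,k}$ and the vector $V_k$ are independent of $\Theta^{(k)}$. Hoeffding then gives $\Prob(D^{-1/2}\langle\Theta^{(k)},V\rangle>a)\le e^{-25D/2048}$.

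A union bound over the at most $(N+1)$ query-plus-output slots and the $M_{\rm tot}$ stages gives total probability at most $(N+1)M_{\rm tot}e^{-25D/2048}\le 1/128$. This is exactly the second term in \eqref{eq:shared-radial-dimension}. On the event that no such alignment occurs in the truncated run, the augmented and truncated runs coincide, so $\cEfuture$ has the same probability in both. I would formalize this by a first-violation argument: up to the first future alignment the two runs agree, so the first violation in the original run is a violation in the truncated run.

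\emph{Step 3: (S3).} This follows from (R3). On $\cEfuture^c$, every stage beyond the current one has alignment at most $a$, so its link has $G=G'=0$ and contributes nothing to values or gradients. The augmented and truncated replies are therefore identical, by induction over calls, and the final output is handled identically since it uses no oracle call. Combining Steps 1--3 with $N\le rJ/2$ in Theorem~\ref{thm:sequential} yields the corollary.
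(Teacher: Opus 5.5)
Your proposal is correct and matches the paper's proof: it applies Theorem~\ref{thm:sequential} with $p=1/8$, $\delta=1/128$, and (S3). The value $p=1/8$ comes from Lemmas~\ref{lem:optimized-one-stage} and~\ref{lem:precompletion-extension}, with the first dimension term covering $512\log(N+2)+2048$; $\delta=1/128$ comes from the Hoeffding and union bound of Lemma~\ref{lem:future-alignment} over $(N+1)M_{\rm tot}$ pairs; and (S3) comes from the $G=G'=0$ coupling of Lemma~\ref{lem:pathwise-coupling}. Your off-by-one contingency is unnecessary, and shifting to $r-1$ would only weaken the stated budget: completion on the $q$th assigned call with $q\le r-1$ already gives $K_t\le q<r$.
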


The lemmas below establish stage independence and exact coupling; the counting
argument in the next subsection then proves the corollary.

\begin{figure}[H]
\centering
\includegraphics[width=\linewidth]{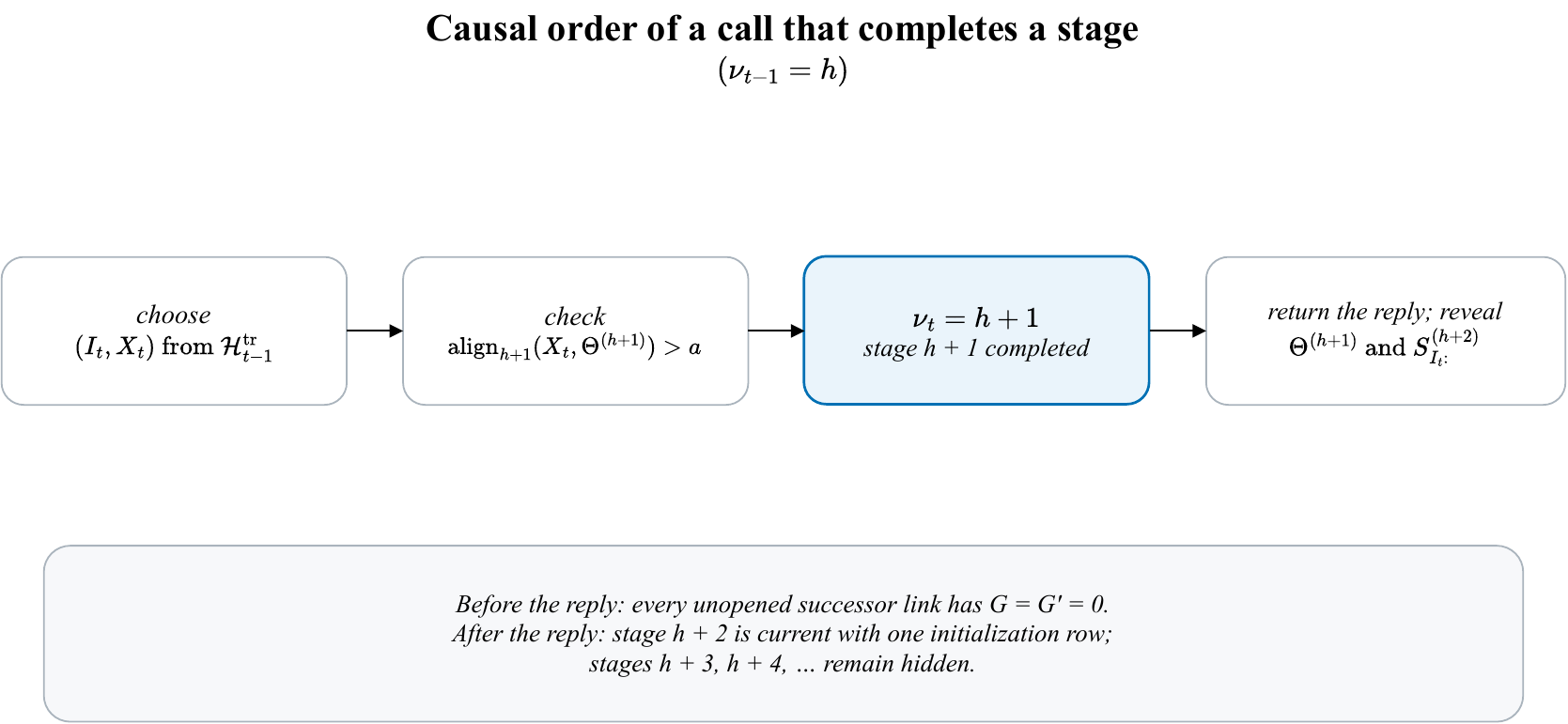}
\caption{The completion condition is evaluated on a query fixed from the preceding history.
Only afterward can the augmented interaction reveal the completed direction
and an initialization row for the next stage.}
\label{fig:stage-unlocking}
\end{figure}

The complete history of the truncated interaction after response $t$ is denoted
by $\mathcal H_t^{\rm tr}$ and includes the information returned with completion
responses and the private seed
$\xi$.  Including
the output in $\cEfuture$ reveals nothing to the algorithm and leaves $\nu_N$
unchanged.

\begin{lemma}[Independence before stage revelation and control of future alignment]
\label{lem:future-alignment}
Under \emph{(R1)--(R4)}, the history of the truncated interaction before a
stage's initialization row is independent of that stage.  Thus a stage, once
current, completes after fewer than $r$ calls assigned to it with conditional
probability at most $p$ whenever the one-stage bound has parameters $(r,p)$.
At query $t$ with $\nu_{t-1}=h$, every $\Theta^{(k)}$, $k>h+1$, is independent of
$X_{t,k}$; the same holds for $\widehat X_k$.  Consequently, for a chain of
$M_{\rm tot}$ stages,
\begin{equation}
 D\ge\frac{2048}{25}
 \log\bigl(128(N+1)M_{\rm tot}\bigr)
 \quad\Longrightarrow\quad
 \Prob(\cEfuture)\le\frac1{128}.
 \label{eq:future-alignment-dimension}
\end{equation}
This joint probability bound covers every query paired with every future stage,
as well as the final output.  Replacing $128$ above by $1/\delta_{\rm f}$ gives any
target $\delta_{\rm f}\in(0,1)$, with only logarithmic growth in $D$.
\end{lemma}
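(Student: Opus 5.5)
The plan has three steps: establish independence by induction over queries, apply the one-stage bound conditionally, and control $\cEfuture$ with Hoeffding plus a union bound.

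\emph{Independence.} I would argue by induction over queries in the truncated interaction. The stage pairs $\cZ^{(k)}$ are sampled independently of each other and of the seed $\xi$ by (R1). At any time, the truncated history $\mathcal H^{\rm tr}_t$ is a measurable function of three things: $\xi$, the pairs $\cZ^{(1)},\ldots,\cZ^{(\nu_t+1)}$ of the completed and current stages, and the initialization row of the current stage. The point is that the truncated interaction omits every link beyond the current stage, and by (R2) the replies reveal neither $\Theta^{(j)}$ nor data from later stages. Only at completion of stage $j$ does the augmented history receive $\Theta^{(j)}$ and one row of $S^{(j+1)}$.

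This gives the induction step. Suppose that, conditional on $\{\nu_{t-1}=h\}$ and on $\mathcal H^{\rm tr}_{t-1}$, the pairs $(\cZ^{(k)})_{k>h+1}$ still have their prior product law. The next query $(I_t,X_t)$ is a function of $\mathcal H^{\rm tr}_{t-1}$ and $\xi$, so it is independent of those future stages. The reply is a function of the current-stage data only, so independence persists to time $t$. One subtlety is that $\{\nu_{t-1}=h\}$ is itself measurable with respect to the truncated history; the induction should therefore be written on the events $\{\nu_{t-1}=h\}$ and use the fact that the stopping times are adapted to that history. Applying the same reasoning immediately before the initialization row of stage $j$ returns shows that this history is independent of $\cZ^{(j)}$. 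This is exactly the $W\perp\cZ^{(j)}$ hypothesis needed by Lemma~\ref{lem:precompletion-extension}. Combining that lemma with Lemma~\ref{lem:optimized-one-stage} then gives the conditional probability bound $p$ for a stage completing in fewer than $r$ assigned calls. Here (R4) translates "calls assigned" into "distinct rows seen", with $K_t$ counting the initialization row.

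\emph{Future alignment.} Fix a query $t$ and a stage $k>h+1$ on the event $\{\nu_{t-1}=h\}$. By the independence step, the vector $V_k(X_{t,k})$ has norm at most one and is independent of $\Theta^{(k)}\sim\mathrm{Unif}\{\pm1\}^D$. Hoeffding's inequality applied conditionally on $V$ gives
\[
\Prob\bigl(D^{-1/2}\langle\Theta^{(k)},V\rangle>a\bigr)\le e^{-a^2D/2}=e^{-25D/2048}.
\]
I would take a union bound over the $N$ queries plus the final output, that is, $N+1$ query slots, and over the at most $M_{\rm tot}$ stages. The total is at most $(N+1)M_{\rm tot}e^{-25D/2048}$, and the stated dimension condition makes this at most $1/128$. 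Replacing $128$ by $1/\delta_{\rm f}$ changes only the logarithm.

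\emph{Main obstacle.} The hardest step is making the union bound legitimate when the index $h=\nu_{t-1}$ is random. I would handle this by summing over the deterministic pairs $(t,k)$ of the indicator of $\{k>\nu_{t-1}+1,\ \operatorname{align}_k>a\}$. For each fixed pair, I would condition on $\mathcal H^{\rm tr}_{t-1}$. On the event $\{k>\nu_{t-1}+1\}$, which is measurable with respect to that history, $\Theta^{(k)}$ is still uniform and independent of $X_{t,k}$, so the Hoeffding bound applies conditionally. Taking expectations then yields the per-pair bound.

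A final point is that $\cEfuture$ is stated for the original interaction, while the argument above is carried out in the truncated one. Before the first future alignment, however, every omitted link has $G=G'=0$ by (R3). The two interactions therefore coincide up to and including the first offending query, so the first occurrence of $\cEfuture$ in the original interaction is also an occurrence in the truncated one, and the bound transfers.
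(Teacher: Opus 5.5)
Your proposal is correct and follows the paper's proof. It uses the same filtration invariant $\{\nu_t=g\}\Rightarrow\sigma(\mathcal H^{\rm tr}_t)\subseteq\sigma(\cZ^{(1)},\ldots,\cZ^{(g+1)},\xi)$, the same conditional one-stage bound via Lemma~\ref{lem:precompletion-extension} with the $K_t$ count from (R4), and the same Rademacher tail plus union bound over at most $(N+1)M_{\rm tot}$ query--stage pairs. Two of your steps are careful additions the paper leaves implicit: summing over deterministic pairs $(t,k)$ to handle the random index $\nu_{t-1}$, and noting that $\cEfuture$ is the same event in the truncated and original interactions (essentially Lemma~\ref{lem:pathwise-coupling}).
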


\begin{lemma}[The augmented and truncated interactions agree exactly]
\label{lem:pathwise-coupling}
Under \emph{(R1)--(R4)}, for the same fixed instance and seed, on
$\cEfuture^c$ the two interactions have identical component values, all
gradient blocks, revealed data, queries, and final outputs, pathwise.  Here
$\cEfuture$ is evaluated in the truncated interaction.
\end{lemma}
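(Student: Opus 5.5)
I would prove Lemma~\ref{lem:pathwise-coupling} by induction over the oracle rounds $t=0,1,\ldots,N$, with the same instance and the same private seed $\xi$ in both interactions. The invariant is that, on $\cEfuture^c$, both interactions have the same history immediately before query $t$. This includes the revealed directions $\Theta^{(k)}$ and the initialization rows attached to completion responses. Because the algorithm's rule is a fixed measurable function of $(\xi,\text{history})$, equal histories give the same index $I_t$ and the same query point $X_t$. It then suffices to show that the two oracle replies to $(I_t,X_t)$ coincide. Equal replies give equal histories at time $t$, and after round $N$ equal histories give the same output $\widehat X$. The base case is immediate, since both interactions start from the same seed and the same stage-$1$ initialization data.

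For the inductive step, fix $t$ with $\nu_{t-1}=h$. The augmented component $f_{I_t}$ is a sum of links. The links for stages $1,\ldots,h$ and the link for the current stage $h+1$ also appear in the truncated component, and they are identical functions there. The only difference is the sum of links for stages $k>h+1$, which are omitted in the truncated interaction. The event $\cEfuture^c$, evaluated along the truncated trajectory, gives $\operatorname{align}_k(X_{t,k})\le a$ for every $k>\nu_{t-1}+1$. The current stage is covered by the first part of (R3). If its alignment exceeds $a$ at this query, the stage completes, and the completion rule is evaluated identically in both interactions because it depends only on $X_t$ and $\Theta^{(h+1)}$. Otherwise its alignment is also at most $a$. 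In either case, (R3) says every omitted link is multiplied by a gate with $G=G'=0$, so its value vanishes. Its gradient vanishes in every block as well: the product rule produces terms that carry either $G$ or $G'$ at that argument, and both are zero. Hence the full value and all gradient blocks agree.

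The revealed completion data agree too. The completion event is the same in both interactions, and the revealed $\Theta^{(h+1)}$ and the next initialization row are read from the same fixed tables, so $\nu_t$ is the same. For the final output, the second part of $\cEfuture$ gives $\operatorname{align}_k(\widehat X_k)\le a$ for $k>\nu_N+1$. The output is a function of $\xi$ and the equal final history, so the two outputs agree.

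The main obstacle is an apparent circularity. $\cEfuture$ is defined on the truncated trajectory, yet we want to conclude facts about the augmented one. The induction resolves this: at each round the two query points are already equal by the inductive hypothesis, so the truncated alignment bounds apply verbatim to the augmented queries. The other point needing care is to check precisely which gate multiplies each omitted term in the chain. I would use (R3) as stated, which covers every omitted link, so that the link from the current stage into stage $h+2$ is also switched off, rather than arguing only through future-stage gates.
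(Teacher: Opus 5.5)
Your induction skeleton is the same as the paper's: a common history forces a common query through the shared seed, omitted links vanish because every value and gradient term carries $G$ or $G'$ at an input $\le a$, and equal final histories give equal outputs. The completion round, however, is handled incorrectly.

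You fix $h=\nu_{t-1}$ and declare that the truncated component omits every link into a stage $k>h+1$. You then assert that ``in either case'' all omitted links are switched off, explicitly including ``the link from the current stage into stage $h+2$.'' When query $t$ completes stage $h+1$, the gate input of that link is $\operatorname{align}_{h+1}(X_{t,h+1})>a$. So $-G(q_{h+1})(\widetilde q_{i,h+2}+C)$, or its weighted or two-scale analogue, is in general nonzero in value and in both adjacent gradient blocks; for the explicit gate, $G>0$ on $(a,\infty)$. Under your bookkeeping, the augmented and truncated replies would therefore differ exactly at completion rounds. You also cannot invoke (R3) at level $h$ there: its hypothesis requires the current stage to have alignment at most $a$, and that is precisely what fails.

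The missing step is the paper's remark that ``after completion, $\cEfuture^c$ also controls the new successor.'' Completion is decided before the response, so the completion reply is computed at the advanced level $h+1$. The truncated component therefore keeps the link into stage $h+2$; this is how the initialization row $S^{(h+2)}_{I_t:}$ enters the reply (see item~\ref{item:nc-property-revelation} of Proposition~\ref{prop:nonconvex-chain-properties}). The omitted links are then those gated by $\operatorname{align}_k$ with $k\ge h+2$. These are at most $a$ on $\cEfuture^c$ because $k>\nu_{t-1}+1$. You already wrote this bound down, but you applied it with the wrong cutoff for the omitted links. Apply (R3) at the new level, where stage $h+2$ is current. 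That repairs the inductive step, and your argument then coincides with the paper's.
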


\begin{proof}[Proof of Lemma~\ref{lem:future-alignment}]
\emph{Independence before revelation.}
Maintain the filtration invariant
\begin{equation}
 \{\nu_t=g\}
 \quad\Longrightarrow\quad
 \sigma(\mathcal H_t^{\rm tr})\subseteq
 \sigma(\cZ^{(1)},\ldots,\cZ^{(g+1)},\xi).
 \label{eq:active-filtration-invariant}
\end{equation}
If $\nu_{t-1}=h$, the query uses only
$\cZ^{(1)},\ldots,\cZ^{(h+1)}$ and $\xi$.  A reply that does not complete the
current stage reveals nothing from the next stage; a completion reply reveals
only one initialization row from $\cZ^{(h+2)}$.  This
proves the invariant and leaves every $\cZ^{(k)}$, $k>\nu_t+1$, independent of
the observed history.

\emph{Conditional one-stage bound.}
For $j\ge2$, condition on the history immediately before its initialization
row is returned, including the fixed query, row index, seed, and reply data
from completed stages.
The invariant gives $W\perp\cZ^{(j)}$, while $G=G'=0$ below the lower threshold
puts the full reply in
\eqref{eq:precompletion-reply-form} with rows $S^{(j)}_{I_s:}$.
Lemma~\ref{lem:precompletion-extension}
therefore applies; for $j=1$, use the initial data and no initialization row.

Because alignment is checked before the response, completion on the $q$th call
assigned to that stage gives
\[
 K_t\le q-1+\mathbf 1\{j\ge2\},
 \qquad q<r\Longrightarrow K_t<r.
\]
Thus completion after fewer than $r$ assigned calls is contained in $E_r$ and has
conditional probability at most $\Prob(E_r\mid W)\le p$.

Each direction at a stage not yet current is independent of its query or output
block.  A
Rademacher tail bound over at most $(N+1)M_{\rm tot}$ pairs $(t,k)$ gives
\[
 \Prob(\cEfuture)
 \le (N+1)M_{\rm tot}e^{-a^2D/2}
 \le\frac1{128}.
\]
\end{proof}

\begin{proof}[Proof of Lemma~\ref{lem:pathwise-coupling}]
We argue by induction, starting from a common history before the query with
progress $h$.  If the stage
does not complete, $\operatorname{align}_{h+1}\le a$; after completion,
$\cEfuture^c$ also controls the new successor.
Every omitted link therefore has $G=G'=0$, so its scalar value and every
gradient contribution vanish.  The replies, revealed data, histories, and next
queries agree, and the common seed gives the same output.  Completed links
remain in both interactions, even if a completed coordinate later falls below
$b$; the output term in $\cEfuture^c$ is used only to control the final output.
\end{proof}

\subsection{Counting calls across stages}

Each IFO call is assigned to the stage that is current when its query is made.  A
completion response may supply the next initialization row without an
additional IFO call, but the same call is assigned only to the stage it
completes.  Hence calls assigned to
different stages are disjoint.

\begin{proof}[Proof of Theorem~\ref{thm:sequential}]
Let $Z$ count the stages among $1,\ldots,J$ that are reached and finish in
fewer than $r$ calls assigned to them.  Conditioning when stage $j$ is reached,
\emph{(S1)} and the tower property give $\E Z\le pJ$.

If all first $J$ stages are completed by $N\le rJ/2$ calls, then
necessarily $Z\ge J/2$: otherwise more than $J/2$ stages would each use at
least $r$ calls, exceeding the budget.  Thus Markov's inequality, without
assuming independence across stages, gives
\begin{equation}
\Prob(\nu_N\ge J)
\le\Prob\!\left(Z\ge\frac J2\right)
\le2p.
\end{equation}

The assumed bound on $\cEfuture$ and a union bound yield
\begin{equation}
 \Prob\bigl(\nu_N<J,\ \cEfuture^c\bigr)
 \ge1-2p-\delta.
\end{equation}

On this event, \emph{(S3)} makes the auxiliary histories and outputs
identical; the augmented interaction only adds information, so the conclusion
transfers to the original IFO.

It remains to check the final output.  On the joint event, let $h=\nu_N<J$.
Every $k>J$ also satisfies $k>h+1$;
the output term in the definition of $\cEfuture$ therefore gives
$\operatorname{align}_k(\widehat X_k)\le a$.
\end{proof}

\begin{proof}[Proof of Corollary~\ref{cor:optimized-radial-chain}]
The first dimension term, Lemma~\ref{lem:optimized-one-stage}, and
Lemma~\ref{lem:precompletion-extension} give $p=1/8$ for every stage once it
becomes current,
because completion in $q<r$ calls assigned to that stage uses fewer than $r$
distinct rows, including one optional initialization row.  The second dimension requirement
and Lemma~\ref{lem:future-alignment} give $\delta=1/128$ jointly for all $N$
queries and the final output.  The condition $G=G'=0$ for future links invokes
Lemma~\ref{lem:pathwise-coupling}; the disjoint call assignments verify the
remaining hypothesis of Theorem~\ref{thm:sequential}, which now yields
\eqref{eq:shared-radial-revelation} and the claimed bound on the original IFO
output.
\end{proof}

\subsection{Auxiliary quadratic for the additive \texorpdfstring{$n$}{n} term}

This subsection proves the standalone $\Omega(n)$ IFO lower bound that gives
the additive $n$ term in Theorems~\ref{thm:nonconvex-main}
and~\ref{thm:pl-main}.
The sequential argument above uses the optimized bias.  The separate additive
$n$ term uses instead the smallest positive parity-compatible bias and the
fixed dimension $D_0=4$.

\begin{lemma}[Auxiliary quadratic lower bound]
\label{lem:dense-quadratic-row-bound}
Fix $n\ge n_0=1024$ and $0\le N\le\lfloor n/4\rfloor$.  Set
\begin{equation}
 m_0:=
 \begin{cases}
  1,&n\ \text{odd},\\
  2,&n\ \text{even},
 \end{cases}
 \qquad
 \rho_0:=\frac{m_0}{n},
 \qquad
 D_0:=4.
 \label{eq:low-bias-quadratic-parameters}
\end{equation}
Draw $(\Theta,S)$ from $P$ in~\eqref{eq:hard-distribution-P} with
$(m,D)=(m_0,D_0)$, and write
$u=\Theta/\sqrt{D_0}$ and $s_i=S_{i:}/\sqrt{D_0}$.  For any
$\gamma,\tau>0$, define
\begin{equation}
 f_i^{\rm quad}(x):=\frac\gamma2\|x\|^2
 -\frac\tau{\rho_0}\langle s_i,x\rangle,
 \qquad
 F_{\rm quad}(x):=\frac1n\sum_{i=1}^nf_i^{\rm quad}(x)
 =\frac\gamma2\|x\|^2-\tau\langle u,x\rangle.
\end{equation}
Every randomized IFO algorithm making at most $N$ exact calls and returning
$\widehat X$ satisfies
\begin{equation}
 \Prob\bigl(\|\gamma\widehat X-\tau u\|\le\tau/4\bigr)
 \le\frac5{16}.
 \label{eq:low-bias-quadratic-success}
\end{equation}
\end{lemma}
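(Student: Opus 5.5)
The plan is to reduce success on the auxiliary quadratic to exact decoding of the four signs $\Theta\in\{\pm1\}^{D_0}$. I would then bound the Bayes-optimal decoding probability using only the first $\lfloor n/4\rfloor$ entries of each column of $S$.

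\emph{Step 1: replies reveal exactly one row.} An IFO reply at $(i,x)$ is $f_i^{\rm quad}(x)$ together with $\nabla f_i^{\rm quad}(x)=\gamma x-(\tau/\rho_0)s_i$. Since $x,\gamma,\tau,\rho_0$ are known, the reply is a deterministic function of $(x,S_{i:})$, and it determines $S_{i:}$. Hence I may pass to the stronger protocol of Appendix~\ref{app:information}, in which each query returns its row. Repeated indices add nothing, so the history consists of the seed together with at most $N\le\lfloor n/4\rfloor$ distinct rows, each chosen causally. By Lemma~\ref{lem:encoding-table-properties}(iii), each newly chosen row is a uniformly chosen remaining row given $\Theta$ and the past. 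Padding with dummy new indices up to $d:=\lfloor n/4\rfloor$ rows can only help the algorithm. Therefore the joint law of $\Theta$ and the revealed signs equals that of $\Theta$ and the first $d$ entries of each independently, uniformly permuted column. Given $\Theta$, these prefixes are independent across $\ell$, and prefix $\ell$ depends only on $\Theta_\ell$.

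\emph{Step 2: success implies exact decoding.} Distinct $u=\Theta/2$ differ in norm by at least $1$. The event $\|\gamma\widehat X-\tau u\|\le\tau/4$ places $\gamma\widehat X/\tau$ in a ball of radius $1/4$ around $u$, and these balls are disjoint. So success yields a (possibly randomized) estimator $\widehat\Theta$ of $\Theta$ from the revealed data with $\widehat\Theta=\Theta$. Under a uniform prior, the posterior factorizes across coordinates. The maximum probability of guessing all four signs is therefore $\prod_{\ell}(1/2+t/2)$, where $t$ is the total variation distance between the laws of a length-$d$ prefix of a uniformly permuted column with sign sum $+m_0$ versus $-m_0$.

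\emph{Step 3: bound $t$ by coupling (the main step).} The two columns differ by switching $m_0\le2$ signs from minus to plus. I would couple them by drawing one uniform permutation and flipping $m_0$ uniformly chosen minus positions. The prefixes then differ only if a flipped position lands among the first $d$ places. Its probability is
\[
1-\Bigl(1-\tfrac dn\Bigr)\Bigl(1-\tfrac d{n-1}\Bigr)\le 1-\tfrac34\Bigl(\tfrac34-\tfrac1{n-1}\Bigr)<\tfrac7{16}+\tfrac1{n}
\]
when $m_0=2$, and at most $1/4$ when $m_0=1$. Thus $t\le 7/16+1/1024$. Then
\[
\Bigl(\tfrac12+\tfrac t2\Bigr)^4\le(0.7193)^4<0.27<\tfrac5{16},
\]
which gives~\eqref{eq:low-bias-quadratic-success}.

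The delicate point is Step~1. I must verify that adaptive index selection plus arbitrary query points does not break the ``uniform prefix'' reduction, and that the function values add no information beyond the row. Both follow from the reply being a function of $(x,S_{i:})$ and from the exchangeability in Lemma~\ref{lem:encoding-table-properties}(iii). I note that the mutual-information route of Lemma~\ref{lem:first-use-row-information} is unavailable here, since $9r\le n$ fails for $r=n/4$. This is why I use the direct coupling bound instead.
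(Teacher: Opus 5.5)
Your proof is correct, but it takes a different route from the paper's.

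\textbf{The paper's route.} The paper stays inside the information-theoretic framework of the main one-stage argument. It reruns the per-coordinate entropy--curvature estimate of Lemma~\ref{lem:first-use-row-information} on the wider Bernoulli interval $[5/16,11/16]$, with curvature bound $256/55$; this is legitimate because $m_0\le2$. Summing over at most $N\le n/4$ new rows gives $\mathrm I(\Theta;\mathcal T_N^{\rm quad})<1/200$. Fano's inequality on the $16$-point codebook then bounds the success probability by $1/4+1/400$. So your closing remark is accurate only about the hypothesis $8m+9r\le n$ of that lemma as stated. The mutual-information route itself survives with adjusted constants.

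\textbf{Your route.} You first make the exchangeability reduction exact: adaptively selected distinct rows have the same joint law with $\Theta$ as fixed column prefixes. You then use the coordinatewise product structure to compute the Bayes-optimal probability of exact decoding as $\bigl(\tfrac12+\tfrac t2\bigr)^4$. Finally, you bound $t$ by the flip coupling, which does have the correct uniform marginal on columns with sum $+m_0$.

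\textbf{Trade-offs.} Your argument avoids entropy calculus and Fano entirely. Its margin is thinner, however. The coupling yields only the constant bound $t\lesssim 7/16$, and the conclusion closes only because four independent coordinates multiply to about $0.268<5/16$. It would also degrade as the bias $m_0$ grows. The paper's information bound scales like $m_0^2N/n^2$ and adapts directly to other parameter choices.

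A minor wording point: the codewords $\Theta/2$ are at pairwise \emph{distance} at least $1$; they do not ``differ in norm''.
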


\begin{proof}
Each response reveals the selected row because
\begin{equation}
 \nabla f_i^{\rm quad}(x)=\gamma x-\frac\tau{\rho_0}s_i,
 \qquad
 s_i=\frac{\rho_0}\tau
 \bigl(\gamma x-\nabla f_i^{\rm quad}(x)\bigr),
\end{equation}
and, given $x$, the row determines both the value and gradient.

We next bound the information in all revealed rows.  Let
$\mathcal T_N^{\rm quad}$ be the complete interaction history, including the private seed
$\xi$, and let $\mathcal H^{\rm quad}_{t-1}$ be the history before query $t$.
The display above shows that every response determines its selected encoding
row.  Conditional on $\mathcal H^{\rm quad}_{t-1}$, the index $I_t$ is fixed;
querying a previously selected component reveals no new encoding row.  When a
new row is queried with $K_t=d$, let $\Sigma_{d,\ell}$ be the
sum in column $\ell$ over the $d$ distinct rows already revealed.  Conditional
exchangeability gives
\begin{equation}
 p_{\theta,\ell}^{(d)}
 =\frac{n+m_0\theta-d-\Sigma_{d,\ell}}{2(n-d)}.
\end{equation}
Since
$m_0\le2$, $d<N\le n/4$, and $n\ge1024$,
\begin{equation}
 \left|p_{\theta,\ell}^{(d)}-\frac12\right|
 \le\frac{m_0+d}{2(n-d)}
 \le\frac3{16}.
\end{equation}
Thus the conditional Bernoulli parameters lie in $[5/16,11/16]$, where
$-h_{\rm b}''\le256/55$.  The entropy identity and Taylor bound used in
\eqref{eq:one-coordinate-mutual-information} now give, for each coordinate,
\[
 \mathrm I(\Theta;S_{I_t\ell}\mid
   \mathcal H^{\rm quad}_{t-1},S_{I_t,<\ell})
 \le \frac12\frac{256}{55}\frac14
       \left(\frac{m_0}{n-d}\right)^2
 =\frac{32}{55}\left(\frac{m_0}{n-d}\right)^2.
\]
Summing over the $D_0$ coordinates yields
\begin{equation}
 \mathrm I(\Theta;S_{I_t:}\mid\mathcal H^{\rm quad}_{t-1})
 \le\frac{32D_0}{55}\left(\frac{m_0}{n-d}\right)^2.
\end{equation}
Since $\xi\perp(\Theta,S)$, queries are predictable from their histories, and
each value of $d$ occurs at most once when a new row is queried, the
chain rule for mutual information gives
\begin{equation}
 \mathrm I(\Theta;\mathcal T_N^{\rm quad})
 \le\frac{32D_0N}{55}
       \left(\frac{m_0}{n-N}\right)^2
 \le\frac{2048}{495n}
 \le\frac2{495}<\frac1{200}.
 \label{eq:low-bias-quadratic-information}
\end{equation}

Finally, the $16$ vectors $u_\theta=\theta/2$ have pairwise distance at least one.  The
nearest-codeword decoder
\begin{equation}
 \widehat\Theta_{\rm dec}
 \in\operatorname*{argmin}_{\vartheta\in\{\pm1\}^{D_0}}
 \left\|\frac{\gamma\widehat X}{\tau}-\frac\vartheta2\right\|
\end{equation}
is therefore correct whenever
$\|\gamma\widehat X/\tau-u_\Theta\|\le1/4$.  Fano's inequality
\citep{yu1997assouad} and
$\log2>1/2$ give
\begin{equation}
 \Prob(\widehat\Theta_{\rm dec}=\Theta)
 \le\frac{\mathrm I(\Theta;\mathcal T_N^{\rm quad})+\log2}{\log16}
 <\frac14+\frac1{400}<\frac5{16},
\end{equation}
which proves~\eqref{eq:low-bias-quadratic-success}.
\end{proof}

\section{Nonconvex First-Order Lower Bound}
\label{app:nonconvex}

\paragraph{Restatement of Theorem~\ref{thm:nonconvex-main}.}
There are universal constants $c_0,c_1>0$ such that the following holds.
Fix $n\ge1024$, $L_{\max},\Delta>0$, and
$0<\varepsilon^2\le c_0\Delta L_{\max}$.  For every randomized IFO algorithm
making at most
\begin{equation}
 N\le c_1\left(n+\sqrt n\,
       \frac{\Delta L_{\max}}{\varepsilon^2}\right)
 \label{eq:restated-nonconvex-budget}
\end{equation}
calls, there exist a dimension $d$ and a finite sum
$F\in\mathcal F_{\rm ind}^{n,d}(\Delta,L_{\max})$ such that
\begin{equation}
 \Pr_\xi\!\left(\|\nabla F(\widehat X)\|\ge4\varepsilon\right)
 \ge\frac{11}{16}.
 \label{eq:restated-nonconvex-failure}
\end{equation}
The algorithm may choose arbitrary queries from its observed history and return
any output measurable with respect to that history.

\subsection{Construction}

Let
\begin{equation}
 \mathcal R_n:=\left\{\frac mn:
 m\in\mathbb Z_{>0},\quad m\equiv n\pmod2,\quad m\le\frac n8\right\}.
 \label{eq:nonconvex-local-bias-grid}
\end{equation}
Fix $\rho=m/n\in\mathcal R_n$, $T\ge1$, and write
$x=(x_1,\ldots,x_T)$ with $x_j\in\R^D$.  At stage
$j$, let $\Theta^{(j)}\in\{\pm1\}^D$ be the hidden direction and let
$S^{(j)}\in\{\pm1\}^{n\times D}$ be its fixed encoding table with bias
$\rho=m/n$; row $S^{(j)}_{i:}$ belongs to component $i$.  The table satisfies
$n^{-1}\sum_{i=1}^nS^{(j)}_{i:}=\rho\Theta^{(j)}$.  For $z\in\R^D$, define
\begin{equation}
 \Psi(z)=\frac{z}{\sqrt{1+\|z\|^2}},\qquad
 q_j(z)=\frac1{\sqrt D}\langle\Theta^{(j)},\Psi(z)\rangle,
 \qquad
 \widetilde q_{ij}(z)=\frac1\rho
 \left\langle\frac{S^{(j)}_{i:}}{\sqrt D},\Psi(z)\right\rangle.
 \label{eq:nonconvex-local-radial-coordinates}
\end{equation}
Here $q_j$ is the scalar alignment with the stage-$j$ hidden direction, and
$\widetilde q_{ij}$ is the corresponding component term determined by the
selected row.  The displayed table average gives
$n^{-1}\sum_{i=1}^n\widetilde q_{ij}(z)=q_j(z)$ for every $z$.
Fix
\begin{equation}
    C=1,
    \qquad \lambda=\frac{33}{16},
\end{equation}
and use the gate thresholds
\begin{equation}
 a=\frac5{32},
 \qquad
 b=\frac9{32}.
 \label{eq:nonconvex-local-gate-thresholds}
\end{equation}
Thus $s\le a$ implies $G(s)=G'(s)=0$, whereas $s\ge b$ implies
$G(s)=1$ and $G'(s)=0$.  In the link from stage $j-1$ to stage $j$,
$q_{j-1}$ is the gate coordinate and $\widetilde q_{ij}$ is component $i$'s
stage contribution.  Define
\begin{equation}
\begin{aligned}
 R_i(x)
 ={}&-\widetilde q_{i1}(x_1)
 -\sum_{j=2}^T
 G(q_{j-1}(x_{j-1}))
 \bigl(\widetilde q_{ij}(x_j)+C\bigr)\\
 &+\frac\lambda2\sum_{j=1}^T\|x_j\|^2.
\end{aligned}
\label{eq:uniform-component}
\end{equation}
Averaging over components and using
$n^{-1}\sum_i\widetilde q_{ij}=q_j$ gives
\begin{equation}
\begin{aligned}
 R(x):=\frac1n\sum_{i=1}^nR_i(x)
 ={}&-q_1(x_1)
 -\sum_{j=2}^T
 G(q_{j-1}(x_{j-1}))(q_j(x_j)+C)\\
 &+\frac\lambda2\sum_{j=1}^T\|x_j\|^2.
\end{aligned}
\label{eq:uniform-average}
\end{equation}

For progress index $h\in\{0,\ldots,T\}$, let $R_i^{[h]}$
be~\eqref{eq:uniform-component} with the sum of chain links truncated at
$\min\{h+1,T\}$.

Set
\begin{equation}
 M_1:=\|G'\|_\infty,
 \qquad
 M_2:=\|G''\|_\infty,
 \qquad
 \mathsf H(\rho):=
 \lambda+2+(1+C\rho)(M_2+2M_1)+2M_1.
 \label{eq:nonconvex-general-H}
\end{equation}

\begin{proposition}[Properties of the nonconvex radial chain]
\label{prop:nonconvex-chain-properties}
Let $\alpha,\beta>0$.  Define
$f_i(y):=\alpha R_i(\beta y)$ and
$F(y):=n^{-1}\sum_i f_i(y)=\alpha R(\beta y)$.  Then:
\begin{enumerate}[label=\textup{(\roman*)},leftmargin=*,itemsep=2pt]
 \item\label{item:nc-property-smoothness}
 Each component satisfies
 \begin{equation}
  \operatorname{Lip}(\nabla f_i)
  \le\frac{\alpha\beta^2\mathsf H(\rho)}{\rho}.
  \label{eq:nc-property-smoothness}
 \end{equation}
 \item\label{item:nc-property-gap}
 With $C_0:=C+1=2$,
 \begin{equation}
  F(0)-\inf F\le\alpha C_0T.
  \label{eq:nc-property-gap}
 \end{equation}
 \item\label{item:nc-property-barrier}
 For the constant $\chi>0$ in
 Lemma~\ref{lem:radial-alignment-barrier},
 \begin{equation}
  \min_jq_j(\beta y_j)<b
  \quad\Longrightarrow\quad
  \|\nabla F(y)\|\ge\alpha\beta\chi.
  \label{eq:nc-property-barrier}
 \end{equation}
 \item\label{item:nc-property-revelation}
 For $0\le h<T$, a response that does not complete stage $h+1$ uses only
 $S^{(h+1)}_{i:}$ from that stage, while the response that completes it may
 also use $S^{(h+2)}_{i:}$ when
 $h+2\le T$.  For any $0\le h\le T$, if every omitted gate input is at most
 $a$, then
 \begin{equation}
  \bigl(R_i(x),\nabla R_i(x)\bigr)
  =\bigl(R_i^{[h]}(x),\nabla R_i^{[h]}(x)\bigr).
  \label{eq:nc-property-revelation}
 \end{equation}
\end{enumerate}
\end{proposition}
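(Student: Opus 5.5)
The plan is to prove the four items separately, working with $R_i$ and $R$ at unit scale and then transferring through the rescaling. For $f_i(y)=\alpha R_i(\beta y)$ we have $\nabla^2 f_i(y)=\alpha\beta^2\nabla^2R_i(\beta y)$, $\nabla F(y)=\alpha\beta\nabla R(\beta y)$, and $F(0)-\inf F=\alpha(R(0)-\inf R)$. So items \ref{item:nc-property-smoothness}--\ref{item:nc-property-barrier} reduce to the bounds $\|\nabla^2R_i\|_{\op}\le\mathsf H(\rho)/\rho$, $R(0)-\inf R\le C_0T$, and $\|\nabla R(x)\|\ge\chi$ whenever $\min_jq_j(x_j)<b$.

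\textbf{Smoothness.} Write $R_i$ as the quadratic $\frac\lambda2\|x\|^2$ plus the first term $-\widetilde q_{i1}(x_1)$ plus the gated links $\mathcal L_j=-G(q_{j-1}(x_{j-1}))(\widetilde q_{ij}(x_j)+C)$. Apply Lemma~\ref{lem:gated-link-hessian} to each link with $h=q_{j-1}$ and $P=\widetilde q_{ij}$. From Lemma~\ref{lem:radial-first-second} we may take $g_1=1$, $g_2=2$, $B_0=B_1=1/\rho$, $B_2=2/\rho$. The three block bounds are then:
\begin{itemize}
\item diagonal (current) block: $\le 2/\rho$;
\item cross block: $\le M_1/\rho$;
\item previous block: $\le(1/\rho+C)(M_2+2M_1)=(1+C\rho)(M_2+2M_1)/\rho$.
\end{itemize}
Block $j$ receives the quadratic term $\lambda$, at most $2/\rho$ as the current block of link $j$ (or from $-\widetilde q_{i1}$ when $j=1$), and at most $(1+C\rho)(M_2+2M_1)/\rho$ as the previous block of link $j+1$. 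It also has two cross blocks, each $\le M_1/\rho$. The block-row bound \eqref{eq:block-row-operator-bound} gives
\[
\|\nabla^2R_i\|_{\op}\le\lambda+\frac{2+(1+C\rho)(M_2+2M_1)+2M_1}{\rho}\le\frac{\mathsf H(\rho)}{\rho},
\]
using $\rho\le1$ to absorb $\lambda$.

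\textbf{Gap.} $R(0)=0$ since $\Psi(0)=0$, and $G(0)=0$ because $0\le a$. For a lower bound, $q_j\ge-1$ and $0\le G\le1$ give that each link contributes at least $-(1+C)=-C_0$, the first term at least $-1$, and the quadratic is nonnegative. Hence $\inf R\ge -C_0T$, with a little slack since the first term costs only $1$.

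\textbf{Barrier (main obstacle).} Let $j$ be the first index with $q_j(x_j)<b$. Then gates $1,\dots,j-1$ are fully open, so $G(q_{j-1})=1$ and $G'(q_{j-1})=0$ (vacuous if $j=1$). Hence
\[
\nabla_{x_j}R=\lambda x_j-\nabla q_j(x_j)-G'(q_j(x_j))\,(q_{j+1}(x_{j+1})+C)\,\nabla q_j(x_j)
=\lambda x_j-c\,\nabla q_j(x_j),
\]
with $c=1+G'(q_j)(q_{j+1}+C)\ge1$, because $G'\ge0$ and $q_{j+1}+C\ge 0$ when $C=1$. Lemma~\ref{lem:radial-alignment-barrier}, applied with $u=u_{\Theta^{(j)}}$ and $q_u(x_j)<b$, gives $\|\nabla_{x_j}R\|\ge\chi$, and so $\|\nabla R\|\ge\chi$. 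The delicate points are exactly that $c\ge1$ and that the prefix gates contribute no derivative; this is why the first index with $q_j<b$ is chosen.

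\textbf{Revelation.} A response of $R_i$ at progress $h$ involves $\widetilde q_{i,h+1}$ (which uses the row $S^{(h+1)}_{i:}$), possibly $\widetilde q_{i,h+2}$ once gate $h+1$ exceeds $a$, together with the public gate functions $q_k$ of completed stages. For \eqref{eq:nc-property-revelation}, every omitted link $k>\min\{h+1,T\}$ has gate input $q_{k-1}\le a$. There $G=G'=0$ by \eqref{eq:lower-flat}, so the link's value vanishes and each gradient block vanishes, including the $x_{k-1}$-block proportional to $G'$ and the $x_k$-block proportional to $G$. Thus $R_i$ and $R_i^{[h]}$ agree in value and gradient.
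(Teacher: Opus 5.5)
Your proposal is correct and follows the paper's route essentially step for step. You reduce to unit scale by the chain rule, bound each gated link with Lemma~\ref{lem:gated-link-hessian} and sum a block row to get $\mathsf H(\rho)/\rho$, and bound the gap termwise. You then apply Lemma~\ref{lem:radial-alignment-barrier} at the first index with $q_j<b$, using $c\ge1$, and use $G=G'=0$ below $a$ for the truncation identity.

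One small slip in the gap bound: the inequality you need is $q_j\le 1$, which gives $-G(q_j+C)\ge-(1+C)$ and $-q_1\ge-1$, not $q_j\ge-1$. Both follow from $\|\Psi\|<1$, which is the paper's $|q_j|<1$.
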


\subsection{Initial gap and individual smoothness}

\begin{lemma}[Initial gap]
\label{lem:uniform-gap}
With $C_0=C+1$,
\begin{equation}
    R(0)=0,
    \qquad R(0)-\inf_xR(x)\le C_0T.
    \label{eq:uniform-gap}
\end{equation}
\end{lemma}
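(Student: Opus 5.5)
The plan is to evaluate $R$ at the origin directly and then bound $R$ from below term by term. At $x=0$ we have $\Psi(0)=0$, so $q_j(0)=0$ for every $j$. Since $0\le a$, the gate satisfies $G(q_{j-1}(0))=G(0)=0$, which kills every link term. The first term $-q_1(0)$ also vanishes, and the quadratic is zero. Hence $R(0)=0$, and it remains to show $R(x)\ge -C_0T$ for every $x$.

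For the lower bound I use two facts from Lemma~\ref{lem:radial-first-second}: $\|\Psi(z)\|<1$ and $\|\Theta^{(j)}/\sqrt D\|=1$. Together they give $|q_j(z)|<1$ for all $z$. Consequently the first term satisfies $-q_1(x_1)\ge-1$. For each link $j=2,\dots,T$, we have $0\le G\le1$ and $0\le q_j(x_j)+C\le 1+C$ with $C=1$. So each link $-G(q_{j-1}(x_{j-1}))(q_j(x_j)+C)$ is at least $-(1+C)=-C_0$. The quadratic term $\frac\lambda2\sum_j\|x_j\|^2$ is nonnegative and can be dropped. Summing gives
\[
R(x)\ge -1-(T-1)C_0\ge -C_0T,
\]
because $C_0=2\ge1$. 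This proves $R(0)-\inf R\le C_0T$.

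There is no real obstacle here. The only point to check is that the bound uses the averaged $R$, so the factor $1/\rho$ in the component terms $\widetilde q_{ij}$ never appears; the exact averaging identity $n^{-1}\sum_i\widetilde q_{ij}=q_j$ removes it. Scaling then gives $F(0)-\inf F=\alpha(R(0)-\inf R)\le\alpha C_0T$, which is item~\ref{item:nc-property-gap} of Proposition~\ref{prop:nonconvex-chain-properties}.
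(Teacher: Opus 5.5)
Your proof is correct and is the paper's own argument. At the origin $q_j(0)=0$ and $G(0)=0$, which gives $R(0)=0$. Then $|q_j|<1$, $0\le G\le 1$, and the nonnegative quadratic give $R(x)\ge -1-(T-1)(C+1)\ge -C_0T$.
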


\begin{proof}
At the origin all $q_j=0$ and $G(0)=0$, so $R(0)=0$.  Since $|q_j|<1$, $0\le G\le1$, and the quadratic is nonnegative,
$R(x)\ge-1-(T-1)(C+1)\ge-C_0T$.
\end{proof}

\begin{lemma}[Individual smoothness]
\label{lem:uniform-smoothness}
For every $\rho\in\mathcal R_n$,
\begin{equation}
 \sup_x\|\nabla^2R_i(x)\|_{\op}
 \le\frac {\mathsf H(\rho)}\rho
 \qquad\text{for every }i.
 \label{eq:general-component-smoothness}
\end{equation}
For the least parity-compatible $m\ge\sqrt n$, one has
$\rho=m/n<17/(16\sqrt n)\le\overline\rho:=17/512$, and
\begin{equation}
 H_{\rm main}:=341.
 \label{eq:explicit-component-H}
\end{equation}
Then every component satisfies
\begin{equation}
 \sup_x\|\nabla^2R_i(x)\|_{\op}
 \le\frac {H_{\rm main}}\rho.
 \label{eq:uniform-component-smoothness}
\end{equation}
The bound is independent of $T,D,$ and $n$, except through $1/\rho$.
\end{lemma}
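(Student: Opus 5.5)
The plan is to bound the Hessian of $R_i$ block by block and then combine the blocks with the row-sum bound \eqref{eq:block-row-operator-bound} of Lemma~\ref{lem:gated-link-hessian}. Block $x_j$ appears in at most three places: the quadratic $\frac\lambda2\|x_j\|^2$; the link into stage $j$, where $\widetilde q_{ij}(x_j)+C$ plays the role of $P$; and the link out of stage $j$, where $q_j(x_j)$ is the gate input $h$ and $\widetilde q_{i,j+1}(x_{j+1})+C$ plays the role of $P$. For $j=1$ the incoming term is the ungated $-\widetilde q_{i1}$.

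\emph{Diagonal block.} The quadratic contributes $\lambda$. The incoming link contributes at most $\|\nabla^2\widetilde q_{ij}\|_{\op}\le 2/\rho$ by \eqref{eq:gated-link-current} and \eqref{eq:component-radial-bounds}; the ungated first term obeys the same bound directly. For the outgoing link, Lemma~\ref{lem:radial-first-second} gives $g_1=1$, $g_2=2$ for $h=q_j$, and \eqref{eq:component-radial-bounds} gives $B_0=1/\rho$ with $|c|=C$. So \eqref{eq:gated-link-previous} yields at most $(1/\rho+C)(M_2+2M_1)=\rho^{-1}(1+C\rho)(M_2+2M_1)$.

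\emph{Off-diagonal blocks.} Row $j$ couples only to $x_{j-1}$ and $x_{j+1}$, through the links into and out of stage $j$. Each cross block is bounded by $M_1g_1B_1\le M_1\cdot1\cdot\rho^{-1}$ via \eqref{eq:gated-link-cross}, giving $2M_1/\rho$ in total.

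\emph{Row sum.} Adding these bounds gives
\[
\lambda+\frac{2}{\rho}+\frac{(1+C\rho)(M_2+2M_1)}{\rho}+\frac{2M_1}{\rho}\le\frac{\mathsf H(\rho)}{\rho},
\]
where $\lambda\le\lambda/\rho$ because $\rho\le1$. Applying \eqref{eq:block-row-operator-bound} then gives \eqref{eq:general-component-smoothness}. The bound is uniform in $T$, $D$ and $n$ because each block row has at most three nonzero blocks and all radial constants are dimension-free.

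\emph{Explicit constant.} For the optimized $\rho$, Corollary~\ref{cor:optimized-near-balanced-bias} gives $\rho\le\overline\rho=17/512$. Substituting $M_1\le17$, $M_2\le259$, $C=1$ and $\lambda=33/16$ gives
\[
\mathsf H(\rho)\le \tfrac{33}{16}+2+\bigl(1+\tfrac{17}{512}\bigr)(259+34)+34.
\]
Here $(1+17/512)\cdot293\approx302.7$, so the total is about $340.8\le341=H_{\rm main}$.

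The main obstacle is bookkeeping: making sure no block row contains more than the three contributions counted, in particular for the end stages $j=1$ and $j=T$. It is also the arithmetic margin, since $341$ is tight. If the count came out slightly too high, I would use the sharper $\|H_0''\|_\infty\le 4/w_{\rm g}^2$ from \eqref{eq:base-gate-derivatives} rather than the rounded value $259$.
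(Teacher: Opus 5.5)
Your proposal is correct and follows the paper's proof essentially line by line. The paper also applies Lemma~\ref{lem:gated-link-hessian} to each link with $(B_0,B_1,B_2)=\rho^{-1}(1,1,2)$, $(g_1,g_2)=(1,2)$ and $c=C$, sums one current diagonal, one successor-induced diagonal and two cross blocks per block row, absorbs $\lambda$ via $\rho\le1$, and substitutes $M_1\le17$, $M_2\le259$, $\rho\le17/512$ to get $\mathsf H(\rho)\le341$. Your arithmetic ($\approx340.79$) confirms the margin, so the fallback to the sharper gate bound is not needed.
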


\begin{proof}
\emph{One chain link.}
Consider one chain link
\begin{equation}
 A_{ij}(x_{j-1},x_j)
 =G(q_{j-1}(x_{j-1}))
 \bigl(\widetilde q_{ij}(x_j)+C\bigr).
\end{equation}
Apply Lemma~\ref{lem:gated-link-hessian} to $-A_{ij}$ with
\begin{equation}
 (B_0,B_1,B_2)=\frac1\rho(1,1,2),
 \qquad (g_1,g_2)=(1,2),
 \qquad c=C.
\end{equation}
By Lemma~\ref{lem:explicit-flat-gate},
\begin{equation}
 M_1\le17,
 \qquad
 M_2\le259.
 \label{eq:nonconvex-local-gate-derivatives}
\end{equation}
It gives, respectively, the current diagonal, cross, and previous diagonal
block bounds
\begin{equation}
 \frac2\rho,
 \qquad
 \frac{M_1}\rho,
 \qquad
 \frac{(1+C\rho)(M_2+2M_1)}\rho.
 \label{eq:uniform-link-hessian-bounds}
\end{equation}
For the last bound we used
$1/\rho+C=(1+C\rho)/\rho$, not the looser factor $(C+1)/\rho$.

\emph{Global block row.}
The first component term has Hessian norm at most $2/\rho$, and the
quadratic contributes $\lambda I$.  A global block row contains at most one
current diagonal term, one diagonal term induced by its successor, and two adjacent
cross terms.  Therefore the block row clause of
Lemma~\ref{lem:gated-link-hessian} gives
\begin{equation}
\begin{aligned}
 \max_j\sum_k\|\nabla^2_{x_jx_k}R_i(x)\|_{\op}
 &\le \lambda+\frac2\rho
 +\frac{(1+C\rho)(M_2+2M_1)}\rho
 +\frac{2M_1}\rho\\
 &\le\frac{\mathsf H(\rho)}{\rho}.
\end{aligned}
\end{equation}
The second line uses $\rho\le1$ and the definition of $\mathsf H(\rho)$.
This proves~\eqref{eq:general-component-smoothness} with no factor
depending on $T$ or $D$.  For the optimized bias,
$\rho\le\overline\rho$ and substitution of the displayed derivative bounds
gives $\mathsf H(\rho)\le H_{\rm main}$, proving
\eqref{eq:uniform-component-smoothness}.
\end{proof}

\subsection{Gradient lower bound and scaling}

\begin{lemma}[Nonconvex gradient lower bound at a constant scale]
\label{lem:radial-barrier}
Let $\chi>0$ be the constant in
Lemma~\ref{lem:radial-alignment-barrier}.  Then, for the average objective
$R$ in~\eqref{eq:uniform-average},
\begin{equation}
 \min_{1\le j\le T}q_j(x_j)<b
 \quad\Longrightarrow\quad
 \|\nabla R(x)\|\ge\chi.
 \label{eq:radial-barrier}
\end{equation}
\end{lemma}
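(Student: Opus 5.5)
The plan is to locate the first stage whose alignment is below $b$ and to show that the gradient block with respect to $x_j$ at that stage is already large. Let $j^\star:=\min\{j: q_j(x_j)<b\}$. For every $k<j^\star$ we then have $q_k(x_k)\ge b$, so the gate $G(q_k(x_k))$ equals one and $G'(q_k(x_k))=0$.

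Differentiating \eqref{eq:uniform-average} with respect to $x_{j^\star}$ gives two kinds of terms. The first comes from the link into stage $j^\star$, namely $-G(q_{j^\star-1})\nabla q_{j^\star}(x_{j^\star})$; for $j^\star=1$ it is simply $-\nabla q_1$. The second comes from the outgoing link, namely $-G'(q_{j^\star}(x_{j^\star}))\,(q_{j^\star+1}(x_{j^\star+1})+C)\,\nabla q_{j^\star}(x_{j^\star})$, which is present only if $j^\star<T$. The quadratic adds $\lambda x_{j^\star}$. Because $j^\star$ is minimal, the incoming gate equals one. All terms other than the quadratic are multiples of $\nabla q_{j^\star}$, so the block is
\[
\nabla_{x_{j^\star}}R(x)=\lambda x_{j^\star}-c\,\nabla q_{j^\star}(x_{j^\star}),
\qquad
c:=1+G'(q_{j^\star})\bigl(q_{j^\star+1}(x_{j^\star+1})+C\bigr).
\]

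Next we check that $c\ge1$. The gate is nondecreasing, so $G'\ge0$. Since $|q_{j^\star+1}|<1$ and $C=1$, the factor $q_{j^\star+1}+C$ is positive. Hence $c\ge1$.

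We then apply Lemma~\ref{lem:radial-alignment-barrier} with the unit vector $u=\Theta^{(j^\star)}/\sqrt D$, the point $x_{j^\star}$, and this $c$. Its hypothesis $q_u(x_{j^\star})\le b$ holds because $q_{j^\star}(x_{j^\star})<b$. The lemma gives $\|\nabla_{x_{j^\star}}R(x)\|\ge\chi$. The full gradient norm is at least the norm of any one block, which proves \eqref{eq:radial-barrier}.

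The main obstacle is keeping the sign of the outgoing-link coefficient under control. This is why $C=1$ is needed: it makes $q+C>0$, so $c\ge1$ lies in the regime where Lemma~\ref{lem:radial-alignment-barrier} applies. The other point to verify is that the first low stage has its incoming gate exactly at one, which the minimality of $j^\star$ guarantees.
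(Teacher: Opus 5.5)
Your proof is correct and follows the paper's argument. You take the first stage $j$ with $q_j(x_j)<b$, observe that the incoming gate equals one, note that the successor link only adds the nonnegative term $G'(q_j)(q_{j+1}+C)$ to the coefficient of $\nabla q_j$ (using $C=1$ and $q_{j+1}>-1$), and then apply Lemma~\ref{lem:radial-alignment-barrier} to that block with $c\ge1$.
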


\begin{proof}
Let $j$ be the first stage with $q_j(x_j)<b$.  All previous gates equal one.
The $j$th gradient block is
\begin{equation}
 \nabla_{x_j}R(x)
 =\lambda x_j-\gamma_j\nabla q_j(x_j),
 \qquad \gamma_j\ge1.
 \label{eq:radial-block-field}
\end{equation}
For $j<T$, the successor chain link adds
$G'(q_j)(q_{j+1}+C)\ge0$ to $\gamma_j$; for $j=T$,
$\gamma_j=1$.  Positivity uses $C=1$ and $q_{j+1}>-1$.
Lemma~\ref{lem:radial-alignment-barrier}, applied with
$u=u_{\Theta^{(j)}}$ and $c=\gamma_j$, now gives
$\|\nabla_{x_j}R(x)\|\ge\chi$, proving the claim.
\end{proof}

\begin{proof}[Proof of Proposition~\ref{prop:nonconvex-chain-properties}]
Item~\ref{item:nc-property-smoothness} follows from
Lemma~\ref{lem:uniform-smoothness} and the chain rule.
Items~\ref{item:nc-property-gap} and~\ref{item:nc-property-barrier} follow from
Lemmas~\ref{lem:uniform-gap} and~\ref{lem:radial-barrier}, respectively, after
the same rescaling.  For item~\ref{item:nc-property-revelation}, inspection of
\eqref{eq:uniform-component} shows that the current stage contribution uses only
$S^{(h+1)}_{i:}$ and that a newly active successor term uses only
$S^{(h+2)}_{i:}$, when that stage exists.  Since $G=G'=0$ below the lower
threshold, the displayed equality holds: every omitted value term contains
$G$, and every omitted gradient term contains $G$ or $G'$.
\end{proof}

\paragraph{Tunable scaling.}
For every $\rho\in\mathcal R_n$, set
\begin{equation}
 H_{\rm tun}:=368,
 \label{eq:tunable-component-H}
\end{equation}
and, for a sufficiently small universal $c_T>0$, define
\begin{equation}
 \beta_\rho=\frac{L_{\max}\rho\chi}{8H_{\rm tun}\varepsilon},
 \qquad
 \alpha_\rho=\frac{64H_{\rm tun}\varepsilon^2}
 {L_{\max}\rho\chi^2},
 \qquad
 T_\rho=\left\lfloor
 c_T\frac{\Delta L_{\max}\rho}{\varepsilon^2}
 \right\rfloor.
 \label{eq:tunable-analytic-parameters}
\end{equation}

\begin{lemma}[Analytic bounds for a tunable bias]
\label{lem:tunable-bias-analytic}
With the preceding definitions, $H_{\rm tun}$ is universal and
\begin{equation}
 \sup_x\|\nabla^2R_i(x)\|_{\op}\le\frac{H_{\rm tun}}{\rho}
 \qquad\text{for every }i.
 \label{eq:tunable-component-smoothness}
\end{equation}
Moreover, define
$f_i(y):=\alpha_\rho R_i(\beta_\rho y)$ and
$F(y):=n^{-1}\sum_{i=1}^nf_i(y)=\alpha_\rho R(\beta_\rho y)$.
Then $f_i$ is $L_{\max}$-smooth for every $i$,
$F(0)-\inf F\le\Delta$, and
$\min_jq_j(\beta_\rho y_j)<b$ implies
$\|\nabla F(y)\|\ge8\varepsilon$.  In particular,
$T_\rho=\Theta(\Delta L_{\max}\rho/\varepsilon^2)$ whenever $T_\rho\ge2$.
\end{lemma}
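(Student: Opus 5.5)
The plan is to obtain the Hessian bound \eqref{eq:tunable-component-smoothness} from the general estimate of Lemma~\ref{lem:uniform-smoothness}, then check each scaled property with Proposition~\ref{prop:nonconvex-chain-properties}.

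\emph{Step 1: a universal Hessian constant.} Lemma~\ref{lem:uniform-smoothness} gives $\sup_x\|\nabla^2R_i\|_{\op}\le\mathsf H(\rho)/\rho$ for every $\rho\in\mathcal R_n$. Since $\rho\le1/8$ on the grid and $C=1$, we have $1+C\rho\le9/8$. Substituting $\lambda=33/16$, $M_1\le17$, $M_2\le259$ into \eqref{eq:nonconvex-general-H} gives
\[
\mathsf H(\rho)\le\tfrac{33}{16}+2+\tfrac98\cdot293+34\approx 367.7\le368=H_{\rm tun}.
\]
This is the only arithmetic that requires care. The point is that the bound holds uniformly over the grid, not only at the optimized bias, which is why the constant here is $368$ rather than $341$.

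\emph{Step 2: the scaled properties.} By item~\ref{item:nc-property-smoothness} of Proposition~\ref{prop:nonconvex-chain-properties}, each component has
\[
\operatorname{Lip}(\nabla f_i)\le\alpha_\rho\beta_\rho^2H_{\rm tun}/\rho.
\]
Substituting the definitions gives
\[
\alpha_\rho\beta_\rho^2=\frac{64H_{\rm tun}\varepsilon^2}{L_{\max}\rho\chi^2}\cdot\frac{L_{\max}^2\rho^2\chi^2}{64H_{\rm tun}^2\varepsilon^2}=\frac{L_{\max}\rho}{H_{\rm tun}},
\]
so the Lipschitz constant is exactly $L_{\max}$.

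For the barrier, $\alpha_\rho\beta_\rho\chi=8\varepsilon$, so item~\ref{item:nc-property-barrier} yields $\|\nabla F(y)\|\ge8\varepsilon$ whenever $\min_jq_j(\beta_\rho y_j)<b$.

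For the gap, item~\ref{item:nc-property-gap} gives
\[
F(0)-\inf F\le2\alpha_\rho T_\rho\le2c_T\Delta L_{\max}\rho\,\frac{64H_{\rm tun}}{L_{\max}\rho\chi^2}=\frac{128H_{\rm tun}c_T}{\chi^2}\,\Delta.
\]
Choose $c_T\le\chi^2/(128H_{\rm tun})$; with $\chi>5/18$ this is a universal constant, and the gap is at most $\Delta$. The case $T_\rho=0$ is degenerate, since the construction takes $T\ge1$; the statement implicitly needs $T_\rho\ge1$.

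\emph{Step 3: the order of $T_\rho$.} Write $x=c_T\Delta L_{\max}\rho/\varepsilon^2$. If $T_\rho\ge2$, then $x\ge2$, so $\lfloor x\rfloor\ge x/2$. Together with $\lfloor x\rfloor\le x$, this gives $T_\rho=\Theta(\Delta L_{\max}\rho/\varepsilon^2)$ with universal constants.

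The main obstacle is only the numerical check $\mathsf H(\rho)\le368$ uniformly in $\rho\le1/8$. The rest is exact cancellation in the scaling.
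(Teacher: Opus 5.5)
Your proposal is correct and follows the paper's proof. You bound $\mathsf H(\rho)\le\mathsf H(1/8)<368$ using $1+C\rho\le 9/8$, verify smoothness, the $8\varepsilon$ barrier and the gap through the identities $\alpha_\rho\beta_\rho^2=L_{\max}\rho/H_{\rm tun}$ and $\alpha_\rho\beta_\rho\chi=8\varepsilon$ with $c_T\le\chi^2/(128H_{\rm tun})$ (which is the paper's $\chi^2/(64C_0H_{\rm tun})$), and finish with the factor-two floor argument. Citing the items of Proposition~\ref{prop:nonconvex-chain-properties} instead of Lemmas~\ref{lem:uniform-gap} and~\ref{lem:radial-barrier} directly is only cosmetic, and your remark that the construction implicitly needs $T_\rho\ge1$ is a fair observation.
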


\begin{proof}
The proof of Lemma~\ref{lem:uniform-smoothness} based on block row sums uses the bias
only through $1+C\rho$.  Since $\rho\le1/8$, the bounds
$M_1\le17$ and $M_2\le259$ give
$\mathsf H(\rho)\le\mathsf H(1/8)<368=H_{\rm tun}$, which proves
\eqref{eq:tunable-component-smoothness}.

\emph{Smoothness.}
For $f_i(y)=\alpha_\rho R_i(\beta_\rho y)$, the chain rule and
\eqref{eq:tunable-component-smoothness} give
\begin{equation}
 \|\nabla^2 f_i(y)\|_{\op}
 =\alpha_\rho\beta_\rho^2
   \|\nabla^2R_i(\beta_\rho y)\|_{\op}
 \le\frac{\alpha_\rho\beta_\rho^2H_{\rm tun}}\rho
 =L_{\max}.
\end{equation}

\emph{Gradient lower bound.}
If $\min_jq_j(\beta_\rho y_j)<b$, Lemma~\ref{lem:radial-barrier} gives
$\|\nabla R(\beta_\rho y)\|\ge\chi$.  Hence
\begin{equation}
 \|\nabla F(y)\|
 =\alpha_\rho\beta_\rho\|\nabla R(\beta_\rho y)\|
 \ge\alpha_\rho\beta_\rho\chi
 =8\varepsilon.
\end{equation}

\emph{Initial gap and stage count.}
Lemma~\ref{lem:uniform-gap} and the definition of $T_\rho$ imply
\begin{equation}
 F(0)-\inf F
 \le\alpha_\rho C_0T_\rho
 \le\frac{64C_0H_{\rm tun}c_T}{\chi^2}\,\Delta
 \le\Delta,
\end{equation}
after choosing $c_T\le\chi^2/(64C_0H_{\rm tun})$.  Finally, when
$T_\rho\ge2$, the floor in~\eqref{eq:tunable-analytic-parameters} loses at
most a factor of two, proving
$T_\rho=\Theta(\Delta L_{\max}\rho/\varepsilon^2)$.
\end{proof}

\proofguidespace
\begin{proposition}[Bias--smoothness tradeoff]
\label{prop:bias-smoothness-tradeoff}
There are universal constants $c>0$ and $p_\star<1/2$ such that, for every
$n\ge n_0$, $L_{\max},\Delta,\varepsilon>0$, and admissible $\rho$ for which
the integer row threshold $r_\rho$ and stage count $T_\rho$ are both at least
two, every randomized IFO algorithm using at most
\begin{equation*}
 c\frac{\Delta L_{\max}}{\varepsilon^2}
 \min\{n\rho,\rho^{-1}\}
\end{equation*}
calls has, in some dimension $d$, an instance
$F_\rho\in\mathcal F_{\rm ind}^{n,d}(\Delta,L_{\max})$ for which
$\Pr_\xi(\|\nabla F_\rho(\widehat X)\|\le\varepsilon)\le p_\star$.
\end{proposition}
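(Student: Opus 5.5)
The plan is to instantiate the tunable construction of Lemma~\ref{lem:tunable-bias-analytic} at the given admissible $\rho$, with $T=T_\rho$ stages and the scaling $(\alpha_\rho,\beta_\rho)$. That lemma already gives three facts: every $f_i$ is $L_{\max}$-smooth, $F_\rho(0)-\inf F_\rho\le\Delta$, and $\min_j q_j(\beta_\rho y_j)<b$ forces $\|\nabla F_\rho(y)\|\ge8\varepsilon>\varepsilon$. So it suffices to show that, with probability at least $1-p_\star$, the output $\widehat X$ has some stage with $q_j(\beta_\rho\widehat X_j)\le a<b$. We take $J:=T_\rho-1\ge1$ and $M_{\rm tot}=T_\rho$. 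If the final stage $T_\rho>J$ is never completed and is not prematurely aligned, then $q_{T_\rho}\le a$ at the output, which is exactly what we need.

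Next we verify (S1)--(S3) of Theorem~\ref{thm:sequential} for this chain, using the tunable one-stage bound in place of the optimized one.
\begin{itemize}
\item[] \emph{Revelation structure.} Proposition~\ref{prop:nonconvex-chain-properties}\ref{item:nc-property-revelation} gives the reply structure required by (R2)--(R3): pre-completion replies depend only on $W$, the queries, and the current-stage rows, and omitted links have $G=G'=0$. The alignment is $V_j=\Psi(\beta_\rho x_j)$, which lies in the unit ball.
\item[] \emph{Condition (S1).} Lemma~\ref{lem:tunable-bias-kl} together with Lemma~\ref{lem:precompletion-extension} gives (S1) with $r=r_\rho\ge2$ and $p=1/8$. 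The accounting in (R4) still holds: completion on the $q$th assigned call with $q<r_\rho$ has $K_t<r_\rho$, counting the initialization row.
\item[] \emph{Conditions (S2)--(S3).} Lemma~\ref{lem:future-alignment} gives (S2) with $\delta=1/128$ once $D$ is taken large, logarithmic in $N$ and $T_\rho$. Lemma~\ref{lem:pathwise-coupling} gives (S3).
\end{itemize}
The dimension $d=T_\rho D$ is free, so the dimension conditions \eqref{eq:explicit-row-dimension} and \eqref{eq:future-alignment-dimension} can be met by choosing $D$ after $N$ is fixed.

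The theorem then requires $N\le r_\rho J/2$. Since $T_\rho\ge2$, we have $J\ge T_\rho/2$, so $r_\rho J/2\ge r_\rho T_\rho/4$. Using $\lfloor z\rfloor\ge z/2$ for $z\ge1$ on both floors,
\[
r_\rho T_\rho\ \ge\ \frac{\min\{n,\rho^{-2}\}}{8192}\cdot\frac{c_T\Delta L_{\max}\rho}{2\varepsilon^2}
\ =\ \frac{c_T}{16384}\,\frac{\Delta L_{\max}}{\varepsilon^2}\min\{n\rho,\rho^{-1}\}.
\]
Hence $c:=c_T/65536$ suffices. Theorem~\ref{thm:sequential} then yields, with probability at least $1-1/4-1/128$, that $\nu_N<J$ and $\cEfuture^c$ holds. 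On this event $\operatorname{align}_{T_\rho}(\widehat X_{T_\rho})\le a<b$, so $\|\nabla F_\rho(\widehat X)\|\ge8\varepsilon$. This gives $p_\star=1/4+1/128<1/2$.

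The main obstacle is the transfer to the original IFO output. The output $\widehat X$ is produced from the original interaction, while the sequential argument controls the augmented/truncated one. The plan is to use the final clause of Theorem~\ref{thm:sequential}, together with the fact that the augmented interaction only adds information. An algorithm for the original oracle is simulated in the augmented one by ignoring the extra data, so its output law is unchanged. A secondary check is that the one-stage bound is applied conditionally on $W$ at each stage with the tunable threshold. Lemma~\ref{lem:tunable-bias-kl} is stated almost surely in $W$, so the tower-property step in Theorem~\ref{thm:sequential} goes through verbatim.
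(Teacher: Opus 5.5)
Your proposal is correct and follows the paper's proof essentially step for step. The shared steps are the tunable construction with $T_\rho$ stages, Theorem~\ref{thm:sequential} applied to the first $J=T_\rho-1$ stages with $p=1/8$ and $\delta=1/128$, the floor-loss estimate for $r_\rho(T_\rho-1)$, and $p_\star=33/128$. The one step you leave implicit is the final averaging over the hard distribution (Section~\ref{sec:deterministic-to-randomized}): using $\xi\perp F$, it turns your joint success bound of $33/128$ into a fixed $F_\rho$ in the support with $\Pr_\xi(\|\nabla F_\rho(\widehat X)\|\le\varepsilon)\le 33/128$, which is what the statement requires.
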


\begin{proof}
We keep $\rho$ arbitrary rather than immediately selecting
$\rho\asymp n^{-1/2}$.  The corresponding row threshold and number of stages are
\begin{equation}
 r_\rho=\left\lfloor\frac1{4096}
          \min\{n,\rho^{-2}\}\right\rfloor,
 \qquad
 T_\rho=\left\lfloor c_T
          \frac{\Delta L_{\max}\rho}{\varepsilon^2}\right\rfloor.
 \label{eq:restated-bias-stage-parameters}
\end{equation}
Assume $r_\rho,T_\rho\ge2$ and choose the block dimension large enough for
the one-stage bound and the bound on $\cEfuture$, equivalently
\begin{equation}
 D\ge c_D\left[1+\log\bigl((N+2)(T_\rho+1)\bigr)\right]
 \label{eq:restated-bias-dimension}
\end{equation}
for a universal $c_D$.

\emph{Assembly.}
Lemma~\ref{lem:tunable-bias-kl} gives the row cost $r_\rho$,
and Lemma~\ref{lem:tunable-bias-analytic} gives $T_\rho$ individually
smooth hidden stages and an $8\varepsilon$ gradient lower bound at the final stage.
Apply Theorem~\ref{thm:sequential} to the first $J=T_\rho-1$ stages.  Its joint event has
probability at least $95/128$ whenever
$N\le r_\rho(T_\rho-1)/2$, and on that event the gradient lower bound at the final stage
applies.
Since $r_\rho,T_\rho\ge2$, the floor losses are only
universal factors and
\begin{equation}
 r_\rho(T_\rho-1)
 =\Omega\!\left(
 \frac{\Delta L_{\max}}{\varepsilon^2}
 \min\{n\rho,\rho^{-1}\}\right).
 \label{eq:restated-bias-product}
\end{equation}
Consequently, a sufficiently small universal choice of $c$ makes the call
budget in the proposition at most $r_\rho(T_\rho-1)/2$.

The conversion in Section~\ref{sec:deterministic-to-randomized} now gives the
following statement for a fixed instance of the construction at accuracy
$\varepsilon$:
\begin{equation}
 \Pr_\xi\!\left(\|\nabla F_\rho(\widehat X)\|
                 \le4\varepsilon\right)
 \le\frac{33}{128}<\frac12.
 \label{eq:restated-bias-failure}
\end{equation}
Since the event with threshold $\varepsilon$ is contained in the event with
threshold $4\varepsilon$, this proves the proposition with $p_\star=33/128$.
\end{proof}

\subsection{Optimized bias}

When the optimized stage count is at least two, the construction yields the
$\Omega(\sqrt n\,\Delta L_{\max}/\varepsilon^2)$ lower bound, with the last
stage reserved for the algorithm's output.

Here $m$ is the least parity-compatible integer not smaller than $\sqrt n$ and
\begin{equation}
 \sqrt n\le m<\sqrt n+2,
 \qquad \rho=\frac mn,
 \qquad \frac1{\sqrt n}\le\rho<\frac{17}{16\sqrt n}.
 \label{eq:nonconvex-local-optimized-bias}
\end{equation}
The constants used below are $H_{\rm main}=341$ from
Lemma~\ref{lem:uniform-smoothness}, $\chi>0$ from
Lemma~\ref{lem:radial-alignment-barrier}, $C_0=C+1=2$, and the one-stage row
threshold $r=\lfloor n/512\rfloor$.

Set
\begin{equation}
    f_i(y)=\alpha R_i(\beta y),
    \qquad
    F(y):=\frac1n\sum_{i=1}^nf_i(y)=\alpha R(\beta y),
\end{equation}
where
\begin{equation}
    \beta=\frac{L_{\max}\rho\chi}{8H_{\rm main}\varepsilon},
 \qquad
    \alpha=\frac{64H_{\rm main}\varepsilon^2}{L_{\max}\rho\chi^2}.
 \label{eq:nonconvex-scales}
\end{equation}
Then
\begin{equation}
 \alpha\beta^2\frac {H_{\rm main}}\rho=L_{\max},
 \qquad
 \alpha\beta\chi=8\varepsilon.
 \label{eq:scale-equations}
\end{equation}

Let
\begin{equation}
 T=\left\lfloor
 c_T\frac{\Delta L_{\max}\rho}{\varepsilon^2}
 \right\rfloor,
 \label{eq:chain-length}
\end{equation}
where $c_T$ is small enough that $\alpha C_0T\le\Delta/2$.
Proposition~\ref{prop:nonconvex-chain-properties} and
\eqref{eq:scale-equations} give
\begin{equation}
 \max_i\operatorname{Lip}(\nabla f_i)\le L_{\max},
 \qquad F(0)-F^*\le\Delta,
 \qquad
 \min_jq_j(\beta y_j)<b\Longrightarrow
 \|\nabla F(y)\|\ge8\varepsilon.
 \label{eq:optimized-nonconvex-three-properties}
\end{equation}

Assume $T\ge2$; the case $T<2$ is handled next.  Reserve the
last block by setting
\begin{equation}
 J:=T-1.
 \label{eq:terminal-prefix}
\end{equation}
With $V_j(y_j)=\Psi(\beta y_j)$,
Proposition~\ref{prop:nonconvex-chain-properties}, item~\ref{item:nc-property-revelation},
gives \emph{(R2)--(R3)}.  Independent stage tables give \emph{(R1)}, and the
rule for assigning calls, including the free initialization row convention, gives
\emph{(R4)}.

Choose the integer $D$ so that
\begin{equation}
 D\ge\max\left\{
 2048[1+\log(N+3)],\quad
 \frac{2048}{25}\log\bigl(128(N+1)T\bigr)
 \right\}.
 \label{eq:nonconvex-local-dimension-requirements}
\end{equation}
The case $N=0$ is immediate, so assume $N\ge1$.
The corollary, with $M_{\rm tot}=T$, now gives
\begin{equation}
 \Prob\bigl(\nu_N<T-1,\ \cEfuture^c\bigr)\ge\frac{95}{128}
 \quad\text{whenever}\quad
 N\le\frac{r(T-1)}2.
 \label{eq:high-precision-cost}
\end{equation}
On this event, Corollary~\ref{cor:optimized-radial-chain} gives
$q_T(\beta\widehat X_T)\le a<b$.  Therefore
\eqref{eq:optimized-nonconvex-three-properties} yields
\begin{equation}
 \|\nabla F(\widehat X)\|\ge8\varepsilon.
\end{equation}
Since $T-1\ge T/2$ and $n\rho=m=\Theta(\sqrt n)$, the query threshold in
\eqref{eq:high-precision-cost} is
$\Omega(\sqrt n\,\Delta L_{\max}/\varepsilon^2)$.  The reserved last stage is
what makes the conclusion valid for an output that was never queried.

Finally, $D=O(\log((N+2)(T+2)))$ under the displayed requirement.
Since $d=TD$, substituting~\eqref{eq:chain-length} and
\eqref{eq:high-precision-cost} gives
\begin{equation}
 d=O\!\left[
 \left(1+\frac{\Delta L_{\max}}{\sqrt n\,\varepsilon^2}\right)
 \log\!\left(2+n+\sqrt n\,\frac{\Delta L_{\max}}{\varepsilon^2}\right)
 \right].
 \label{eq:nonconvex-dimension}
\end{equation}

\subsection{The linear sample-size term}

When $T<2$, one auxiliary quadratic gives the full bound.

Because the optimized bias satisfies $\rho=\Theta(n^{-1/2})$, the condition
$T\ge2$ in~\eqref{eq:chain-length} is equivalent, up to universal constants, to
\begin{equation}
 \varepsilon^2\lesssim\frac{\Delta L_{\max}}{\sqrt n}
 \quad\Longleftrightarrow\quad
 \sqrt n\,\frac{\Delta L_{\max}}{\varepsilon^2}\gtrsim n.
 \label{eq:nonconvex-regime-split}
\end{equation}
Thus the radial chain applies in the regime where its second complexity term is
at least a constant multiple of $n$.

For the case $T<2$, instantiate the auxiliary quadratic already defined
in Lemma~\ref{lem:dense-quadratic-row-bound} with
\begin{equation}
 \tau:=16\varepsilon,
 \qquad
 \gamma:=\frac{\tau^2}{2\Delta}
 =\frac{128\varepsilon^2}{\Delta}.
\end{equation}
Its components $f_i^{\rm quad}$ have Hessian $\gamma I$, and its average obeys
\begin{equation}
 F_{\rm quad}(0)-F_{\rm quad}^*
 =\frac{\tau^2}{2\gamma}=\Delta,
 \qquad
 \nabla F_{\rm quad}(x)=\gamma x-\tau u.
\end{equation}
Under the standing assumption
$\varepsilon^2\le c_0\Delta L_{\max}$, choose the theorem's universal
constant so that $c_0\le1/128$.  Then
$\gamma\le L_{\max}$, so every component is $L_{\max}$-smooth.  The same lemma
gives, after reducing the universal budget constant so that
$N\le\lfloor n/4\rfloor$,
\begin{equation}
 \Prob(\|\nabla F_{\rm quad}(\widehat X)\|\le4\varepsilon)
 \le\frac5{16}.
 \label{eq:coarse-failure}
\end{equation}

In the complementary regime,
$\sqrt n\,\Delta L_{\max}/\varepsilon^2=O(n)$.  Hence the target budget
$n+\sqrt n\,\Delta L_{\max}/\varepsilon^2$ is $O(n)$, so
\eqref{eq:coarse-failure} gives the full bound after reducing constants.
This quadratic uses $D_0=4$ in the
coarse regime, within the dimension bound
in~\eqref{eq:nonconvex-dimension}.  Combining the two regimes proves the
fixed budget claim in Theorem~\ref{thm:nonconvex-main}.

\section{Proofs for the Global PL Bounds}
\label{app:pl}

\paragraph{Restatement of the lower bound in
Theorem~\ref{thm:pl-main}.}
Fix $n\ge1024$, $L_{\max},\mu,\Delta>0$, and
$\kappa_{\max}=L_{\max}/\mu\ge3$.  There are universal constants
$c,c_\varepsilon>0$ such that, whenever
$0<\varepsilon\le c_\varepsilon\Delta$, every randomized IFO algorithm using at most
\begin{equation}
 N\le c\begin{cases}
 \displaystyle
 n+\frac{n\log(\Delta/\varepsilon)}
 {1+\log(\sqrt n/\kappa_{\max})},
 &3\le\kappa_{\max}\le\sqrt n,\\[2mm]
 \displaystyle
 n+\kappa_{\max}\sqrt n\log\frac{\Delta}{\varepsilon},
 &\kappa_{\max}\ge\sqrt n
 \end{cases}
 \label{eq:restated-pl-lower-budget}
\end{equation}
calls has a fixed instance
$F\in\mathcal F_{\mathrm{ind},\mathrm{PL}}^{n,d}
(\Delta,L_{\max},\mu)$, in some dimension $d$, for which
\begin{equation}
 \Pr_\xi\!\left(F(\widehat X)-F^*>\varepsilon\right)
 \ge\frac{11}{16}.
 \label{eq:restated-pl-lower-failure}
\end{equation}
Because the hard instances are individually smooth, the same lower bound also
holds for the mean-squared class.
Appendix~\ref{app:stopping} gives the conversion to a fixed hard instance and
the extension to expected call budgets.  The two-scale chain proves the first
regime and the bounded transition interval.  The geometrically weighted chain
proves the large-$\kappa_{\max}$ regime.  We analyze the weighted construction
first because its block decomposition of the gap isolates the global PL
argument.  The two-scale construction then treats the range in which the
weighted geometric ratio is not admissible.

\paragraph{Motivation for geometric stage scales.}
Let $j_\star$ be the first unfinished stage.  With equal stage scales, the
total contribution of stages $j_\star,\ldots,M$ grows like
$M-j_\star+1$, while the available lower bound on the squared gradient norm remains
constant.  Hence the PL constant would shrink as the chain gets longer.
Geometric scales instead give
\begin{equation*}
\begin{aligned}
 &\sum_{k=j_\star}^Mw_k^2=O(w_{j_\star}^2),
 &\|\nabla F(x)\|^2&=\Omega(w_{j_\star}^2).
\end{aligned}
\end{equation*}
Thus the first unfinished stage controls the total contribution of all
remaining stages without forcing $\mu$ to decrease with the chain length.  The
two constructions differ in how quickly the stages may shrink and still keep
every component $L_{\max}$-smooth.

\paragraph{Scale choices at a glance.}
Let $r_{\rm w}$ be the ratio of consecutive squared weights,
$\vartheta$ the ratio of consecutive amplitudes, and
$\bar\kappa_{\max}:=\min\{\kappa_{\max},\sqrt n\}$.  At the optimized bias, the
parameter choices that determine the two rates are
\begin{equation*}
\rho\asymp n^{-1/2},\qquad
1-r_{\rm w}\asymp\frac{\sqrt n}{\kappa_{\max}},\qquad
\vartheta\asymp\frac{R_j}{W_j}
\asymp\frac{\bar\kappa_{\max}}{\sqrt n}.
\end{equation*}
The first relation makes one hidden stage cost $\Theta(n)$ rows.  The second is
the admissible geometry using one scale in the large-$\kappa$ branch; the third is the
separated geometry used below the transition.  The explicit numerical constants
introduced later serve only to verify individual smoothness, the PL inequality,
and admissibility.  They do not affect these scale relations or the final rates.

\paragraph{Shared encoding.}
All staged constructions use the smooth threshold gate and independent fixed tables
$S^{(j)}\in\{\pm1\}^{n\times D}$ sampled conditional on
$\Theta^{(j)}\in\{\pm1\}^D$.  Fix $\rho=m/n\in\mathcal R_n$ as in
Appendix~\ref{app:gate-code}.
The row $S^{(j)}_{i:}$ belongs to component $i$ and
$n^{-1}\sum_iS^{(j)}_{i:}=\rho\Theta^{(j)}$.  We use
\begin{equation*}
 \begin{aligned}
  \Psi(z)&:=\frac{z}{\sqrt{1+\|z\|^2}},
  &u_{\Theta^{(j)}}&:=\frac{\Theta^{(j)}}{\sqrt D},
  \\
  q_{\Theta^{(j)}}(z)&:=\frac1{\sqrt D}
     \langle\Theta^{(j)},\Psi(z)\rangle,
  &q_j(z)&:=q_{\Theta^{(j)}}(z),\\
  \widetilde q_{ij}(z)&:=
  \frac1\rho\left\langle\frac{S^{(j)}_{i:}}{\sqrt D},\Psi(z)\right\rangle.
 \end{aligned}
\end{equation*}
Thus $q_j$ is the alignment coordinate with the hidden direction at stage $j$,
$\widetilde q_{ij}$ is the component term determined by row $i$, and
$n^{-1}\sum_i\widetilde q_{ij}=q_j$ exactly.

Throughout this appendix, $a=5/32$ and $b=9/32$ are the gate thresholds:
$s\le a$ implies $G(s)=G'(s)=0$, whereas $s\ge b$ implies
$G(s)=1$ and $G'(s)=0$.

All three constructions used for the lower bounds share the same fixed encoding table.  The
nonconvex chain has a constant stage scale, the large-$\kappa$ PL chain uses one
geometric scale $w_j$ per stage, and the small-$\kappa$ PL chain separates the
gate scale $R_j$ from the contribution scale $W_j$.

\paragraph{Part I: weighted chain for the large-$\kappa$ branch.}
For the large-$\kappa$ branch, we use one geometric scale $w_j$ for both the gate and
the stage contribution.  We say that a gate is \emph{fully open} when
$u_j\ge b$, so $G(u_j)=1$ and $G'(u_j)=0$.  We reserve \emph{stage completion}
for the information threshold $u_j>a$ used in
Appendix~\ref{app:information}.

\subsection{Weighted construction}

Fix an integer $M\ge2$ and positive nonincreasing weights
\begin{equation}
 1=w_1\ge w_2\ge\cdots\ge w_M>0.
 \label{eq:weights}
\end{equation}
For any stage $j$, the remaining squared weight is
$\sum_{k=j}^M w_k^2$.  Its largest ratio to the squared weight of the current
stage is
\begin{equation}
 \mathcal T(w):=\max_{1\le j\le M}
 \frac{\sum_{k=j}^M w_k^2}{w_j^2}.
 \label{eq:weight-tail-ratio}
\end{equation}
For each stage define
\begin{equation}
 u_j(x_j)=q_{\Theta^{(j)}}(x_j/w_j),
 \qquad
 Q_j(x_j)=w_j^2u_j(x_j),
 \label{eq:weighted-radial-global}
\end{equation}
and
\begin{equation}
 \widetilde Q_{ij}(x_j)
 =\frac{w_j^2}{\rho}
 \left\langle\frac{S^{(j)}_{i:}}{\sqrt D},
 \Psi(x_j/w_j)\right\rangle.
 \label{eq:weighted-radial-component}
\end{equation}
Then $n^{-1}\sum_i\widetilde Q_{ij}=Q_j$ exactly.  Here $u_j$ is the gate
coordinate, $\widetilde Q_{ij}$ is the stage contribution in component $i$, and $Q_j$ is
its component average.  In the link from stage $j-1$ to stage $j$, the factor
$G(u_{j-1})$ decides whether the current stage contribution is present.  The
condition $G=G'=0$ makes this link and its gradients in both adjacent blocks vanish
whenever $u_{j-1}\le a$.  Fix
\begin{equation}
 C=1,
 \qquad
 \lambda=\frac{33}{16},
 \qquad
 \sigma:=\lambda-2=\frac1{16},
 \label{eq:pl-analytic-constants}
\end{equation}
The component hard instance is
\begin{equation}
\begin{aligned}
 R_i(x)
 ={}&-\widetilde Q_{i1}(x_1)
 -\sum_{j=2}^M
 G(u_{j-1}(x_{j-1}))
 \bigl(\widetilde Q_{ij}(x_j)+Cw_j^2\bigr)\\
 &+\frac\lambda2\sum_{j=1}^M\|x_j\|^2,
\end{aligned}
\label{eq:weighted-component}
\end{equation}
and its average is $R:=n^{-1}\sum_{i=1}^nR_i$; equivalently, averaging the
displayed formula replaces every $\widetilde Q_{ij}$ by $Q_j$.

\begin{proposition}[Properties of the weighted chain]
\label{prop:weighted-chain-properties}
For $M\ge2$, $\rho\in\mathcal R_n$ with $0<\rho\le17/512$, and
$1=w_1\ge\cdots\ge w_M>0$, there are universal constants
$H_{\rm main},C_{\rm PL}^{\rm gap}>0$ such that:
\begin{enumerate}[label=\textup{(\roman*)},leftmargin=*,itemsep=2pt]
 \item\label{item:weighted-property-smoothness}
 Every component is $H_{\rm main}/\rho$-smooth.
 \item\label{item:weighted-property-gap}
 The initial gap satisfies
 \begin{equation}
  R(0)-R^*\le2\mathcal T(w).
  \label{eq:weighted-property-gap}
 \end{equation}
 \item\label{item:weighted-property-pl}
 The average satisfies
 \begin{equation}
  R(x)-R^*
  \le C_{\rm PL}^{\rm gap}\mathcal T(w)\|\nabla R(x)\|^2.
  \label{eq:weighted-property-pl}
 \end{equation}
 Equivalently, it is PL with constant
 $\bigl(2C_{\rm PL}^{\rm gap}\mathcal T(w)\bigr)^{-1}$.
 \item\label{item:weighted-property-residual}
 If $\mathcal I\subseteq\{2,\ldots,M\}$ and
 $G(u_{j-1}(x_{j-1}))=0$ for every $j\in\mathcal I$, then
 \begin{equation}
  R(x)-R^*\ge\sum_{j\in\mathcal I}w_j^2.
  \label{eq:weighted-property-residual}
 \end{equation}
\end{enumerate}
\end{proposition}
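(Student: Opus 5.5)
The plan is to pass to the normalized block variables $z_j:=x_j/w_j$ and read everything off the single-scale computations already done for the nonconvex chain. In these variables $u_j=q_j(z_j)$, $Q_j=w_j^2q_j(z_j)$ and $\frac\lambda2\|x_j\|^2=w_j^2\frac\lambda2\|z_j\|^2$. It is convenient to group $R$ by blocks,
\[
 R(x)=\sum_{j=1}^M\Bigl[-G(u_{j-1})\bigl(Q_j(x_j)+Cw_j^2\bigr)+\tfrac\lambda2\|x_j\|^2\Bigr]-Cw_1^2,
\]
with the convention $G(u_0):=1$; the constant $-Cw_1^2$ only corrects the first block, which carries no $C$-term. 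The grouping for $R_i$ is the same with $Q_j$ replaced by $\widetilde Q_{ij}$. Write $\phi:=\min_{z}\bigl[-(q(z)+C)+\frac\lambda2\|z\|^2\bigr]$. This minimum does not depend on the direction $u$ by rotation invariance. Evaluating at $z=0$ gives $\phi\le-1$, and the minimizer lies on the ray of $u$ with alignment $q_*>b$ by the last clause of Lemma~\ref{lem:radial-alignment-barrier}.

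\emph{(i) Smoothness.} I will apply Lemma~\ref{lem:gated-link-hessian} to the link $-G(u_{j-1}(x_{j-1}))(\widetilde Q_{ij}(x_j)+Cw_j^2)$ exactly as in Lemma~\ref{lem:uniform-smoothness}, now tracking the weights. Lemma~\ref{lem:radial-first-second} gives $|\widetilde Q_{ij}|\le w_j^2/\rho$, $\|\nabla\widetilde Q_{ij}\|\le w_j/\rho$ and $\|\nabla^2\widetilde Q_{ij}\|_{\op}\le2/\rho$. It also gives $\|\nabla u_{j-1}\|\le1/w_{j-1}$ and $\|\nabla^2u_{j-1}\|_{\op}\le2/w_{j-1}^2$. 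The current-diagonal block bound is therefore $2/\rho$. The cross bound is $M_1w_j/(\rho w_{j-1})\le M_1/\rho$. The previous-diagonal bound is $(1+C\rho)(M_2+2M_1)w_j^2/(\rho w_{j-1}^2)\le(1+C\rho)(M_2+2M_1)/\rho$, using monotonicity of the weights. These are the same three bounds as in the nonconvex chain, so the block-row clause of Lemma~\ref{lem:gated-link-hessian} gives $\mathsf H(\rho)/\rho\le H_{\rm main}/\rho$ for $\rho\le17/512$.

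\emph{(ii), (iv) Gap and residual.} First, $R(0)=0$. Each block term is at least $w_j^2\min\{\phi,0\}$: if the gate factor is $g\in[0,1]$, the term is $w_j^2[-g(q+C)+\frac\lambda2\|z_j\|^2]$, which is bounded below by $w_j^2\min\{0,\phi\}$ because $q+C>0$ makes the term nonincreasing in $g$. Second, the point with every $z_j$ equal to the per-block minimizer is admissible. There every alignment equals $q_*>b$, so every gate equals one and every block attains $w_j^2\phi$. Hence
\[
 R^*=\phi\sum_jw_j^2-Cw_1^2.
\]
Since $-\phi\le C+1=2$ by $|q|<1$, this gives $R(0)-R^*\le2\sum_kw_k^2\le2\mathcal T(w)$, using $w_1=1$ and the definition of $\mathcal T$. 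For (iv), a block $j\in\mathcal I$ has a closed gate, so its term is $\frac\lambda2\|x_j\|^2\ge0$. Every other block is at least $w_j^2\phi$. Therefore $R(x)-R^*\ge-\phi\sum_{j\in\mathcal I}w_j^2\ge\sum_{j\in\mathcal I}w_j^2$.

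\emph{(iii) PL, the main step.} Let $j_\star$ be the first index with $u_{j_\star}<b$. If no such index exists, every gate is fully open and only the first case below occurs. For $j<j_\star$, the block-$j$ gradient is $\nabla_{x_j}[\text{block } j]$: the successor term $G'(u_j)(\cdots)$ vanishes because $u_j\ge b$, and $G(u_{j-1})=1$. In $z_j$ this block is $w_j^2[-q_j(z_j)+\frac\lambda2\|z_j\|^2]$ up to a constant. By $\|\nabla^2q\|_{\op}\le2$ and $\lambda=2+\sigma$, it is $\sigma$-strongly convex in $x_j$. Its excess over $w_j^2\phi$ is therefore at most $\|\nabla_{x_j}R\|^2/(2\sigma)$. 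At $j_\star$, the block gradient is $w_{j_\star}(\lambda z-c\nabla q(z))$ with $c=1+G'(u_{j_\star})(u_{j_\star+1}+C)\ge1$. Lemma~\ref{lem:radial-alignment-barrier} then gives $\|\nabla R\|\ge\chi w_{j_\star}$. The excess of all blocks $k\ge j_\star$ is at most $(2+|\phi|)\sum_{k\ge j_\star}w_k^2\le4\mathcal T(w)w_{j_\star}^2\le4\mathcal T(w)\|\nabla R\|^2/\chi^2$. Summing the two contributions, and using $\mathcal T(w)\ge1$, gives (iii) with $C_{\rm PL}^{\rm gap}=\max\{1/(2\sigma),4/\chi^2\}$. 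The delicate points are checking that the block decomposition of the excess $R(x)-R^*$ is exact, which relies on the gates before $j_\star$ being identically one, and that $c\ge1$ holds, which uses $C=1$ and $u_{j_\star+1}>-1$.
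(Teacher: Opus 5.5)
Your arguments for (i), (ii) and (iv) match the paper's. Your monotonicity in the gate factor plays the role of the paper's mixture form of $e_{j,c}$. The PL step (iii), however, has a genuine gap.

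\textbf{The false step.} You bound the excess of every block $k\ge j_\star$ by $(2+|\phi|)w_k^2$. But that excess is
\[
 E_k=w_k^2\Bigl[\tfrac\lambda2\|z_k\|^2-G(u_{k-1})(u_k+C)-\phi\Bigr],
\]
and only the gate part $G(u_{k-1})(u_k+C)$ is bounded by $2$. The quadratic $\frac\lambda2\|z_k\|^2$ is unbounded. Take $x_1=0$, so $u_1=0<b$ and $j_\star=1$, and let $\|x_2\|\to\infty$. Then $E_2=w_2^2[\frac\lambda2\|z_2\|^2+1+d_*]\to\infty$, while $(2+|\phi|)\sum_kw_k^2$ stays fixed. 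So your displayed chain is false, and block $j_\star$ itself has the same problem. Inequality (iii) is still true, because $\|\nabla_{x_k}R\|$ grows with $\|x_k\|$. The missing ingredient is that your proof never uses the gradient of the blocks at and after $j_\star$ to control their quadratic energy.

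\textbf{The repair.} Bound each block by its own gradient. Write $\nabla_{x_k}R=w_kg_k$ with $g_k=\lambda z_k-\gamma_kv_k$, where $v_k=\nabla q_{\Theta^{(k)}}(z_k)$ and
\[
 \gamma_k=c_k+\mathbf 1\{k<M\}\,G'(u_k)\Bigl(\frac{w_{k+1}}{w_k}\Bigr)^2(u_{k+1}+1)\in[0,1+2M_1].
\]
You dropped the weight ratio at $j_\star$; this is harmless for $\gamma_{j_\star}\ge1$. Since $\lambda^2\|z_k\|^2\le2\|g_k\|^2+2\gamma_k^2$, one gets $E_k\le\lambda^{-1}\|\nabla_{x_k}R\|^2+O(w_k^2)$. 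Sum this over $k\ge j_\star$, using orthogonality of the blocks, and then apply $\sum_{k\ge j_\star}w_k^2\le\mathcal T(w)w_{j_\star}^2\le\mathcal T(w)\|\nabla R\|^2/\chi^2$. This yields (iii) with a larger universal constant.

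The paper's Lemma~\ref{lem:weighted-block-gap} is a sharper version of the same idea:
\begin{itemize}
\item If $\gamma_k\ge1$ and $u_k<b$, the radial comparison plus $\sigma$-strong convexity of $\phi_k$ applies.
\item If $u_k\ge b$, then $G'(u_k)=0$, so $\gamma_k=c_k\le1$; hence $\gamma_k\ge1$ forces $c_k=\gamma_k=1$ and strong convexity applies directly.
\item If $\gamma_k<1$, then $\lambda^2\|z_k\|^2\le2\|g_k\|^2+2$.
\end{itemize}
Together these give $w_k^2e_{k,c_k}\le\frac1{2\sigma}\|\nabla_{x_k}R\|^2+\frac52w_k^2\mathbf 1\{c_k<1\}$ for every $k$, with slack only after $j_\star$.

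\textbf{Two smaller slips.}
\begin{itemize}
\item Your first-block correction should be $+Cw_1^2$, so $R^*=\phi\sum_jw_j^2+Cw_1^2$. With your sign you would only get $R(0)-R^*\le3\mathcal T(w)$, not (ii).
\item The final constant should be $\frac1{2\sigma}+\frac4{\chi^2}$ (or twice your max), not $\max\{\frac1{2\sigma},\frac4{\chi^2}\}$.
\end{itemize}
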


The offset $Cw_j^2$ carries no hidden direction.  After averaging, it makes
$Q_j+Cw_j^2=w_j^2(u_j+C)>0$.  In the decomposition below, a link
whose predecessor coefficient is zero then leaves a fixed positive residual.

Lemma~\ref{lem:radial-first-second} gives
\begin{align}
 \|\nabla Q_j\|&\le w_j,
 &\|\nabla^2Q_j\|_{\op}&\le2,\nonumber\\
 \|\nabla\widetilde Q_{ij}\|&\le\frac{w_j}{\rho},
 &\|\nabla^2\widetilde Q_{ij}\|_{\op}&\le\frac2\rho,\nonumber\\
 \|\nabla u_j\|&\le\frac1{w_j},
 &\|\nabla^2u_j\|_{\op}&\le\frac2{w_j^2}.
 \label{eq:weighted-derivatives}
\end{align}

\subsection{Decomposition of the gap by blocks}

Define the predecessor coefficient
\begin{equation}
 c_1:=1,
 \qquad
 c_j:=G(u_{j-1})\quad(2\le j\le M).
 \label{eq:weighted-predecessor-coefficient}
\end{equation}
The potential for one block,
\begin{equation}
 \phi_j(z):=\frac\lambda2\|z\|^2-q_{\Theta^{(j)}}(z)
 \label{eq:weighted-unit-block}
\end{equation}
is $\sigma$-strongly convex by~\eqref{eq:weighted-derivatives}.  Its minimum
does not depend on $j$ or $\Theta^{(j)}$ by rotational symmetry.  Write
\begin{equation}
 d_*:=-\min_z\phi_j(z).
 \label{eq:weighted-unit-depth}
\end{equation}
The minimizer is $z_j^*=t_*u_{\Theta^{(j)}}$, where
$\lambda t_*=(1+t_*^2)^{-3/2}$.  Its alignment
$q_*=t_*/\sqrt{1+t_*^2}$ satisfies $q_*>b$ by
\eqref{eq:stationary-margin}.  Moreover, $0<d_*<1$: the directional derivative
of $\phi_j(tu_{\Theta^{(j)}})$ at zero is negative, while at its finite minimizer
$z_j^*$ one has
$\phi_j(z_j^*)\ge-q_{\Theta^{(j)}}(z_j^*)>-1$.

For $c\in[0,1]$, define the block gap by the exact mixture
\begin{equation}
 e_{j,c}(z)
 :=c\bigl(\phi_j(z)+d_*\bigr)
 +(1-c)\left(\frac\lambda2\|z\|^2+1+d_*\right)
 \ge0.
 \label{eq:weighted-block-gap-mixture}
\end{equation}
Thus $e_{j,c}$ interpolates between two nonnegative terms.  Expanding
the mixture gives the equivalent formula
\begin{equation}
 e_{j,c}(z)
 =\frac\lambda2\|z\|^2-c\bigl(q_{\Theta^{(j)}}(z)+1\bigr)+1+d_*.
 \label{eq:weighted-block-gap-definition}
\end{equation}
At $c=1$, $e_{j,1}=\phi_j+d_*$ is the fully open, strongly convex block gap;
at $c=0$, $e_{j,0}(z)=\lambda\|z\|^2/2+1+d_*\ge1$ is a fixed residual.  Since
the minimizer of the first gap has alignment $q_*>b$, all block gaps can vanish
simultaneously, while every closed gate contributes objective error.

\begin{lemma}[Exact decomposition of the gap by blocks and consequences]
\label{lem:weighted-gap-decomposition}
For every $x$, with $z_j=x_j/w_j$ and $c_j$ from
\eqref{eq:weighted-predecessor-coefficient},
\begin{equation}
 R(x)-R^*
 =\sum_{j=1}^Mw_j^2e_{j,c_j}(z_j),
 \label{eq:weighted-gap-decomposition}
\end{equation}
where
\begin{equation}
 R^*=-d_*w_1^2-(1+d_*)\sum_{j=2}^Mw_j^2.
 \label{eq:weighted-optimal-value}
\end{equation}
At the origin,
\begin{equation}
 R(0)-R^*
 =d_*w_1^2+(1+d_*)\sum_{j=2}^Mw_j^2
 \le2\mathcal T(w).
 \label{eq:pl-initial-gap}
\end{equation}
If $c_j=0$ for every $j$ in an index set $\mathcal I$, then
\begin{equation}
 R(x)-R^*\ge\sum_{j\in\mathcal I}w_j^2.
 \label{eq:zero-coefficient-residual}
\end{equation}
\end{lemma}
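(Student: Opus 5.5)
The plan is to expand $R$ directly and regroup it block by block. Averaging~\eqref{eq:weighted-component} replaces every $\widetilde Q_{ij}$ by $Q_j=w_j^2u_j$, so with $z_j=x_j/w_j$ we have $\|x_j\|^2=w_j^2\|z_j\|^2$ and $u_j=q_{\Theta^{(j)}}(z_j)$. This gives
\[
 R(x)=\sum_{j=1}^M w_j^2\Bigl[\tfrac\lambda2\|z_j\|^2-c_j\bigl(q_{\Theta^{(j)}}(z_j)+C\,\mathbf 1\{j\ge2\}\bigr)\Bigr].
\]
For $j=1$ we have $c_1=1$ and no offset. The block-$1$ bracket is therefore exactly $\phi_1(z_1)=e_{1,1}(z_1)-d_*$.

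For $j\ge2$ with $C=1$, the bracket is $\tfrac\lambda2\|z_j\|^2-c_j(q+1)$. By~\eqref{eq:weighted-block-gap-definition} this equals $e_{j,c_j}(z_j)-(1+d_*)$. Summing the blocks gives $R(x)=\sum_j w_j^2e_{j,c_j}(z_j)-d_*w_1^2-(1+d_*)\sum_{j\ge2}w_j^2$.

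It remains to check that the constant term equals $R^*$. The lower bound holds because each $e_{j,c}\ge0$ by the mixture form~\eqref{eq:weighted-block-gap-mixture}. Here we use $c\in[0,1]$ (from $0\le G\le1$), $\phi_j+d_*\ge0$ by the definition of $d_*$, and $\tfrac\lambda2\|z\|^2+1+d_*>0$. Attainment follows by taking $z_j=z_j^*=t_*u_{\Theta^{(j)}}$ for every $j$. The minimizer's alignment is $q_*>b$, so $u_{j-1}\ge b$ and hence $G(u_{j-1})=1$, giving $c_j=1$ for all $j$. Then $e_{j,1}(z_j^*)=\phi_j(z_j^*)+d_*=0$. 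The infimum is thus attained and equals the constant, which proves~\eqref{eq:weighted-gap-decomposition} and~\eqref{eq:weighted-optimal-value}. The only delicate point is this simultaneous attainment, since the coefficients $c_j$ depend on the previous blocks. It is resolved by $q_*>b$, which the paper has already derived from~\eqref{eq:stationary-margin}.

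The origin formula follows by evaluating at $x=0$, where $R(0)=0$ because all $u_j=0$ and $G(0)=0$. The bound by $2\mathcal T(w)$ then uses $0<d_*<1$ and $w_1=1$. These give $d_*w_1^2+(1+d_*)\sum_{j\ge2}w_j^2\le 2\sum_{k\ge1}w_k^2=2\sum_{k\ge1}w_k^2/w_1^2\le2\mathcal T(w)$, where the last step takes $j=1$ in~\eqref{eq:weight-tail-ratio}.

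For~\eqref{eq:zero-coefficient-residual}, we drop all terms with $j\notin\mathcal I$, which is allowed by nonnegativity. For $j\in\mathcal I$ we have $c_j=0$, and then $e_{j,0}(z_j)=\tfrac\lambda2\|z_j\|^2+1+d_*\ge1$. Multiplying by $w_j^2$ and summing gives the claim.
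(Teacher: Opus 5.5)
Your proposal is correct and follows the paper's proof essentially step for step. You rewrite the averaged objective block by block in the scaled variables $z_j$ and match each bracket to $e_{j,c_j}-d_*$ or $e_{j,c_j}-(1+d_*)$, then identify $R^*$ from nonnegativity of the mixture form together with simultaneous attainment at $z_j^*$, using $q_*>b$ to open every gate, and read off the origin and residual bounds (the only cosmetic difference is that you evaluate $R(0)=0$ directly rather than through the decomposition with $c_1=1$ and $c_j=0$ for $j\ge2$).
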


\begin{proof}
The first stage contribution in~\eqref{eq:weighted-component} has no offset,
whereas every later stage contribution has offset $w_j^2$.  Consequently,
\begin{equation}
 R(x)=w_1^2+\sum_{j=1}^Mw_j^2
 \left[\frac\lambda2\|z_j\|^2-c_j(q_{\Theta^{(j)}}(z_j)+1)\right].
 \label{eq:weighted-objective-coefficient-form}
\end{equation}
Substituting~\eqref{eq:weighted-block-gap-definition} into this identity gives
\eqref{eq:weighted-gap-decomposition} with the value in
\eqref{eq:weighted-optimal-value}.  Every summand is
nonnegative by~\eqref{eq:weighted-block-gap-mixture}.  Conversely, setting
$z_j=z_j^*$ for every $j$ makes every predecessor gate fully open,
so $c_j=1$ and every summand vanishes.  This proves both the decomposition and
the displayed value of $R^*$.

At zero, $q_j(0)=0\le a$, so $c_1=1$ and $c_j=G(0)=0$ for $j\ge2$.
Equation~\eqref{eq:pl-initial-gap} follows from $0<d_*<1$, $w_1=1$, and the
definition of $\mathcal T(w)$.  Finally, if $c_j=0$, then
$e_{j,0}(z_j)=\lambda\|z_j\|^2/2+1+d_*\ge1$; retaining these terms in
\eqref{eq:weighted-gap-decomposition} proves
\eqref{eq:zero-coefficient-residual}.
\end{proof}

\subsection{Weighted gradient lower bound}

Write
\begin{equation}
 z_j=\frac{x_j}{w_j},
 \qquad
 u_j=q_{\Theta^{(j)}}(z_j),
 \qquad
 v_j=\nabla q_{\Theta^{(j)}}(z_j).
\end{equation}
Define the coefficient
\begin{equation}
 \gamma_j
 :=c_j+\ind\{j<M\}G'(u_j)
 \left(\frac{w_{j+1}}{w_j}\right)^2
 \bigl(q_{\Theta^{(j+1)}}(z_{j+1})+1\bigr).
 \label{eq:block-coefficient}
\end{equation}
Since $q_{\Theta^{(j+1)}}(z_{j+1})+1>0$, one has
$\gamma_j\ge c_j\ge0$.  Direct
differentiation gives
\begin{equation}
 \nabla_{x_j}R=w_j(\lambda z_j-\gamma_jv_j).
 \label{eq:weighted-block-gradient}
\end{equation}

\begin{lemma}[Weighted radial gradient lower bound]
\label{lem:weighted-barrier}
If $j_\star$ is the first stage with $u_{j_\star}(x_{j_\star})<b$, then
\begin{equation}
 \|\nabla R(x)\|\ge\chi w_{j_\star},
 \label{eq:weighted-barrier}
\end{equation}
where $\chi>0$ is the constant from
Lemma~\ref{lem:radial-alignment-barrier}.
\end{lemma}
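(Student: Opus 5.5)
The plan mirrors the proof of Lemma~\ref{lem:radial-barrier}, applied to the single block $j_\star$. Lower-bound the full gradient by its $x_{j_\star}$ block,
\[
\|\nabla R(x)\|\ge\|\nabla_{x_{j_\star}}R(x)\|=w_{j_\star}\,\|\lambda z_{j_\star}-\gamma_{j_\star}v_{j_\star}\|,
\]
using the block formula~\eqref{eq:weighted-block-gradient}. It then suffices to show that $\|\lambda z_{j_\star}-\gamma_{j_\star}\nabla q_{\Theta^{(j_\star)}}(z_{j_\star})\|\ge\chi$. For this we apply Lemma~\ref{lem:radial-alignment-barrier} with $u=u_{\Theta^{(j_\star)}}$, $x=z_{j_\star}$ and $c=\gamma_{j_\star}$. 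Its alignment hypothesis $q_u(z_{j_\star})=u_{j_\star}<b$ holds by the choice of $j_\star$.

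The remaining requirement is $\gamma_{j_\star}\ge1$. Since $j_\star$ is the \emph{first} stage with $u_j<b$, either $j_\star=1$, in which case $c_1=1$ by definition, or $u_{j_\star-1}\ge b$, in which case $c_{j_\star}=G(u_{j_\star-1})=1$ by the upper plateau~\eqref{eq:upper-flat-positive-orders}. The successor term in~\eqref{eq:block-coefficient} is nonnegative. Indeed, $G'\ge0$ because $G$ is nondecreasing, the weight ratio squared is positive, and $q_{\Theta^{(j_\star+1)}}(z_{j_\star+1})+1>0$ since $\|\Psi\|<1$. Hence $\gamma_{j_\star}\ge c_{j_\star}=1$, and the lemma gives the bound $\chi$. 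Multiplying by $w_{j_\star}$ yields~\eqref{eq:weighted-barrier}.

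The only delicate point is checking that~\eqref{eq:weighted-block-gradient} correctly collects all contributions to the $x_{j_\star}$ block. That block receives the stage-$j_\star$ term $-c_{j_\star}w_{j_\star}^2u_{j_\star}$ and the successor link $-G(u_{j_\star})\,w_{j_\star+1}^2(u_{j_\star+1}+C)$. Its gradient in $x_{j_\star}$ carries the chain-rule factor $\nabla_{x_{j_\star}}u_{j_\star}=v_{j_\star}/w_{j_\star}$, and with $C=1$ this yields exactly the coefficient in~\eqref{eq:block-coefficient}. The quadratic contributes $\lambda x_{j_\star}=w_{j_\star}\lambda z_{j_\star}$. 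Since these two checks are direct differentiations, no real obstacle is expected.
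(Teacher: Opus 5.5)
Your proposal is correct and follows the paper's proof: restrict to the $x_{j_\star}$ block via \eqref{eq:weighted-block-gradient}, and note that $c_{j_\star}=1$ (by definition when $j_\star=1$, and because the preceding gate is fully open otherwise), so $\gamma_{j_\star}\ge1$. Then apply Lemma~\ref{lem:radial-alignment-barrier} with $c=\gamma_{j_\star}$. Your extra checks of the block-gradient formula and of the nonnegativity of the successor term make explicit what the paper takes from the text just before the lemma.
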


\begin{proof}
If $j_\star=1$, then $c_{j_\star}=1$ by definition.  If $j_\star>1$, every
preceding gate is fully open and again $c_{j_\star}=1$.  Hence
$\gamma_{j_\star}\ge1$ in both cases.  The block
formula~\eqref{eq:weighted-block-gradient} and
Lemma~\ref{lem:radial-alignment-barrier} imply
\begin{equation}
 \|\nabla_{x_{j_\star}}R\|
 \ge w_{j_\star}\|\lambda z_{j_\star}-v_{j_\star}\|
 \ge\chi w_{j_\star}.
\end{equation}
\end{proof}

By definition of $\mathcal T(w)$, every $j_0$ obeys
\begin{equation}
 \sum_{j=j_0}^Mw_j^2
 \le\mathcal T(w)w_{j_0}^2.
 \label{eq:weight-tail}
\end{equation}
Figure~\ref{fig:weighted-pl-proof} displays the two estimates used for the
global PL inequality: the exact decomposition of the gap by blocks and the gradient bound
at the first gate that is not fully open.

\begin{figure}[H]
\centering
\includegraphics[width=\linewidth]{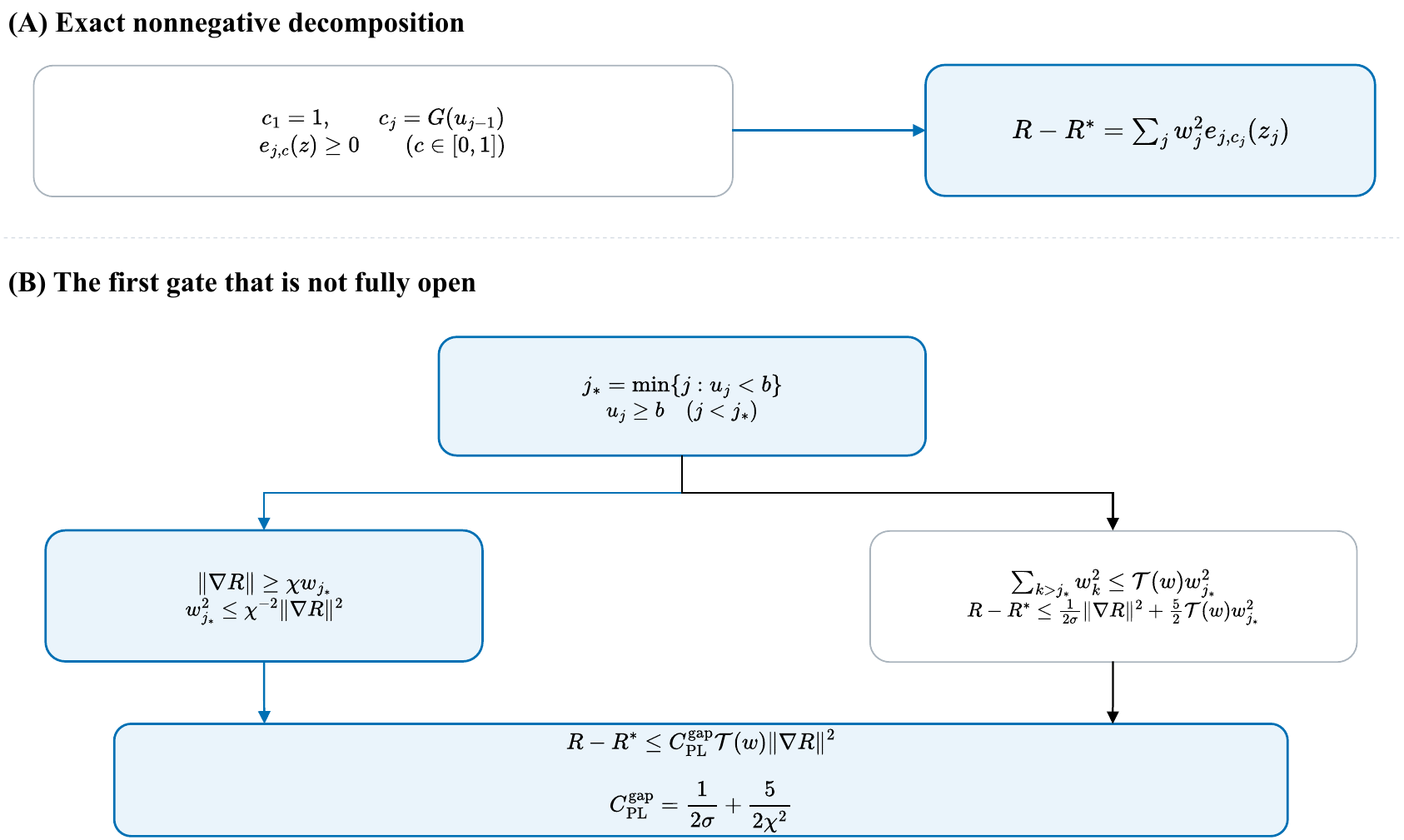}
\caption{Weighted PL inequality.  The decomposition of the gap by blocks gives
nonnegative terms.  If every gate is fully open,
strong convexity of one block directly controls the gap.  Otherwise, the first
gate that is not fully open gives the gradient barrier, while
$\mathcal T(w)$ bounds all later contributions.  The threshold $b$ for full
opening is distinct from the completion threshold $a$.}
\label{fig:weighted-pl-proof}
\end{figure}

\subsection{Global PL inequality from the block gaps}

Set
\begin{equation}
 C_{\rm PL}^{\rm gap}
 :=\frac1{2\sigma}+\frac{5}{2\chi^2}.
 \label{eq:block-gap-pl-constant}
\end{equation}

We use the standard consequence of differentiable $\sigma$-strong convexity
\begin{equation}
 f(x)-f^*\le\frac1{2\sigma}\|\nabla f(x)\|^2.
 \label{eq:strong-convex-gradient-domination}
\end{equation}
It follows by minimizing the right-hand side of
$f(y)\ge f(x)+\langle\nabla f(x),y-x\rangle
+(\sigma/2)\|y-x\|^2$ over $y$.

\begin{lemma}[Gap bound for one block]
\label{lem:weighted-block-gap}
For every block $j$ and every $x$,
\begin{equation}
 w_j^2e_{j,c_j}(z_j)
 \le\frac1{2\sigma}\|\nabla_{x_j}R(x)\|^2
 +\frac52w_j^2\mathbf 1\{c_j<1\}.
 \label{eq:weighted-block-gap-bound}
\end{equation}
\end{lemma}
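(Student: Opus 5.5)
We prove \eqref{eq:weighted-block-gap-bound} by splitting into the cases $c_j=1$ and $c_j<1$. Throughout, we use the block gradient formula \eqref{eq:weighted-block-gradient}, $\nabla_{x_j}R=w_j(\lambda z_j-\gamma_jv_j)$ with $\gamma_j\ge c_j$.

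\emph{Case $c_j=1$.} Here $e_{j,1}=\phi_j+d_*$ is the gap of the $\sigma$-strongly convex block potential $\phi_j$. By \eqref{eq:strong-convex-gradient-domination},
\[
e_{j,1}(z_j)\le\frac1{2\sigma}\|\nabla\phi_j(z_j)\|^2=\frac1{2\sigma}\|\lambda z_j-v_j\|^2.
\]
The difficulty is that $\nabla_{x_j}R/w_j=\lambda z_j-\gamma_jv_j$ carries the extra term $(\gamma_j-1)v_j$, which comes from $G'(u_j)$ of the successor link.

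\begin{itemize}
\item If $u_j\ge b$, then $G'(u_j)=0$, so $\gamma_j=1$. We get $w_j^2e_{j,1}\le\frac1{2\sigma}\|\nabla_{x_j}R\|^2$ exactly.
\item If $u_j<b$, apply the radial lemma with $c=\gamma_j\ge1$. It gives $\|\lambda z_j-\gamma_jv_j\|\ge\|\lambda z_j-v_j\|$, because the quadratic in $c$ from \eqref{eq:radial-c-quadratic} is increasing on $c\ge1$ when $q\le b$. (For $q\le0$ this is Case 2; for $0\le q\le b$ the minimizer $c_0<1$.) Hence again $e_{j,1}\le\frac1{2\sigma}\|\lambda z_j-\gamma_jv_j\|^2$, and multiplying by $w_j^2$ gives the claim with no residual term.
\end{itemize}
I expect this sub-case to be the main point. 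The monotonicity in $c$ must be taken from the proof of Lemma~\ref{lem:radial-alignment-barrier}, not merely from its stated inequality.

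\emph{Case $c_j<1$.} Use the representation \eqref{eq:weighted-block-gap-definition}:
\[
e_{j,c}(z)=\frac\lambda2\|z\|^2-c(q+1)+1+d_*.
\]
Since $q>-1$, $c\ge0$ and $d_*<1$, this gives $e_{j,c}\le\frac\lambda2\|z\|^2+2$. It remains to bound $\frac\lambda2\|z\|^2$ by the gradient plus $\frac12$.

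Write $g=\lambda z-\gamma v$ with $\|v\|\le1$ and $\gamma\ge0$. Note that $\gamma\le c_j+G'(u_j)\cdot 2\le 1+2M_1$ is bounded, but this bound is too crude. A cleaner route uses $\langle z,v\rangle=q\zeta$, so that
\[
\|g\|^2\ge\lambda^2\|z\|^2-2\lambda\gamma\,q\zeta .
\]
When $q\le0$ this immediately gives $\lambda^2\|z\|^2\le\|g\|^2$, hence
\[
\frac\lambda2\|z\|^2\le\frac1{2\lambda}\|g\|^2\le\frac1{2\sigma}\|g\|^2 .
\]
When $q>0$, bound the cross term instead via Young's inequality: $\lambda\|z\|\le\|g\|+\gamma\|v\|$.

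Since the target constant $\frac52$ allows $\frac12$ of slack, aim for $\frac\lambda2\|z\|^2\le\frac1{2\sigma}\|g\|^2+\frac12$. Use $\lambda^2\|z\|^2\le(1+\eta)\|g\|^2+(1+\eta^{-1})\gamma^2\|v\|^2$. Also use that $\|v\|^2=\zeta(1-q^2-q^2\zeta)$ decays like $\|z\|^{-2}$, so that $\gamma^2\|v\|^2\|z\|^2$-type terms are bounded. If $\gamma$ can be large (when $G'(u_j)>0$, which also forces $u_j\in(a,b)$), the decay of $\|v\|$ at large $\|z\|$ still absorbs it. Checking this numerical absorption with $1/(2\sigma)=8$ is the step I would verify carefully.

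Multiplying the resulting bound by $w_j^2$ gives \eqref{eq:weighted-block-gap-bound}.
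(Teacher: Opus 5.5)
Your case $c_j=1$ is correct and is the paper's argument. The comparison $\|\lambda x-c\nabla q_u(x)\|\ge\|\lambda x-\nabla q_u(x)\|$ for $c\ge1$ and $q_u(x)\le b$ is already part of the statement \eqref{eq:radial-norm-comparison}, so you need not reopen the proof of Lemma~\ref{lem:radial-alignment-barrier}.

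The gap is in your case $c_j<1$ with $u_j>0$. There $\gamma_j$ is not bounded by $1$. When $u_j\in(a,b)$, the successor term $G'(u_j)(w_{j+1}/w_j)^2\bigl(q_{\Theta^{(j+1)}}(z_{j+1})+1\bigr)$ can make $\gamma_j$ of order $M_1$, and your Young step cannot absorb this. The coefficient of $\|g_j\|^2$ may be at most $1/(2\sigma)=8$, which forces $1+\eta\le16\lambda$. The remainder is then at least $\tfrac14\gamma_j^2\|v_j\|^2$, and for moderate $\|z_j\|$ this is far above $\tfrac12$. For example, take $u_j=1/5$, $\|z_j\|=4$, $\gamma_j=20$ and $\zeta=(1+\|z_j\|^2)^{-1}$. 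Then $\|v_j\|^2=\zeta(1-u_j^2-u_j^2\zeta)>0.056$, so the remainder exceeds $5$. At the same time $\tfrac\lambda2\|z_j\|^2=16.5$, so bounding $\tfrac\lambda2\|z_j\|^2$ directly by the slack $\tfrac12$ fails as well. The decay $\|v_j\|^2\le(1+\|z_j\|^2)^{-1}$ only makes the remainder small once $\|z_j\|$ is of order $\gamma_j$. The inequality you are aiming for is in fact true, but for a reason Young's inequality throws away: $v_j$ is nearly orthogonal to $z_j$, since $\langle z_j,v_j\rangle=u_j\zeta$ with $u_j<b$.

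The missing idea is to split on $\gamma_j$ rather than on $c_j$, which is what the paper does.

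\textbf{If $\gamma_j\ge1$.} Either $u_j\ge b$, in which case $G'(u_j)=0$ forces $c_j=\gamma_j=1$, or $u_j<b$, in which case \eqref{eq:radial-norm-comparison} applies with $c=\gamma_j$. In both situations strong convexity of $\phi_j$ gives $e_{j,1}(z_j)\le\frac1{2\sigma}\|\lambda z_j-\gamma_jv_j\|^2$. The identity $e_{j,c_j}=e_{j,1}+(1-c_j)\bigl(q_{\Theta^{(j)}}(z_j)+1\bigr)$ then adds at most $2\,\mathbf 1\{c_j<1\}$.

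\textbf{If $\gamma_j<1$.} Then $c_j<1$ and $\|\gamma_jv_j\|<1$, so $\lambda^2\|z_j\|^2\le2\|g_j\|^2+2$. Hence $e_{j,c_j}\le\frac1\lambda\|g_j\|^2+2+\frac1\lambda\le\frac1{2\sigma}\|g_j\|^2+\frac52$.

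You could also keep your own split and repair it in either of two ways. First, in the subcase $c_j<1\le\gamma_j$, which forces $u_j<b$, use \eqref{eq:radial-norm-comparison} to replace $\gamma_j$ by $1$, then apply the triangle inequality with $\|v_j\|\le1$. Second, minimize $\|\lambda z_j-\gamma v_j\|^2$ exactly over $\gamma$, using $\langle z_j,v_j\rangle=u_j\zeta$ and $\|v_j\|^2=\zeta(1-u_j^2-u_j^2\zeta)$. This reduces the claim to the numerical check $\frac\lambda2\,\frac{u_j^2}{1-u_j^2}\le\frac12$ for $0<u_j\le b$.
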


\begin{proof}
\emph{Case 1: $\gamma_j\ge1$.}
If $u_j<b$, the radial
comparison~\eqref{eq:radial-norm-comparison} and the strong convexity of
$\phi_j$ apply because
$e_{j,1}(z_j)=\phi_j(z_j)+d_*=\phi_j(z_j)-\min_z\phi_j(z)$, and give
\begin{equation}
 e_{j,1}(z_j)
 \le\frac1{2\sigma}\|\lambda z_j-v_j\|^2
 \le\frac1{2\sigma}\|\lambda z_j-\gamma_jv_j\|^2.
\end{equation}
Since
$e_{j,c_j}=e_{j,1}+(1-c_j)(q_{\Theta^{(j)}}(z_j)+1)$ and
$q_{\Theta^{(j)}}(z_j)+1\le2$, this proves~\eqref{eq:weighted-block-gap-bound}.
If instead $u_j\ge b$, then $G'(u_j)=0$ and
$\gamma_j=c_j\le1$.  Thus $\gamma_j\ge1$ forces
$c_j=\gamma_j=1$, and the same bound from strong convexity applies without slack.

\emph{Case 2: $0\le\gamma_j<1$.}
Here $c_j\le\gamma_j<1$, so the indicator in
\eqref{eq:weighted-block-gap-bound} equals one.  Since
$q_{\Theta^{(j)}}(z_j)+1\ge0$ and $d_*<1$,
\begin{equation}
 e_{j,c_j}(z_j)\le\frac\lambda2\|z_j\|^2+2.
\end{equation}
Write $g_j:=\lambda z_j-\gamma_jv_j$.  Because
$\lambda z_j=g_j+\gamma_jv_j$, $\gamma_j<1$, and $\|v_j\|\le1$,
\begin{equation}
 \lambda^2\|z_j\|^2\le2\|g_j\|^2+2.
\end{equation}
Consequently,
\begin{equation}
 e_{j,c_j}(z_j)
 \le\frac1\lambda\|g_j\|^2+2+\frac1\lambda
 \le\frac1{2\sigma}\|g_j\|^2+\frac52.
\end{equation}
Multiplying by $w_j^2$ and using
$\nabla_{x_j}R=w_jg_j$ proves the claim.
\end{proof}

\begin{proof}[Proof of Proposition~\ref{prop:weighted-chain-properties},
item~\ref{item:weighted-property-pl}]
\emph{Case 1: every gate is fully open.}
Then $c_j=1$ for every $j$.
Summing Lemma~\ref{lem:weighted-block-gap} and using
\eqref{eq:weighted-gap-decomposition} gives
\begin{equation}
 R-R^*\le\frac1{2\sigma}\|\nabla R\|^2,
\end{equation}
which implies~\eqref{eq:weighted-property-pl} because $\mathcal T(w)\ge1$.

\emph{Case 2: a gate is not fully open.}
Let $j_\star$ be the first index with $u_{j_\star}<b$.  Every
predecessor gate for a block $j\le j_\star$ is fully open, so
$c_j=1$ for these blocks.  Summing Lemma~\ref{lem:weighted-block-gap} therefore
gives
\begin{equation}
\begin{aligned}
 R-R^*
 &\le\frac1{2\sigma}\|\nabla R\|^2
 +\frac52\sum_{j=j_\star+1}^Mw_j^2\\
 &\le\frac1{2\sigma}\|\nabla R\|^2
 +\frac52\mathcal T(w)w_{j_\star}^2.
\end{aligned}
\end{equation}
Lemma~\ref{lem:weighted-barrier} gives
$w_{j_\star}^2\le\chi^{-2}\|\nabla R\|^2$.  Therefore
\begin{equation}
\begin{aligned}
 R-R^*
 &\le\left(\frac1{2\sigma}
 +\frac{5}{2\chi^2}\mathcal T(w)\right)
 \|\nabla R\|^2\\
 &\le C_{\rm PL}^{\rm gap}\mathcal T(w)\|\nabla R\|^2,
\end{aligned}
\end{equation}
where the second line uses $\mathcal T(w)\ge1$ and
\eqref{eq:block-gap-pl-constant}.  This is~\eqref{eq:weighted-property-pl}.

Lemma~\ref{lem:radial-alignment-barrier} gives $\chi>5/18$ and hence
$\chi^2>1/13$.  Substitution into~\eqref{eq:block-gap-pl-constant} yields
\begin{equation}
 C_{\rm PL}^{\rm gap}
 <8+\frac{65}{2}=\frac{81}{2}<41.
\end{equation}
\end{proof}

\subsection{Verification of individual smoothness}

\begin{lemma}[Explicit weighted Hessian bound]
\label{lem:weighted-smoothness}
For every $\rho\in\mathcal R_n$ satisfying
$0<\rho\le\overline\rho=17/512$, let $H_{\rm main}=341$ be the constant
defined in~\eqref{eq:explicit-component-H}.
Then every component satisfies
\begin{equation}
 \sup_x\|\nabla^2R_i(x)\|_{\op}\le\frac {H_{\rm main}}\rho.
 \label{eq:weighted-component-smoothness}
\end{equation}
The bound is independent of $M,D,\mathcal T(w)$, and the
smallest weight.
\end{lemma}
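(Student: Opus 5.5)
Our plan follows the proof of Lemma~\ref{lem:uniform-smoothness}: bound each block of the Hessian of one gated link with Lemma~\ref{lem:gated-link-hessian}, then sum a global block row using \eqref{eq:block-row-operator-bound}. The point to check is that the weights $w_j$ cancel exactly, so no factor depending on $w$ or $\mathcal T(w)$ appears.

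\emph{Single link.} For $j\ge2$, the link is $-G(u_{j-1}(x_{j-1}))\bigl(\widetilde Q_{ij}(x_j)+Cw_j^2\bigr)$. We apply Lemma~\ref{lem:gated-link-hessian} with $h=u_{j-1}$, $P=\widetilde Q_{ij}$ and $c=Cw_j^2$. From \eqref{eq:weighted-derivatives} and $|\widetilde Q_{ij}|\le w_j^2/\rho$,
\[
 (B_0,B_1,B_2)=\Bigl(\frac{w_j^2}\rho,\frac{w_j}\rho,\frac2\rho\Bigr),
 \qquad
 (g_1,g_2)=\Bigl(\frac1{w_{j-1}},\frac2{w_{j-1}^2}\Bigr).
\]
The current diagonal block is then at most $2/\rho$. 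The cross block is at most $M_1 w_j/(\rho w_{j-1})\le M_1/\rho$, using monotonicity $w_j\le w_{j-1}$. The previous diagonal block is at most
\[
 \frac{w_j^2(1+C\rho)}{\rho}\cdot\frac{M_2+2M_1}{w_{j-1}^2}\le\frac{(1+C\rho)(M_2+2M_1)}\rho .
\]
The first term $-\widetilde Q_{i1}$ has Hessian norm at most $2/\rho$, and the quadratic contributes $\lambda I$.

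\emph{Block row sum.} Block row $j$ contains at most $\lambda$, one current-diagonal term ($2/\rho$), one successor-induced previous-diagonal term, and two cross terms. Its sum is at most
\[
 \lambda+\frac2\rho+\frac{(1+C\rho)(M_2+2M_1)}\rho+\frac{2M_1}\rho\le\frac{\mathsf H(\rho)}\rho,
\]
exactly as in Lemma~\ref{lem:uniform-smoothness}. With $\rho\le17/512$, $M_1\le17$, $M_2\le259$ and $\lambda\rho\le\lambda$, we obtain $\mathsf H(\rho)\le H_{\rm main}=341$. This bound does not depend on $M$, $D$ or the weights.

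\emph{Main obstacle.} The delicate step is the previous-diagonal block. There $g_1^2,g_2\sim w_{j-1}^{-2}$ must be compensated by $B_0+|c|\sim w_j^2$, and this requires $w_j\le w_{j-1}$. We also need to confirm that the offset $Cw_j^2$ is written as $w_j^2(1+C\rho)/\rho$ rather than bounded loosely. We would state the nonincreasing-weight assumption \eqref{eq:weights} explicitly at that step.
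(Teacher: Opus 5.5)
Your proposal is correct and matches the paper's proof essentially step for step. You use the same parameters in Lemma~\ref{lem:gated-link-hessian}, the same cancellation via $w_j/w_{j-1}\le1$ in the cross and predecessor-diagonal blocks, and the same block-row sum giving $\mathsf H(\rho)\le\mathsf H(\overline\rho)<341$.
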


\begin{proof}
Apply Lemma~\ref{lem:gated-link-hessian} to one link
$-G(u_{j-1})(\widetilde Q_{ij}+Cw_j^2)$ with
\begin{equation}
 (B_0,B_1,B_2)
 =\left(\frac{w_j^2}{\rho},\frac{w_j}{\rho},\frac2\rho\right),
 \quad
 (g_1,g_2)=\left(\frac1{w_{j-1}},\frac2{w_{j-1}^2}\right),
 \quad c=Cw_j^2.
\end{equation}
For this calculation,
\begin{equation}
 M_1:=\|G'\|_\infty\le17,
 \qquad
 M_2:=\|G''\|_\infty\le259.
 \label{eq:pl-local-gate-derivatives}
\end{equation}
Since $\rho\le\overline\rho$ and $w_j/w_{j-1}\le1$, the current diagonal,
cross, and predecessor diagonal bounds are at most
\begin{equation}
 \frac2\rho,\qquad
 \frac{M_1}\rho,\qquad
 \frac{(1+C\overline\rho)(M_2+2M_1)}\rho.
\end{equation}
The first stage contribution is at most $2/\rho$, and the quadratic contributes
$\lambda I$.  Since $\rho\le1$, the block row bound is
\begin{equation}
 \max_j\sum_k\|\nabla^2_{x_jx_k}R_i(x)\|_{\op}
 \le\frac{\lambda+2+(1+C\overline\rho)(M_2+2M_1)+2M_1}{\rho}
 \le\frac{H_{\rm main}}{\rho},
\end{equation}
which proves~\eqref{eq:weighted-component-smoothness}.
\end{proof}

\begin{proof}[Proof of the remaining items of
Proposition~\ref{prop:weighted-chain-properties}]
Item~\ref{item:weighted-property-smoothness} is
Lemma~\ref{lem:weighted-smoothness}.  Items~\ref{item:weighted-property-gap}
and~\ref{item:weighted-property-residual} follow from
Lemma~\ref{lem:weighted-gap-decomposition}.
\end{proof}

\subsection{Scaling, condition number, and information cost}

With the optimized bias and $r=\lfloor n/512\rfloor$, this subsection
verifies
\begin{equation*}
\begin{aligned}
 \rho&=\Theta(n^{-1/2}),
 &1-r_{\rm w}&=\Theta\!\left(\frac{\sqrt n}{\kappa_{\max}}\right),\\
 J&=\Theta\!\left(\frac{\kappa_{\max}}{\sqrt n}
     \log\frac{\Delta}{\varepsilon}\right),
 &rJ&=\Theta\!\left(\kappa_{\max}\sqrt n
     \log\frac{\Delta}{\varepsilon}\right).
\end{aligned}
\end{equation*}

\subsubsection{Weight geometry and stage count}

\emph{Geometric weights.}
Let $\kappa_{\max}=L_{\max}/\mu$ and set
\begin{equation}
 \Gamma:=2H_{\rm main}C_{\rm PL}^{\rm gap}.
 \label{eq:pl-gamma}
\end{equation}
The bounds $H_{\rm main}=341$ and $C_{\rm PL}^{\rm gap}<41$ give
\begin{equation}
 \Gamma<28000.
 \label{eq:gamma-upper-bound}
\end{equation}
Use throughout the weighted branch the optimized choice in
\eqref{eq:m-rho}.  Thus
\begin{equation}
 \rho=\frac mn,
 \qquad
 \sqrt n\le m<\sqrt n+2,
 \qquad
 \rho<\frac{17}{16\sqrt n}\le\overline\rho.
 \label{eq:weighted-fixed-bias}
\end{equation}
Assume
\begin{equation}
 \kappa_{\max}\ge\kappa_{\rm w}(n):=2\Gamma\sqrt n.
 \label{eq:weighted-kappa-threshold}
\end{equation}
This explicit threshold comes from the smoothness and PL bounds above.  The
two-scale construction covers the bounded transition interval, so the theorem
still changes regimes at $\kappa_{\max}=\sqrt n$.
Define the ratio of consecutive squared weights
\begin{equation}
 r_{\rm w}:=1-\frac{\Gamma}{\kappa_{\max}\rho}.
 \label{eq:weighted-geometric-ratio}
\end{equation}
Because $\rho\ge n^{-1/2}$, the threshold above gives
\begin{equation}
 \frac12\le r_{\rm w}<1,
 \qquad
 0<1-r_{\rm w}\le\frac12.
 \label{eq:weighted-ratio-range}
\end{equation}

\emph{Accuracy scale and stage count.}
We may decrease the universal $c_\varepsilon$ in the theorem so that
\begin{equation}
 \bar\varepsilon:=\frac{64\varepsilon}{\Delta}\le\frac14
 \qquad\text{and}\qquad
 \log\frac1{\bar\varepsilon}
 =\Theta\!\left(\log\frac{\Delta}{\varepsilon}\right).
 \label{eq:weighted-accuracy-range}
\end{equation}
The first inequality follows from $\varepsilon\le\Delta/256$.
Choose
\begin{equation}
 J:=\left\lfloor
 \frac{\log(1/\bar\varepsilon)}{-\log r_{\rm w}}
 \right\rfloor,
 \qquad
 M:=2J+1,
 \qquad
 w_j:=r_{\rm w}^{(j-1)/2}.
 \label{eq:weighted-stage-choice}
\end{equation}
The quantity inside the floor is at least two, so $J\ge2$.  The definition also
gives
\begin{equation}
 \bar\varepsilon\le r_{\rm w}^J
 <\frac{\bar\varepsilon}{r_{\rm w}}
 \le2\bar\varepsilon\le\frac12.
 \label{eq:weighted-floor-window}
\end{equation}

\emph{Value of $\mathcal T(w)$.}
For this geometric sequence,
\begin{equation}
 \mathcal T(w)
 \le\sum_{k=0}^{\infty}r_{\rm w}^k
 =\frac1{1-r_{\rm w}}
 =\frac{\kappa_{\max}\rho}{\Gamma}.
 \label{eq:weighted-geometric-tail-ratio}
\end{equation}

\subsubsection{Scaling, the gap from an unrevealed suffix, and stage count}

\emph{Rescale to the target gap and smoothness.}
A global rescaling $F(y)=\alpha R(\beta y)$ multiplies both the smoothness and
PL constants by $\alpha\beta^2$, and therefore preserves their ratio:
\begin{equation}
 L_F=\alpha\beta^2L_R,
 \qquad
 \mu_F=\alpha\beta^2\mu_R,
 \qquad
 \frac{L_F}{\mu_F}=\frac{L_R}{\mu_R}.
 \label{eq:global-scaling-condition-number}
\end{equation}
Thus the choice of $r_{\rm w}$ fixes the ratio between the PL and smoothness
constants.  The two global scales below enforce the initial gap and individual
smoothness bounds.  Define
\begin{equation}
 f_i(y)=\alpha R_i(\beta y),
 \qquad
 \alpha=\frac{\Delta(1-r_{\rm w})}{2},
 \qquad
 \beta^2=\frac{L_{\max}\rho}{\alpha H_{\rm main}}.
 \label{eq:pl-scaling}
\end{equation}
Let
\begin{equation}
 F(y):=\frac1n\sum_{i=1}^nf_i(y)=\alpha R(\beta y).
 \label{eq:weighted-scaled-average}
\end{equation}
Lemma~\ref{lem:weighted-smoothness} makes every component
$L_{\max}$-smooth, while~\eqref{eq:pl-initial-gap} and
\eqref{eq:weighted-geometric-tail-ratio} give $F(0)-F^*\le\Delta$.

\emph{PL constant after scaling.}
To track the PL constant under scaling, use
$\nabla F(y)=\alpha\beta\nabla R(\beta y)$ and
\begin{equation}
\begin{aligned}
 F(y)-F^*
 &=\alpha\bigl(R(\beta y)-R^*\bigr)\\
 &\le\frac{\alpha}{2\mu_0}\|\nabla R(\beta y)\|^2
 =\frac1{2\alpha\beta^2\mu_0}\|\nabla F(y)\|^2.
\end{aligned}
\label{eq:pl-scaling-identity}
\end{equation}
Proposition~\ref{prop:weighted-chain-properties},
item~\ref{item:weighted-property-pl}, and
\eqref{eq:weighted-geometric-tail-ratio} give
$\mu_0\ge(1-r_{\rm w})/(2C_{\rm PL}^{\rm gap})$.  Therefore the scaled PL
constant is at
least
\begin{equation}
 \alpha\beta^2\mu_0
 \ge\frac{L_{\max}\rho(1-r_{\rm w})}
 {2H_{\rm main}C_{\rm PL}^{\rm gap}}
 =\frac{L_{\max}}{\kappa_{\max}}
 =\mu,
 \label{eq:scaled-pl-constant}
\end{equation}
where the middle identity uses
$1-r_{\rm w}=\Gamma/(\kappa_{\max}\rho)$ and
$\Gamma=2H_{\rm main}C_{\rm PL}^{\rm gap}$.

\emph{Gap from the unrevealed suffix and stage count.}
Stage $J+1$ separates the first $J$ stages used in the call count from the
hidden suffix $j=J+2,\ldots,2J+1$, whose total weight is
\begin{equation}
 \sum_{j=J+2}^{2J+1}w_j^2
 =\frac{r_{\rm w}^{J+1}(1-r_{\rm w}^J)}{1-r_{\rm w}}.
 \label{eq:terminal-suffix-weight}
\end{equation}
If $u_k\le a$ for every $k>J$, then the predecessor of every block in this
suffix is below the lower gate threshold.  Thus $c_j=0$ throughout the hidden
suffix.  Hence the exact decomposition of the gap by blocks gives
$R-R^*\ge\sum_{j=J+2}^{2J+1}w_j^2$.  Substituting
$\alpha=\Delta(1-r_{\rm w})/2$ and
$\bar\varepsilon\le r_{\rm w}^J<2\bar\varepsilon\le1/2$, and using
$r_{\rm w}\ge1/2$, yields
\begin{equation}
 F(y)-F^*
 \ge\frac{\Delta}{2}r_{\rm w}^{J+1}(1-r_{\rm w}^J)
 \ge\frac{\Delta}{8}\bar\varepsilon
 =8\varepsilon.
 \label{eq:terminal-suffix-residual-scale}
\end{equation}

For $0<1-r_{\rm w}\le1/2$,
\begin{equation}
 1-r_{\rm w}\le-\log r_{\rm w}\le2(1-r_{\rm w}).
\end{equation}
Since the quantity inside the floor in~\eqref{eq:weighted-stage-choice} is at
least two, flooring loses at most a factor of two.  Consequently,
\begin{equation}
 \frac{\kappa_{\max}\rho}{4\Gamma}\log\frac1{\bar\varepsilon}
 \le J
 \le\frac{\kappa_{\max}\rho}{\Gamma}\log\frac1{\bar\varepsilon},
 \qquad
 J=\Theta\!\left(
 \kappa_{\max}\rho\log\frac{\Delta}{\varepsilon}
 \right).
 \label{eq:weighted-stage-count}
\end{equation}

\subsubsection{Sequential revelation, query cost, and dimension}

\emph{Information cost.}
The one-stage result at the optimized bias now applies directly.  Set
$r=\lfloor n/512\rfloor$.  Since $n\ge1024$, one has $r\ge n/1024$;
therefore~\eqref{eq:weighted-fixed-bias} and the lower bound in
\eqref{eq:weighted-stage-count} give
\begin{equation}
 rJ
 \ge\frac{\kappa_{\max}m}{4096\Gamma}\log\frac1{\bar\varepsilon}
 \ge\frac{\kappa_{\max}\sqrt n}{4096\Gamma}\log\frac1{\bar\varepsilon}.
 \label{eq:weighted-row-cost}
\end{equation}

For $h\in\{0,\ldots,M\}$, let $R_i^{[h]}$ keep the first stage contribution and full
quadratic sum in~\eqref{eq:weighted-component}, but truncate the sum of gated links
at $\min\{h+1,M\}$.  Thus future blocks retain their deterministic quadratic
gradients while omitted gated stage contributions are removed.  With
$V_j(y_j)=\Psi(\beta y_j/w_j)$, table independence, the dependence of each
component on its assigned table row, the condition $G=G'=0$ for omitted gates,
and the call accounting rule give \emph{(R1)--(R4)} of
Corollary~\ref{cor:optimized-radial-chain}.  Take an integer $D$
satisfying
\begin{equation}
 D\ge\max\left\{
 2048[1+\log(N+3)],\quad
 \frac{2048}{25}\log\bigl(128(N+1)M\bigr)
 \right\}.
 \label{eq:weighted-dimension-requirement}
\end{equation}
The case $N=0$ is immediate, so assume $N\ge1$.
The corollary, with $M_{\rm tot}=M=2J+1$, gives
\begin{equation}
 \Prob(\nu_N<J,\ \cEfuture^c)\ge\frac{95}{128}
 \quad\text{whenever}\quad
 N\le\frac{rJ}{2}.
 \label{eq:weighted-query-budget}
\end{equation}
On this event, Corollary~\ref{cor:optimized-radial-chain} gives
\begin{equation}
 u_j(\beta\widehat X_j)\le a
 \qquad\forall j>J.
\end{equation}
Equation~\eqref{eq:terminal-suffix-residual-scale} then gives
\begin{equation}
 F(\widehat X)-F^*\ge8\varepsilon.
\end{equation}
Together with~\eqref{eq:weighted-row-cost}, this proves the
$\kappa_{\max}\sqrt n\log(\Delta/\varepsilon)$ term.

Equations~\eqref{eq:weighted-fixed-bias} and
\eqref{eq:weighted-stage-count} give
$M=O((\kappa_{\max}/\sqrt n)\log(\Delta/\varepsilon))$.  Since $d=MD$,
and both $N+3$ and $M$ are bounded by a universal multiple of the quantity
inside the logarithm below, the two logarithmic requirements can be absorbed
into one:
\begin{equation}
 d=O\!\left[
 \frac{\kappa_{\max}}{\sqrt n}
 \log\frac{\Delta}{\varepsilon}
 \log\!\left(
 2+n+\kappa_{\max}\sqrt n
 \log\frac{\Delta}{\varepsilon}
 \right)
 \right].
 \label{eq:pl-dimension}
\end{equation}

\paragraph{Conclusion for the weighted chain.}
The preceding calculation establishes, after scaling,
\begin{equation*}
\begin{aligned}
 &\max_i\operatorname{Lip}(\nabla f_i)\le L_{\max},
 \qquad F(0)-F^*\le\Delta,
 \qquad F\ \text{is $\mu$-PL},\\
 &N\le rJ/2
 \ \Longrightarrow\ 
 \Prob\bigl(F(\widehat X)-F^*\ge8\varepsilon\bigr)\ge\frac{95}{128},
 \qquad
 rJ=\Omega\!\left(\kappa_{\max}\sqrt n\log\frac\Delta\varepsilon\right).
\end{aligned}
\end{equation*}

\subsection{Auxiliary \texorpdfstring{$\Omega(n)$}{Omega(n)} bound}

\begin{lemma}[PL auxiliary quadratic lower bound]
\label{lem:pl-sample-size}
For every $n\ge n_0=1024$ and integer $0\le N\le\lfloor n/4\rfloor$, let
$\mu=L_{\max}/\kappa_{\max}$, where $\kappa_{\max}\ge1$, and suppose
$0<\varepsilon\le\Delta/128$.  There exists a distribution over individually
$L_{\max}$-smooth finite sums in dimension $4$ whose averages are globally $\mu$-PL and have
initial gap $\Delta$ such that every algorithm making at most $N$ calls
satisfies
\begin{equation}
 \Prob\bigl(F(\widehat X)-F^*<8\varepsilon\bigr)\le\frac5{16}.
 \label{eq:pl-sample-size-success}
\end{equation}
\end{lemma}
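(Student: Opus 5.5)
We will instantiate the auxiliary quadratic of Lemma~\ref{lem:dense-quadratic-row-bound} with $(m_0,\rho_0,D_0)$ exactly as there, choosing $\gamma:=\mu$ and $\tau:=\sqrt{2\mu\Delta}$. Each component $f_i^{\rm quad}$ has Hessian $\mu I$. Since $\mu=L_{\max}/\kappa_{\max}\le L_{\max}$, every component is $L_{\max}$-smooth. The average is
\[
 F_{\rm quad}(x)=\frac\mu2\|x\|^2-\tau\langle u,x\rangle,
\]
with $\|u\|=1$. It is $\mu$-strongly convex, hence globally $\mu$-PL by \eqref{eq:strong-convex-gradient-domination}. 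Its minimizer is $x^*=\tau u/\mu$, and the initial gap is $F_{\rm quad}(0)-F^*=\tau^2/(2\mu)=\Delta$. The distribution over instances is the law of $(\Theta,S)$ under $P$ with $(m,D)=(m_0,D_0)$, so the dimension is $4$.

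Next we translate the objective gap into the decoding event of the lemma. For a quadratic with Hessian $\mu I$, the gap is exactly
\[
 F(\widehat X)-F^*=\frac1{2\mu}\|\mu\widehat X-\tau u\|^2.
\]
Suppose $F(\widehat X)-F^*<8\varepsilon$. Then
\[
 \|\mu\widehat X-\tau u\|^2<16\mu\varepsilon\le\frac{16\mu\Delta}{128}=\frac{\tau^2}{16},
\]
using $\varepsilon\le\Delta/128$ and $\tau^2=2\mu\Delta$. This gives $\|\gamma\widehat X-\tau u\|<\tau/4$. So the event in \eqref{eq:pl-sample-size-success} is contained in the event of \eqref{eq:low-bias-quadratic-success}, and Lemma~\ref{lem:dense-quadratic-row-bound} bounds its probability by $5/16$ for every algorithm making $N\le\lfloor n/4\rfloor$ calls.

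The only step that needs care is the constant bookkeeping, specifically matching the threshold $8\varepsilon$ with the radius $\tau/4$. The hypothesis $\varepsilon\le\Delta/128$ is exactly what makes these agree. The information argument itself is inherited unchanged from Lemma~\ref{lem:dense-quadratic-row-bound}, because that lemma holds for every $\gamma,\tau>0$ and its mutual-information bound does not depend on $\gamma$ or $\tau$. We also note that the assumption $\kappa_{\max}\ge1$ is used only to ensure $\mu\le L_{\max}$.
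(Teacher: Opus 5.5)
Your proposal is correct and follows the paper's proof essentially line by line: the same instantiation $\gamma=\mu$, $\tau=\sqrt{2\mu\Delta}$ of Lemma~\ref{lem:dense-quadratic-row-bound}, the same completed-square identity $F(x)-F^*=\|\mu x-\tau u\|^2/(2\mu)$, and the same use of $\varepsilon\le\Delta/128$ to place the success event inside $\{\|\gamma\widehat X-\tau u\|\le\tau/4\}$. The only cosmetic difference is that you derive the PL property from strong convexity, whereas the paper notes the exact identity $\|\nabla F(x)\|^2=2\mu(F(x)-F^*)$ for this quadratic.
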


\begin{proof}
Instantiate the quadratic $f_i^{\rm quad}$ from
Lemma~\ref{lem:dense-quadratic-row-bound} with
$\gamma=\mu$ and $\tau=(2\mu\Delta)^{1/2}$, and write
$F:=F_{\rm quad}$.  Each component has Hessian
$\mu I\preceq L_{\max}I$, while its average satisfies
\begin{equation}
 F(0)-F^*=\frac{\tau^2}{2\mu}=\Delta,
 \qquad
 \|\nabla F(x)\|^2=2\mu(F(x)-F^*).
\end{equation}
Completing the square gives
\begin{equation}
 F(x)-F^*=\frac1{2\mu}\|\mu x-\tau u\|^2.
\end{equation}
Hence the success event implies
\begin{equation}
 \|\mu\widehat X-\tau u\|^2
 <16\mu\varepsilon
 \le\frac{\tau^2}{16},
\end{equation}
where the last inequality uses $\tau^2=2\mu\Delta$ and
$\varepsilon\le\Delta/128$.  Lemma~\ref{lem:dense-quadratic-row-bound}, with
these parameters, bounds this success probability by $5/16$ and proves
\eqref{eq:pl-sample-size-success}.
\end{proof}

\paragraph{Lower bound from the weighted chain.}
The weighted chain has failure probability $95/128$, and
Lemma~\ref{lem:pl-sample-size} has failure probability $11/16$.
The larger of their two query thresholds is at least a universal constant
times their sum.  Thus, for
$\kappa_{\max}\ge\kappa_{\rm w}(n):=2\Gamma\sqrt n$, where $\Gamma<28000$,
the weighted
construction proves
\begin{equation}
 \Omega\!\left(
 n+\kappa_{\max}\sqrt n
 \log\frac{\Delta}{\varepsilon}
 \right)
 \label{eq:weighted-pl-summary}
\end{equation}
using the one-stage bound at the optimized bias for the chain and obtaining the
linear term from an auxiliary quadratic at small bias.

\paragraph{Part II: two-scale chain for the small-$\kappa$ branch.}
The weighted construction requires a geometric ratio $r_{\rm w}\in(0,1)$ with
$1-r_{\rm w}=\Theta(\sqrt n/\kappa_{\max})$.  This choice is not admissible when
$\kappa_{\max}<\sqrt n$.  In this range, we instead separate the gate scale
$R_j$ from the contribution scale $W_j$.  The construction can then use the
amplitude ratio $\vartheta=\Theta(\kappa_{\max}/\sqrt n)$ while preserving
sequential hiding and individual smoothness.

\subsection{Two-scale chain: construction}
\label{app:two-scale-construction}

Fix an integer $M_{\rm ts}\ge2$; the choice depending on accuracy is made only
after the analytic properties of this family have been established.

\paragraph{Bias and effective curvature.}
Use the optimized admissible bias from Corollary~\ref{cor:optimized-near-balanced-bias}:
$m$ is the least parity-compatible integer not smaller than $\sqrt n$, so
\begin{equation}
 \sqrt n\le m<\sqrt n+2,
 \qquad \rho=\frac mn,
 \qquad \frac1{\sqrt n}\le\rho<\frac{17}{16\sqrt n}.
 \label{eq:two-scale-local-optimized-bias}
\end{equation}
For $\kappa_{\max}\ge3$, define
\begin{equation}
 \bar\kappa_{\max}:=\min\{\kappa_{\max},\sqrt n\},
 \qquad
 \bar\mu:=\frac{L_{\max}}{\bar\kappa_{\max}}.
 \label{eq:two-scale-effective-parameters}
\end{equation}
Since $\bar\mu\ge\mu=L_{\max}/\kappa_{\max}$, every globally
$\bar\mu$-strongly convex objective is globally $\mu$-PL.

\paragraph{Geometrically decreasing amplitudes and two spatial scales.}
Define the positive stage amplitudes by
\begin{equation}
 \vartheta:=\frac{\bar\kappa_{\max}\rho}{8192},
 \qquad
 \zeta_1:=\sqrt{2\bar\mu\Delta},
 \qquad
 \zeta_j:=\zeta_1\vartheta^{j-1}.
 \label{eq:two-scale-amplitudes}
\end{equation}
For every stage, define the gate scale $R_j$ and contribution scale $W_j$ by
\begin{equation}
 R_j:=\frac{\zeta_j}{\bar\mu},
 \qquad
 W_j:=\frac{64\zeta_j}{L_{\max}\rho}.
 \label{eq:two-scale-radii}
\end{equation}
The ratio $R_j/W_j$ is independent of $j$ and satisfies
\begin{equation}
 \delta_{\rm rad}:=\frac{R_j}{W_j}
 =\frac{\bar\kappa_{\max}\rho}{64}
 <\frac{17}{1024},
 \label{eq:two-scale-radius-ratio}
\end{equation}
where we used $\bar\kappa_{\max}\le\sqrt n$ and the local bias bound
in~\eqref{eq:two-scale-local-optimized-bias}.

\paragraph{Gate, component term, and chain.}
For $x=(x_1,\ldots,x_{M_{\rm ts}})$ with $x_j\in\R^D$, define the gate
coordinate
\begin{equation}
 h_j(x_j):=q_{\Theta^{(j)}}(x_j/R_j)
 \label{eq:two-scale-gate-coordinate}
\end{equation}
and the component term
\begin{equation}
 P_{ij}(x_j)
 :=\frac{\zeta_jW_j}{\rho}
 \left\langle\frac{S^{(j)}_{i:}}{\sqrt D},
 \Psi(x_j/W_j)\right\rangle.
 \label{eq:two-scale-component-payload}
\end{equation}
By~\eqref{eq:exact-code-average}, its average is exactly
\begin{equation}
 P_j(x_j):=\frac1n\sum_{i=1}^nP_{ij}(x_j)
 =\zeta_jW_j q_{\Theta^{(j)}}(x_j/W_j).
 \label{eq:two-scale-average-payload}
\end{equation}
With $M_{\rm ts}$ stages, define
\begin{equation}
\begin{aligned}
 f_i^{\rm ts}(x)
 ={}&\bar\mu\sum_{j=1}^{M_{\rm ts}}\|x_j\|^2-P_{i1}(x_1)\\
 &-\sum_{j=2}^{M_{\rm ts}}
 G(h_{j-1}(x_{j-1}))P_{ij}(x_j),
\end{aligned}
\label{eq:two-scale-component-chain}
\end{equation}
and let $F_{\rm ts}=n^{-1}\sum_i f_i^{\rm ts}$.  The exact table average gives
\begin{equation}
 F_{\rm ts}(x)
 =\bar\mu\sum_{j=1}^{M_{\rm ts}}\|x_j\|^2-P_1(x_1)
 -\sum_{j=2}^{M_{\rm ts}}
 G(h_{j-1}(x_{j-1}))P_j(x_j).
 \label{eq:two-scale-average-chain}
\end{equation}
All components are $C^\infty$.

\Needspace{34\baselineskip}
\begin{proposition}[Properties of the two-scale chain]
\label{prop:two-scale-chain-properties}
For every $M_{\rm ts}\ge2$, the construction above satisfies:
\begin{enumerate}[label=\textup{(\roman*)},leftmargin=*,itemsep=2pt]
 \item\label{item:two-scale-property-smoothness}
 Every component $f_i^{\rm ts}$ is individually $L_{\max}$-smooth.
 \item\label{item:two-scale-property-convexity}
 The average satisfies
 \begin{equation}
  \nabla^2F_{\rm ts}(x)
  \succeq\bar\mu I,
  \label{eq:two-scale-property-convexity}
 \end{equation}
 and hence is globally $\bar\mu$-strongly convex and $\mu$-PL.  When
 $3\le\kappa_{\max}\le\sqrt n$, one has $\bar\mu=\mu$, so
 $F_{\rm ts}$ is $\mu$-strongly convex.
 \item\label{item:two-scale-property-minimizer}
 Its unique minimizer is
 \begin{equation}
  x_j^*=R_jz_*u_{\Theta^{(j)}},
  \qquad
  2z_*=(1+\delta_{\rm rad}^2z_*^2)^{-3/2},
  \label{eq:two-scale-property-minimizer}
 \end{equation}
 where $z_*>1/3$ is the unique positive solution; every optimal gate satisfies
 $h_j(x_j^*)>b$.
 \item\label{item:two-scale-property-gaps}
 The initial gap and the gap at the final stage satisfy
 \begin{equation}
  F_{\rm ts}(0)-F_{\rm ts}^*\le\Delta,
  \label{eq:two-scale-property-initial-gap}
 \end{equation}
 and
 \begin{equation}
  h_{M_{\rm ts}}(x_{M_{\rm ts}})\le a
  \quad\Longrightarrow\quad
  F_{\rm ts}(x)-F_{\rm ts}^*
  \ge\frac{\zeta_{M_{\rm ts}}^2}{128\bar\mu}.
  \label{eq:two-scale-property-terminal-gap}
 \end{equation}
\end{enumerate}
\end{proposition}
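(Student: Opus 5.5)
I will verify the four items separately, working blockwise. Throughout I use $z_j=x_j/R_j$ for the gate variable and $x_j/W_j$ for the contribution variable, together with the derivative bounds of Lemma~\ref{lem:radial-first-second}. For $P_{ij}$, the chain rule gives $|P_{ij}|\le\zeta_jW_j/\rho$, $\|\nabla P_{ij}\|\le\zeta_j/\rho$ and $\|\nabla^2P_{ij}\|_{\op}\le2\zeta_j/(\rho W_j)=L_{\max}/32$. For $h_j$, one has $\|\nabla h_j\|\le1/R_j$ and $\|\nabla^2h_j\|_{\op}\le2/R_j^2$. The averaged term $P_j$ has $\|\nabla^2P_j\|_{\op}\le2\zeta_j/W_j$.

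\emph{Item (i).} I apply Lemma~\ref{lem:gated-link-hessian} to the link $-G(h_{j-1})P_{ij}$ with $c=0$. The current block is at most $L_{\max}/32$. The cross block is at most $M_1\zeta_j/(\rho R_{j-1})=M_1\vartheta\bar\mu/\rho$. The previous block is at most $(\zeta_jW_j/\rho)(M_2+2M_1)/R_{j-1}^2$. Since $\zeta_j=\vartheta\zeta_{j-1}$ and $W_j=\vartheta W_{j-1}$, the previous block equals $\vartheta^2(M_2+2M_1)\zeta_{j-1}W_{j-1}/(\rho R_{j-1}^2)$. Using $\zeta_{j-1}=\bar\mu R_{j-1}$ and $W_{j-1}/R_{j-1}=1/\delta_{\rm rad}$, this becomes $\vartheta^2(M_2+2M_1)\bar\mu/(\rho\,\delta_{\rm rad})$. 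Now $\vartheta=\delta_{\rm rad}/128$ and $\bar\mu=L_{\max}/\bar\kappa_{\max}$, so every term reduces to a small universal multiple of $L_{\max}$. For instance, $\vartheta\bar\mu/\rho=L_{\max}/8192$. The quadratic contributes $2\bar\mu\le2L_{\max}/3$. The block-row sum rule then gives total smoothness at most $L_{\max}$, and checking the constants here is routine.

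\emph{Item (ii).} Write $\nabla^2F_{\rm ts}=2\bar\mu I+E$. It suffices to show $\|E\|_{\op}\le\bar\mu$, again by block rows, now on the averaged terms. The current diagonal is at most $2\zeta_j/W_j=2\bar\mu\delta_{\rm rad}$. The cross term is at most $M_1\zeta_j/R_{j-1}=M_1\vartheta\bar\mu$. The previous diagonal is at most $\zeta_jW_j(M_2+2M_1)/R_{j-1}^2=\vartheta^2(M_2+2M_1)\bar\mu/\delta_{\rm rad}$. All three are tiny multiples of $\bar\mu$ because $\delta_{\rm rad}<17/1024$ and $\vartheta=\delta_{\rm rad}/128$. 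Strong convexity then implies PL with the same constant, and $\bar\mu\ge\mu$ gives the $\mu$-PL claim.

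\emph{Item (iii).} I guess the minimizer blockwise. At $x_j=R_jz_*u_{\Theta^{(j)}}$ one has $h_j=z_*/\sqrt{1+z_*^2}$. With $z_*>1/3$ this exceeds $b$, provided the bound holds, and I expect to need $z_*$ close to $1/2$, which gives $h_j\approx0.447>b$. So every gate is fully open, $G=1$, $G'=0$, and the block equation is $2\bar\mu x_j=\nabla P_j(x_j)=\zeta_j\nabla q(x_j/W_j)$. Along $u$, one has $\nabla q(tu)=(1+t^2)^{-3/2}u$ with $t=R_jz_*/W_j=\delta_{\rm rad}z_*$, which yields $2z_*=(1+\delta_{\rm rad}^2z_*^2)^{-3/2}$. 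Since $\delta_{\rm rad}$ is tiny, $z_*\in(0.49,1/2)$. This point is stationary, and uniqueness follows from strong convexity.

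\emph{Item (iv).} The upper bound $F_{\rm ts}(0)=0$ together with $F^*\ge-\|\nabla F(0)\|^2/(2\bar\mu)$ from strong convexity gives the gap bound, where $\|\nabla F(0)\|=\zeta_1$ (only $P_1$ is active at $0$). Hence the gap is at most $\zeta_1^2/(2\bar\mu)\le\Delta$.

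\emph{The terminal gap} is the main obstacle. If $h_M\le a$, then because $F$ is $\bar\mu$-strongly convex with the last block's curvature bounded by about $3\bar\mu$ in that block, I bound the gap below by the cost of the last block being displaced. The alignment $h_M$ is at most $a$, while the optimum has alignment $h_M^*\approx0.447$ at radius $R_Mz_*\approx R_M/2$. Any $x_M$ with $q(x_M/R_M)\le a$ has distance at least of order $R_M$ (say $\ge R_M/8$) from $x^*_M$; this is a planar geometric fact about $\Psi$. Strong convexity then gives $F-F^*\ge(\bar\mu/2)\|x-x^*\|^2\ge\bar\mu R_M^2/128=\zeta_M^2/(128\bar\mu)$. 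The distance estimate is the step needing care. I would compute the minimal $\|z-z_*u\|$ over the set $\{\langle u,\Psi(z)\rangle\le a\}$. That set contains the half-space $\langle u,z\rangle\le a$, but near the cone it also includes large $\|z\|$. Even so, the boundary nearest $z_*u$ lies along $u$ at $t$ with $t/\sqrt{1+t^2}=a$, i.e. $t\approx0.158$, which gives distance $\approx0.34>1/8$. The off-axis boundary points are farther, and I would verify this by the monotonicity of $\langle u,\Psi\rangle$ in the orthogonal norm.
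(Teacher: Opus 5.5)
Your proposal is correct. For items (i)--(iii) and the initial-gap bound it is essentially the paper's argument. It uses block-row sums from Lemma~\ref{lem:gated-link-hessian} with $c=0$, the scale identities $\zeta_j/R_{j-1}=\vartheta\bar\mu$ and $\vartheta=\delta_{\rm rad}/128$, the blockwise stationarity equation at fully open gates, and strong convexity at the origin. The paper packages the nonlinear part once as a per-component budget $B_\star L_{\max}=L_{\max}/28$. It then multiplies by $\rho$ for the average and uses $\rho L_{\max}<\tfrac{17}{16}\bar\mu$. You bound the averaged blocks directly in units of $\bar\mu$, which is equivalent. Each block row contains two cross terms, so count both; this changes nothing numerically.

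The one genuinely different step is the terminal gap. The paper argues in one line: $h_{M_{\rm ts}}$ is $1/R_{M_{\rm ts}}$-Lipschitz and $h_{M_{\rm ts}}(x^*_{M_{\rm ts}})>b$. Hence $h_{M_{\rm ts}}(x_{M_{\rm ts}})\le a$ forces $\|x_{M_{\rm ts}}-x^*_{M_{\rm ts}}\|\ge(b-a)R_{M_{\rm ts}}=R_{M_{\rm ts}}/8$, and $\bar\mu$-strong convexity then gives $\zeta_{M_{\rm ts}}^2/(128\bar\mu)$. You already recorded the needed ingredient, $\|\nabla h_j\|\le1/R_j$, in your first paragraph.

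Your route instead computes the distance from $z_*u$ to the sublevel set $\{q\le a\}$. This gives the better constant $\approx0.34R_{M_{\rm ts}}$, but the off-axis case is only sketched. Monotonicity of $q$ in the orthogonal norm is not by itself enough. To close it, note that for $t>0$, $q(tu+y)\le a$ forces $\|y\|^2\ge t^2/t_0^2-1$ with $t_0=a/\sqrt{1-a^2}$. Then $(t-z_*)^2+\|y\|^2\ge(z_*-t_0)^2$ follows from three cases:
\begin{itemize}
\item $t\le t_0$, where it is immediate;
\item $t_0<t\le z_*$, where it reduces to $2/t_0\ge 2z_*-t-t_0$;
\item $t>z_*$, where $\|y\|^2\ge z_*^2/t_0^2-1>8$.
\end{itemize}
The upper curvature bound (``about $3\bar\mu$'') you mention plays no role; only the lower bound from strong convexity is used.
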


For $3\le\kappa_{\max}\le\sqrt n$, one has
$\bar\kappa_{\max}=\kappa_{\max}$ and $\rho\asymp n^{-1/2}$, so
$R_j/W_j\asymp\kappa_{\max}/\sqrt n$.  If the stage contribution used the
short gate scale $R_j$ as well, its basic component Hessian estimate would be
$2\zeta_j/(\rho R_j)=2L_{\max}/(\bar\kappa_{\max}\rho)>L_{\max}$, already
exceeding the smoothness budget for each component.  The larger contribution scale
$W_j$ reduces this Hessian contribution to the required scale.  The choice
$R_j=\zeta_j/\bar\mu$ sets the displacement needed to activate the next stage.
Figure~\ref{fig:two-scale-radii} shows the conflict between hiding and
individual smoothness, and how the separation $R_j\ll W_j$ resolves it.

\begin{figure}[H]
\centering
\includegraphics[width=\linewidth]{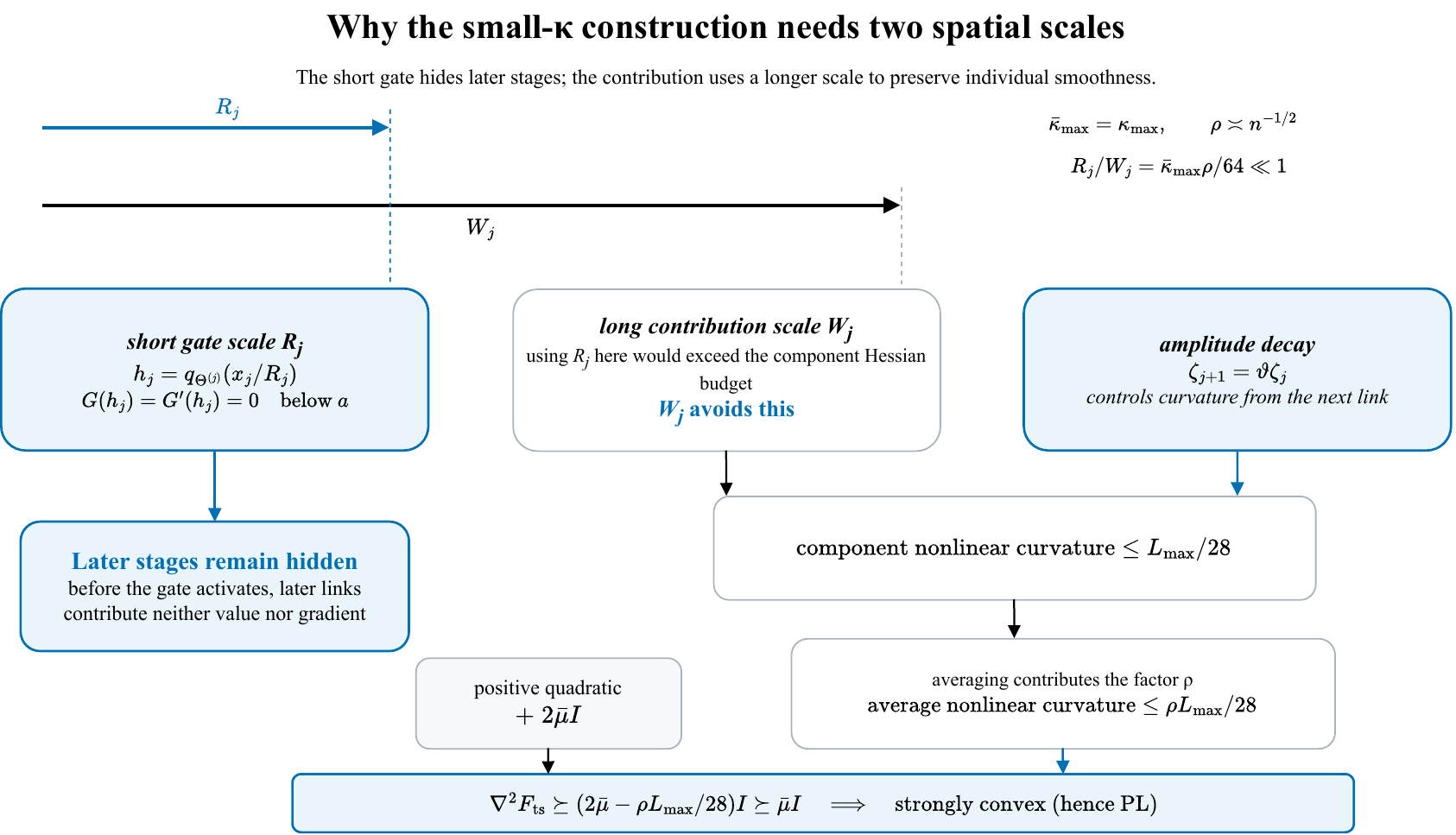}
\caption{Why two spatial scales are needed in the small-$\kappa$ construction.
The short scale $R_j$ makes the gate locally flat before it opens and thereby
hides later stages.  Using that same scale for the stage contribution would
exceed the $L_{\max}$ Hessian budget of a component.  The longer scale $W_j$,
together with the amplitude decay, reduces the nonlinear component curvature
to at most $L_{\max}/28$.  Averaging contributes the factor $\rho$, and the
positive quadratic then gives $\nabla^2F_{\rm ts}\succeq\bar\mu I$.}
\label{fig:two-scale-radii}
\end{figure}

\subsection{Two-scale chain: analytic properties}

\begin{lemma}[Nonlinear Hessian bound]
\label{lem:two-scale-curvature-budget}
Define
\begin{equation}
 B_\star:=\frac1{28}.
 \label{eq:two-scale-B-star}
\end{equation}
Then the nonlinear Hessian of every component in~\eqref{eq:two-scale-component-chain}
has operator norm at most $B_\star L_{\max}$.  After averaging the
components, the corresponding bound is $\rho B_\star L_{\max}$.
\end{lemma}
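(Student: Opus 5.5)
The plan is to treat the nonlinear part of $f_i^{\rm ts}$, namely $-P_{i1}(x_1)-\sum_{j\ge2}G(h_{j-1}(x_{j-1}))P_{ij}(x_j)$, as a sum of gated links. I will apply Lemma~\ref{lem:gated-link-hessian} to each link with $c=0$, $P=P_{ij}$ and $h=h_{j-1}$. The resulting block bounds will then be combined through the block-row inequality~\eqref{eq:block-row-operator-bound}.

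\emph{Single-term bounds.} From Lemma~\ref{lem:radial-first-second} and the scalings in~\eqref{eq:two-scale-component-payload}, the component term satisfies
\[
|P_{ij}|\le \zeta_jW_j/\rho,\qquad \|\nabla P_{ij}\|\le\zeta_j/\rho,\qquad \|\nabla^2P_{ij}\|_{\op}\le 2\zeta_j/(\rho W_j).
\]
Since $W_j=64\zeta_j/(L_{\max}\rho)$, the last bound equals $L_{\max}/32$. The gate coordinate satisfies $\|\nabla h_{j-1}\|\le 1/R_{j-1}$ and $\|\nabla^2h_{j-1}\|_{\op}\le 2/R_{j-1}^2$.

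\emph{Link Hessian blocks.} Substituting $\zeta_j=\vartheta\zeta_{j-1}$, $R_{j-1}=\zeta_{j-1}/\bar\mu$, $\vartheta=\bar\kappa_{\max}\rho/8192$ and $\bar\mu=L_{\max}/\bar\kappa_{\max}$ into Lemma~\ref{lem:gated-link-hessian} gives three bounds.
\begin{itemize}
\item The current diagonal block is at most $L_{\max}/32$.
\item The cross block is at most $M_1\zeta_j/(\rho R_{j-1})=M_1\vartheta\bar\mu/\rho=M_1L_{\max}/8192$.
\item The predecessor diagonal block is at most $(\zeta_jW_j/\rho)(M_2+2M_1)/R_{j-1}^2$. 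The amplitude decay enters squared here, so this equals $64(M_2+2M_1)L_{\max}/8192^2$.
\end{itemize}
All $\zeta$'s cancel, so these bounds are uniform in $j$, $M_{\rm ts}$ and $D$.

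\emph{Block-row sum.} A block row contains at most one current diagonal term, one predecessor-diagonal term induced by its successor link, and two cross terms. The first term $-P_{i1}$ contributes only a current diagonal bound of $L_{\max}/32$. With $M_1\le17$ and $M_2\le259$, the row sum is at most
\[
L_{\max}\Bigl(\tfrac1{32}+\tfrac{34}{8192}+\tfrac{64\cdot293}{8192^2}\Bigr)\approx0.03568\,L_{\max}<L_{\max}/28.
\]
This gives the component bound $B_\star L_{\max}$.

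\emph{Averaged bound.} The averaged nonlinear part of $F_{\rm ts}$ has the same link structure, with $P_{ij}$ replaced by $P_j=\zeta_jW_jq_{\Theta^{(j)}}(\cdot/W_j)$. Its value, gradient and Hessian bounds are exactly $\rho$ times those of $P_{ij}$. Every term above is linear in the $P$-bounds, so the identical computation yields $\rho B_\star L_{\max}$.

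\emph{Main obstacle.} The only delicate point is the numerical margin: the row sum falls below $1/28$ by less than one percent. I would therefore verify each constant exactly, in particular the factor $64$ in $W_j$ and $8192$ in $\vartheta$. I would also confirm that the predecessor-diagonal term uses $c=0$ and $B_0=\zeta_jW_j/\rho$, rather than a looser bound.
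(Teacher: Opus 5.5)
Your proposal is correct and follows essentially the same route as the paper. You use the same scaled derivative bounds and apply Lemma~\ref{lem:gated-link-hessian} with $c=0$ to get the per-link bounds $L_{\max}/32$, $M_1L_{\max}/8192$ per cross block, and $64(M_2+2M_1)L_{\max}/8192^2$; you then use the same block-row summation and the same factor-$\rho$ argument for the average. The only difference is cosmetic: the paper rounds the last two contributions up to $L_{\max}/240$ (both cross blocks together) and $L_{\max}/3500$ before comparing with $1/28$, whereas you compare the exact sum directly.
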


\begin{proof}
\emph{Scaled derivatives.}
Lemma~\ref{lem:radial-first-second} and the chain rule give
\begin{align}
 \|\nabla h_j\|&\le\frac1{R_j},
 &\|\nabla^2h_j\|_{\op}&\le\frac2{R_j^2},\nonumber\\
 |P_{ij}|&\le\frac{\zeta_jW_j}{\rho},
 &\|\nabla P_{ij}\|&\le\frac{\zeta_j}{\rho},
 &\|\nabla^2P_{ij}\|_{\op}&\le\frac{2\zeta_j}{\rho W_j}.
 \label{eq:two-scale-scaled-derivatives}
\end{align}
The corresponding bounds for $P_j$ omit $1/\rho$.

\emph{Local link blocks.}
For the link
$\mathcal L_{ij}=-G(h_{j-1})P_{ij}$, substitute these bounds into
Lemma~\ref{lem:gated-link-hessian}.  The actual scale choices make the current
diagonal, both cross blocks together, and the previous diagonal at most
\begin{align}
 B_2
 &=\frac{2\zeta_j}{\rho W_j}
 =\frac{L_{\max}}{32},\nonumber\\
 2M_1g_1B_1
 &=\frac{2M_1\zeta_j}{\rho R_{j-1}}
 =\frac{2M_1L_{\max}}{8192}
 \le\frac{L_{\max}}{240},\nonumber\\
 B_0(M_2g_1^2+M_1g_2)
 &=\frac{\zeta_jW_j(M_2+2M_1)}{\rho R_{j-1}^2}
 =\frac{64(M_2+2M_1)L_{\max}}{8192^2}
 \le\frac{L_{\max}}{3500}.
 \label{eq:two-scale-local-curvatures}
\end{align}
Here~\eqref{eq:scaled-gate-derivatives} bounds $M_1$ and $M_2$.

With $\delta_{\rm rad}=R_j/W_j=128\vartheta$, all three contributions
in~\eqref{eq:two-scale-local-curvatures} are bounded by universal fractions of
the $L_{\max}$ smoothness budget.  Their sum is below $B_\star L_{\max}$ because
\begin{equation}
 \frac1{32}+\frac1{240}+\frac1{3500}<\frac1{28}=B_\star.
 \label{eq:two-scale-curvature-fractions}
\end{equation}

\emph{Global block row and averaging.}
Each block row contains at most one current diagonal, one previous diagonal,
and two cross blocks.  The ungated first stage contribution $-P_{i1}$ obeys the same
bound for the current diagonal block.  Hence the block row sum is at most
$B_\star L_{\max}$, and the block row clause of
Lemma~\ref{lem:gated-link-hessian} gives the component bound.  For every
$2\le j\le M_{\rm ts}$, the exact average over the fixed table gives
\begin{equation}
 \frac1n\sum_{i=1}^n
 \nabla^2\!\left[-G(h_{j-1})P_{ij}\right]
 =\nabla^2\!\left[-G(h_{j-1})P_j\right].
 \label{eq:two-scale-average-link-hessian}
\end{equation}
The ungated term at the first stage obeys the analogous identity.  Moreover,
\begin{equation}
 |P_j|\le\zeta_jW_j,
 \qquad \|\nabla P_j\|\le\zeta_j,
 \qquad \|\nabla^2P_j\|_{\op}\le\frac{2\zeta_j}{W_j},
 \label{eq:two-scale-average-derivatives}
\end{equation}
so the value, gradient, and Hessian bounds for $P_j$ are each $\rho$ times the
corresponding component bounds in~\eqref{eq:two-scale-scaled-derivatives}.
Thus every nonlinear block bound is multiplied by $\rho$.
\end{proof}

\begin{proof}[Proof of Proposition~\ref{prop:two-scale-chain-properties},
items~\ref{item:two-scale-property-smoothness}--\ref{item:two-scale-property-convexity}]
Since $\bar\kappa_{\max}\ge3$,
\begin{equation}
 \frac{2\bar\mu+B_\star L_{\max}}{L_{\max}}
 =\frac2{\bar\kappa_{\max}}+B_\star
 \le\frac23+\frac1{28}<\frac34<1,
\end{equation}
which proves individual smoothness.  This is the reason for the convenient
threshold $\kappa_{\max}\ge3$: the quadratic and nonlinear terms fit within
the component smoothness bound with a uniform margin.  This threshold is not
optimized.  For the average,
$\bar\kappa_{\max}\rho<17/16$ gives
\begin{equation}
 \nabla^2F_{\rm ts}(x)
 \succeq
 \left(2-\frac{17}{16}\frac1{28}\right)\bar\mu I
 \succeq\bar\mu I,
\end{equation}
where $2-17/(16\cdot28)>1$.
Finally, $\bar\mu\ge\mu$ implies the declared $\mu$-PL property.
\end{proof}

\begin{proof}[Proof of Proposition~\ref{prop:two-scale-chain-properties},
item~\ref{item:two-scale-property-minimizer}]
\emph{Scalar stationary point.}
In the scalar equation in~\eqref{eq:two-scale-property-minimizer}, the left
side increases strictly
from zero and the right side decreases strictly from one.  At $z=1/3$,
\begin{equation}
 (1+\delta_{\rm rad}^2/9)^{-3/2}>(10/9)^{-3/2}>2/3,
\end{equation}
where the final comparison follows by cubing and squaring positive
quantities: $1000/729<9/4$.  Thus the unique crossing has $z_*>1/3$.

\emph{Average minimizer and gate values.}
At $x_j=R_jzu_{\Theta^{(j)}}$, direct differentiation gives
\begin{equation}
 \nabla P_j(x_j)
 =\zeta_j(1+\delta_{\rm rad}^2z^2)^{-3/2}u_{\Theta^{(j)}}.
\end{equation}
Because $2\bar\mu R_jz=2\zeta_jz$,
\eqref{eq:two-scale-property-minimizer} is exactly the stationarity
condition for $\bar\mu\|x_j\|^2-P_j(x_j)$.  Moreover,
\begin{equation}
 h_j(x_j^*)=\frac{z_*}{\sqrt{1+z_*^2}}
 >\frac1{\sqrt{10}}>\frac9{32}=b.
\end{equation}
Every predecessor coefficient therefore equals one, while every derivative
of a successor gate vanishes when the gate is fully open.  Hence
$\nabla F_{\rm ts}(x^*)=0$, and uniqueness follows from
Proposition~\ref{prop:two-scale-chain-properties},
item~\ref{item:two-scale-property-convexity}.
\end{proof}

\begin{proof}[Proof of Proposition~\ref{prop:two-scale-chain-properties},
item~\ref{item:two-scale-property-gaps}]
\emph{Initial gap.}
At the origin, $h_j(0)=0\le a$, so all successor gates satisfy $G=G'=0$ and
$\nabla F_{\rm ts}(0)=-\zeta_1u_{\Theta^{(1)}}$.  Strong convexity implies
\begin{equation}
 F_{\rm ts}(0)-F_{\rm ts}^*
 \le\frac{\|\nabla F_{\rm ts}(0)\|^2}{2\bar\mu}
 =\frac{\zeta_1^2}{2\bar\mu}=\Delta.
\end{equation}

\emph{Gap at the final stage.}
By item~\ref{item:two-scale-property-minimizer},
$h_{M_{\rm ts}}(x_{M_{\rm ts}}^*)>b$.  The gate coordinate is
$1/R_{M_{\rm ts}}$-Lipschitz, so
\begin{equation}
 \|x_{M_{\rm ts}}-x_{M_{\rm ts}}^*\|
 \ge(b-a)R_{M_{\rm ts}}=\frac{R_{M_{\rm ts}}}{8}.
\end{equation}
Using $\bar\mu$-strong convexity and
$R_{M_{\rm ts}}=\zeta_{M_{\rm ts}}/\bar\mu$ proves
\eqref{eq:two-scale-property-terminal-gap}.  In particular,
$\zeta_{M_{\rm ts}}^2\ge1024\bar\mu\varepsilon$ implies a gap of at least
$8\varepsilon$.
\end{proof}

\subsection{Two-scale chain: sequential hiding and stage count}

The two-scale chain gives
$\boldsymbol{\Omega\!\left(n\log(\Delta/\varepsilon)/
[1+\log(\sqrt n/\bar\kappa_{\max})]\right)}$ when there are multiple stages; the
auxiliary quadratic gives $\boldsymbol{\Omega(n)}$ otherwise.

\paragraph{Truncated construction at each progress level.}
At progress $g\in\{0,\ldots,M_{\rm ts}\}$, retain links only through the
current stage contribution:
\begin{equation}
\begin{aligned}
 (f_i^{\rm ts})^{[g]}(x)
 ={}&\bar\mu\sum_{j=1}^{M_{\rm ts}}\|x_j\|^2-P_{i1}(x_1)\\
 &-\sum_{j=2}^{\min\{g+1,M_{\rm ts}\}}
 G(h_{j-1}(x_{j-1}))P_{ij}(x_j).
\end{aligned}
\label{eq:two-scale-truncation}
\end{equation}
With $V_j(x_j)=\Psi(x_j/R_j)$, table independence, the dependence of each
component on its assigned table row, the condition $G=G'=0$ for omitted gates,
and the call accounting rule give \emph{(R1)--(R4)} of
Corollary~\ref{cor:optimized-radial-chain}.

Let
\begin{equation}
 s_{\rm ts}:=\log\frac1\vartheta
 =\log\frac{8192}{\bar\kappa_{\max}\rho},
 \qquad
 \ell_\kappa:=1+\log\frac{\sqrt n}{\bar\kappa_{\max}}.
 \label{eq:two-scale-log-scales}
\end{equation}
Write $\ell_\varepsilon:=\log(\Delta/\varepsilon)$.  For the case with multiple stages,
assume
\begin{equation}
 \ell_\varepsilon\ge2\log512+4s_{\rm ts}
 \label{eq:two-scale-high-precision}
\end{equation}
and set
\begin{equation}
 J:=\left\lfloor
 \frac{\ell_\varepsilon-\log512}{2s_{\rm ts}}
 \right\rfloor,
 \qquad M_{\rm ts}:=J+1.
 \label{eq:two-scale-stage-choice}
\end{equation}

\begin{lemma}[Logarithmic scale, stage count, and scale at the final stage]
\label{lem:two-scale-stage-count}
The scales in~\eqref{eq:two-scale-log-scales} satisfy
\begin{equation}
 \ell_\kappa<s_{\rm ts}<10\ell_\kappa.
 \label{eq:two-scale-log-comparison}
\end{equation}
Under~\eqref{eq:two-scale-high-precision}, moreover,
\begin{equation}
 J\ge\frac{\ell_\varepsilon}{4s_{\rm ts}}
 \ge\frac{\ell_\varepsilon}{40\ell_\kappa},
 \qquad
 \zeta_{M_{\rm ts}}^2\ge1024\bar\mu\varepsilon.
 \label{eq:two-scale-stage-residual}
\end{equation}
\end{lemma}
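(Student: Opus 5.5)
The three claims are elementary consequences of the parameter choices in \eqref{eq:two-scale-amplitudes}, \eqref{eq:two-scale-log-scales} and \eqref{eq:two-scale-stage-choice}. I would verify them directly in the order stated, using only the bias window \eqref{eq:two-scale-local-optimized-bias} and $3\le\bar\kappa_{\max}\le\sqrt n$. No probabilistic or analytic input is needed, so I do not expect a real obstacle. The only delicate point is that the floor loss in $J$ must be absorbed exactly by the assumption \eqref{eq:two-scale-high-precision}.

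\emph{Log comparison.} Write
\[
 s_{\rm ts}=\log 8192+\log\frac{\sqrt n}{\bar\kappa_{\max}}-\log(\sqrt n\rho).
\]
By \eqref{eq:two-scale-local-optimized-bias}, $0\le\log(\sqrt n\rho)<\log(17/16)$.
\begin{itemize}
 \item For the lower bound, $s_{\rm ts}>\log8192-\log(17/16)+(\ell_\kappa-1)>\ell_\kappa$, since $\log 8192\approx 9.01$ exceeds $1+\log(17/16)$.
 \item For the upper bound, $s_{\rm ts}\le\log8192+(\ell_\kappa-1)<9.1+\ell_\kappa-1$. Because $\bar\kappa_{\max}\le\sqrt n$ gives $\ell_\kappa\ge1$, this is at most $9\ell_\kappa+\ell_\kappa=10\ell_\kappa$.
\end{itemize}
In particular $s_{\rm ts}>0$, so $\vartheta<1$ and the amplitudes decrease.

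\emph{Stage count.} Let $x:=(\ell_\varepsilon-\log512)/(2s_{\rm ts})$, so that $J=\lfloor x\rfloor\ge x-1$. The inequality
\[
 x-1\ge\frac{\ell_\varepsilon}{4s_{\rm ts}}
\]
is equivalent, after multiplying by $4s_{\rm ts}>0$, to $2\ell_\varepsilon-2\log512-4s_{\rm ts}\ge\ell_\varepsilon$. That is, $\ell_\varepsilon\ge2\log512+4s_{\rm ts}$, which is exactly \eqref{eq:two-scale-high-precision}. Combining with $s_{\rm ts}<10\ell_\kappa$ from the first part gives the second inequality $J\ge\ell_\varepsilon/(40\ell_\kappa)$.

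\emph{Final-stage scale.} With $M_{\rm ts}=J+1$ we have
\[
 \zeta_{M_{\rm ts}}^2=\zeta_1^2\vartheta^{2J}=2\bar\mu\Delta\,e^{-2Js_{\rm ts}}.
\]
The floor gives $J\le x$, hence $2Js_{\rm ts}\le\ell_\varepsilon-\log512$. Therefore $e^{-2Js_{\rm ts}}\ge512\,\varepsilon/\Delta$, and so $\zeta_{M_{\rm ts}}^2\ge1024\bar\mu\varepsilon$. Via item~\ref{item:two-scale-property-gaps} of Proposition~\ref{prop:two-scale-chain-properties}, this is the bound that yields the $8\varepsilon$ terminal gap.
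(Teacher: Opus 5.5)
Your proposal is correct and follows the same route as the paper's proof: you bound $s_{\rm ts}$ on both sides using the window $1\le\sqrt n\,\rho<17/16$ and $\ell_\kappa\ge1$, then use $\lfloor x\rfloor\ge x-1$ together with \eqref{eq:two-scale-high-precision} for the stage count, and $J\le x$ for the bound $\zeta_{M_{\rm ts}}^2\ge1024\bar\mu\varepsilon$. The only difference is cosmetic: you use $\log 8192<9.1$ where the paper uses $\log8192<10$.
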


\begin{proof}
\emph{Compare the logarithmic scales.}
The upper bound on $\rho$ in~\eqref{eq:two-scale-local-optimized-bias} gives
\begin{equation}
 s_{\rm ts}>
 \log\frac{8192\cdot16}{17}
 +\log\frac{\sqrt n}{\bar\kappa_{\max}}>\ell_\kappa.
\end{equation}
The lower bound $\rho\ge n^{-1/2}$ gives
\begin{equation}
 s_{\rm ts}\le\log8192+\log\frac{\sqrt n}{\bar\kappa_{\max}}
 <10+(\ell_\kappa-1)\le10\ell_\kappa.
\end{equation}

\emph{Stage count and scale at the final stage.}
The inequality $\lfloor z\rfloor\ge z-1$ and
\eqref{eq:two-scale-high-precision} give
\begin{equation}
 J\ge
 \frac{\ell_\varepsilon-\log512-2s_{\rm ts}}{2s_{\rm ts}}
 \ge\frac{\ell_\varepsilon}{4s_{\rm ts}}.
\end{equation}
The upper comparison $s_{\rm ts}<10\ell_\kappa$ gives the second bound on the number of stages.
The upper bound on the floor in~\eqref{eq:two-scale-stage-choice} yields
\begin{equation}
 \zeta_{M_{\rm ts}}^2
 =2\bar\mu\Delta e^{-2s_{\rm ts}J}
 \ge2\bar\mu\Delta e^{-\ell_\varepsilon+\log512}
 =1024\bar\mu\varepsilon.
\end{equation}
\end{proof}

\begin{proposition}[Bound for multiple stages of the two-scale construction]
\label{prop:two-scale-high-precision}
Let
$\ell_\varepsilon=\log(\Delta/\varepsilon)$,
$s_{\rm ts}=\log(8192/(\bar\kappa_{\max}\rho))$, and
$\ell_\kappa=1+\log(\sqrt n/\bar\kappa_{\max})$.
Assume
$\ell_\varepsilon\ge2\log512+4s_{\rm ts}$, and set
$c_{\rm ts}:=1/163840$.
Then every deterministic algorithm whose IFO call budget $N$ satisfies
\begin{equation}
 N\le c_{\rm ts}\frac{n\ell_\varepsilon}{\ell_\kappa}
 \label{eq:two-scale-query-budget}
\end{equation}
fails with probability at least $95/128$ under the two-scale hard
distribution.  Specifically,
\begin{equation}
 F_{\rm ts}(\widehat X)-F_{\rm ts}^*\ge8\varepsilon.
\end{equation}
\end{proposition}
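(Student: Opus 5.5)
The plan is to apply Corollary~\ref{cor:optimized-radial-chain} to the two-scale chain with $M_{\rm tot}=M_{\rm ts}=J+1$ and $J$ from~\eqref{eq:two-scale-stage-choice}. The final stage $M_{\rm ts}$ is reserved, and Proposition~\ref{prop:two-scale-chain-properties}\ref{item:two-scale-property-gaps} turns an unopened final gate into objective error. The truncations~\eqref{eq:two-scale-truncation} with $V_j(x_j)=\Psi(x_j/R_j)$ already verify \emph{(R1)--(R4)}, as stated just before the proposition. I would fix the block dimension $D$ to satisfy~\eqref{eq:shared-radial-dimension} with $M_{\rm tot}=J+1$. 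The case $N=0$ is immediate, so assume $N\ge1$.

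\emph{Step 1: budget comparison.} Take $r=\lfloor n/512\rfloor\ge n/1024$, which is valid since $n\ge1024$. Lemma~\ref{lem:two-scale-stage-count} gives $J\ge\ell_\varepsilon/(40\ell_\kappa)$, hence
\[
 \frac{rJ}{2}\ge\frac{n}{2048}\cdot\frac{\ell_\varepsilon}{40\ell_\kappa}
 =\frac{n\ell_\varepsilon}{81920\,\ell_\kappa}.
\]
Thus $c_{\rm ts}=1/163840$ leaves a factor-of-two slack, and $N\le c_{\rm ts}n\ell_\varepsilon/\ell_\kappa$ implies $N\le rJ/2$. I should also confirm $1\le J<M_{\rm tot}$. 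Here $J\ge1$ follows from~\eqref{eq:two-scale-high-precision}, because $\ell_\varepsilon/(4s_{\rm ts})\ge1$.

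\emph{Step 2: revelation.} Corollary~\ref{cor:optimized-radial-chain} now gives $\Prob(\nu_N<J,\cEfuture^c)\ge95/128$. On this event the output satisfies $\operatorname{align}_k(\widehat X_k)\le a$ for every $k>J$, in particular for $k=M_{\rm ts}$, so $h_{M_{\rm ts}}(\widehat X_{M_{\rm ts}})\le a$. Since the algorithm is deterministic, the probability is over the hard distribution of the tables $(\Theta^{(j)},S^{(j)})$.

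\emph{Step 3: residual gap.} Apply~\eqref{eq:two-scale-property-terminal-gap} together with $\zeta_{M_{\rm ts}}^2\ge1024\bar\mu\varepsilon$ from Lemma~\ref{lem:two-scale-stage-count}. This gives
\[
 F_{\rm ts}(\widehat X)-F_{\rm ts}^*\ge\frac{1024\bar\mu\varepsilon}{128\bar\mu}=8\varepsilon.
\]
One subtlety is that the terminal-gap bound concerns the true objective, while the corollary is phrased through the truncated and augmented interactions. The last clause of the corollary is stated for the original IFO output, so no further transfer is needed.

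The main obstacle is bookkeeping rather than analysis. I need to make sure (R2)--(R3) really hold for the two-scale truncation. The gate input for link $j$ is $h_{j-1}$, measured at the short scale $R_{j-1}$, so alignment of stage $j-1$ must be defined with $V_{j-1}=\Psi(x_{j-1}/R_{j-1})$. The payload $P_{ij}$ reads row $S^{(j)}_{i:}$ at the different scale $W_j$. Before completion of stage $j$, a reply therefore uses only rows of $S^{(j)}$ and never reveals $\Theta^{(j)}$, even though the payload scale differs from the gate scale. A completion reply may also expose $S^{(j+1)}_{i:}$, and this is covered by the free-initialization-row convention in (R4).
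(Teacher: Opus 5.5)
Your proposal is correct and follows the paper's proof essentially step for step. You apply Corollary~\ref{cor:optimized-radial-chain} with $M_{\rm tot}=M_{\rm ts}$ and $J=M_{\rm ts}-1$ to get $h_{M_{\rm ts}}(\widehat X_{M_{\rm ts}})\le a$ with probability at least $95/128$. You turn this into the $8\varepsilon$ gap using the terminal-gap bound and $\zeta_{M_{\rm ts}}^2\ge1024\bar\mu\varepsilon$, and you verify $N\le rJ/2$ from $r\ge n/1024$ and $J\ge\ell_\varepsilon/(40\ell_\kappa)$. Your extra checks, namely $1\le J<M_{\rm tot}$ and the (R2)--(R3) bookkeeping at the two scales, are correct and only make explicit what the paper leaves implicit.
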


\begin{proof}
\emph{Dimension and sequential event.}
The conclusion is immediate for $N=0$, so assume $N\ge1$.
Take $r=\lfloor n/512\rfloor$, so $r\ge n/1024$ for $n\ge1024$.
Choose
\begin{equation}
 D\ge\max\left\{
 2048[1+\log(N+3)],\quad
  \frac{2048}{25}\log\bigl(128(N+1)M_{\rm ts}\bigr)
 \right\}.
 \label{eq:two-scale-sequential-dimension}
\end{equation}
Corollary~\ref{cor:optimized-radial-chain}, with
$\operatorname{align}_j=h_j$, $M_{\rm tot}=M_{\rm ts}$, and
$J=M_{\rm ts}-1$, shows that whenever $N\le rJ/2$, with probability
at least $95/128$ the output of the original IFO interaction satisfies
$h_{M_{\rm ts}}(\widehat X_{M_{\rm ts}})\le a$.

\emph{Gap at the final stage.}
On this event, Lemma~\ref{lem:two-scale-stage-count} and
Proposition~\ref{prop:two-scale-chain-properties},
item~\ref{item:two-scale-property-gaps}, give
\begin{equation}
 F_{\rm ts}(\widehat X)-F_{\rm ts}^*
 \ge\frac{\zeta_{M_{\rm ts}}^2}{128\bar\mu}
 \ge8\varepsilon.
\end{equation}

\emph{Query cost.}
Finally, the same lemma on the stage count gives
$J\ge\ell_\varepsilon/(40\ell_\kappa)$, and hence
\begin{equation}
 \frac{rJ}{2}
 \ge\frac{n}{2048}\frac{\ell_\varepsilon}{40\ell_\kappa}
 =\frac{n\ell_\varepsilon}{81920\ell_\kappa},
\end{equation}
Since $c_{\rm ts}=1/163840$, the budget in
\eqref{eq:two-scale-query-budget} therefore ensures $N\le rJ/2$.
\end{proof}

\begin{corollary}[Two-scale PL lower bound]
\label{cor:two-scale-pl-lower}
For every $n\ge n_0=1024$, $\kappa_{\max}\ge3$, and
$0<\varepsilon\le c\Delta$, every randomized algorithm with IFO access
using fewer than
\begin{equation}
 c'\left[
 n+\frac{n}{1+\log(\sqrt n/\bar\kappa_{\max})}
 \log\frac{\Delta}{\varepsilon}
 \right],
 \qquad \bar\kappa_{\max}=\min\{\kappa_{\max},\sqrt n\},
 \label{eq:two-scale-full-lower}
\end{equation}
calls fails with probability at least $11/16$ on some individually
$L_{\max}$-smooth finite sum with initial gap at most $\Delta$ and globally
$\mu$-PL average objective.
\end{corollary}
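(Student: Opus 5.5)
The plan is to combine Proposition~\ref{prop:two-scale-high-precision} with the auxiliary quadratic of Lemma~\ref{lem:pl-sample-size}, splitting into regimes according to whether the multi-stage condition $\ell_\varepsilon\ge2\log512+4s_{\rm ts}$ holds. Throughout, $\ell_\varepsilon=\log(\Delta/\varepsilon)$, $\ell_\kappa=1+\log(\sqrt n/\bar\kappa_{\max})$, and $c'$ will be a small universal constant fixed at the end. First I check that every instance involved belongs to the class. Proposition~\ref{prop:two-scale-chain-properties} gives individual $L_{\max}$-smoothness, initial gap at most $\Delta$, and $\bar\mu$-strong convexity, hence $\mu$-PL because $\bar\mu\ge\mu$. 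Lemma~\ref{lem:pl-sample-size} supplies individually $L_{\max}$-smooth, $\mu$-PL quadratics with gap $\Delta$, provided $\varepsilon\le\Delta/128$; this is arranged by taking $c\le1/128$.

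\emph{Case A: $\ell_\varepsilon\ge2\log512+4s_{\rm ts}$.} Proposition~\ref{prop:two-scale-high-precision} shows that every deterministic algorithm with $N\le c_{\rm ts}n\ell_\varepsilon/\ell_\kappa$ fails, in the sense $F_{\rm ts}(\widehat X)-F_{\rm ts}^*\ge8\varepsilon>\varepsilon$, with probability at least $95/128$ over the hard distribution. I pass to randomized algorithms by averaging over the private seed: Fubini over $(\xi,\text{instance})$ gives the same failure probability jointly. This is the averaging conversion cited in Appendix~\ref{app:stopping}, whose role is to extract a fixed instance. The bound $s_{\rm ts}<10\ell_\kappa$ from Lemma~\ref{lem:two-scale-stage-count} and the case assumption together give $\ell_\varepsilon\gtrsim\ell_\kappa\ge1$. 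Consequently $n\ell_\varepsilon/\ell_\kappa\gtrsim n$, and the target $n+n\ell_\varepsilon/\ell_\kappa$ is within a universal factor of $n\ell_\varepsilon/\ell_\kappa$. So a small $c'$ ensures the budget meets the proposition's hypothesis.

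\emph{Case B: $\ell_\varepsilon<2\log512+4s_{\rm ts}<2\log512+40\ell_\kappa$.} Here $\ell_\varepsilon/\ell_\kappa\le 2\log512+40$ because $\ell_\kappa\ge1$. The whole target expression is therefore $O(n)$, and choosing $c'$ small makes the budget at most $\lfloor n/4\rfloor$. Lemma~\ref{lem:pl-sample-size} then gives success probability at most $5/16$, that is, failure probability at least $11/16$. Here failure means gap at least $8\varepsilon>\varepsilon$.

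\emph{Combining the cases.} The additive $n$ term in Case A also needs attention. Since $95/128\ge11/16$, failure in Case A already exceeds $11/16$, and in that case the $n$ term is dominated by the main term as shown above. The only delicate point is the randomized-to-fixed-instance conversion, which I would take from Appendix~\ref{app:stopping}. If that conversion loses a constant in probability, I would still need the final failure probability of $11/16$; the margin $95/128>11/16$ absorbs this. Finally I set $c'$ to the minimum of the constants used in the two cases.
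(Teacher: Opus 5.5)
This is correct and is essentially the paper's proof. It uses the same split on whether $\ell_\varepsilon\ge2\log512+4s_{\rm ts}$: Proposition~\ref{prop:two-scale-high-precision} with the seed-averaging conversion when the condition holds, and Lemma~\ref{lem:pl-sample-size} when the whole target is $O(n)$, with $c\le1/128$ and $c'\le(4/5)c_{\rm ts}$.

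One slip in Case A: $\ell_\varepsilon\gtrsim\ell_\kappa$ follows from the lower comparison $s_{\rm ts}>\ell_\kappa$ in \eqref{eq:two-scale-log-comparison}, which gives $\ell_\varepsilon\ge4s_{\rm ts}>4\ell_\kappa$. The bound $s_{\rm ts}<10\ell_\kappa$ that you cite points the wrong way there; it is the bound Case B needs.
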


\begin{proof}
Choose $c\le1/128$ and $c'\le(4/5)c_{\rm ts}=1/204800$.
When the condition for multiple stages holds,
we have $\ell_\varepsilon\ge4s_{\rm ts}>4\ell_\kappa$, so
\[
 c'\left(n+\frac{n\ell_\varepsilon}{\ell_\kappa}\right)
 <\frac54c'\frac{n\ell_\varepsilon}{\ell_\kappa}
 \le c_{\rm ts}\frac{n\ell_\varepsilon}{\ell_\kappa}.
\]
Thus Proposition~\ref{prop:two-scale-high-precision} and the conversion in
Section~\ref{sec:deterministic-to-randomized} give a fixed instance with
failure probability at least $95/128$.

When this condition fails, suppose
$\ell_\varepsilon<2\log512+4s_{\rm ts}$.  Since
$s_{\rm ts}<10\ell_\kappa$,
\begin{equation}
 \ell_\varepsilon<2\log512+4s_{\rm ts}<54\ell_\kappa,
 \qquad
 n+\frac{n\ell_\varepsilon}{\ell_\kappa}<55n.
 \label{eq:two-scale-coarse-log}
\end{equation}
Our choices also give $c'\le1/220$ and $\varepsilon\le\Delta/128$.
Every integer budget satisfying
\begin{equation}
 N<c'\left(n+\frac{n\ell_\varepsilon}{\ell_\kappa}\right)
 <55c'n\le\frac n4
\end{equation}
falls under Lemma~\ref{lem:pl-sample-size}, which gives failure probability at
least $11/16$.  The two cases prove~\eqref{eq:two-scale-full-lower}.
\end{proof}

\paragraph{Conclusion for the two-scale chain.}
In the case with multiple stages, Proposition~\ref{prop:two-scale-chain-properties} and
Proposition~\ref{prop:two-scale-high-precision} establish
\begin{equation*}
\begin{aligned}
 &\max_i\operatorname{Lip}(\nabla f_i^{\rm ts})\le L_{\max},
 \qquad \nabla^2F_{\rm ts}\succeq\bar\mu I,
 \qquad F_{\rm ts}(0)-F_{\rm ts}^*\le\Delta,\\
 &N\le\frac{c_{\rm ts}n\ell_\varepsilon}{\ell_\kappa}
 \ \Longrightarrow\ 
 \Prob\bigl(F_{\rm ts}(\widehat X)-F_{\rm ts}^*\ge8\varepsilon\bigr)
 \ge\frac{95}{128}.
\end{aligned}
\end{equation*}
The auxiliary quadratic covers the remaining $O(n)$ case, yielding
Corollary~\ref{cor:two-scale-pl-lower}.

\phantomsection
\subsection*{Assembly across all condition-number regimes}
\addcontentsline{toc}{subsection}{Assembly across all condition-number regimes}

We use the two-scale branch below $\sqrt n$, the weighted branch above
$2\Gamma\sqrt n$, and the two-scale construction again on the bounded
transition interval between them.

Each branch with multiple stages costs $\Theta(n)$ calls per stage.  Multiplying by its
stage count and adding the auxiliary $\Omega(n)$ term gives the bounds below.

\emph{Small $\kappa_{\max}$: $3\le\kappa_{\max}\le\sqrt n$.}
Here $\bar\kappa_{\max}=\kappa_{\max}$, and
Corollary~\ref{cor:two-scale-pl-lower} gives
\begin{equation}
 \Omega\!\left(
 n+\frac{n\log(\Delta/\varepsilon)}
 {1+\log(\sqrt n/\kappa_{\max})}
 \right),
 \label{eq:restated-two-scale-small-regime}
\end{equation}
which is the small-$\kappa_{\max}$ regime of Theorem~\ref{thm:pl-main}.

\emph{Range covered by the weighted chain: $\kappa_{\max}\ge2\Gamma\sqrt n$.}
The weighted chain proves
\begin{equation}
 \Omega\!\left(
 n+\kappa_{\max}\sqrt n\log\frac{\Delta}{\varepsilon}
 \right)
 \label{eq:weighted-large-branch}
\end{equation}
whenever $\kappa_{\max}\ge\kappa_{\rm w}(n)=2\Gamma\sqrt n$.

\emph{Bounded transition: $\sqrt n\le\kappa_{\max}<2\Gamma\sqrt n$.}
Here $\bar\kappa_{\max}=\sqrt n$ and $\bar\mu\ge\mu$, so the two-scale
instance still belongs to the required $\mu$-PL class.  The corollary gives
\begin{equation}
 \Omega\!\left(n+n\log\frac{\Delta}{\varepsilon}\right).
 \label{eq:two-scale-large-fallback}
\end{equation}
Within the displayed transition interval,
$\kappa_{\max}\sqrt n<2\Gamma n$, so
\eqref{eq:two-scale-large-fallback} is at least a universal
$1/(2\Gamma)$ multiple of~\eqref{eq:weighted-large-branch}.  Thus, after
shrinking the universal theorem constant, the large-$\kappa$ bound
holds for every $\kappa_{\max}\ge\sqrt n$.

\emph{Failure probability and dimension.}
The arguments for multiple stages and for the complementary case have failure probabilities at least
$95/128$ and $11/16$, respectively.  The
dimension choice above permits
\begin{equation}
 d=M_{\rm ts}D
 =O\!\left[
 \left(1+
 \frac{\ell_\varepsilon}
 {1+\log(\sqrt n/\bar\kappa_{\max})}\right)
 \log\!\left(
 2+n+
 \frac{n\ell_\varepsilon}
 {1+\log(\sqrt n/\bar\kappa_{\max})}
 \right)
 \right]
 \label{eq:two-scale-dimension}
\end{equation}
for the two-scale construction.  For the weighted chain,
\begin{equation}
 d=O\!\left[
 \frac{\kappa_{\max}}{\sqrt n}
 \log\frac{\Delta}{\varepsilon}
 \log\!\left(
 2+n+\kappa_{\max}\sqrt n\log\frac{\Delta}{\varepsilon}
 \right)
 \right],
 \label{eq:restated-weighted-dimension-summary}
\end{equation}
as established in~\eqref{eq:pl-dimension}.  The auxiliary quadratic
uses only $D_0=4$.

\section{Restarted PAGE Upper Bound}
\label{app:page-upper}

\paragraph{Restatement of the upper bound in
Theorem~\ref{thm:pl-main}.}
Let $F=n^{-1}\sum_{i=1}^nf_i$ satisfy mean-squared smoothness with constant
$L_{\rm ms}$, the global $\mu$-PL inequality, and
$F(x^{(0)})-F^*\le\Delta$.  Write
$\kappa_{\rm ms}=L_{\rm ms}/\mu\ge1$.  For
$0<\varepsilon\le c\Delta$, Algorithm~\ref{alg:restarted-page} returns
$x^{(S_{\rm ep})}$ satisfying
\begin{equation}
 \E\bigl[F(x^{(S_{\rm ep})})-F^*\bigr]\le\varepsilon
 \label{eq:restated-page-target}
\end{equation}
with expected IFO cost bounded, for a universal constant $C>0$, by
\begin{equation}
 \E[\#\text{ IFO calls}]
 \le C\begin{cases}
 \displaystyle
 n+\dfrac{n\log(\Delta/\varepsilon)}
 {1+\log(\sqrt n/\kappa_{\rm ms})},
 &1\le\kappa_{\rm ms}\le\sqrt n,\\[2mm]
 \displaystyle
 n+\kappa_{\rm ms}\sqrt n
 \log\frac{\Delta}{\varepsilon},
 &\kappa_{\rm ms}\ge\sqrt n.
 \end{cases}
 \label{eq:restated-page-cost}
\end{equation}
For $\kappa_{\rm ms}\ge3$, this is the upper bound in
Theorem~\ref{thm:pl-main}.  The same algorithm and proof also cover
$1\le\kappa_{\rm ms}<3$.

\paragraph{Epoch length and the rate for small condition numbers.}
At the optimized bias, Appendix~\ref{app:information} shows that each additional
hidden stage costs $\Omega(n)$ IFO calls.  Appendix~\ref{app:pl} shows that the
gap in the hard instance decreases by
$\vartheta^2=\Theta((\kappa_{\max}/\sqrt n)^2)$ per stage.  Consequently, reducing
the gap from $\Delta$ to $\varepsilon$ requires
$\Theta(\log(\Delta/\varepsilon)/[1+\log(\sqrt n/\kappa_{\max})])$ stages.

Lemma~\ref{lem:page-epoch-contraction} gives
$q_T=\Theta(\kappa_{\rm ms}/\sqrt n)$ when $T=\Theta(n)$, at $\Theta(n)$
expected cost.  The estimator is PAGE~\citep{li2021page}, and
Proposition~\ref{prop:restarted-page} chooses the epoch length in both
$\kappa_{\rm ms}$ regimes.
Appendix~\ref{app:stopping} gives the conversion to a fixed call budget.

\subsection{Contraction and cost over one epoch}

One epoch has expected IFO cost at most $n+3T$ and bounds the expected PL gap
by $q_T=\kappa_{\rm ms}(1+\sqrt n)/T$ times the starting gap.  The restart
choice below makes $q_T\le1/4$.

Assume
\begin{equation}
 \frac1n\sum_{i=1}^n
 \|\nabla f_i(x)-\nabla f_i(y)\|^2
 \le L_{\rm ms}^2\|x-y\|^2
 \qquad\forall x,y,
 \label{eq:page-ms-premise}
\end{equation}
and write $\kappa_{\rm ms}=L_{\rm ms}/\mu$ when $F$ is $\mu$-PL.
Algorithm~\ref{alg:restarted-page} uses
\begin{equation}
 p=\frac1{n+1},
 \qquad
 \eta=\frac1{L_{\rm ms}(1+\sqrt n)}.
 \label{eq:page-epoch-parameters}
\end{equation}
Each refresh uses all $n$ components, while each difference update samples one
component.
The probability $p$ balances an $n$-component refresh against a one-component
difference update, which keeps the expected update cost constant.  The step
size $\eta$ satisfies the PAGE stability condition used below.
At the epoch start $x$, compute the exact full gradient, run $T$ PAGE
updates, and choose the epoch output uniformly from the iterates used in the
standard nonconvex guarantee.

\begin{lemma}[PAGE epoch contraction]
\label{lem:page-epoch-contraction}
Condition on an arbitrary epoch starting point $x$, and let $\widehat x$ be
the random epoch output.  Then
\begin{equation}
 \E\!\left[\|\nabla F(\widehat x)\|^2\mid x\right]
 \le\frac{2(F(x)-F^*)}{\eta T}.
 \label{eq:page-epoch-gradient}
\end{equation}
If $F$ is globally $\mu$-PL, then
\begin{equation}
 \E\!\left[F(\widehat x)-F^*\mid x\right]
 \le q_T(F(x)-F^*),
 \qquad
 q_T:=\frac{\kappa_{\rm ms}(1+\sqrt n)}{T}.
 \label{eq:page-epoch-gap}
\end{equation}
The expected number of IFO calls in the epoch is at most $n+3T$.
\end{lemma}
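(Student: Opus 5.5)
The plan is to recognize this as the standard PAGE nonconvex analysis applied to a single epoch, then convert it to a gap contraction via PL. First I would establish the one-step descent inequality. For $x_{t+1}=x_t-\eta g_t$ and $L$-smooth $F$ (mean-squared smoothness implies $F$ is $L_{\rm ms}$-smooth by Jensen), the standard bound is
\[
 F(x_{t+1})\le F(x_t)-\frac\eta2\|\nabla F(x_t)\|^2-\Bigl(\frac1{2\eta}-\frac{L_{\rm ms}}2\Bigr)\|x_{t+1}-x_t\|^2+\frac\eta2\|g_t-\nabla F(x_t)\|^2 .
\]
Next I would control the estimator error recursively. Conditioning on the past, the PAGE update gives
\[
 \E\|g_{t+1}-\nabla F(x_{t+1})\|^2\le(1-p)\|g_t-\nabla F(x_t)\|^2+(1-p)L_{\rm ms}^2\|x_{t+1}-x_t\|^2 ,
\]
using unbiasedness of the single-sample difference and mean-squared smoothness. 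The recursion starts from $g_0=\nabla F(x_0)$, since the epoch begins with an exact full gradient.

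With these two inequalities, I would form the Lyapunov function $\Phi_t=F(x_t)-F^*+\frac{\eta}{2p}\|g_t-\nabla F(x_t)\|^2$. Telescoping requires $\frac1{2\eta}-\frac{L_{\rm ms}}2-\frac{\eta(1-p)L_{\rm ms}^2}{2p}\ge0$. With $p=1/(n+1)$ we have $(1-p)/p=n$, so the condition reads $1-\eta L_{\rm ms}-\eta^2L_{\rm ms}^2n\ge0$. This holds for $\eta=1/[L_{\rm ms}(1+\sqrt n)]$, because then $\eta L_{\rm ms}+n\eta^2L_{\rm ms}^2=\frac{1}{1+\sqrt n}+\frac{n}{(1+\sqrt n)^2}\le1$; checking this inequality is the only delicate arithmetic.

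Summing $\E\Phi_{t+1}\le\E\Phi_t-\frac\eta2\E\|\nabla F(x_t)\|^2$ over $t=0,\dots,T-1$ and using $\Phi_0=F(x)-F^*$ and $\Phi_T\ge0$ gives
\[
 \frac1T\sum_{t<T}\E\|\nabla F(x_t)\|^2\le\frac{2(F(x)-F^*)}{\eta T} .
\]
Because $\widehat x=x_{U}$ with $U$ uniform on $\{0,\dots,T-1\}$, this is exactly \eqref{eq:page-epoch-gradient}. The skipped update at $t=T-1$ is harmless: $g_T$ is never used, so the bound only needs the terms up to $t=T-1$, and I would state the recursion only for $t<T-1$.

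The PL inequality \eqref{eq:pl-definition} then gives $\E[F(\widehat x)-F^*]\le\frac1{2\mu}\E\|\nabla F(\widehat x)\|^2\le\frac{F(x)-F^*}{\mu\eta T}=q_T(F(x)-F^*)$, since $1/(\mu\eta)=\kappa_{\rm ms}(1+\sqrt n)$.

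The cost bound is a separate count. The initial full gradient costs $n$ calls. For each of the at most $T-1$ later estimator updates, the expected cost is $pn+(1-p)\cdot2\le1+2=3$, since a difference update evaluates one component at two points. Summing gives $n+3T$.

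The main obstacle is keeping the conditioning precise in the variance recursion, since the index choice and the refresh coin are independent of the past given $x_{t+1}$. I would handle it by writing the tower property explicitly. Everything else is routine.
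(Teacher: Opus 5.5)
Your proposal is correct and follows the same route as the paper. Both use the PAGE descent and variance recursions, the Lyapunov function $F(x_t)-F^*+\frac{\eta}{2p}\|g_t-\nabla F(x_t)\|^2$ with the stability check $1-t-nt^2\ge0$ at $t=\eta L_{\rm ms}=1/(1+\sqrt n)$, telescoping, the PL inequality, and the $pn+2(1-p)\le3$ cost count. The one loose end is the final step, and you should fix it explicitly. The telescoped bound needs the $t=T-1$ gradient term, but $\Phi_T$ involves $g_T$, and restricting the recursion to $t<T-1$ would drop that term. The paper defines $g_T$ by a virtual update used only in the analysis. Alternatively, at $t=T-1$ you can use the descent inequality alone, together with $F(x_T)-F^*\ge0$ and $\frac{\eta}{2}\|g_{T-1}-\nabla F(x_{T-1})\|^2\le\frac{\eta}{2p}\|g_{T-1}-\nabla F(x_{T-1})\|^2$.
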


\begin{proof}
\emph{Lyapunov descent.}
For the analysis, extend the estimator recurrence by one auxiliary update to
define $g_T$.  Algorithm~\ref{alg:restarted-page} need not compute it: the epoch
output uses only $x_0,\ldots,x_{T-1}$, and the next epoch starts with a full
gradient.
Let $e_t:=g_t-\nabla F(x_t)$ and let $\E_t$ condition on the epoch history
through the construction of $g_t$.  Objective smoothness and the PAGE estimator
update give the two standard recursions
\begin{align}
 \E_t[F(x_{t+1})]
 &\le F(x_t)-\frac\eta2\|\nabla F(x_t)\|^2
 +\frac\eta2\|e_t\|^2
 -\left(\frac1{2\eta}-\frac{L_{\rm ms}}2\right)
  \|x_{t+1}-x_t\|^2,
 \label{eq:page-descent-recursion}\\
 \E_t\|e_{t+1}\|^2
 &\le(1-p)\|e_t\|^2
 +(1-p)L_{\rm ms}^2\|x_{t+1}-x_t\|^2.
 \label{eq:page-variance-recursion}
\end{align}
The second inequality follows by conditioning on the refresh and difference
updates and applying mean-squared smoothness; these are the PAGE recursions of
\citet[proof of Theorem~1]{li2021page}.

With an exact full-gradient initialization, define the Lyapunov function
\begin{equation}
 \mathcal L_t=F(x_t)-F^*
 +\frac{\eta}{2p}\|g_t-\nabla F(x_t)\|^2.
\end{equation}
Multiplying~\eqref{eq:page-variance-recursion} by $\eta/(2p)$, adding
\eqref{eq:page-descent-recursion}, and taking expectations gives
\begin{equation*}
 \E[\mathcal L_{t+1}]
 \le\E[\mathcal L_t]-\frac\eta2\E\|\nabla F(x_t)\|^2
 -c_\eta\E\|x_{t+1}-x_t\|^2,
 \qquad
 c_\eta:=\frac1{2\eta}-\frac{L_{\rm ms}}2
 -\frac{(1-p)\eta L_{\rm ms}^2}{2p}.
\end{equation*}
The condition
\begin{equation}
 \eta\le
 \frac1{L_{\rm ms}\bigl(1+\sqrt{(1-p)/p}\bigr)}
 \label{eq:page-stability-condition}
\end{equation}
ensures $c_\eta\ge0$.  Indeed, with
$s=\sqrt{(1-p)/p}$ and $t=\eta L_{\rm ms}\le(1+s)^{-1}$,
\begin{equation*}
 2\eta c_\eta=1-t-s^2t^2
 \ge\frac{s}{(1+s)^2}\ge0.
\end{equation*}
Dropping its nonpositive contribution therefore gives
\begin{equation}
 \E[\mathcal L_{t+1}]
 \le\E[\mathcal L_t]-\frac\eta2\E\|\nabla F(x_t)\|^2.
\label{eq:page-lyapunov-step}
\end{equation}

\emph{Epoch contraction.}
The parameters in~\eqref{eq:page-epoch-parameters} satisfy
$(1-p)/p=n$, so the selected stepsize is admissible.  Summing
\eqref{eq:page-lyapunov-step} for $t=0,\ldots,T-1$ gives
\begin{equation*}
 \frac\eta2\sum_{t=0}^{T-1}\E\|\nabla F(x_t)\|^2
 \le \mathcal L_0-\E[\mathcal L_T]
 \le F(x)-F^*,
\end{equation*}
where the last inequality uses $g_0=\nabla F(x)$ and $\mathcal L_T\ge0$.  Since
$\widehat x$ is uniform on $x_0,\ldots,x_{T-1}$, dividing by $\eta T/2$
proves~\eqref{eq:page-epoch-gradient}.  The pointwise PL inequality gives
\begin{equation}
 F(\widehat x)-F^*
 \le\frac1{2\mu}\|\nabla F(\widehat x)\|^2,
\end{equation}
which proves~\eqref{eq:page-epoch-gap} after substituting
$\eta^{-1}=L_{\rm ms}(1+\sqrt n)$.

\emph{IFO cost.}
The initial full gradient costs $n$ IFO calls.  The algorithm computes
$T-1$ estimator updates, omitting the auxiliary $g_T$.  A refresh update costs
$n$ calls, while a difference update for one component can be implemented by two
calls, one at each point.  The expected cost per update is therefore
\begin{equation}
 pn+2(1-p)\le3,
\end{equation}
Hence the expected epoch cost is at most $n+3(T-1)\le n+3T$, proving the
stated bound.
\end{proof}

\subsection{Restarts tuned to the condition number}

The epoch rule below specializes to $T=\Theta(n)$ for small
$\kappa_{\rm ms}$ and to the standard PAGE schedule with a fixed contraction factor for large
$\kappa_{\rm ms}$
\citep[Corollary~6]{li2021page}.

\begin{proposition}[Restarted PAGE]
\label{prop:restarted-page}
Let $F$ be globally $\mu$-PL and have mean-squared smoothness constant
$L_{\rm ms}$ as in~\eqref{eq:page-ms-premise}, let
$\kappa_{\rm ms}=L_{\rm ms}/\mu\ge1$, and suppose
$F(x^{(0)})-F^*\le\Delta$.  Algorithm~\ref{alg:restarted-page} returns a
point with expected gap at most $\varepsilon$ using
\begin{equation}
 \begin{cases}
 \displaystyle
 O\!\left(n+\frac{n\log(\Delta/\varepsilon)}
 {1+\log(\sqrt n/\kappa_{\rm ms})}\right),
 &1\le\kappa_{\rm ms}\le\sqrt n,\\[2mm]
 \displaystyle
 O\!\left(n+\kappa_{\rm ms}\sqrt n
 \log\frac{\Delta}{\varepsilon}\right),
 &\kappa_{\rm ms}\ge\sqrt n
 \end{cases}
 \label{eq:restarted-page-rate}
\end{equation}
expected IFO calls, provided
$0<\varepsilon\le c\Delta$ for a universal $c<1$.
\end{proposition}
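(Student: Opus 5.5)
The plan is to combine Lemma~\ref{lem:page-epoch-contraction} with the restart schedule of Algorithm~\ref{alg:restarted-page} and then bound the cost separately in each condition-number regime.

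\emph{Step 1: the error bound.} Applying \eqref{eq:page-epoch-gap} to epoch $s$ gives
\[
\E[F(x^{(s+1)})-F^*\mid x^{(s)}]\le q_{\rm ep}(F(x^{(s)})-F^*),
\qquad q_{\rm ep}=A_{\rm ep}/T .
\]
The tower property then yields $\E[F(x^{(S_{\rm ep})})-F^*]\le q_{\rm ep}^{S_{\rm ep}}\Delta$. Since $T\ge4(n+A_{\rm ep})\ge4A_{\rm ep}$, we have $q_{\rm ep}\le1/4<1$. Hence $S_{\rm ep}$ is well defined, and the ceiling in its definition gives $q_{\rm ep}^{S_{\rm ep}}\Delta\le\varepsilon$.

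\emph{Step 2: the cost in terms of $S_{\rm ep}$.} The cost bound in Lemma~\ref{lem:page-epoch-contraction} holds conditionally on each epoch start. So the expected total is at most
\[
S_{\rm ep}(n+3T)=O\bigl(S_{\rm ep}(n+A_{\rm ep})\bigr),
\]
using $T\le4(n+A_{\rm ep})+1$. Because $\varepsilon\le c\Delta$ with $c<1$, we have $\log(\Delta/\varepsilon)\ge\log(1/c)$. This lets us absorb the additive $1$ in $S_{\rm ep}\le1+\log(\Delta/\varepsilon)/\log(1/q_{\rm ep})$ into the logarithmic term, up to constants. The leftover $n$ appears in the bound anyway.

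\emph{Step 3: the two regimes.} Recall $A_{\rm ep}=\kappa_{\rm ms}(1+\sqrt n)\asymp\kappa_{\rm ms}\sqrt n$.

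When $\kappa_{\rm ms}\ge\sqrt n$, we have $A_{\rm ep}\gtrsim n$, so each epoch costs $O(\kappa_{\rm ms}\sqrt n)$. Also $\log(1/q_{\rm ep})\ge\log4$, so $S_{\rm ep}=O(1+\log(\Delta/\varepsilon))$. The product is $O(\kappa_{\rm ms}\sqrt n\log(\Delta/\varepsilon))$, which is the second branch.

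When $1\le\kappa_{\rm ms}\le\sqrt n$, we have $A_{\rm ep}\le2n$, so $T=\Theta(n)$ and each epoch costs $O(n)$. The key estimate is a lower bound on $\log(1/q_{\rm ep})$. Since $T\ge4n$,
\[
\frac1{q_{\rm ep}}\ge\frac{4n}{\kappa_{\rm ms}(1+\sqrt n)}\ge\frac{2\sqrt n}{\kappa_{\rm ms}}.
\]
Therefore $\log(1/q_{\rm ep})\ge\log2+\log(\sqrt n/\kappa_{\rm ms})\ge c'(1+\log(\sqrt n/\kappa_{\rm ms}))$ with $c'=\log 2$. This yields $S_{\rm ep}=O(1+\log(\Delta/\varepsilon)/(1+\log(\sqrt n/\kappa_{\rm ms})))$, and multiplying by the $O(n)$ epoch cost gives the first branch.

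\emph{Main obstacle.} The only delicate point is this small-$\kappa$ lower bound on $\log(1/q_{\rm ep})$. The factor $4$ in the definition of $T$ is exactly what makes $\log(1/q_{\rm ep})$ bounded below by a constant even at $\kappa_{\rm ms}=\sqrt n$, so it must be carried carefully. The remaining steps are bookkeeping: the ceiling and the additive $1$ in $S_{\rm ep}$ are absorbed using $\varepsilon\le c\Delta$.
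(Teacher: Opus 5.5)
Your proposal is correct and follows essentially the same argument as the paper: the tower property with $q_{\rm ep}\le 1/4$, the expected cost bound $(n+3T)S_{\rm ep}$, and a split into the two $\kappa_{\rm ms}$ regimes, with $\varepsilon\le c\Delta$ absorbing the extra epoch from the ceiling when $\kappa_{\rm ms}\ge\sqrt n$. Your explicit bound $\log(1/q_{\rm ep})\ge\log 2+\log(\sqrt n/\kappa_{\rm ms})$ spells out the lower half of the $\Theta(1+\log(\sqrt n/\kappa_{\rm ms}))$ estimate that the paper states without detail.
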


\begin{proof}
Use the same epoch choice for every condition number.  Set
\begin{equation}
 A_{\rm ep}:=\kappa_{\rm ms}(1+\sqrt n),
 \qquad
 T:=\left\lceil4(n+A_{\rm ep})\right\rceil,
 \qquad
 q_{\rm ep}:=\frac{A_{\rm ep}}{T}\le\frac14.
 \label{eq:page-unified-epoch}
\end{equation}

Repeated conditioning and the tower property give
$\E[F(x^{(S_{\rm ep})})-F^*]\le
q_{\rm ep}^{S_{\rm ep}}\Delta$.  Hence Algorithm~\ref{alg:restarted-page}
uses
$S_{\rm ep}=\lceil\log(\Delta/\varepsilon)/\log(1/q_{\rm ep})\rceil$
epochs, and
Lemma~\ref{lem:page-epoch-contraction} gives the explicit bound on expected cost
\begin{equation}
 \E[\#\text{ IFO calls}]
 \le(n+3T)
 \left\lceil
 \frac{\log(\Delta/\varepsilon)}{\log(1/q_{\rm ep})}
 \right\rceil.
 \label{eq:page-unified-exact-cost}
\end{equation}

The ceiling accounts for the case in which one epoch already achieves the
target accuracy.

\emph{Small condition numbers.}
Suppose $1\le\kappa_{\rm ms}\le\sqrt n$.  Then
$A_{\rm ep}=\Theta(\kappa_{\rm ms}\sqrt n)$, and the chosen epoch length satisfies
\begin{equation*}
 T=\Theta(n),\qquad
 q_{\rm ep}=\Theta\!\left(\frac{\kappa_{\rm ms}}{\sqrt n}\right),
 \qquad n+3T=\Theta(n).
\end{equation*}
Thus
$\log(1/q_{\rm ep})=
\Theta(1+\log(\sqrt n/\kappa_{\rm ms}))$, and
\eqref{eq:page-unified-exact-cost} gives
\begin{equation}
 O\!\left(
 n+\frac{n\log(\Delta/\varepsilon)}
 {1+\log(\sqrt n/\kappa_{\rm ms})}
 \right).
 \label{eq:page-small-regime-cost}
\end{equation}
The schedule with a fixed contraction factor underlying
\citet[Corollary~6]{li2021page} sets
$T_{\rm ff}=\Theta(A_{\rm ep})$ and keeps only $q_T=\Theta(1)$.  In the
small-$\kappa_{\rm ms}$ regime, both epoch choices cost $\Theta(n)$ because
each begins with a full gradient.  Extending the epoch to $T=\Theta(n)$ leaves
its order cost unchanged but strengthens the contraction to
$\Theta(\kappa_{\rm ms}/\sqrt n)$.  Thus this schedule gives
$O(n\log(\Delta/\varepsilon))$.  Using the sharper $T=\Theta(n)$ contraction produces the
denominator in~\eqref{eq:page-small-regime-cost}; the estimator itself is
unchanged.

\emph{Large $\kappa_{\rm ms}$.}
If $\kappa_{\rm ms}\ge\sqrt n$, then $A_{\rm ep}=\Omega(n)$,
$T=\Theta(A_{\rm ep})$, and $q_{\rm ep}=\Theta(1)$.  Equation
\eqref{eq:page-unified-exact-cost} gives
$O(n+\kappa_{\rm ms}\sqrt n\log(\Delta/\varepsilon))$.  Hence the schedule
reduces to the usual choice with a fixed contraction factor in this regime.  These two cases prove
\eqref{eq:restarted-page-rate}; the restriction $\varepsilon\le c\Delta$
absorbs the additive cost of one epoch in the large-$\kappa_{\rm ms}$ case.
\end{proof}

\section{Randomization and Expected Complexity}
\label{app:stopping}

This appendix turns bounds for deterministic algorithms averaged over a hard
distribution into bounds for randomized algorithms.  It also relates fixed
budgets to expected stopping times.
The two transfers are, respectively,
\begin{equation*}
 \left[
 \forall A_{\rm det}:\ Pr_{F\sim\Pi}(\mathsf{Fail}(A_{\rm det},F))\ge p
 \right]
 \Longrightarrow
 \left[
 \forall A_\xi\ \exists F\in\operatorname{supp}\Pi:\
 \Pr_\xi(\mathsf{Fail}(A_\xi,F))\ge p
 \right],
\end{equation*}
and
\begin{equation*}
 \left[
 \E\tau\le Q,\quad
 \E[\operatorname{err}_F(\widehat X)]\le\varepsilon
 \right]
 \Longrightarrow
 \left[
 \Pr\!\left(\operatorname{err}_F(\widehat X^{(N)})\le t\right)
 \ge1-\frac{Q}{N}-\frac{\varepsilon}{t}
 \right].
\end{equation*}

\subsection{From deterministic to randomized algorithms}
\label{sec:deterministic-to-randomized}

For each parameter choice, let $\Pi$ be the distribution over fixed instances
sampled before interaction.  Consider any deterministic IFO algorithm that
makes $N$ calls, chooses its queries from the preceding history, and returns an
output measurable with respect to its full IFO history.  For every such
algorithm, the bounds established in
Appendices~\ref{app:information}--\ref{app:pl} have the form
\[
 \Pr_{F\sim\Pi}(\mathsf{Fail}(A,F))\ge p.
\]
Output coverage follows from Lemma~\ref{lem:future-alignment} and
Lemma~\ref{lem:pathwise-coupling}
for staged chains and Lemma~\ref{lem:dense-quadratic-row-bound} for the
auxiliary quadratic.

For a randomized algorithm, condition on its private seed $\xi\perp F$ and
average the deterministic bound:
\begin{equation}
 \E_{F\sim\Pi}\Pr_\xi(\mathsf{Fail}(A_\xi,F)\mid F)\ge p.
\end{equation}
Thus some fixed $F\in\operatorname{supp}\Pi$ has randomized failure probability
at least $p$.  Every supported instance satisfies
$F(x^{(0)})-F^*\le\Delta$ and has individually $L_{\max}$-smooth components;
each PL instance additionally satisfies
$\|\nabla F(x)\|^2\ge2\mu(F(x)-F^*)$ for every $x$.

\subsection{A generic capping lemma}

\begin{lemma}[Capping a stopped algorithm]
\label{lem:page-fixed-budget}
Fix an instance $F$ and a nonnegative measurable error $\operatorname{err}_F$.
Let $\tau$ be a stopping time for the algorithm's IFO filtration
$(\mathcal F_t)_{t\ge0}$ and let $\widehat X$ be $\mathcal F_\tau$-measurable.
Let $\widehat X^{(N)}$ be its output when capped after $N\ge1$ IFO calls, using
any fixed fallback if $\tau>N$.  For every $t>0$,
\begin{equation}
 \Pr\!\left(\operatorname{err}_F(\widehat X^{(N)})\le t\right)
 \ge
 1-\frac{\E\tau}{N}
   -\frac{\E[\operatorname{err}_F(\widehat X)]}{t}.
 \label{eq:generic-capping-bound}
\end{equation}
\end{lemma}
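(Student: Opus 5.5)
The plan is a union bound over two failure modes of the capped algorithm: it was truncated before stopping, or it stopped but returned a bad point. The key observation is that on $\{\tau\le N\}$ the capped output coincides with the original one. On that event the algorithm reaches its natural stopping time within the budget. Since $\tau$ is a stopping time for $(\mathcal F_t)$ and $\widehat X$ is $\mathcal F_\tau$-measurable, the capped run makes exactly the same queries and returns exactly the same point, so $\widehat X^{(N)}=\widehat X$. We only need this pathwise identity on $\{\tau\le N\}$. Off this event the fallback output may be arbitrary, and we make no use of it.

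It follows that
\[
\{\operatorname{err}_F(\widehat X^{(N)})>t\}\subseteq\{\tau>N\}\cup\{\operatorname{err}_F(\widehat X)>t\}.
\]
Taking probabilities and applying the union bound gives
\[
\Pr\bigl(\operatorname{err}_F(\widehat X^{(N)})>t\bigr)\le\Pr(\tau>N)+\Pr\bigl(\operatorname{err}_F(\widehat X)>t\bigr).
\]
Markov's inequality for the nonnegative variable $\tau$ bounds the first term by $\E\tau/N$. Markov's inequality for the nonnegative error bounds the second term by $\E[\operatorname{err}_F(\widehat X)]/t$. Taking complements yields \eqref{eq:generic-capping-bound}.

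No step is genuinely hard. The only point needing care is the identification $\widehat X^{(N)}=\widehat X$ on $\{\tau\le N\}$. This requires that the capping rule uses the same seed and the same adaptive query rule, and that it halts exactly at $\tau$ whenever $\tau\le N$. Both properties follow from the stopping-time and measurability assumptions.

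Two edge cases need a remark. If $\E\tau=\infty$, or if $\E[\operatorname{err}_F(\widehat X)]=\infty$, the right-hand side of \eqref{eq:generic-capping-bound} is $-\infty$ and the bound holds trivially. Similarly, on the event $\{\tau=\infty\}$ the original output is undefined, but that event lies inside $\{\tau>N\}$ and is already counted in the first term.
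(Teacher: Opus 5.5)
Your proposal is correct and matches the paper's proof. Both arguments use the fact that $\widehat X^{(N)}=\widehat X$ on $\{\tau\le N\}$, bound the complementary events $\{\tau>N\}$ and $\{\operatorname{err}_F(\widehat X)>t\}$ by Markov's inequality, and combine the two bounds with a union bound, with no independence assumption needed.
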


\begin{proof}
On $\{\tau\le N\}\cap\{\operatorname{err}_F(\widehat X)\le t\}$, the capped
output equals $\widehat X$.  Apply Markov's inequality to the two complement
events and take their union; no independence is used.
\end{proof}

\subsection{Applications to upper and lower bounds}

\paragraph{Upper bound.}
Fix $\delta\in(0,1)$ and apply Proposition~\ref{prop:restarted-page} at target
$\delta\varepsilon/2$.  If its expected call count is $Q$, use
\[
 N=\max\left\{1,\left\lceil\frac{2Q}{\delta}\right\rceil\right\},
 \qquad
 \operatorname{err}_F(x)=F(x)-F^*,
 \qquad t=\varepsilon.
\]
Lemma~\ref{lem:page-fixed-budget} gives success probability at least
$1-\delta$.  For fixed
$\delta$, this changes $\log(\Delta/\varepsilon)$ only by an additive constant
and proves the upper bound with a fixed budget in
Theorem~\ref{thm:pl-main}, together with its extension to
$1\le\kappa_{\rm ms}<3$.

\paragraph{Lower bounds.}
Suppose that on every fixed instance an algorithm satisfies $\E\tau\le Q$ and
$\E[\operatorname{err}_F(\widehat X)]\le\varepsilon$, where
\[
 \operatorname{err}_F(x)=
 \begin{cases}
  \|\nabla F(x)\|,&\text{nonconvex},\\
  F(x)-F^*,&\text{PL}.
 \end{cases}
\]
With $N=\max\{1,\lceil4Q\rceil\}$ and $t=4\varepsilon$,
Lemma~\ref{lem:page-fixed-budget} gives, for every fixed $F$,
\begin{equation}
 \Pr\!\left(\operatorname{err}_F(\widehat X^{(N)})\le4\varepsilon\right)
 \ge\frac12.
 \label{eq:capped-expected-success}
\end{equation}

Apply the randomized lower bound with a fixed budget above to this capped algorithm.
In each case with multiple stages,
$\{\operatorname{err}_F\le4\varepsilon\}\subset
\{\operatorname{err}_F<8\varepsilon\}$ and the latter event has probability at
most $33/128$; each auxiliary quadratic bounds the former by $5/16$.  Both
contradict~\eqref{eq:capped-expected-success}.

Absorbing the constant factors from truncation and rescaling the target
gives the expected nonconvex lower bound
\begin{equation}
 \Omega\!\left(n+\sqrt n\,
 \frac{\Delta L_{\max}}{\varepsilon^2}\right),
\end{equation}
and the expected PL lower bound
\begin{equation}
 \begin{cases}
 \displaystyle
 \Omega\!\left(n+\frac{n\log(\Delta/\varepsilon)}
 {1+\log(\sqrt n/\kappa_{\max})}\right),
 &3\le\kappa_{\max}\le\sqrt n,\\[2mm]
 \displaystyle
 \Omega\!\left(n+\kappa_{\max}\sqrt n
 \log\frac{\Delta}{\varepsilon}\right),
 &\kappa_{\max}\ge\sqrt n.
 \end{cases}
\end{equation}
These are the conclusions of Theorems~\ref{thm:nonconvex-main}
and~\ref{thm:pl-main}; the constant target rescaling is absorbed within their
stated accuracy regimes.

\clearpage
\section{Oracle Model and Smoothness Comparisons}
\label{app:smoothness-calibration}

This appendix states the formal IFO model, gives the parameter ranges for the
cited PL results, and compares individual, mean-squared, and objective
smoothness at the same numerical scale.  It also verifies that the nonconvex
dense chain has comparable individual and mean-squared smoothness constants.
Figure~\ref{fig:class-containment} displays the two containment relations used
throughout the comparison.

\subsection{Formal IFO model and theorem conventions}
\label{app:formal-ifo-model}

This subsection gives the formal details omitted from
Section~\ref{sec:setting}.  The function classes used there over arbitrary
finite dimensions are
\begin{equation}
 \mathcal F_{\rm ind}^n(\Delta,L_{\max})
 :=\bigcup_{d\ge1}\mathcal F_{\rm ind}^{n,d}(\Delta,L_{\max}),
 \qquad
 \mathcal F_{\rm ms}^n(\Delta,L_{\rm ms})
 :=\bigcup_{d\ge1}\mathcal F_{\rm ms}^{n,d}(\Delta,L_{\rm ms}),
 \label{eq:dimension-free-class}
\end{equation}
and, for $\mathsf s\in\{\mathrm{ind},\mathrm{ms}\}$,
\begin{equation}
 \mathcal F_{\mathsf s,\rm PL}^{n}(\Delta,L,\mu)
 :=\bigcup_{d\ge1}\mathcal F_{\mathsf s,\rm PL}^{n,d}
 (\Delta,L,\mu).
 \label{eq:pl-function-class}
\end{equation}

Fix a dimension $d$.  Let $\xi$ be the algorithm's private random seed and let
$Y_t=\mathcal O_{I_t}(X_t)$.  Define the filtration
\begin{equation}
 \mathcal F_0:=\sigma(\xi),
 \qquad
 \mathcal F_t:=\sigma\!\left(\xi,(I_s,X_s,Y_s)_{1\le s\le t}\right),
 \quad t\ge1.
 \label{eq:ifo-filtration}
\end{equation}
A randomized IFO algorithm chooses $(I_t,X_t)$ as an
$\mathcal F_{t-1}$-measurable random variable, stops at a stopping time $\tau$
with respect to $(\mathcal F_t)_{t\ge0}$, and returns an
$\mathcal F_\tau$-measurable output $\widehat X$.  All spaces are standard Borel.
The resulting class is denoted by
\begin{equation}
 \mathcal A_{\rm IFO}^{\rm rand}(d).
 \label{eq:rand-ifo-class}
\end{equation}
It permits repeated indices, arbitrary query points, randomized stopping, and
unqueried outputs.

For a dimension-free function class, an algorithm is a family indexed by dimension:
\begin{equation}
 \mathcal A_{\rm IFO}^{\rm rand}
 :=\left\{A=(A_d)_{d\ge1}:A_d\in
 \mathcal A_{\rm IFO}^{\rm rand}(d)\ \text{for every }d\right\}.
 \label{eq:dimension-indexed-algorithms}
\end{equation}
Let $\mathcal A_{\rm IFO}^{\rm span}(d)$ and
$\mathcal A_{\rm IFO}^{\rm zero}(d)$ denote the corresponding linear-span and
zero-respecting subclasses, following the usual conventions for lower bounds
\citep{zhou2019lower,carmon2021lower2}.  They restrict queries to the previous
linear span and revealed coordinate support, respectively.  With $X_0=0$ and
$d\ge2$,
\begin{equation}
 \mathcal A_{\rm IFO}^{\rm span}(d)
 \subseteq\mathcal A_{\rm IFO}^{\rm zero}(d)
 \subseteq\mathcal A_{\rm IFO}^{\rm rand}(d).
 \label{eq:algorithm-containment}
\end{equation}

For $\mathcal F=\bigcup_{d\ge1}\mathcal F^d$ and an error functional
$\mathcal E_F$, the complexity for a fixed budget used in the main text is
\begin{equation}
 \operatorname{Comp}_{\varepsilon,\delta}^{\rm IFO}
 (\mathcal F;\mathcal E)
 :=\inf\left\{N\in\mathbb N_0:
 \begin{array}{l}
 \exists A=(A_d)_{d\ge1}\in\mathcal A_{\rm IFO}^{\rm rand},\\[-1pt]
 \text{$A_d$ makes at most $N$ calls for every $d$, and}\\[-1pt]
 \displaystyle\sup_{d\ge1}\sup_{F\in\mathcal F^d}
 \Pr_{\xi}\!\left(\mathcal E_F(\widehat X_{A_d})>\varepsilon\right)
 \le\delta
 \end{array}\right\}.
 \label{eq:ifo-complexity}
\end{equation}
Once $d$ and $F$ are fixed, $\Pr_\xi$ refers only to the private seed.  In the
proofs of the lower bounds, $\Prob$ includes the auxiliary random hard instance as
well.  Calligraphic $\mathcal E$ denotes an error criterion, whereas $\E$
denotes expectation.

\paragraph{Nonconvex theorem parameters.}
The universal threshold and failure probability in
Theorem~\ref{thm:nonconvex-main} may be taken as
$n_0=2^{10}=1024$ and $p_0=11/16$, with universal $c_0>0$ and a universal
implicit constant in~\eqref{eq:nonconvex-main-rate}.
Appendix~\ref{app:nonconvex} proves the corresponding
claim for a fixed budget and may use
$D=\lceil c_D[1+\log((N+2)(T_{\rm chain}+1))]\rceil$ coordinates per stage,
where $c_D$ is universal.  For $n<n_0$, replicating one hard function across all
components gives the same asymptotic order after changing
universal constants
\citep{carmon2021lower2,emmenegger2022oracle}.

The PL lower bound uses the same failure probability $11/16$.  Its full
quantifiers for a fixed budget are restated at the beginning of
Appendix~\ref{app:pl}.  Appendix~\ref{app:stopping} converts both lower bounds
to expected call budgets in the worst case and converts the Restarted PAGE guarantee
to complexity with a fixed budget and constant success probability.

\paragraph{Parameter ranges for the PL bounds in
Table~\ref{tab:complexity}.}
\citet{yue2023lower} count full-gradient calls; their Theorem~6 inherits the
requirement $\kappa_F>2632476000/361$ on the condition number from Theorem~4 and
assumes $\varepsilon\le\Delta/16$.  The logarithmic Theorem~3.4 of
\citet{bai2024complexity} requires
$\kappa_{\rm ms}\ge729196\sqrt n$ and $\varepsilon<\Delta/200$, whereas their
Theorem~3.5, valid for all $\kappa_{\rm ms}$, gives only $\Omega(n)$.  The new upper
entries in Table~\ref{tab:complexity} use
$\kappa_{\rm ms}$, and the new lower entries use $\kappa_{\max}$.  The latter
hard instances are individually smooth and hence also lie in the mean-squared
class.

\subsection{Assumptions on the function classes}
\label{app:function-classes}

Table~\ref{tab:function-classes} collects the three regularity assumptions used
in the main text and representative analyses for each.  Proposition~\ref{prop:ms-separators}
then gives counterexamples to three inequalities that require separate
control of each component
when the same numerical smoothness constant is substituted under mean-squared
smoothness.

\begin{figure}[H]
\centering
\includegraphics[width=\linewidth]{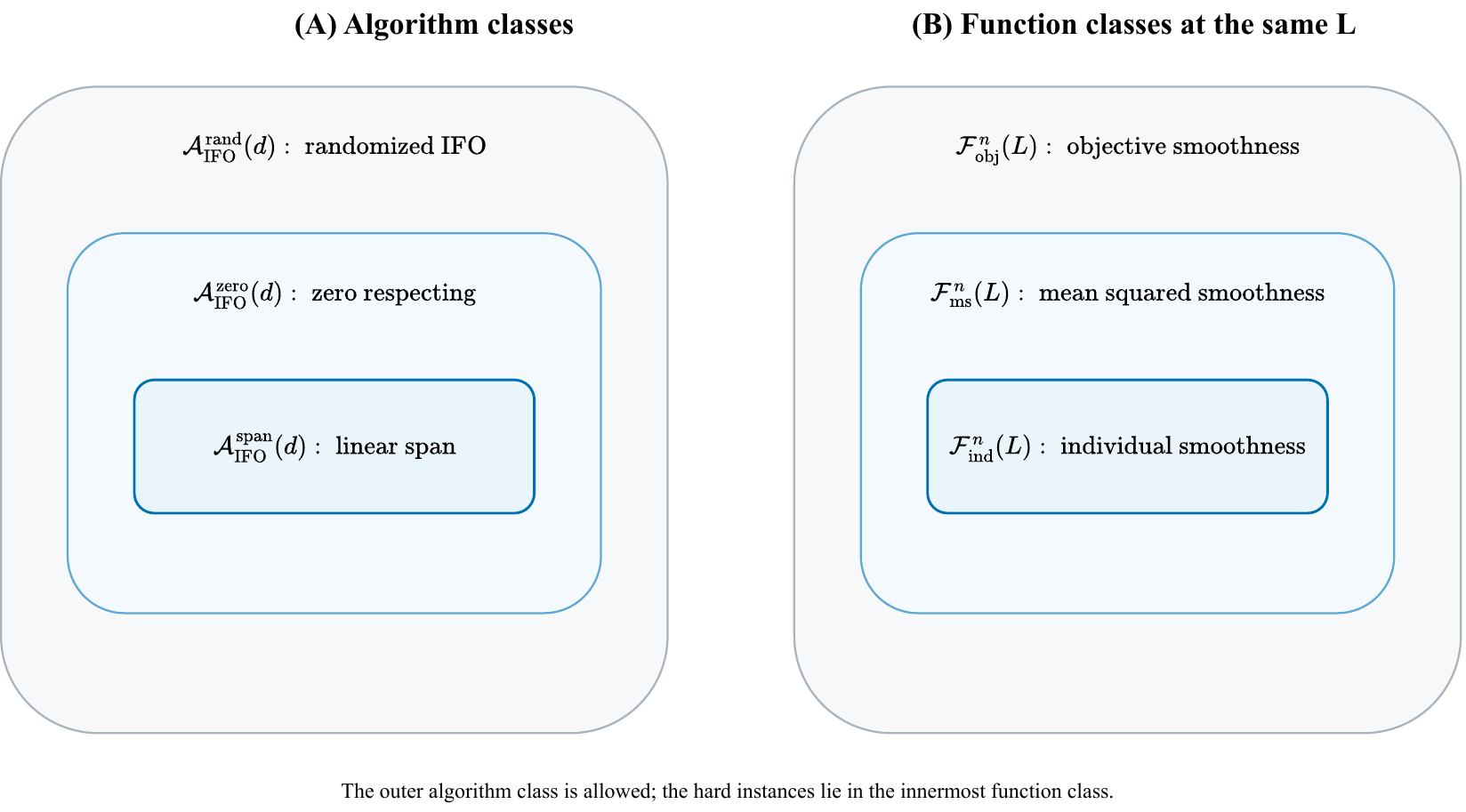}
\caption{Containment relations for the algorithm and function classes.  The
lower bounds allow algorithms from the outermost class in panel (A), while the
hard instances belong to the innermost class in panel (B).  For the function
classes, the inclusions use the same numerical smoothness bound $L$.}
\label{fig:class-containment}
\end{figure}

\begin{table}[H]
\caption{Definitions and representative analyses for the function classes in
Figure~\ref{fig:class-containment}.}
\label{tab:function-classes}
\centering
\footnotesize
\setlength{\tabcolsep}{3.5pt}
\renewcommand{\arraystretch}{1.1}
\begin{tabularx}{\textwidth}{@{}>{\raggedright\arraybackslash}p{0.18\textwidth}>{\raggedright\arraybackslash}p{0.43\textwidth}>{\raggedright\arraybackslash}X@{}}
\toprule
\textbf{Class} & \textbf{Defining regularity} & \textbf{Representative sources} \\
\midrule
$\mathcal F_{\rm ind}^n$
& $\max_i\|\nabla f_i(x)-\nabla f_i(y)\|\le L_{\max}\|x-y\|$
& Finite-sum lower bounds and analyses of component gradient updates or stored gradients
  \citep{zhou2019lower,han2024lower,emmenegger2022oracle,defazio2014saga,kavis2022adaspider}; this work. \\
$\mathcal F_{\rm ms}^n$
& $\bigl(n^{-1}\sum_i\|\nabla f_i(x)-\nabla f_i(y)\|^2\bigr)^{1/2}\le L_{\rm ms}\|x-y\|$
& PAGE and SPIDER \citep{fang2018spider,li2021page}; \citet[Definition~3.2]{zhou2019lower}; PL lower bounds in \citet{bai2024complexity}. \\
$\mathcal F_{\rm obj}^n$
& $\|\nabla F(x)-\nabla F(y)\|\le L_F\|x-y\|$
& Lower bounds for stationary points of the full objective \citep{carmon2020lower1,carmon2021lower2}. \\
\bottomrule
\end{tabularx}
\parbox{\textwidth}{\vspace{2pt}\footnotesize
The first inclusion follows by averaging squared bounds; the second follows from Jensen's inequality.  A lower bound on a larger aggregate class does not automatically hold on the smaller individual class.}
\end{table}

\begin{proposition}[Mean-squared smoothness does not imply componentwise
bounds at the same constant]
\label{prop:ms-separators}
Fix $n\ge2$ and $\ell>0$.  For each of the following inequalities, there is a
convex quadratic finite sum with mean-squared smoothness constant exactly
$\ell$ for which the inequality fails:
\begin{enumerate}[label=(\roman*),leftmargin=*,itemsep=2pt,topsep=3pt]
    \item the same smoothness bound for every component
    \begin{equation}
        \|\nabla f_i(x)-\nabla f_i(y)\|
        \le \ell\|x-y\|
        \qquad\text{for every }i;
        \label{eq:false-samplewise-ms}
    \end{equation}
    \item the componentwise co-coercivity bound
    \begin{equation}
        \frac1n\sum_{i=1}^n
        \|\nabla f_i(u)-\nabla f_i(x)\|^2
        \le
        2\ell\bigl(F(u)-F(x)-\langle\nabla F(x),u-x\rangle\bigr);
        \label{eq:false-cocoercive-ms}
    \end{equation}
    \item the bound with a separate reference point for each component
    \begin{equation}
        \frac1n\sum_{i=1}^n
        \|\nabla f_i(x)-\nabla f_i(\phi_i)\|^2
        \le
        \ell^2\frac1n\sum_{i=1}^n\|x-\phi_i\|^2.
        \label{eq:false-multipair-ms}
    \end{equation}
\end{enumerate}
\end{proposition}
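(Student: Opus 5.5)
The plan is to build all three counterexamples from one two-component family, padded to general $n$ by zero components. Take dimension one, $f_1(x)=\frac{a}{2}x^2$, and set $f_i\equiv0$ for $i\ge2$. Then
\[
 \Bigl(\frac1n\sum_i|f_i'(x)-f_i'(y)|^2\Bigr)^{1/2}=\frac{a}{\sqrt n}|x-y|,
\]
so choosing $a=\sqrt n\,\ell$ makes the mean-squared constant exactly $\ell$. The average is $F(x)=\frac{a}{2n}x^2$, which is convex with $L_F=a/n=\ell/\sqrt n$.

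For item (i), component $1$ has smoothness constant $\sqrt n\,\ell>\ell$ because $n\ge2$. Hence \eqref{eq:false-samplewise-ms} fails for $i=1$ at any $x\ne y$.

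For item (ii), I evaluate both sides at $u\ne x$. The left side equals $\frac{a^2}{n}(u-x)^2=\ell^2(u-x)^2$. The Bregman term on the right equals $\frac{a}{2n}(u-x)^2$, so the right side is $\frac{\ell}{\sqrt n}(u-x)^2$. The inequality would therefore require $\ell^2\le\ell^2/\sqrt n$, which is false for $n\ge2$.

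Item (iii) is the only place that needs care, since one component alone does not break it. A heterogeneous construction is required, and I would switch to $f_i(x)=\frac{a_i}{2}x^2$ with $\frac1n\sum_i a_i^2=\ell^2$, so the mean-squared constant is still exactly $\ell$ and every component is convex. I then put the reference points so that the large curvatures are paired with the large displacements. Take $x=0$, $\phi_1=1$, and $\phi_i=0$ for $i\ge2$. The left side is $\frac1n a_1^2$ and the right side is $\frac1n\ell^2$. With the first choice $a_1=\sqrt n\,\ell$ and $a_i=0$ otherwise, we get $a_1^2=n\ell^2>\ell^2$, so \eqref{eq:false-multipair-ms} fails and the same family works for (iii) as well.

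The only step I expect to require checking is that "mean-squared constant exactly $\ell$" means the supremum of the ratio is attained. This holds because the quadratic family gives an identity in $x-y$, not just an inequality. All three examples are convex quadratics, as the statement requires.
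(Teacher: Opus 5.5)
Your proof is correct. Items (i) and (ii) match the paper's argument exactly: the same one-dimensional spike $f_1(x)=\frac{\sqrt n\,\ell}{2}x^2$, $f_2=\cdots=f_n=0$, with the same computations. There is one typo in (ii): the right side is $\frac{\ell^2}{\sqrt n}(u-x)^2$, not $\frac{\ell}{\sqrt n}(u-x)^2$. Your comparison $\ell^2\le\ell^2/\sqrt n$ already uses the correct value.

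For (iii) you take a different and more economical route. The paper moves to dimension $d=n$ with coordinate spikes $f_i(z)=\frac{\sqrt n\,\ell}{2}z_i^2$ and reference points $\phi_i=e_i$, which gives $n\ell^2$ against $\ell^2$. You reuse the one-dimensional spike with $\phi_1=1$ and $\phi_i=x=0$ for $i\ge2$, which gives $\ell^2$ against $\ell^2/n$. Both violate the inequality by the same factor $n$.

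Your version shows that one family in dimension one refutes all three inequalities. It also isolates the mechanism: the heavily curved component is paired with a displaced reference point, while the inactive components have reference points at $x$, so the right side is diluted by $1/n$. The paper's symmetric version shows something different. Every component has the same curvature and every reference point is at unit distance from $x$, so the failure does not rely on zero components or on placing some $\phi_i$ at $x$. That is closer to the SAGA-type setting the paper has in mind.

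One correction: drop the claim that one component alone does not break (iii) and that a heterogeneous construction is required. Your own computation refutes it, since it uses exactly the one-spike family. What is actually essential is that the reference points differ across components. With a common $\phi_i=\phi$, inequality (iii) reduces to mean-squared smoothness and holds.
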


\begin{proof}
\emph{Counterexample to a uniform component bound.}
For~\eqref{eq:false-samplewise-ms}, consider the following one-dimensional
quadratic family with one nonzero component:
\begin{equation}
    f_1(x)=\frac{\sqrt n\,\ell}{2}x^2,
    \qquad
    f_2(x)=\cdots=f_n(x)=0.
    \label{eq:one-spike-quadratic}
\end{equation}
For every $x,y\in\mathbb R$,
\begin{equation}
    \frac1n\sum_{i=1}^n
    |\nabla f_i(x)-\nabla f_i(y)|^2
    =\ell^2|x-y|^2,
    \label{eq:one-spike-ms-exact}
\end{equation}
so the mean-squared smoothness constant is exactly $\ell$, whereas the first component has smoothness constant $\sqrt n\,\ell$ and violates~\eqref{eq:false-samplewise-ms} whenever $x\ne y$.

Thus the gradient of one component can have a Lipschitz constant $\sqrt n$
times larger than the mean-squared constant.  A proof step can therefore fail if it
substitutes the mean-squared constant into a descent or self-bounding inequality
for the sampled component.

\emph{Counterexample to the componentwise co-coercivity bound.}
The same family also disproves~\eqref{eq:false-cocoercive-ms}.  At $x=0$ and $u=1$,
\begin{equation}
    \frac1n\sum_{i=1}^n
    |\nabla f_i(1)-\nabla f_i(0)|^2=\ell^2,
    \qquad
    F(1)-F(0)-\langle\nabla F(0),1\rangle
    =\frac{\ell}{2\sqrt n}.
    \label{eq:one-spike-cocoercive-gap}
\end{equation}
Thus the right-hand side of~\eqref{eq:false-cocoercive-ms} is only $\ell^2/\sqrt n$.  More generally, the coefficient multiplying the term of Bregman type on the right-hand side must be at least $2\sqrt n\,\ell$, or equivalently the smoothness parameter in the displayed $2C$ form must satisfy $C\ge\sqrt n\,\ell$.

\emph{Counterexample with separate reference points for the components.}
To disprove~\eqref{eq:false-multipair-ms}, set the ambient dimension to $d=n$
and define
\begin{equation}
    f_i(z)=\frac{\sqrt n\,\ell}{2}z_i^2,
    \qquad i\in[n].
    \label{eq:coordinate-spike-quadratic}
\end{equation}
For all $z,w\in\mathbb R^n$,
\begin{equation}
    \frac1n\sum_{i=1}^n
    \|\nabla f_i(z)-\nabla f_i(w)\|^2
    =\ell^2\|z-w\|^2,
    \label{eq:coordinate-spike-ms-exact}
\end{equation}
again giving exact mean-squared smoothness constant $\ell$.  Take $x=0$ and $\phi_i=e_i$.  Then the left- and right-hand sides of~\eqref{eq:false-multipair-ms} are, respectively,
\begin{equation}
    \frac1n\sum_{i=1}^n
    \|\nabla f_i(0)-\nabla f_i(e_i)\|^2
    =n\ell^2,
    \qquad
    \ell^2\frac1n\sum_{i=1}^n\|e_i\|^2=\ell^2.
\end{equation}
Mean-squared smoothness controls all components at one common pair $(z,w)$.  It
does not imply the displayed estimate when component $i$ is paired with its own
reference point $\phi_i$.
\end{proof}

The three counterexamples concern, respectively, descent and self-bounding
arguments applied to a sampled component
\citep{vaswani2019painless,loizou2021stochastic,mishchenko2020random,kavis2022adaspider},
variance-to-Bregman bounds \citep{liu2022adaptive}, and bounds using
a separate reference point for each component \citep[Lemma~2]{defazio2014saga}.  These inequalities therefore do not
follow from mean-squared smoothness alone at the same numerical constant.

\subsection{Mean-squared smoothness of the nonconvex dense chain}

The following lemma shows that the nonconvex dense chain has mean-squared
smoothness comparable to its individual smoothness.  The PL lower bounds do
not require this comparison of least constants: individual smoothness with
bound $L$ already implies mean-squared smoothness with the same bound $L$.

\begin{lemma}[The nonconvex dense chain has comparable mean-squared smoothness]
\label{lem:dense-chain-ms}
Suppose $T\ge2$ and scale the components of the dense chain by
$f_i(y)=\alpha R_i(\beta y)$, where
\begin{equation}
 \alpha\beta^2=\frac{L_{\max}\rho}{H_{\rm main}}
 \label{eq:dense-chain-local-scaling-identity}
\end{equation}
as in~\eqref{eq:nonconvex-scales}.  Here $a=5/32$ and $b=9/32$ are the lower
and upper gate thresholds, and
$H_{\rm main}=341$, defined in~\eqref{eq:explicit-component-H}, is the
unscaled component Hessian bound.
If
$L_{\rm ms}$ denotes their least mean-squared smoothness constant, then
\begin{equation}
 \frac{8(1-b^2)^{3/2}}{H_{\rm main}}L_{\max}
 \le L_{\rm ms}\le L_{\max}.
 \label{eq:dense-chain-ms-comparison}
\end{equation}
In particular, $L_{\rm ms}=\Theta(L_{\max})$ with universal constants,
independently of $n,D,$ and $T$.
\end{lemma}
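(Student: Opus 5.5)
The upper bound $L_{\rm ms}\le L_{\max}$ is immediate: every component is $L_{\max}$-smooth by Lemma~\ref{lem:uniform-smoothness} and the scaling~\eqref{eq:dense-chain-local-scaling-identity}, and averaging the squared componentwise bounds gives mean-squared smoothness with the same constant. The work is the lower bound. Since each $f_i$ is $C^2$, the mean-squared constant is at least $\sup_{x,v}\bigl(n^{-1}\sum_i\|\nabla^2f_i(x)v\|^2\bigr)^{1/2}/\|v\|$, obtained by letting $y\to x$ along $v$. I would evaluate this at a single well-chosen point and direction, where it can be computed exactly.

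I choose $x$ with all blocks zero except possibly the first, and direction $v$ supported on block $x_1$. By scaling it suffices to bound the unscaled $R_i$, since $\nabla^2 f_i(y)=\alpha\beta^2\nabla^2R_i(\beta y)$ and $\alpha\beta^2=L_{\max}\rho/H_{\rm main}$. If the first block is at the origin, the gate $G(q_1(0))=G(0)=0$ kills every link, so $\nabla^2_{x_1x_1}R_i(0)=-\nabla^2\widetilde q_{i1}(0)+\lambda I$. From Lemma~\ref{lem:radial-first-second}, $\nabla^2q_u(0)=0$, which is useless. I therefore move $x_1$ off the origin, taking $x_1=t\,w$ with $w$ a unit vector orthogonal to $u_{\Theta^{(1)}}$. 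Then $q_1(x_1)=0\le a$ keeps all gates closed, and the Hessian formula gives $\nabla^2\widetilde q_{i1}(x)=\rho^{-1}\bigl[-(s_iw^\top+ws_i^\top+\langle s_i,x\rangle I)/r^3+3\langle s_i,x\rangle xx^\top/r^5\bigr]$, with $s_i=S^{(1)}_{i:}/\sqrt D$. Applying this to a test vector $v$ orthogonal to $w$ gives $\nabla^2\widetilde q_{i1}v=-\rho^{-1}r^{-3}\bigl(\langle s_i,v\rangle tw+t\langle s_i,w\rangle v\bigr)$. I then average $\|\lambda v-\nabla^2\widetilde q_{i1}v\|^2$ over $i$. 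The cross terms with $\lambda v$ average to $\rho^{-1}$ times the quantities for the mean row $\rho u_\Theta$, which are $O(1)$ and small. The squared terms give $\rho^{-2}r^{-6}t^2\,n^{-1}\sum_i(\langle s_i,v\rangle^2+\langle s_i,w\rangle^2)$.

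The main obstacle is lower-bounding $n^{-1}\sum_i\langle s_i,v\rangle^2$ for the fixed sign table, which I plan to handle by choosing $v$ (or $w$) cleverly rather than generically. With $v=e_\ell$ a coordinate vector, $\langle s_i,e_\ell\rangle^2=1/D$ exactly for every $i$, which is too small. Instead I choose $v$ as a unit vector in the span of the rows with $v\perp w,u$. Because $n^{-1}\sum_i s_is_i^\top$ has trace $1$, some unit direction $v$ achieves average at least its top eigenvalue $\ge 1/\min(n,D)$, which is still small. The stated constant $8(1-b^2)^{3/2}/H_{\rm main}$, with its factor $(1-b^2)^{3/2}=r^{-3}$ at alignment $b$, suggests that the intended witness instead uses the \emph{average} curvature, not the fluctuating rows. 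I would therefore place $x_1$ at alignment exactly $b$ along $u_\Theta$, where the gate $G'$ and $G''$ vanish, and use Jensen: $L_{\rm ms}\ge\|\nabla^2F\|_{\rm op}$. On block $x_1$, $\nabla^2 R=\lambda I-\nabla^2q_1$, and along a direction orthogonal to $x_1$ the term $-\langle u,x\rangle I/r^3$ contributes curvature $b\,r^{-2}\ge b(1-b^2)$. Combined with $\lambda$ and the constants $a,b$, this should produce the factor $8(1-b^2)^{3/2}$. After multiplying by $\alpha\beta^2=L_{\max}\rho/H_{\rm main}$, a remaining factor $1/\rho$ is still needed. For it I would finally use the rows: at $x_1=tw$, the $\langle s_i,w\rangle^2$ term with $w$ chosen as a normalized row direction $s_{i_0}$ for which most rows have $|\langle s_i,w\rangle|$ of order one. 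If this fails, I fall back to the unconditional but weaker statement $L_{\rm ms}\ge\Omega(\rho L_{\max})$ and flag the gap.
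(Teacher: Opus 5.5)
\textbf{There is a genuine gap: the proposal never establishes the lower bound.} Your upper bound is correct, and so is the reduction $L_{\rm ms}\ge\bigl(n^{-1}\sum_i\|\nabla^2f_i(x)v\|^2\bigr)^{1/2}$; the paper uses the same difference-quotient step. But the proposal ends at the fallback $L_{\rm ms}=\Omega(\rho L_{\max})$. That is short by exactly the factor $1/\rho\asymp\sqrt n$ the lemma is about, and none of your routes can recover it:
\begin{itemize}
\item Jensen's bound $L_{\rm ms}\ge\|\nabla^2F\|_{\op}$ discards $1/\rho$ by design, since $n^{-1}\sum_i\widetilde q_{ij}=q_j$.
\item Terms entering through $\langle s_i,v\rangle$ or $\langle s_i,w\rangle$ have mean square equal to a Rayleigh quotient of $n^{-1}\sum_is_is_i^\top$. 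This is a trace-one matrix whose top eigenvalue is small for a typical table.
\item Taking $w=s_{i_0}$ runs into the same Rayleigh quotient. Distinct rows of a typical table are nearly orthogonal, so most $\langle s_i,w\rangle$ are not of order one.
\item Both of your witness points avoid the gate's transition region. At $q_1(x_1)=0\le a$, and at alignment exactly $b$, one has $G'=G''=0$. The factor $(1-b^2)^{3/2}$ is not an average curvature at alignment $b$.
\end{itemize}

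\textbf{The missing idea.} You need a Hessian--vector product in which the row appears as a \emph{vector} times a row-independent scalar. Its norm is then identical for every component, and averaging cannot shrink it. The paper uses exactly your setup ($v$ supported on block $1$, later blocks zero), but opens the first gate partway and reads off the \emph{second} output block. Since $G(b)-G(a)=1$ and $b-a=1/8$, the mean value theorem gives $s_0\in(a,b)$ with $G'(s_0)=8$. Put $z_1=r_0u$ with $r_0=s_0/\sqrt{1-s_0^2}$, so $q_1(z_1)=s_0$ and $\nabla q_1(z_1)=(1-s_0^2)^{3/2}u$, and take $v=(u,0,\ldots,0)$. Then the cross block of the second link gives $(\nabla^2R_i(z)v)_2=-G'(s_0)(1-s_0^2)^{3/2}\nabla\widetilde q_{i2}(0)$. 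Here $\nabla\widetilde q_{i2}(0)=s_{i2}/\rho$ has norm exactly $1/\rho$ for every $i$, because every row is a $\pm1$ vector divided by $\sqrt D$. Hence $\|\nabla^2R_i(z)v\|\ge8(1-b^2)^{3/2}/\rho$ uniformly in $i$. Multiplying by $\alpha\beta^2=L_{\max}\rho/H_{\rm main}$ gives the stated constant. This is where the hypothesis $T\ge2$, the factor $8=1/(b-a)$, and $(1-b^2)^{3/2}$ come from.

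\textbf{A near-miss in your own computation.} If you take $v\parallel w$ rather than $v\perp w$, then $\nabla^2\widetilde q_{i1}(x_1)w$ contains the term $-\rho^{-1}r^{-3}t\,s_i$, whose norm is also row-independent. Project off $w$, and choose $w\perp u$ with $w^\top(n^{-1}\sum_is_is_i^\top)w\le1/(D-1)$; this is possible because the matrix has trace one. This yields $L_{\rm ms}\ge cL_{\max}$ for a smaller universal $c$. That proves the $\Theta(L_{\max})$ claim, but not the displayed constant $8(1-b^2)^{3/2}/H_{\rm main}$.
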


\begin{proof}
\emph{Witness point.}
Since $G(b)-G(a)=1$ and $b-a=1/8$, the mean value theorem gives an
$s_0\in(a,b)$ with $G'(s_0)=8$.  Write $u=u_{\Theta^{(1)}}$ and choose
$z_1=ru$, where $r=s_0/\sqrt{1-s_0^2}$, so that $q_1(z_1)=s_0$ and
\begin{equation}
 \nabla q_1(z_1)=(1-s_0^2)^{3/2}u.
\end{equation}

\emph{Unscaled Hessian block.}
Set $z_2=\cdots=z_T=0$ and let $v$ be the unit vector supported on the
first block with $v_1=u$.  In the second output block, the only Hessian term
applied to $v$ is the cross derivative of the second chain link.  Hence, for
every component $i$,
\begin{equation}
 \bigl\|(\nabla^2R_i(z)v)_2\bigr\|
 =G'(s_0)(1-s_0^2)^{3/2}
   \|\nabla\widetilde q_{i2}(0)\|
 \ge\frac{8(1-b^2)^{3/2}}\rho,
\end{equation}
because $\nabla\widetilde q_{i2}(0)=s_{i2}/\rho$ and $\|s_{i2}\|=1$.

\emph{Difference quotient argument.}
After scaling, $\nabla^2f_i(z/\beta)=\alpha\beta^2\nabla^2R_i(z)$ and
$\alpha\beta^2=L_{\max}\rho/H_{\rm main}$.  Mean-squared smoothness implies, for every
unit vector $v$ and every $h\ne0$,
\begin{equation}
 \left(\frac1n\sum_{i=1}^n
 \left\|\frac{\nabla f_i(x+hv)-\nabla f_i(x)}{h}\right\|^2\right)^{1/2}
 \le L_{\rm ms}.
\end{equation}
Letting $h\to0$ at any fixed $x$ gives
\begin{equation}
 L_{\rm ms}\ge
 \left(\frac1n\sum_{i=1}^n\|\nabla^2f_i(x)v\|^2\right)^{1/2}.
 \label{eq:ms-hessian-directional-lower}
\end{equation}

\emph{Final scaling and substitution.}
At $x=z/\beta$ and for the $v$ chosen above, the second block gives
$\|\nabla^2f_i(x)v\|\ge
8\alpha\beta^2(1-b^2)^{3/2}/\rho$ for every $i$.  Substituting
$\alpha\beta^2=L_{\max}\rho/H_{\rm main}$
in~\eqref{eq:ms-hessian-directional-lower} proves the lower bound
in~\eqref{eq:dense-chain-ms-comparison}; the upper bound follows from
individual $L_{\max}$-smoothness.
\end{proof}

\end{document}